\patchcmd{\section}{\scshape}{\bfseries\scshape}{}{}
\renewcommand{\@secnumfont}{\bfseries}
\newtheorem{theorem}{Theorem}[section]%
\newtheorem{corollary}[theorem]{Corollary}%
\newtheorem{proposition}[theorem]{Proposition}%
\newtheorem{lemma}[theorem]{Lemma}%
\newtheorem{definition}[theorem]{Definition}%
\newtheorem{example}[theorem]{Example}%
\newtheorem{remark}[theorem]{Remark}%
\numberwithin{equation}{section}
\newcommand{\N}{\mathbb{N}}
\newcommand{\Z}{\mathbb{Z}}
\newcommand{\E}{\mathbb{E}}
\newcommand{\R}{\mathbb{R}}
\newcommand{\T}{\mathbb{T}}
\renewcommand{\P}{\mathds{P}}
\newcommand{\cN}{\mathcal{N}}
\newcommand{\cE}{\mathcal{E}}
\def\bfone{{\boldsymbol 1}}
\newcommand{\var}{{\rm var\,}}
\newcommand{\diff}{\mathrm{d}}
\renewcommand{\leq}{\leqslant}
\renewcommand{\geq}{\geqslant}
\title[Accelerated~gossip~in~networks~of~given~dimension]{Accelerated~gossip~in~networks~of~given~dimension using~Jacobi~polynomial~iterations}
\author{Raphaël Berthier, Francis Bach, Pierre Gaillard}
\address{INRIA - Département d’informatique de l’ENS \\
	Ecole normale supérieure, CNRS, INRIA \\ 
	PSL Research University, 75005 Paris, France}
\email{raphael.berthier@inria.fr}
\email{francis.bach@inria.fr}
\email{pierre.gaillard@inria.fr}
\begin{document}

\begin{abstract}
	Consider a network of agents connected by communication links, where each agent holds a real value. The gossip problem consists in estimating the average of the values diffused in the network in a distributed manner. We develop a method solving the gossip problem that depends only on the spectral dimension of the network, that is, in the communication network set-up, the dimension of the space in which the agents live. This contrasts with previous work that required the spectral gap of the network as a parameter, or suffered from slow mixing. Our method shows an important improvement over existing algorithms in the non-asymptotic regime, i.e., when the values are far from being fully mixed in the network. Our approach stems from a polynomial-based point of view on gossip algorithms, as well as an approximation of the spectral measure of the graphs with a Jacobi measure. We show the power of the approach with simulations on various graphs, and with performance guarantees on graphs of known spectral dimension, such as grids and random percolation bonds. An extension of this work to distributed Laplacian solvers is discussed. As a side result, we also use the polynomial-based point of view to show the convergence of the message passing algorithm for gossip of \cite{moallemi2005consensus} on regular graphs. The explicit computation of the rate of the convergence shows that message passing has a slow rate of convergence on graphs with small spectral gap. 
\end{abstract}

\maketitle

\section{Introduction}

The averaging problem, or gossip problem, is a fundamental primitive of distributed algorithms. Given a network composed of agents and undirected communication links between them, we assign to each agent $v$ a real value $\xi_v$, called an observation. The goal is to design an iterative communication procedure allowing each agent to know the average of the initial observations in the network, as quickly as possible.

The landmark paper \cite{boyd2006randomized} suggests the natural following protocol to solve the averaging problem: at each iteration, each agent replaces his current observation by some average of the observations of its neighbors in the network. We will refer to this method in the following by the term \emph{simple gossip}. More precisely, we are given a weight matrix $W = (W_{v,w})_{v,w \in V}$, called the gossip matrix, indexed by the vertices $v \in V$ of the network graph, satisfying the property that $W_{v,w}$ is non-zero only if $v \sim w$, that is $v$ and $w$ are connected in the graph. Then the simple gossip iteration writes
\begin{align}
\label{eq:iteration-simple-gossip}
&x^{t=0}_v = \xi_v \, , &&x^{t+1}_v = \sum_{w: w \sim v} W_{v,w} x_w^t \, , \qquad t \geq 0 \, .
\end{align}
The paper \cite{boyd2006randomized} proves the linear convergence of the observations to their average.

However, the rate of the linear convergence was shown to worsen significantly in many networks of interest as the size of the network increases. More precisely, define the diameter $D$ of the network as the largest number of communication links needed to connect any two agents. While obviously, $D$ steps of averaging are needed for any gossip method to spread information in the network, the simple gossip method may require up to $\Theta(D^2)$ communication steps to estimate the average, as for instance on the line graph or the two-dimensional grid \cite{rebeschini2017accelerated}. To reach the $O(D)$ bound, a diverse set of ideas were proposed, including second-order recursions \cite{cao2006accelerated,rebeschini2017accelerated}, message passing algorithms \cite{moallemi2005consensus}, lifted Markov chain techniques \cite{shah2009gossip}, methods using Chebychev polynomial iterations \cite{Arioli:2014:CAI:2638909.2639021,scaman2017optimal} or inspiration arising from advection-diffusion processes \cite{sardellitti2010fast}. To our knowledge, all of these accelerated methods assume that the agents hold additional information about the network graph, such as its spectral gap. For instance, second-order methods typically take the form (see \cite{cao2006accelerated})
\begin{equation}
\label{eq:iteration-second-order-gossip}
\begin{aligned}
&x^{t=0}_v = \xi_v \, , \qquad x^{t=1}_v = \sum_{w: w \sim v} W_{v,w} x_w^0 \, , \\
&x^{t+1}_v = \omega\sum_{w: w \sim v} W_{v,w} x_w^{t} + (1-\omega)x^{t-1}_v \, , \qquad t \geq 1 \, ,
\end{aligned}
\end{equation}
where $\omega$ is some simple function of the spectral gap $\gamma$, that is the distance between the largest and the second largest eigenvalue of $W$. This iteration obtains optimal asymptotic convergence on many graphs, with a relaxation time of the linear convergence that scales like $1/\sqrt{\gamma}$ as $\gamma \to 0$.

In this paper, we develop a gossip method based not on the spectral gap $\gamma$, but on the density of eigenvalues of $W$ near the upper edge of the spectrum. Looking at the upper part of the spectrum at a broader scale allows us to improve the local averaging of the gossip algorithm in the regime $t < 1/\sqrt{\gamma}$. This improvement is worthy as the spectral gap $\gamma$ can get arbitrarily small in large graphs. For instance, in the case of the line graph or the two-dimensional grid, the relaxation time $1/\sqrt{\gamma}$ scales like the diameter $D$ of the graph. Thus the regime $t < 1/\sqrt{\gamma}$ can be relevant for applications.

Remarkably, the spectral density of $W$ near the upper edge can be described by a very natural parameter: the spectral dimension $d$. The network is of spectral dimension $d$ if the number of eigenvalues of $W$ in $[1-E,1]$ decreases like $E^{d/2}$ for small $E$ ($\gamma \ll E \ll 1$), see \ref{sec:spectral-dimension-infinite-graph} for rigorous definitions. We will see with examples that this definition coincides with our intuition of the dimension of the graph, which is the dimension of the manifold on which the agents live. For instance, the grid with nodes $\Z^d$ where the nodes at distance $1$ are connected, is a graph of dimension $d$. Thus the parameter $d$ is much easier to know than the spectral gap $\gamma$.

In real-world situations, the practitioner reasonably knows if the network on which she implements the gossip method is of finite dimension, and if so, she also knows the dimension $d$. In this paper, we argue that she should run a second-order iteration with time-dependent weights
\begin{equation}
\label{eq:iteration}
\begin{aligned}
&x_v^{t=0} = \xi_v \, , \qquad x^{t=1}_v = a_0\sum_{w: w \sim v} W_{v,w} x_w^0 + b_0x_v^0 \, , \\
&x^{t+1} = a_t\sum_{w: w \sim v} W_{v,w} x_w^{t}
+ b_t x^t_v - c_t x^{t-1}_v \, , \quad t \geq 1 \, ,
\end{aligned}
\end{equation}
where the recurrence weights $a_t, b_t, c_t$ are given by the formulas
\begin{equation}
\label{eq:coeff}
\begin{aligned}
&a_0 = \frac{d+4}{2(2+d)} \, , &&b_0 = \frac{d}{2(2+d)} \, , \\
&a_t = \frac{(2t+d/2+1)(2t+d/2+2)}{2(t+1+d/2)^2} \, , &&b_t = \frac{d^2(2t+d/2+1)}{8(t+1+d/2)^2(2t+d/2)} \, , \\
&c_t = \frac{t^2(2t+d/2+2)}{(t+1+d/2)^2(2t+d/2)} \, , \qquad t \geq 1 \, . 
\end{aligned}
\end{equation}
The motivation for these choice of weights $a_t, b_t, c_t$ should not be obvious at first sight. It follows from a \emph{polynomial-based} point of view on gossip algorithms: it consists in seeing the iterations \eqref{eq:iteration-simple-gossip}, \eqref{eq:iteration-second-order-gossip} and \eqref{eq:iteration} as sequences $P_0,P_1,P_2,\dots$ of polynomials in the gossip matrix $W$. The correspondence is given by the relation $x^t = P_t(W)\xi$ where $x^t = (x^t_v)_{v\in V}$ and $\xi = (\xi_v)_{v\in V}$. This approach is inspired from similar work done in the resolution of linear systems \cite{fischer1996polynomial} and on the load balancing problem \cite{diekmann1999efficient}. The choice of an iteration is reframed as the choice of a sequence of polynomials, and the performance of the resulting gossip method depends on the spectrum of $W$. As the dimension of the graph gives the rate of decrease of the spectral density near the edge of the spectrum, it suggests the sequence of polynomials one should take: we choose a parametrized sequence of polynomials called \emph{Jacobi polynomials}, that is well-known in the literature on orthogonal polynomials (see the Definition \ref{def:Jacobi-polynomials} of the Jacobi polynomials). This actually leads to the iteration \eqref{eq:iteration}, that we call the \emph{Jacobi polynomial iteration}.

The Jacobi polynomial iteration \eqref{eq:iteration} improves the convergence of the gossip method in the transitive phase $t < 1/\sqrt{\gamma}$, but looses the optimal rate of convergence of second-order gossip, because it does not use the spectral gap $\gamma$. We argue that in most applications of gossip methods, the asymptotic rate of convergence is not relevant as there is noise in the initial data $\xi$, thus a high precision on the result would be useless. However, we also build a gossip iteration that uses both parameters $d$ and $\gamma$ and achieves both the efficiency in the non-transitive regime and the fast rate of convergence.

This resolution of the gossip problem with inner-product free polynomial-based iterations is new, and could lead to other interesting algorithms on other types of graphs. Here, the phrase ``inner-product free'' comes from the literature on polynomial-based iterations for linear systems \cite{fischer1996polynomial}, and refers to the fact that recurrence coefficients $a_t, b_t, c_t$ are computed without using the gossip matrix $W$ (but parametrized using the knowledge of $d$). Indeed, as the knowledge of the gossip matrix $W$ is distributed across the graph, it would be a challenging distributed problem to compute the recurrence coefficients if they depended on $W$. 

Although our work is inspired by iterative methods for linear systems, the Jacobi iteration that we developed for gossip can be transposed into a new idea to this literature, which can be useful for the distributed resolution of Laplacian systems over multi-agent networks.
 
Finally, in Section \ref{sec:MP}, we show that the message passing gossip iteration of \cite{moallemi2005consensus} can be interpreted as an inner-product free polynomial iteration. This point of view allows to derive convergence rates of the message passing gossip on regular graphs. 

\medskip
\textbf{Outline of the paper.} Section \ref{sec:problem_setting} sets some notation used in the remainder of this paper.
In Section \ref{sec:simulations}, we give simulations in different types of networks of dimension $2$ and $3$. We show that the recursion \eqref{eq:iteration} brings important benefits over existing methods in the non-asymptotic regime, i.e., when the observations are far from being fully mixed in the graph. 

In Sections \ref{sec:second_order_gossip}-\ref{sec:Jacobi-polynomial-gossip}, we develop the derivation of the Jacobi polynomial iteration. Section \ref{sec:second_order_gossip} describes an optimal way to design polynomial-based gossip algorithms, following the lines of \cite{fischer1996polynomial} and  \cite{diekmann1999efficient}, and discusses its feasibility. Section \ref{sec:Jacobi-polynomial-gossip} uses the notion of spectral dimension of a graph to inspire the practical Jacobi polynomial iteration \eqref{eq:iteration}.

In Section \ref{sec:performance-guarantees}, we prove some performance guarantees of the Jacobi polynomial iteration \eqref{eq:iteration} under the assumption that the graph has spectral dimension $d$. As a corollary, we get performance results on two types of infinite graphs: the $d$-dimensional grid $\Z^d$ and supercritical percolation bonds in dimension $d$. This supports that the iteration \eqref{eq:iteration} is robust to local perturbations of a graph. 

In Section \ref{sec:jacobi-with-spectral-gap}, we present the adaptation of the Jacobi polynomial iteration to the case where the spectral gap $\gamma$ of $W$ is given to improve the asymptotic rate of convergence. 

In Section \ref{sec:laplacian-solvers}, we describe the parallel between gossip methods and iterative methods for linear systems, and discuss the contributions that our work can bring to the distributed resolution of Laplacian systems over networks. 

In Section \ref{sec:MP}, we show how the message passing gossip algorithm can be interpreted as a polynomial gossip algorithm. We give the convergence rate of message passing in terms of the spectral gap~$\gamma$.

\medskip
\textbf{Code.} The code that generated the simulation results and the figures of this paper is available on the GitHub page \url{https://github.com/raphael-berthier/jacobi-polynomial-iterations}.

\section{Problem setting}
\label{sec:problem_setting}

A network of agents is modeled by an undirected finite graph $G = (V,E)$, where $V$ is the set of vertices of the graph, or agents, and $E$ the set of edges, or communication links. We assume each agent $v$ holds a real value $\xi_v$. Our goal is to design an iterative algorithm that quickly gives each agent the average $\bar{\xi} = (1/n)\sum_{v \in V} \xi_v$, where $n = |V|$ is the number of agents. 

A fundamental operation to estimate the average $\bar{\xi}$ consists in averaging the observations of neighbors in the network. We formalize this notion using a gossip matrix.
\begin{definition}
	\label{def:gossip-matrix}
	A \emph{gossip matrix} $W = (W_{v,w})_{v,w \in V}$ on the graph $G$ is a matrix with entries indexed by the vertices of the graph satisfying the following properties:
	\begin{itemize}[label={--},noitemsep,nolistsep]
		\item $W$ is nonnegative: for all $v,w \in V$, $W_{v,w} \geq 0$.
		\item $W$ is supported by the graph $G$: for all distinct vertices $v,w$ such that $W_{v,w} > 0$, $\{v,w\}$ must be an edge of $G$.
		\item $W$ is stochastic: for all $v \in V$, $\sum_{w \in V} W_{v,w} = 1$.
		\item $W$ is symmetric: for all $v,w \in V$, $W_{v,w} = W_{w,v}$.
	\end{itemize}
\end{definition}
If $W$ is a gossip matrix and $x = (x_v)_{v \in V}$ is a set of values stored by the agents $v$, the product $Wx$ is interpreted as the computation by each agent $v$ of a weighted average of the values $x_w$ of its neighbors~$w$ in the graph (and of its own value $x_v$). This average is computed simultaneously for all agents $v$; indeed in this paper we deal only with \emph{synchronous} gossip. Note that we do not need the symmetry assumption on $W$ to interpret $W$ as an averaging operation. This assumption is usual in gossip frameworks as it allows one to use the spectral theory for $W$, on which our analysis relies heavily. It appears, for instance, in the works \cite{boyd2006randomized,cao2006accelerated,rebeschini2017accelerated}.

In a $d$-regular graph $G$ ($\forall v, \deg v\!=\!d$), a typical gossip matrix is $W = A(G)/d = (\bfone_{\{\{v,w\} \in E\}}/d)_{v,w \in V}$ where $A(G)$ is the adjacency matrix of the graph. More generally, if the graph has all vertices of degree bounded by some quantity $d_{\rm max}$, then a natural gossip matrix is 
\begin{equation}
\label{eq:natural_gossip_matrix}
W = I + \frac{1}{d_{\rm max}}(A-D) \, ,
\end{equation}
where $D$ is the degree matrix, which is the diagonal matrix such that $D_{v,v} = \deg v$.

\begin{definition}[Spectral gap]
Denote $\lambda_1 \geq \lambda_2 \geq \dots \geq \lambda_n$ the real eigenvalues of the symmetric matrix $W$. As $W$ is stochastic, $W\bfone = \bfone$; we can take $\lambda_1 = 1$, that corresponds to the eigenvector $\bfone = (1,\dots,1)$. According to the Perron-Frobenius theorem, all eigenvalues must be smaller than $1$ in magnitude. We define:
\begin{enumerate}
	\item the spectral gap $\gamma = 1-\lambda_2$ as the distance between the two largest eigenvalues of $W$, 
	\item the absolute spectral gap $\tilde{\gamma} = \min(1-\lambda_2,\lambda_n+1)$ as the difference between the moduli of the two largest eigenvalues of $W$ in magnitude.
\end{enumerate}
\end{definition}

\medskip
We now discuss different iterations for the gossip problem.
\medskip

\textbf{Simple gossip.} Simple gossip is a natural algorithm solving the gossip problem that consists in averaging repeatedly values in the graph \cite{boyd2006randomized}. More precisely, we choose a gossip matrix~$W$ on the graph~$G$, initialize $x^0 = \xi = (\xi_v)_{v \in V}$, and at each communication round $t$, compute 
\begin{equation}
\label{eq:simple_gossip}
x^{t+1} = W x^t \, . 
\end{equation}
Note that the latter equation is simply a compact rewriting of \eqref{eq:iteration-simple-gossip}. We can rewrite this iteration as $x^t = W^t \xi$. Note that in this last equation, we used the notation $.^t$ to denote both the index of $x$ and the power of the square matrix $W$. We will frequently make use of the indexation $.^t$ when vectors indexed by the vertices (or the edges) also depend on time. 

The speed of convergence of this method is studied in \cite{boyd2006randomized}. We give a summary here:
\begin{proposition}[from {\cite{boyd2006randomized}}]
Let $\xi$ be an arbitrary family of initial observations and $x^t$ the iterates of simple gossip defined by \eqref{eq:simple_gossip}. Denote $\tilde{\gamma}$ the absolute spectral gap of $W$. Then 
\begin{equation*}
\limsup_{t\to\infty} \Vert x^t-\bar{\xi}\bfone\Vert_2^{1/t} \leq 1-\tilde{\gamma} \, .
\end{equation*}
Moreover, the upper bound is reached if there exists an eigenvector $u$ of $W$, corresponding to an eigenvalue of magnitude $1-\tilde{\gamma}$, such that $\langle \xi,u\rangle \neq 0$.
\end{proposition}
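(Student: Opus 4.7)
The plan is to diagonalize $W$ and read off the rate from the spectral decomposition. Since $W$ is real symmetric, I can pick an orthonormal eigenbasis $u_1,\dots,u_n$ with eigenvalues $\lambda_1=1\geq\lambda_2\geq\dots\geq\lambda_n$; stochasticity ($W\bfone=\bfone$) lets me take $u_1=\bfone/\sqrt{n}$, and Perron--Frobenius guarantees $|\lambda_i|\leq 1$ for all $i$. By definition of the absolute spectral gap, $|\lambda_i|\leq 1-\tilde{\gamma}$ for every $i\geq 2$.

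Expand the initial vector in this basis, $\xi=\sum_{i=1}^n\alpha_i u_i$ with $\alpha_i=\langle\xi,u_i\rangle$. The coefficient on $u_1$ is $\alpha_1=\langle\xi,\bfone/\sqrt n\rangle=\sqrt{n}\,\bar{\xi}$, hence $\alpha_1 u_1=\bar{\xi}\bfone$. Applying $W^t$ commutes with the decomposition, so
\begin{equation*}
x^t-\bar{\xi}\bfone \;=\; W^t\xi-\alpha_1 u_1 \;=\; \sum_{i=2}^n \alpha_i \lambda_i^t u_i \, .
\end{equation*}
Orthonormality of $(u_i)$ then gives
\begin{equation*}
\Vert x^t-\bar{\xi}\bfone\Vert_2^2 \;=\; \sum_{i=2}^n \alpha_i^2 \lambda_i^{2t} \;\leq\; (1-\tilde{\gamma})^{2t}\sum_{i=2}^n\alpha_i^2 \;\leq\; (1-\tilde{\gamma})^{2t}\Vert\xi\Vert_2^2 \, .
\end{equation*}
Taking the $1/t$-th power and letting $t\to\infty$ produces $\limsup_{t\to\infty}\Vert x^t-\bar{\xi}\bfone\Vert_2^{1/t}\leq 1-\tilde{\gamma}$, which is the upper bound.

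For the matching lower bound, suppose $u$ is a unit eigenvector of $W$ associated with an eigenvalue $\lambda$ of modulus $1-\tilde{\gamma}$ (hence $u\perp\bfone$, since $\lambda\neq 1$), with $\beta:=\langle\xi,u\rangle\neq 0$. Completing $u$ into an orthonormal eigenbasis and projecting the same identity onto $u$ yields
\begin{equation*}
\Vert x^t-\bar{\xi}\bfone\Vert_2 \;\geq\; |\langle x^t-\bar{\xi}\bfone,u\rangle| \;=\; |\beta|\,|\lambda|^t \;=\; |\beta|\,(1-\tilde{\gamma})^t \, .
\end{equation*}
Taking $1/t$-th powers gives $\liminf_{t\to\infty}\Vert x^t-\bar{\xi}\bfone\Vert_2^{1/t}\geq 1-\tilde{\gamma}$, and combined with the upper bound the $\limsup$ equals $1-\tilde{\gamma}$, as claimed.

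There is essentially no obstacle: the only point to watch is that the ``absolute'' spectral gap controls $|\lambda_i|$ rather than $\lambda_i$ itself (so $\lambda_n$ close to $-1$ is not overlooked), and that the eigenvector $u_1=\bfone/\sqrt n$ is correctly isolated so that its component in $x^t$ is precisely $\bar{\xi}\bfone$ for all $t$.
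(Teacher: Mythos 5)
Your proof is correct: the spectral decomposition of the symmetric stochastic matrix $W$, isolating the component $\bar{\xi}\bfone$ along $\bfone/\sqrt{n}$ and bounding the remaining eigenvalues in modulus by $1-\tilde{\gamma}$, is exactly the standard argument this cited result rests on, and your lower bound via projection onto the extremal eigenvector is also sound. The paper itself gives no proof (it defers to \cite{boyd2006randomized}), so there is nothing to contrast; your write-up would serve as a faithful self-contained justification.
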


\medskip
\textbf{Shift-register gossip.}  Several acceleration schemes of gossip \cite{cao2006accelerated,rebeschini2017accelerated} store some past iterates to compute higher-order recursions (that thus depend on powers of $W$). For instance, the shift-register iteration of \cite{cao2006accelerated} is of the form
\begin{align}
\label{eq:shift-register}
&x^0 = \xi \, ,  &&x^1 = W \xi \, , && x^{t+1} = \omega W x^t + (1-\omega)x^{t-1} \, ,
\end{align}
where $\omega$ is a parameter that needs to be tuned.
\begin{proposition}[from {\cite[Theorem 2]{liu2013analysis}}]
Let $\xi$ be an arbitrary family of initial observations and $x^t$ the iterates of shift-register gossip defined in \eqref{eq:shift-register} with parameter 
\begin{equation*}
\label{eq:tuning-shift-register}
\omega = 2\frac{1-\sqrt{\gamma(1-\gamma/4)}}{\left(1-\gamma/2\right)^2}\, ,
\end{equation*} 
where $\gamma$ is the spectral gap of the gossip matrix $W$. Then
\begin{equation*}
\limsup_{t\to\infty} \Vert x^t-\bar{\xi}\bfone\Vert_2^{1/t} \leq 1-2\frac{\sqrt{\gamma(1-\gamma/4)}-\gamma/2}{1-\gamma} \, .
\end{equation*}
Moreover, the upper bound is reached if there exists an eigenvector $u$ of $W$, corresponding to the eigenvalue $1-\gamma$, such that $\langle \xi,u\rangle \neq 0$.
\end{proposition}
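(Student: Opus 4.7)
The plan is to cast shift-register gossip in the polynomial framework advocated in this paper. Define the polynomials $(P_t)_{t\geq 0}$ by $P_0 = 1$, $P_1(X) = X$, and $P_{t+1}(X) = \omega X P_t(X) + (1-\omega)P_{t-1}(X)$; an immediate induction gives $x^t = P_t(W)\xi$, and since $\omega + (1-\omega) = 1$ a second induction gives $P_t(1) = 1$ for every $t$. Diagonalising $W = \sum_{i=1}^n \lambda_i u_i u_i^\top$ with $u_1 = \bfone/\sqrt{n}$, $\lambda_1 = 1$, one gets
\begin{equation*}
x^t - \bar{\xi}\bfone \;=\; \sum_{i\geq 2} P_t(\lambda_i)\,\langle \xi, u_i\rangle\,u_i \, ,
\end{equation*}
so the problem reduces to the behaviour of $P_t$ on the truncated spectrum $\sigma(W)\setminus\{1\}\subset[-1,1-\gamma]$.

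For each fixed $\lambda$, the sequence $(P_t(\lambda))_t$ satisfies a scalar linear recurrence of order two with characteristic equation $r^2 - \omega\lambda\,r - (1-\omega) = 0$, whose roots $r_\pm(\lambda) = \bigl(\omega\lambda\pm\sqrt{\omega^2\lambda^2+4(1-\omega)}\bigr)/2$ satisfy $r_+r_- = \omega-1$. Writing $P_t(\lambda) = A(\lambda)r_+(\lambda)^t + B(\lambda)r_-(\lambda)^t$ with $A,B$ determined by $P_0(\lambda)=1$, $P_1(\lambda)=\lambda$, one obtains $\limsup_{t\to\infty}|P_t(\lambda)|^{1/t} = \max(|r_+(\lambda)|,|r_-(\lambda)|)$. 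A short case analysis then separates the complex-conjugate regime $\omega^2\lambda^2<4(\omega-1)$, in which both roots have common modulus $\sqrt{\omega-1}$, from the real-root regime, in which $\max(|r_+|,|r_-|)$ is an explicit monotone function of $|\lambda|$ strictly exceeding $\sqrt{\omega-1}$.

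It then remains to verify that, for the prescribed $\omega$, the worst-case rate over $\lambda\in[-1,1-\gamma]$ equals the right-hand side of the proposition and is attained at $\lambda = 1-\gamma$. The algebra is driven by the identity $(1-\gamma/2)^2 = 1-\gamma(1-\gamma/4)$ built into the formula for $\omega$, and is the main technical obstacle of the proof: the individual estimates are elementary but the resulting expressions are cumbersome to manipulate. For the matching lower bound in the ``moreover'' clause, it is enough to project the error onto an eigenvector $u$ with $Wu=(1-\gamma)u$ and $\langle\xi,u\rangle\neq 0$: then $\Vert x^t-\bar{\xi}\bfone\Vert_2 \geq |P_t(1-\gamma)|\,|\langle\xi,u\rangle|$, and $\limsup_t|P_t(1-\gamma)|^{1/t}$ coincides with the stated rate by the same root analysis, saturating the upper bound.
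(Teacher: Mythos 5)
First, be aware that the paper does not prove this proposition at all: it is quoted from \cite[Theorem 2]{liu2013analysis}, so there is no internal proof to compare yours against. The closest the paper comes is Proposition \ref{prop:polynomials-shift-register-chebyshev} in Appendix \ref{ap:proof-thm-rate-decrease-Jacobi_shift-register}, which writes $P_t(\lambda)=(\omega-1)^{t/2}\bigl[(2-2/\omega)T_t(\tfrac{\omega}{2\sqrt{\omega-1}}\lambda)+(2/\omega-1)U_t(\tfrac{\omega}{2\sqrt{\omega-1}}\lambda)\bigr]$ --- exactly the information carried by your $A(\lambda)r_+^t+B(\lambda)r_-^t$ decomposition --- but it is used there for a different purpose. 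Your reduction to the scalar three-term recurrence and its characteristic roots is the standard and correct skeleton for a statement of this kind.

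The gap is that the step you defer as ``cumbersome algebra'' is the entire content of the proposition, and carrying it out does not deliver the stated right-hand side. Write $s=\sqrt{\gamma(1-\gamma/4)}$, so that $(1-\gamma/2)^2=1-s^2$ and the prescribed parameter simplifies to $\omega=2/(1+s)$, whence $\omega-1=(1-s)/(1+s)$. The complex-conjugate regime $\omega^2\lambda^2<4(\omega-1)$ is exactly $|\lambda|<2\sqrt{\omega-1}/\omega=\sqrt{1-s^2}=1-\gamma/2$; since $1-\gamma<1-\gamma/2$, the point $\lambda=1-\gamma$ lies strictly \emph{inside} this regime, where every eigenvalue yields the same root modulus $\sqrt{\omega-1}=(1-s)/(1-\gamma/2)=1-\sqrt{\gamma}+O(\gamma)$. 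This is not the displayed quantity $1-2\tfrac{s-\gamma/2}{1-\gamma}=\tfrac{1-2s}{1-\gamma}=1-2\sqrt{\gamma}+O(\gamma)$: for $\gamma=0.19$ the two evaluate to about $0.635$ and $0.184$ respectively, and for $\gamma=1/2$ the displayed bound is even negative. So your plan, completed honestly, proves the bound $\sqrt{\omega-1}$ rather than the stated one, and the ``moreover'' clause cannot follow from ``the same root analysis'' either: at $\lambda=1-\gamma$ the roots are complex with the $\lambda$-independent modulus $\sqrt{\omega-1}$, so that eigenvalue is in no way extremal for the decay rate. Before the proof can be finished you must reconcile this factor-of-two discrepancy in the $\sqrt{\gamma}$ coefficient with the source (e.g.\ whether \cite{liu2013analysis} bounds a squared error or normalizes the recursion differently); as written, your sketch and the quoted statement are inconsistent. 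A smaller separate issue: an eigenvalue at $\lambda=-1$ gives $P_t(-1)=(-1)^t$, so some hypothesis involving the absolute spectral gap is implicitly needed for any convergence statement at all.
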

The important consequence of this result is that the rate of convergence of the shift-register method behaves like $1-2\sqrt{\gamma}+o(\sqrt{\gamma})$ as $\gamma\to 0$. This differs from simple gossip where the rate of convergence behaves like $1-\gamma$. This means that in graphs with a small spectral gap, shift-register enjoys a much better rate of convergence that simple gossip: this is why we say that shift-register enjoys an \emph{accelerated} rate of convergence as opposed to simple gossip which has a \emph{diffusive} or \emph{unaccelerated} rate. This effect on the asymptotic rate of convergence can be seen in Figures~\ref{fig:geometric_graph_simulation} and~\ref{fig:sim_2D_grid_log}.

\medskip
\textbf{Polynomial gossip.} More abstractly, we define a polynomial gossip method as any method combining the past iterates of the simple gossip method:
\begin{equation}
\label{eq:polynomial_gossip}
x^t = P_t(W) \xi \, ,
\end{equation}
where $P_t$ is a polynomial of degree smaller or equal to $t$ satisfying $P_t(1) = 1$. The constraint $P_t(1) = 1$ ensures that $x^t = \bar{\xi}\bfone$ if all initial observations are the same, i.e., $\xi = \bar{\xi}\bfone$. The constraint $\deg P_t \leq t$ ensures that the iterate $x^t$ can be computed in $t$ time steps. Simple gossip corresponds to the particular case of the polynomial $P_t(\lambda)=\lambda^t$. Shift-register gossip is a polynomial gossip method whose corresponding polynomials that can be expressed using the Chebyshev polynomials (see Proposition \ref{prop:polynomials-shift-register-chebyshev}). The method \eqref{eq:iteration} will be derived as the polynomial iteration corresponding to some Jacobi polynomials. 

In this paper, we design polynomial gossip methods whose polynomials $P_t$, $t \geq 0$ satisfy a second-order recursion. This key property ensures that the resulting iterates $x^t = P_t(W)\xi$ can be computed recursively. 

\section{Simulations: comparison of simple gossip, shift-register gossip and the Jacobi polynomial iteration}
\label{sec:simulations}

In this section, we run our methods on grids, percolation bonds and random geometric graphs; the latter being a widely used model for real-world networks \cite[Section 1.1]{penrose2003random}. In each case, we consider both the two-dimensional (2D) structure and its three-dimensional (3D) counterpart. We refer to Figure \ref{fig:graph_drawings} for visualizations of the 2D structures, and to Appendix \ref{ap:details-simulations} for details about the parameters used. 

We compare our Jacobi polynomial iteration \eqref{eq:iteration} with the simple gossip method \eqref{eq:iteration-simple-gossip} and the shift-register algorithm \eqref{eq:iteration-second-order-gossip}. We found experimentally that the behavior of the shift-register algorithm was typical of methods based on the spectral gap such as the splitting algorithm of \cite{rebeschini2017accelerated} or the Chebychev polynomial acceleration scheme \cite{Arioli:2014:CAI:2638909.2639021,scaman2017optimal}; to avoid redundancy we do not present the similar behavior of these methods. We also compare with local averaging, which is given by the formula
\begin{equation*}
x^t_v = \frac{1}{|B_t(v)|} \sum_{w \in B_t(v)} \xi_w \, ,
\end{equation*} 
where $|B_t(v)|$ denotes the ball in $G$, centered in $v$, of radius $t$, for the shortest path distance. Note that local averaging does not correspond in general to any computationally cheap iteration, as opposed to the algorithms we present here. Thus it should not be considered as a gossip method, but rather as a lower bound on the performance achievable by any gossip method. (This is made fully rigorous in the statistical gossip framework of Section \ref{sec:performance-guarantees}.)

In our simulations, we change the graph $G$ that we run our algorithms on, but we always sample $\xi_v \sim_{i.i.d.} \cN(0,1), v \in V$ and measure the performance of gossip methods through the quantity $\Vert x^t - \bar{\xi}\bfone \Vert_2/\sqrt{n}$. Thus the performance of the algorithms is random because the initial values $\xi_v$ are random, and also because percolation bonds and random geometric graphs are random. The results we present here are averaged over $10$ realizations of the graph and the initial values, which is sufficient to give stable results.

\smallskip
\textbf{Tuning.} The optimal tuning of the shift-register gossip method as a function of the spectral gap was determined in \cite[Theorem 2]{liu2013analysis}, it is given by the formula \eqref{eq:tuning-shift-register}; this is the tuning that we use in our simulations. The Jacobi polynomial iteration is tuned by choosing $d=2$ in 2D grid, 2D percolation bonds and 2D random geometric graphs, and $d=3$ for their 3D analogs. 
\smallskip

\begin{figure}
	\begin{subfigure}{0.32\linewidth}
		\includegraphics[width=\linewidth, trim = 0 0 0 0]{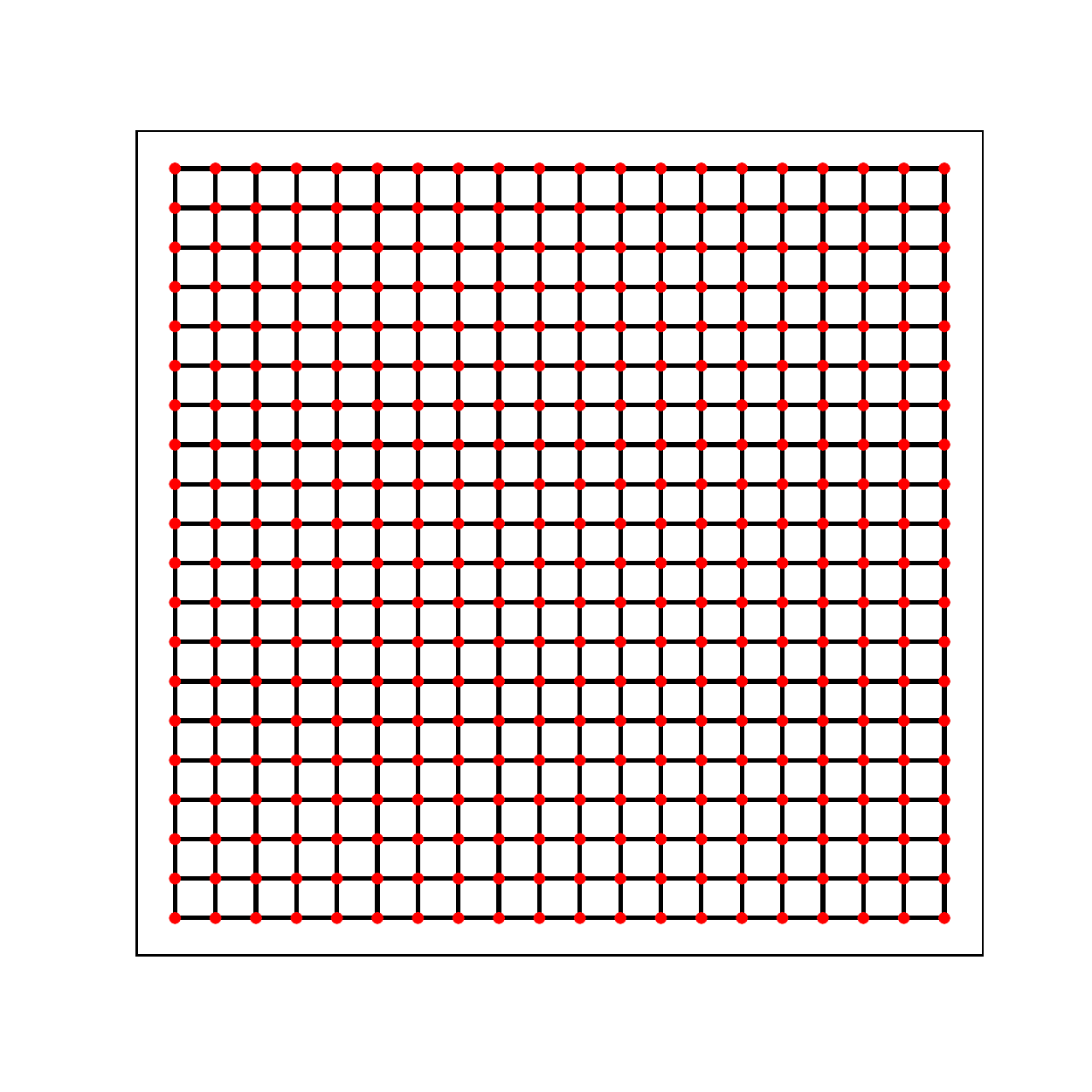}
		\vspace{-1cm}
		\caption{Grid}
		\label{fig:grid_drawing}
	\end{subfigure}
	\begin{subfigure}{0.32\linewidth}
		\includegraphics[width=\linewidth, trim = 0 0 0 0]{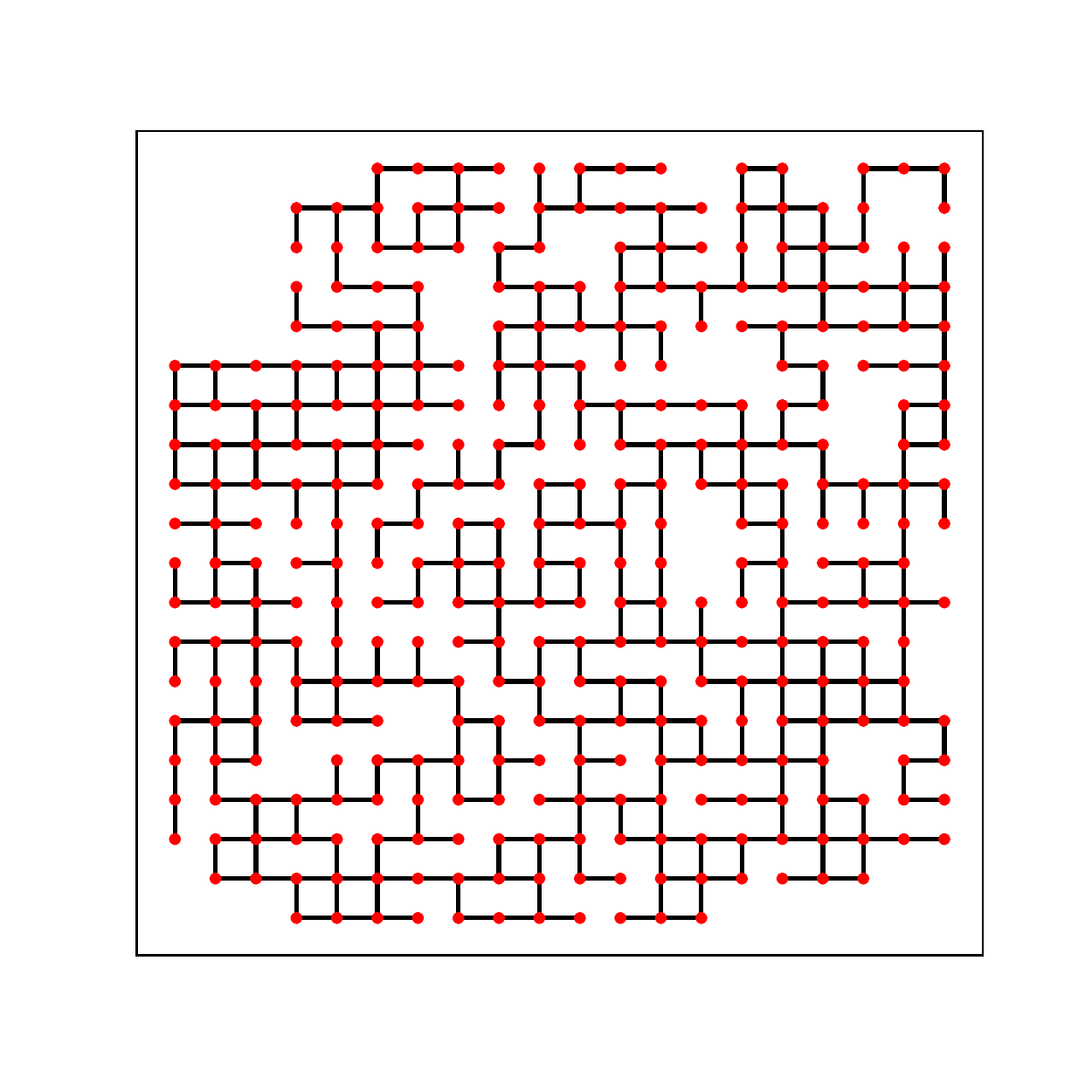}
		\vspace{-1cm}
		\caption{Percolation bond}
		\label{fig:percolation_drawing}
	\end{subfigure}
	\begin{subfigure}{0.32\linewidth}
		\includegraphics[width=\linewidth, trim = 0 0 0 0]{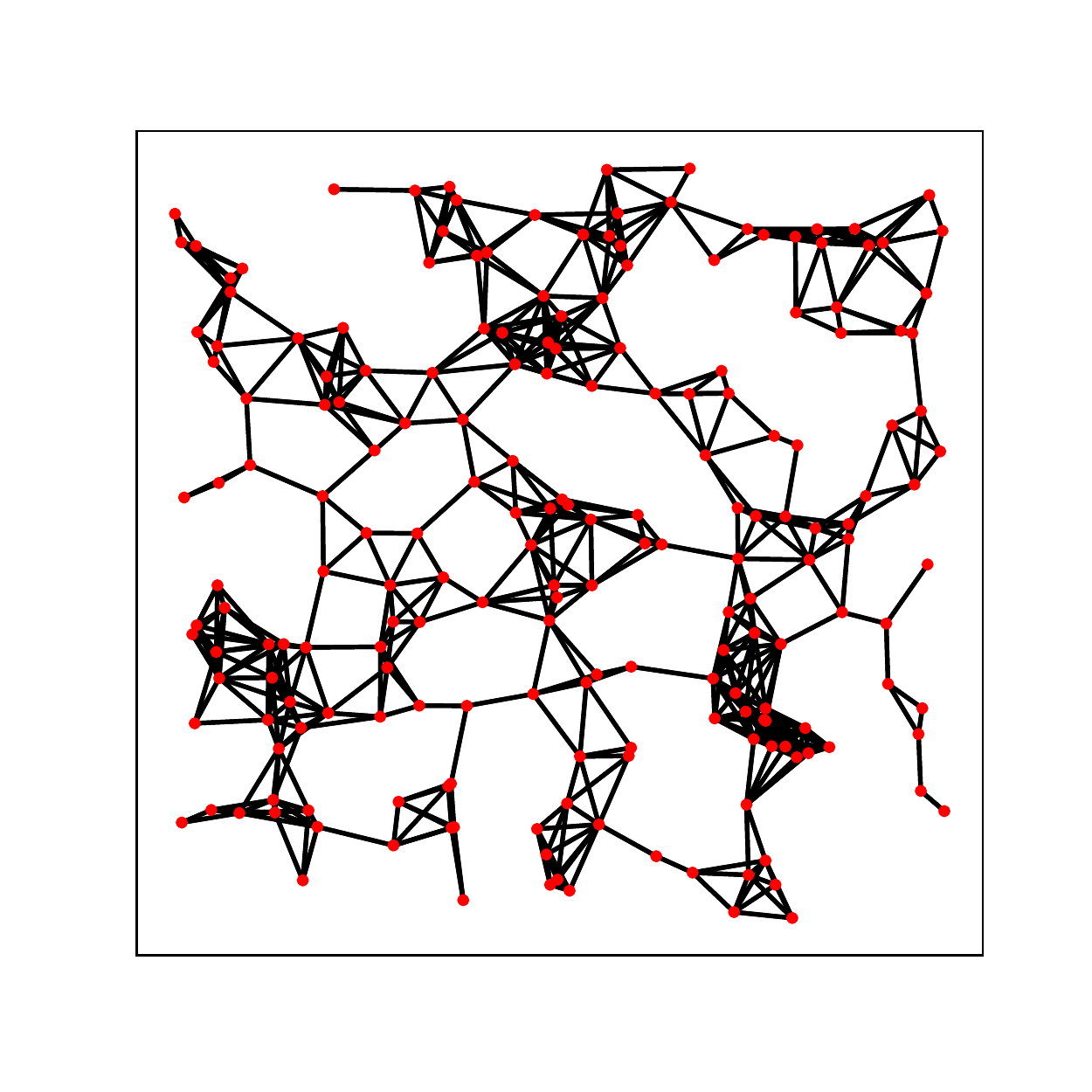}
		\vspace{-1cm}
		\caption{Random geometric graph}
		\label{fig:rgg_drawing}
	\end{subfigure}
	\caption{The three types of two-dimensional graphs considered in simulations.}
	\label{fig:graph_drawings} 
	\vspace{0.7cm}
	\begin{subfigure}{0.40\linewidth}
		\includegraphics[width=\linewidth, trim = 0 0 0 0]{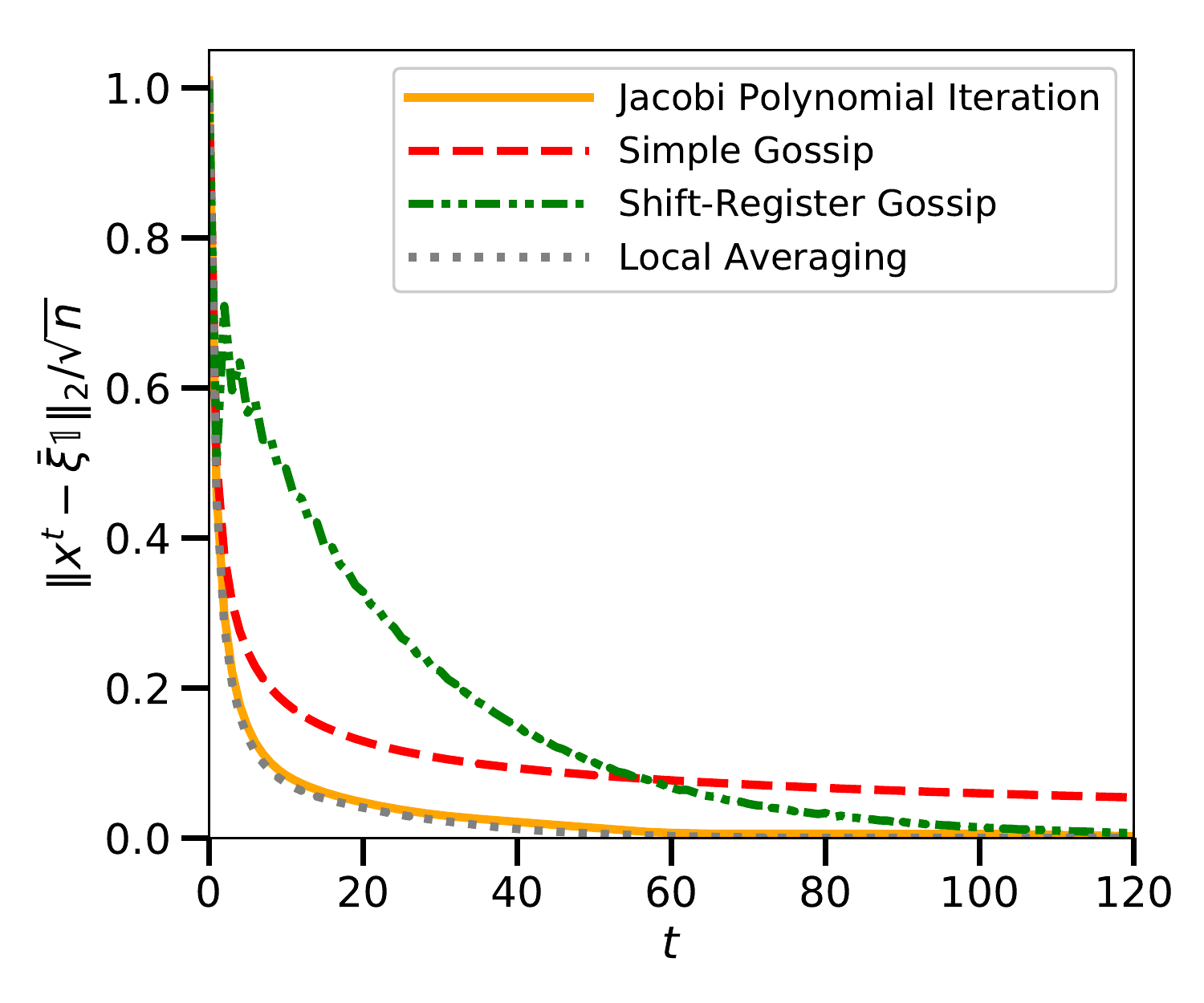}
		\vspace{-0.7cm}
		\caption{2D grid}
		\label{fig:sim_2D_grid}
		\vspace{0.4cm}
	\end{subfigure}
	\begin{subfigure}{0.40\linewidth}
		\includegraphics[width=\linewidth, trim = 0 0 0 0]{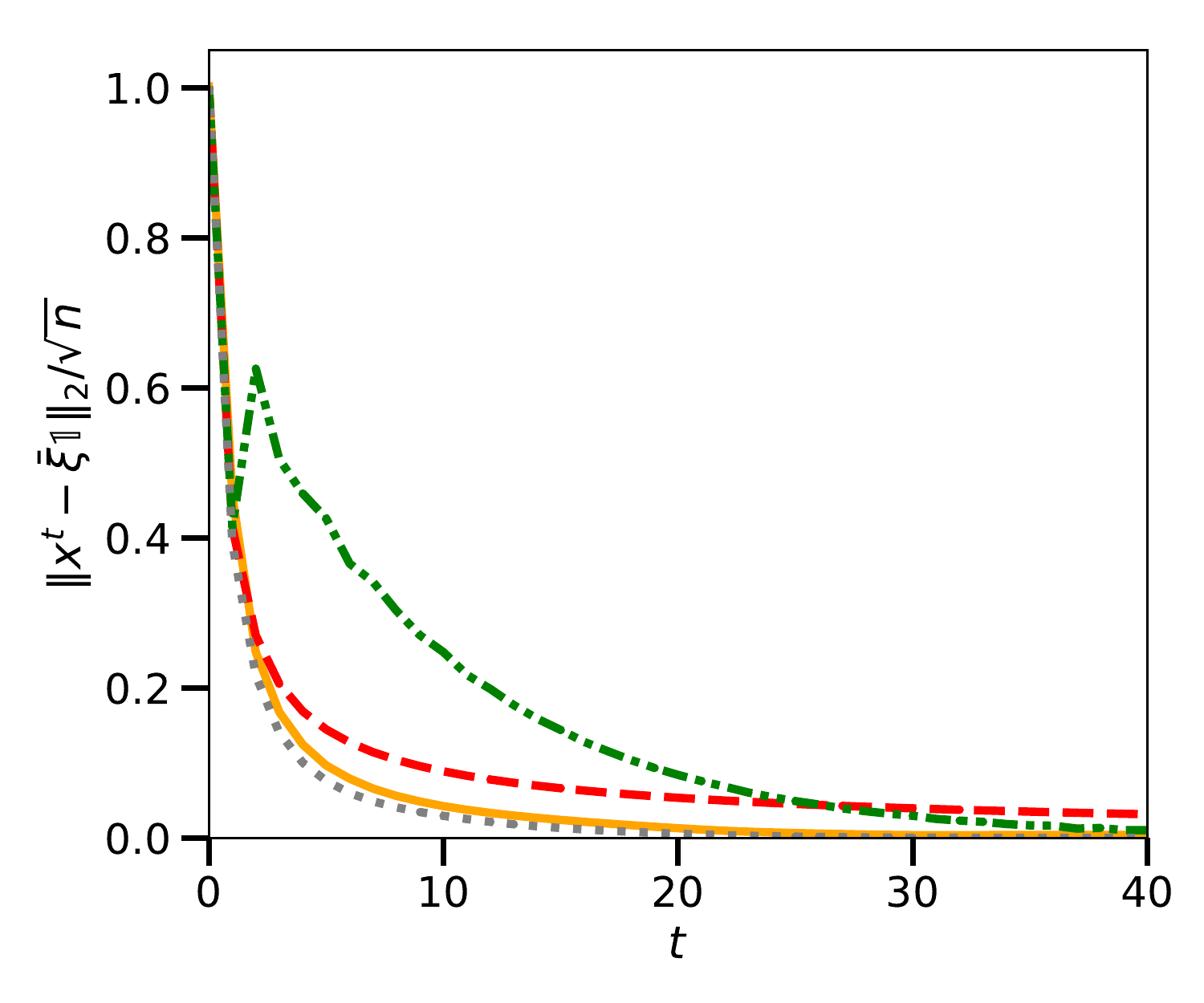}
		\vspace{-0.7cm}
		\caption{3D grid}
		\label{fig:sim_3D_grid}
		\vspace{0.4cm}
	\end{subfigure}
	\begin{subfigure}{0.40\linewidth}
		\includegraphics[width=\linewidth, trim = 0 0 0 0]{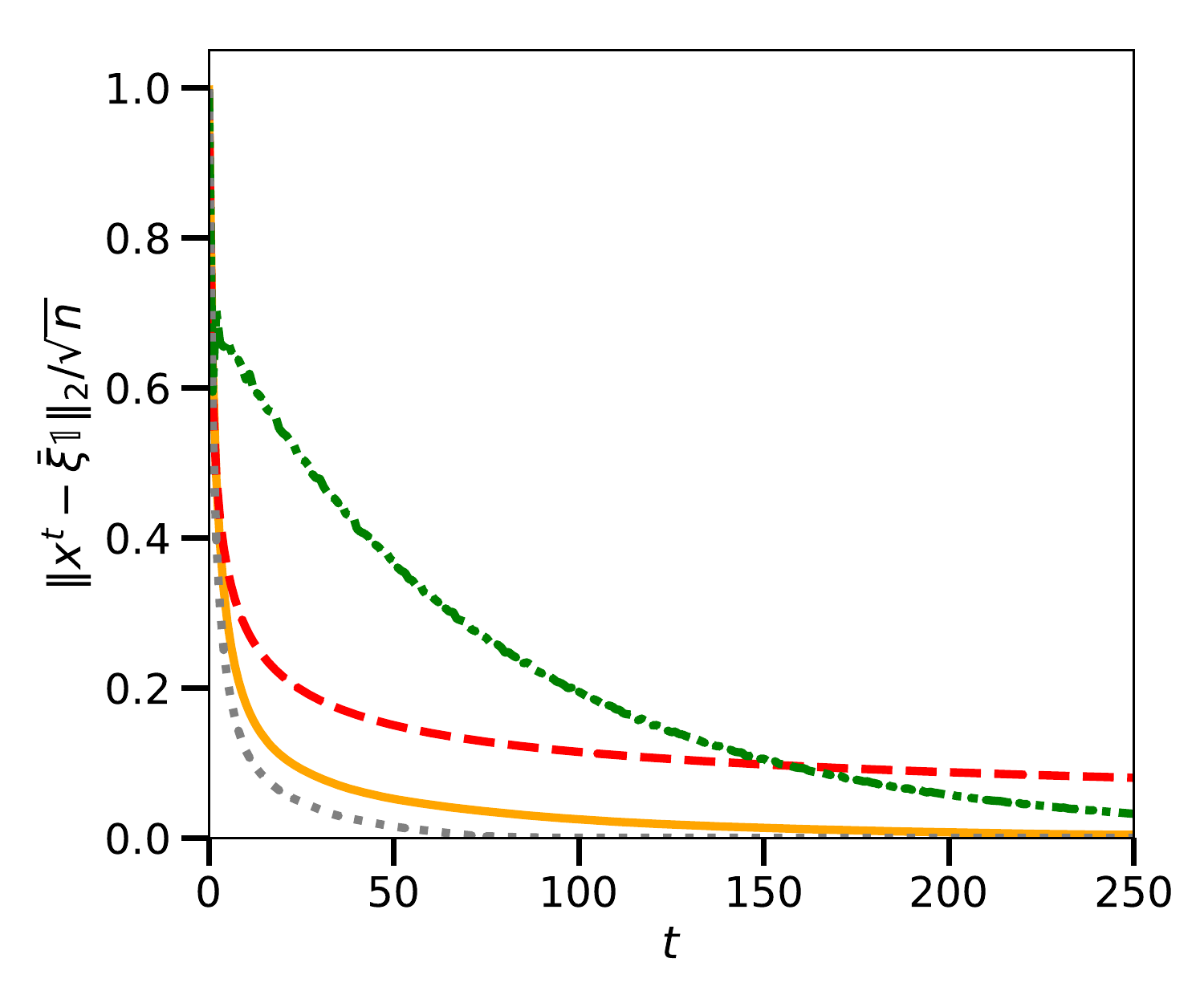}
		\vspace{-0.7cm}
		\caption{2D percolation bond}
		\label{fig:sim_2D_percolation}
		\vspace{0.4cm}
	\end{subfigure}
	\begin{subfigure}{0.40\linewidth}
		\includegraphics[width=\linewidth, trim = 0 0 0 0]{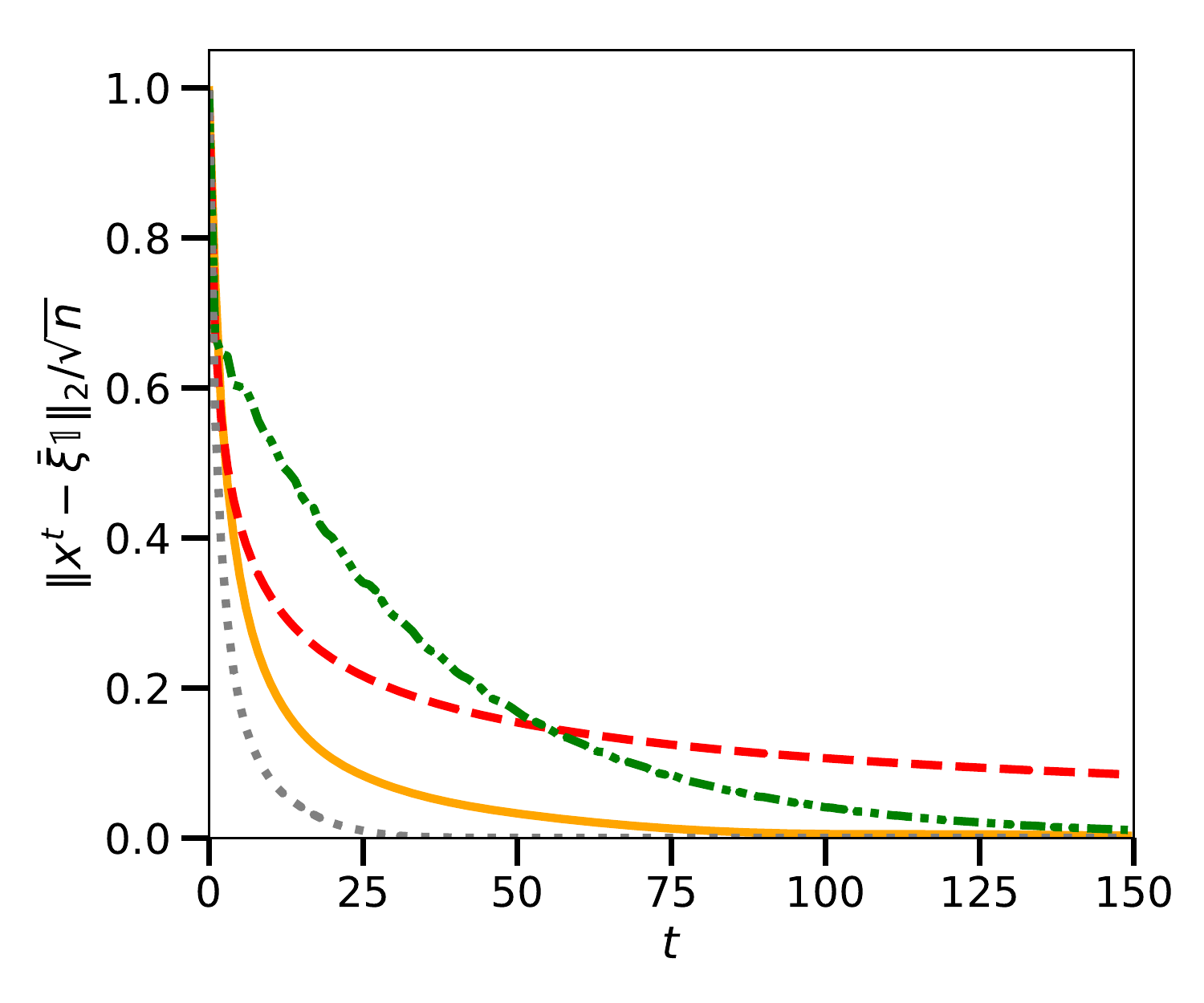}
		\vspace{-0.7cm}
		\caption{3D percolation bond}
		\label{fig:sim_3D_percolation}
		\vspace{0.4cm}
	\end{subfigure}
	\begin{subfigure}{0.40\linewidth}
		\includegraphics[width=\linewidth, trim = 0 0 0 0]{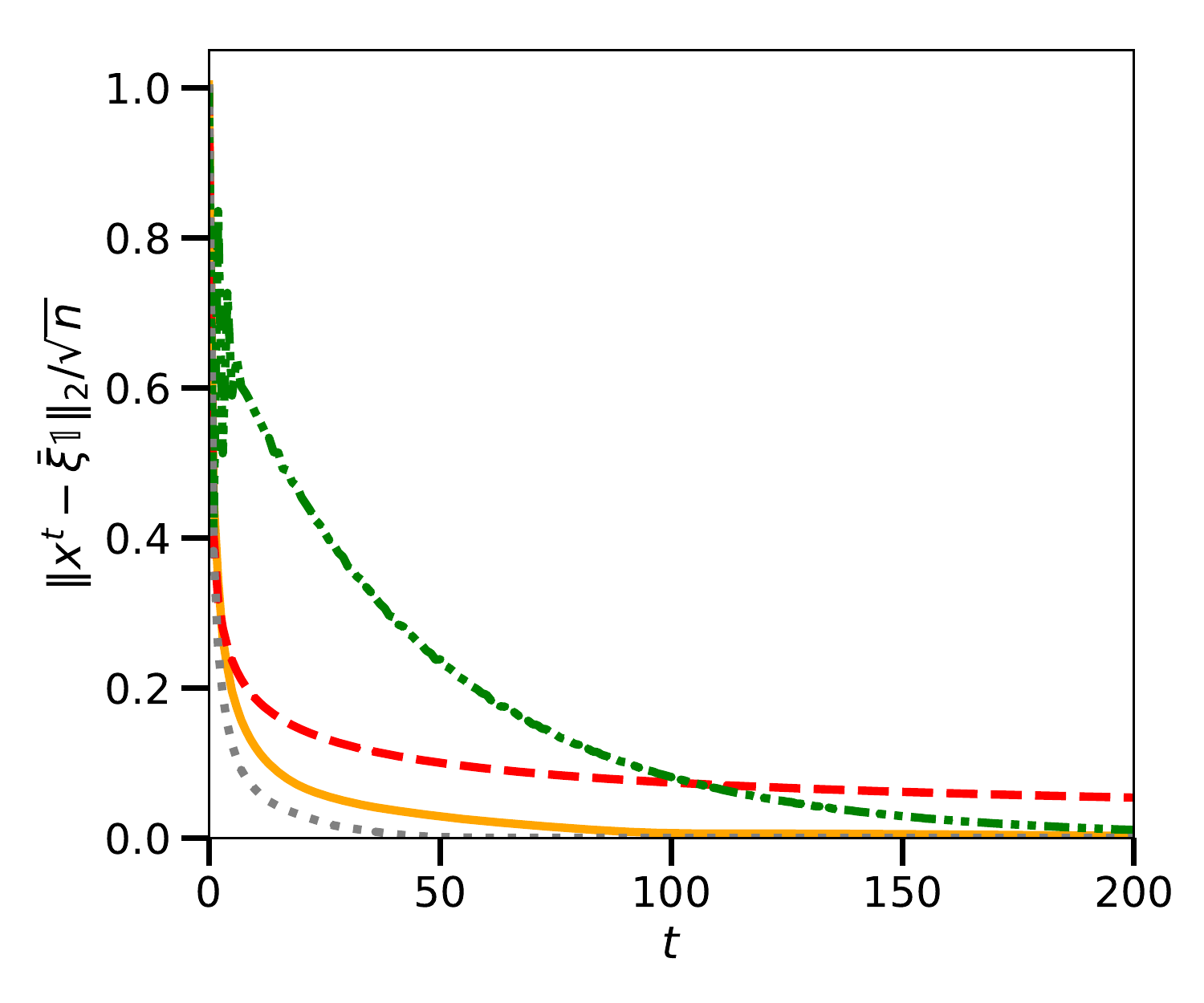}
		\vspace{-0.7cm}
		\caption{2D random geometric graph}
		\label{fig:sim_2D_rgg}
		\vspace{0.4cm}
	\end{subfigure}
	\begin{subfigure}{0.40\linewidth}
		\includegraphics[width=\linewidth, trim = 0 0 0 0]{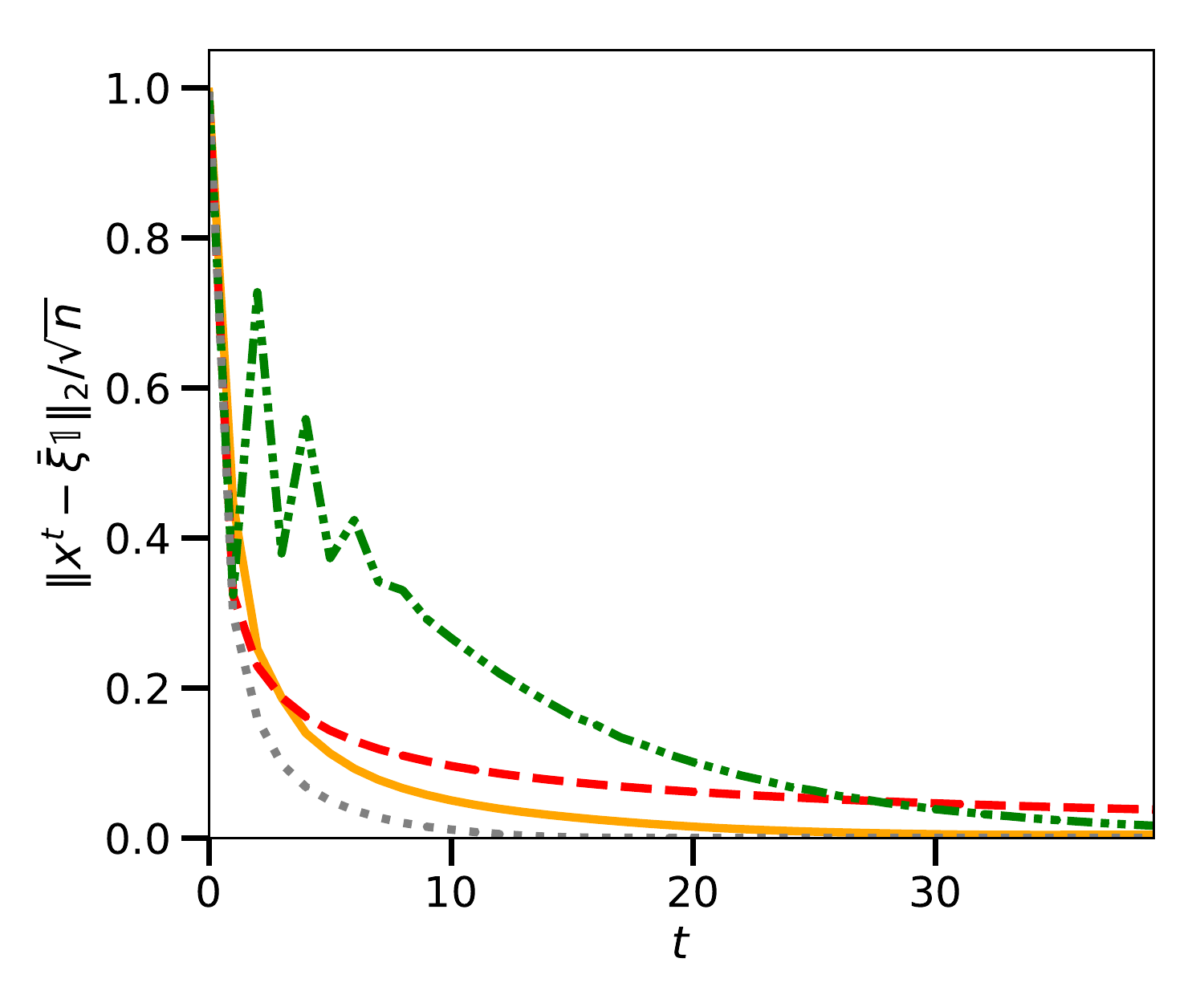}
		\vspace{-0.7cm}
		\caption{3D random geometric graph}
		\label{fig:sim_3D_rgg}
		\vspace{0.4cm}
	\end{subfigure}
	\caption{Performance of different gossip algorithms running on graphs with an underlying low-dimensional geometry, as measured by $\Vert x^t - \bar{\xi} \bfone \Vert_2 / \sqrt{n}$.}
	\label{fig:geometric_graph_simulation} 
\end{figure}

\begin{figure}
	\begin{subfigure}{0.49\linewidth}
		\includegraphics[width=\linewidth, trim = 0 0 0 0]{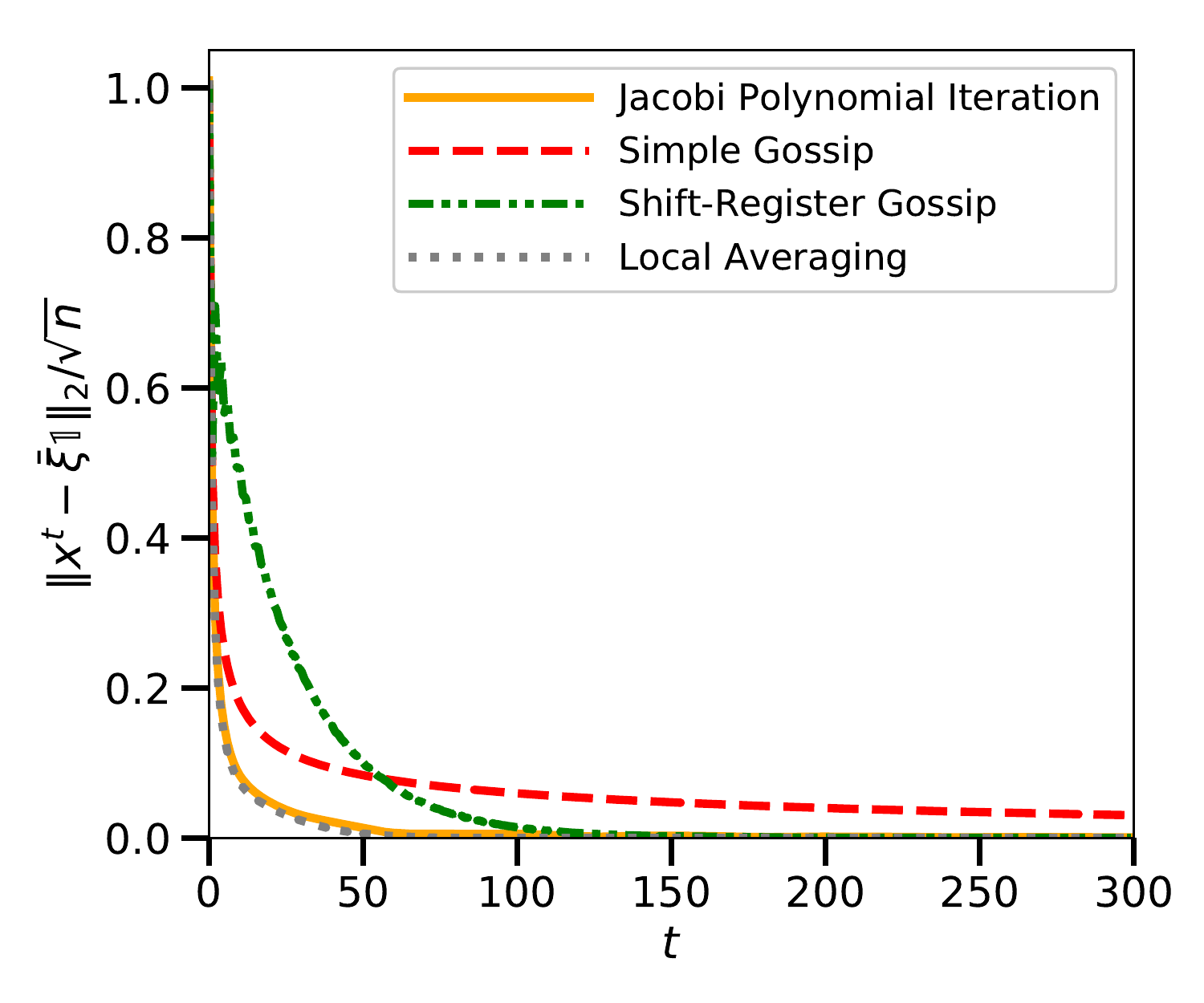}
		\caption{in linear scale}
	\end{subfigure}
	\begin{subfigure}{0.49\linewidth}
		\includegraphics[width=\linewidth, trim = 0 0 0 0]{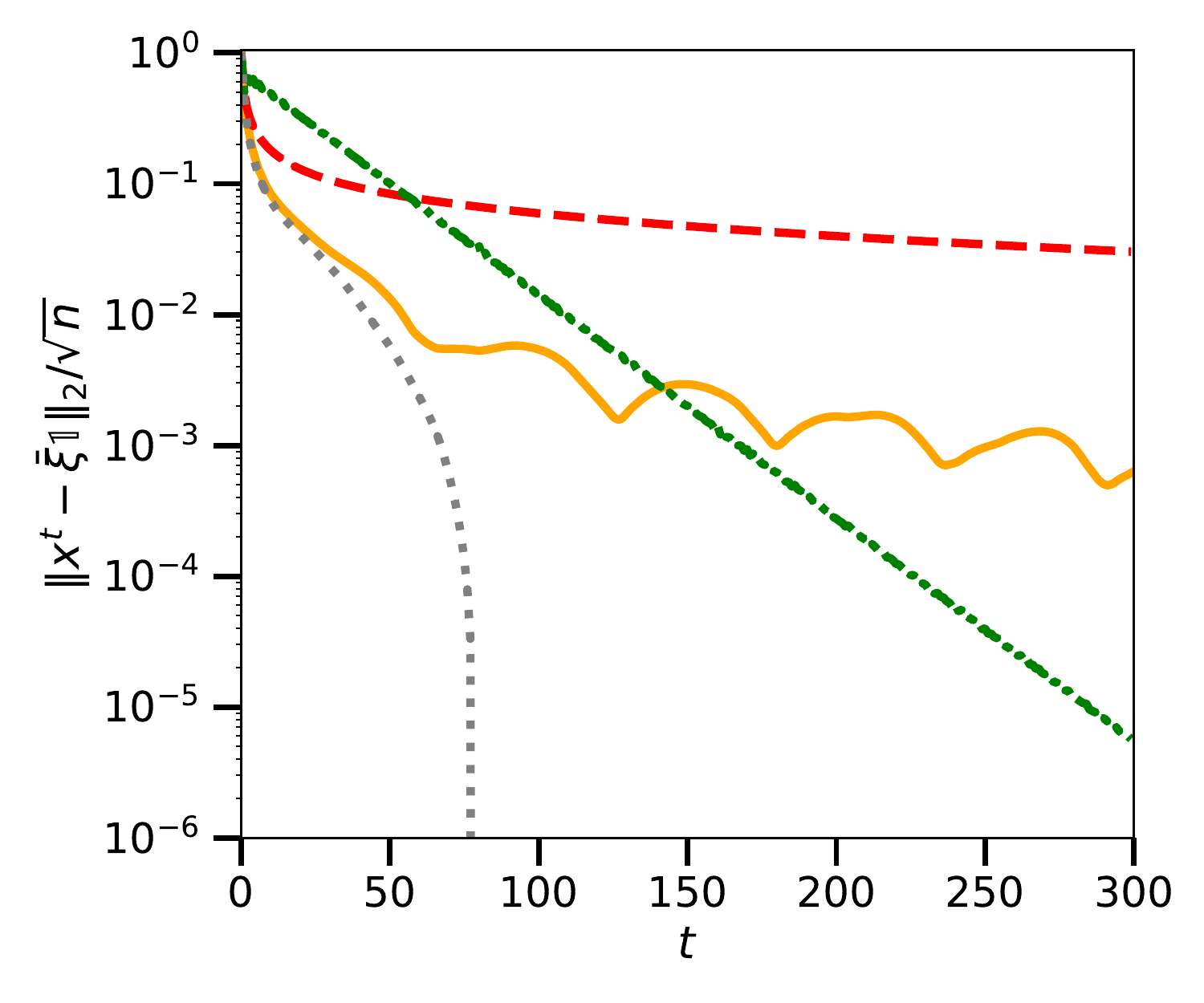}
		\caption{in log-scale}
	\end{subfigure}
	\caption{Performance of different gossip algorithms running on the 2D grid.}
	\label{fig:sim_2D_grid_log}
\end{figure}

\textbf{Interpretation of the results.} The results of the simulations are exposed in Figure \ref{fig:geometric_graph_simulation}. The qualitative picture remains the same across different graphs. Simple gossip performs better than shift-register gossip in a first phase, but in a large $t$ asymptotic, simple gossip converges slowly where shift-register gossip converges quickly. The Jacobi polynomial iteration enjoys both the quick diffusion of simple gossip in the first phase, and reaches the full mixing before shift-register gossip. As a consequence, the Jacobi polynomial iteration gets considerably closer to the local averaging optimal bound, especially in very regular structures like grids. 

These results should be mitigated with the large $t$ asymptotic: in Figure \ref{fig:sim_2D_grid_log}, we show the comparison of gossip methods on a longer time scale, in linear and log-scale y-axis. We only present the results on the 2D grid as they are typical of the behavior on other structures. We observe that shift-register gossip enjoys a much better asymptotic rate of convergence than simple gossip and the Jacobi polynomial iteration. 

Methods that use the spectral gap are designed to achieve the best possible asymptotic (see \cite{cao2006accelerated}, \cite{rebeschini2017accelerated}), thus the above observation is not surprising. These methods however fail in the non-asymptotic regime, where they are outperformed by the Jacobi polynomial iteration and simple gossip. We believe that in applications where a high precision on the average is not needed, the Jacobi polynomial iteration brings important improvements over existing methods, let alone the fact that it is considerably easier to tune. However, in Section \ref{sec:jacobi-with-spectral-gap}, we present a Jacobi polynomial iteration that uses the spectral gap of the gossip matrix to obtain the accelerated convergence rate. 

\section{Design of best polynomial gossip iterations}
\label{sec:second_order_gossip}

We now turn to the design of efficient polynomial iterations of the form $x^t = P_t(W)\xi$. An important result of this section is that the best iterates of this form can be computed in an online fashion as they result from a second-order recurrence relation.

The approach presented in this section is similar to \cite[Section 3.3]{diekmann1999efficient}, although therein it is applied to the slightly different problem of load balancing. We repeat here the derivations as we take a slightly different approach: here we derive the best polynomial $P_t$ with fixed $W$ and $\xi$; while in \cite{diekmann1999efficient} the matrix $W$ is fixed, but a polynomial $P_t$ efficient uniformly over $\xi$ is sought. We then discuss why the resulting recursion may be impractical. The next section introduces some approximation of the impractical scheme that leads to the practical iteration \eqref{eq:iteration}.

\medskip
Our measure of performance of a polynomial gossip iteration is the sum of squared errors over the agents of the network:
\begin{equation*}
\cE(P_t) = \sum_{v \in V} (x^t_v - \bar{\xi})^2 = \Vert x^t - \bar{\xi}\bfone \Vert_2^2 = \Vert P_t(W)\xi - \bar{\xi}\bfone \Vert_2^2 \, .
\end{equation*}
Denote $\lambda_1, \lambda_2, \dots, \lambda_n$ the real eigenvalues of the symmetric matrix $W$ and $u^1, u^2,\dots,u^n$ are the associated eigenvectors, normalized such that $\Vert u^i \Vert_2=1$. The diagonalization of $W$ gives the new expression of the error 
\begin{equation}
\label{eq:error-as-spectral-finite}
\cE(P_t) =  \sum_{i=2}^{n} \langle \xi, u^i \rangle^2 P_t(\lambda_i)^2 = \int_{-1}^{1} P_t(\lambda)^2 \diff\sigma(\lambda) \, , \qquad \diff\sigma(\lambda) = \sum_{i=2}^{n} \langle \xi, u^i \rangle^2 \delta_{\lambda_i} \, , 
\end{equation}
where $\langle ., . \rangle$ denotes the canonical scalar product on $\R^n$ and $\delta_\lambda$ is the Dirac mass at $\lambda$.

The polynomial $\pi_t$ minimizing the error $\cE(P_t)$ must be chosen as
\begin{equation}
\label{eq:best_polynomial}
\pi_t \in \underset{P(1)=1, \, \deg P \leq t}{\rm argmin} \int_{-1}^{1} P(\lambda)^2 \diff\sigma(\lambda) \, .
\end{equation} 

We now show that the sequence of best polynomials $\pi_0, \pi_1, \pi_2, \dots$ can be computed as the result of a second-order recursion, which leads to a second-order gossip method, whose coefficients depend on $\sigma$. As noted in \cite{cao2006accelerated}, having iterates $x^t$ that satisfy a low-order recurrence relation is valuable as it ensures that they can be computed online with limited memory cost. In order to prove this property for our iterates, we use that these polynomials are orthogonal with respect to some measure $\tau$.

\begin{definition}[Orthogonal polynomials w.r.t.~$\tau$]
	\label{def:orthogonal_polynomials}
	Let $\tau$ be a measure on $\R$ whose moments are all finite. Endow the set of polynomials $\R[X]$ with the scalar product 
	\begin{equation*}
	\left\langle P,Q\right\rangle_\tau = \int_{\R} P(\lambda)Q(\lambda) \diff\tau(\lambda) \, .
	\end{equation*}
	Denote $T \in \N\cup\{\infty\}$ the cardinal of the support of $\tau$. Then there exists a family $\pi_0, \pi_1, ..., \pi_{T-1}$ of polynomials, such that for all $t < T$, $\pi_0, \pi_1, ..., \pi_{t}$ form an orthogonal basis of $(\R_t[X],\left\langle .,.\right\rangle_\tau)$, where $\R_t[X]$ denotes the set of polynomials of degree smaller or equal to $t$. In other words, for all $s,t < T$, 
	\begin{align*}
	& \deg \pi_t = t \, , && \left\langle \pi_s,\pi_t\right\rangle_\tau = 0 \qquad \textrm{if }s \neq t \, .
	\end{align*}
	$\pi_0, \pi_1, ..., \pi_{T-1}$ is called a sequence of \emph{orthogonal polynomials with respect to $\tau$ (w.r.t.~$\tau$)}. Moreover, the family of orthogonal polynomials $\pi_0, \pi_1, ..., \pi_{T-1}$ is unique up to a rescaling of each of the polynomials. 
\end{definition}

An extensive reference on orthogonal polynomials is the book \cite{szeg1939orthogonal}. An introduction from the point of view of applied mathematics can be found in \cite{gautschi2004orthogonal}. In Appendix \ref{ap:toolbox-orthogonal-polynomials}, we recall the results from the theory of orthogonal polynomials that we use in this paper. The next proposition states that the optimal polynomials sought in \eqref{eq:best_polynomial} are orthogonal polynomials.

\begin{proposition}
	\label{prop:opt_polynomial}
	Let $\sigma$ be some finite measure on $[-1,1]$ and let $T \in \N\cup\{\infty\}$ be the cardinal of ${\rm Supp \,}\sigma - \{1\}$. For $0 \leq t \leq T-1$, the minimizer $\pi_t$ of 
	\begin{equation*}
 \underset{P(1)=1, \, \deg P \leq t}{\min} \int_{-1}^{1} P(\lambda)^2 \diff\sigma(\lambda) 
	\end{equation*}
	is unique. Moreover, $\pi_0, \dots, \pi_{T-1}$ is the unique sequence of orthogonal polynomials w.r.t.~$\diff\tau(\lambda) = (1-\lambda)\diff\sigma(\lambda)$ normalized such that $\pi_t(1)=1$.
\end{proposition}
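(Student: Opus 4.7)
The plan is to recognize the minimization as a constrained quadratic optimization on the affine subspace $V_t = \{P \in \R_t[X] : P(1)=1\}$, reduce it to a linear orthogonality condition via first-order optimality, and then identify this condition as orthogonality with respect to $\tau$.

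First, I would establish uniqueness of the minimizer. The associated "tangent" subspace is $W_t = \{Q \in \R_t[X] : Q(1)=0\}$, and the quadratic form $Q \mapsto \|Q\|_\sigma^2 := \int Q^2 \diff\sigma$ is positive semidefinite on $\R_t[X]$. The key step is to check it is \emph{strictly} positive on $W_t$ under the hypothesis $t \leq T-1$: if $Q \in W_t$ satisfies $\|Q\|_\sigma^2 = 0$, then $Q$ vanishes on ${\rm Supp}\,\sigma$, and together with $Q(1)=0$ it vanishes on ${\rm Supp}\,\sigma \cup \{1\}$. This set has at least $T+1$ elements (since ${\rm Supp}\,\sigma - \{1\}$ has $T$), whereas $\deg Q \leq t \leq T-1$, forcing $Q \equiv 0$. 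Hence the minimizer $\pi_t$ is unique.

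Next, I would use the first-order optimality condition: $\pi_t$ minimizes $\|\cdot\|_\sigma^2$ on the affine space $V_t$ if and only if $\pi_t \in V_t$ and $\langle \pi_t, Q\rangle_\sigma = 0$ for every $Q \in W_t$. The decisive observation is that $W_t$ is exactly $(1-\lambda)\R_{t-1}[X]$, so the orthogonality condition rewrites as
\begin{equation*}
\int_{-1}^{1} \pi_t(\lambda) R(\lambda) (1-\lambda) \diff\sigma(\lambda) = \int_{-1}^{1} \pi_t(\lambda) R(\lambda) \diff\tau(\lambda) = 0
\end{equation*}
for all $R \in \R_{t-1}[X]$. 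Combined with $\deg \pi_t \leq t$, this is precisely the characterization (up to scalar) of the degree-$t$ orthogonal polynomial with respect to $\tau$. The normalization $\pi_t(1)=1$ then fixes the scalar, provided $\pi_t(1) \neq 0$.

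The last point to verify—and the one that might look like the main technical obstacle—is that $\pi_t(1) \neq 0$, so that the normalization is legitimate. Since ${\rm Supp}\,\tau \subseteq {\rm Supp}\,\sigma - \{1\} \subseteq [-1,1)$ and $\tau$ has exactly $T \geq t+1$ atoms (so that the orthogonal polynomial of degree $t$ is well-defined), I would invoke the classical fact, recalled in Appendix \ref{ap:toolbox-orthogonal-polynomials}, that the roots of an orthogonal polynomial w.r.t.~a positive measure lie in the interior of the convex hull of its support. Consequently all roots of $\pi_t$ lie in $(-1,1)$, giving $\pi_t(1) \neq 0$. Chaining the two steps (uniqueness of the minimizer, and unique-up-to-rescaling characterization of orthogonal polynomials) yields both claims of the proposition.
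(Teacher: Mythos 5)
Your proof is correct and takes essentially the same route as the paper's: both rest on identifying $\{Q \in \R_t[X] : Q(1)=0\}$ with $(1-\lambda)\R_{t-1}[X]$, translating the optimality condition into $\tau$-orthogonality against $\R_{t-1}[X]$, using the cardinality of the support for uniqueness, and invoking the location of the zeros of orthogonal polynomials to justify the normalization $\pi_t(1)=1$ --- the paper simply phrases your first-order condition as an explicit completion of the square $\langle Q_t,Q_t\rangle_\sigma = \langle\pi_t,\pi_t\rangle_\sigma + \Vert(1-\lambda)R_{t-1}\Vert_\sigma^2$. (Only a cosmetic quibble: $\tau$ need not have ``atoms''; what your argument actually uses is that ${\rm Supp}\,\tau$ contains at least $t+1$ points and lies in $[-1,1]$.)
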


This result is well-known and usually stated without proof \cite[Sections 3, 4.1]{nevai1986geza}, \cite[Section 2]{nevai1979orthogonal}; we give the short proof in Appendix \ref{ap:opt_polynomial_proof}. In the following, the phrase ``\emph{the} orthogonal polynomials w.r.t.~$\tau$'' will refer to the unique family of orthogonal polynomials w.r.t.~$\tau$ \emph{and normalized such that $\pi_t(1)=1$.}

\begin{remark}
	\label{rem:perfect-gossip}
When $T$ is finite and $t \geq T$, finding a minimizer of $\int_{-1}^{1} P(\lambda)^2 \diff\sigma(\lambda)$ over the set of polynomials such that $P(1)=1$, $\deg P \leq t$ is trivial. Indeed, one can consider the polynomial 
		\begin{equation*}
		\pi_T(\lambda) = \frac{\prod_{\lambda' \in {\rm Supp \,}\sigma - \{1\}} (\lambda-\lambda')}{  \prod_{\lambda' \in {\rm Supp \,}\sigma - \{1\}} (1-\lambda')}
		\end{equation*}
		which is of degree $T$, satisfies $\pi_T(1)=1$ and $\int_{-1}^{1} \pi_T(\lambda)^2 \diff\sigma(\lambda) = \sigma(\{1\})$. This is the best value that a polynomial $P$ of any degree, such that $P(1)=1$, can get.
\end{remark}
A fundamental result on orthogonal polynomials states that they follow a second-order recursion.

\begin{proposition}[Three-term recurrence relation, {from \cite[Theorem 3.2.1]{szeg1939orthogonal}}]
	\label{prop:recurrence_relation}
	Let $\pi_0, \dots, \pi_{T-1}$ be a sequence of orthogonal polynomials w.r.t.~some measure $\tau$.
	There exist three sequences of coefficients $(a_t)_{1 \leq t \leq T-2}$, $(b_t)_{1 \leq t \leq T-2}$ and $(c_t)_{1 \leq t \leq T-2}$ such that for $1 \leq t \leq T-2$, 
	\begin{equation*}
	 \pi_{t+1}(\lambda) = (a_t\lambda+b_t) \pi_t(\lambda) - c_{t} \pi_{t-1}(\lambda) \, .
	\end{equation*} 
\end{proposition}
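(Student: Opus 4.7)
The plan is to exploit the fact that, since $\pi_0,\dots,\pi_{t}$ form an orthogonal \emph{basis} of $\R_t[X]$, any polynomial of degree at most $t$ can be expanded uniquely in this basis, and then to use orthogonality to kill most of the coefficients. The quantity I would expand is precisely $\pi_{t+1}(\lambda) - a_t \lambda\,\pi_t(\lambda)$ for a carefully chosen scalar $a_t$.

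First I would fix $t$ with $1\leq t\leq T-2$ and choose $a_t$ so that the leading terms of degree $t+1$ in $\pi_{t+1}$ and $a_t\lambda\pi_t$ cancel: if $k_s$ denotes the leading coefficient of $\pi_s$ (which is nonzero because $\deg\pi_s=s$), set $a_t = k_{t+1}/k_t$. Then $R(\lambda) := \pi_{t+1}(\lambda) - a_t\lambda\,\pi_t(\lambda)$ lies in $\R_t[X]$, so by the basis property of Definition \ref{def:orthogonal_polynomials} it admits a unique expansion
\begin{equation*}
R(\lambda) \;=\; \sum_{s=0}^{t} \alpha_s\,\pi_s(\lambda), \qquad \alpha_s = \frac{\langle R,\pi_s\rangle_\tau}{\langle \pi_s,\pi_s\rangle_\tau}.
\end{equation*}

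The key step is to show $\alpha_s=0$ for $s\leq t-2$. By symmetry of the inner product,
\begin{equation*}
\langle R,\pi_s\rangle_\tau \;=\; \langle \pi_{t+1},\pi_s\rangle_\tau - a_t\,\langle \lambda\pi_t,\pi_s\rangle_\tau \;=\; -\,a_t\,\langle \pi_t,\lambda\pi_s\rangle_\tau,
\end{equation*}
using $\langle \pi_{t+1},\pi_s\rangle_\tau=0$ since $s<t+1$. For $s\leq t-2$, the polynomial $\lambda\pi_s$ has degree $s+1\leq t-1$, hence lies in the span of $\pi_0,\dots,\pi_{t-1}$, which is orthogonal to $\pi_t$. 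Therefore $\langle \pi_t,\lambda\pi_s\rangle_\tau=0$ and $\alpha_s=0$ for $s\leq t-2$, leaving only
\begin{equation*}
\pi_{t+1}(\lambda) - a_t\lambda\,\pi_t(\lambda) \;=\; \alpha_t\,\pi_t(\lambda) + \alpha_{t-1}\,\pi_{t-1}(\lambda),
\end{equation*}
which is the desired relation with $b_t := \alpha_t$ and $c_t := -\alpha_{t-1}$.

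I do not foresee a genuine obstacle: the argument is purely algebraic, and the only place one must be slightly careful is ensuring that $\pi_t$ and $\pi_{t-1}$ exist with degrees exactly $t$ and $t-1$ (so that $k_t\neq 0$ and the leading-coefficient cancellation makes sense), which is built into Definition \ref{def:orthogonal_polynomials}, and that the range $1\leq t\leq T-2$ guarantees all the involved polynomials $\pi_{t-1}$, $\pi_t$, $\pi_{t+1}$ belong to the orthogonal family. One could optionally record the closed-form expressions $a_t=k_{t+1}/k_t$, $\alpha_t = \langle \lambda\pi_t,\pi_t\rangle_\tau / \langle\pi_t,\pi_t\rangle_\tau \cdot(-a_t)$, and $-c_t = \alpha_{t-1} = -a_t\langle\pi_t,\lambda\pi_{t-1}\rangle_\tau/\langle\pi_{t-1},\pi_{t-1}\rangle_\tau$, but these are not needed for the existence statement.
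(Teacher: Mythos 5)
Your proof is correct and is essentially the same argument as the paper's: both rest on expanding in the orthogonal basis and on the key identity $\langle \lambda\pi_t,\pi_s\rangle_\tau=\langle\pi_t,\lambda\pi_s\rangle_\tau=0$ for $s\leq t-2$. The only cosmetic difference is that the paper expands $\lambda\pi_t$ over $\pi_0,\dots,\pi_{t+1}$ and then solves for $\pi_{t+1}$ (noting the coefficient of $\pi_{t+1}$ is nonzero), whereas you first cancel the leading term via $a_t=k_{t+1}/k_t$ and expand the degree-$t$ remainder.
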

The classical proof of this proposition is given in Appendix \ref{ap:general-properties}. Taking $\sigma$ to be the spectral measure of \eqref{eq:error-as-spectral-finite} in Proposition \ref{prop:opt_polynomial}, we get that the best polynomial gossip algorithm is a second-order method whose coefficients are determined by the graph $G$, the gossip matrix $W$ and the vertex $v$. Indeed, as $\pi_0, \dots, \pi_{T-1}$ is a family of orthogonal polynomials, there exists coefficients $a_t, b_t,c_t$ such that 
\begin{equation*}
	\pi_{t+1}(\lambda) = (a_t\lambda+b_t) \pi_t(\lambda) - c_t \pi_{t-1}(\lambda) \, ,
\end{equation*} 
and thus
\begin{equation*}
\pi_{t+1}(W) = a_t W \pi_t(W) + b_t \pi_t(W) - c_t \pi_{t-1}(W) \, .
\end{equation*} 
Decomposing $\pi_1(\lambda)= a_0\lambda + b_0$ and applying the previous relation in $\xi$ gives the second-order recursion for the best polynomial estimators $x^t = \pi_t(W)\xi$:
\begin{align}
\label{eq:second-order-polynomial-gossip}
&x^0 = \xi \, , &&x^1 = a_0 W\xi + b_0 \xi \, , &&x^{t+1} = a_t W x^t + b_t x^t - c_t x^{t-1} \, .
\end{align} 
Note that the dependence of the gossip method in the graph $G$, the gossip matrix $W$ and the vertex~$v$ is entirely hidden in the coefficients $a_t, b_t, c_t$. Thus the choice of the coefficients is central. In \cite{diekmann1999efficient}, it is argued that the coefficients can be computed in a ``preprocessing step''. Indeed, the coefficients can be computed in a centralized or decentralized manner, at the cost of extra communication steps. The gossip method that consists in computing the optimal coefficients $a_t,b_t,c_t$ and running Eq.~\eqref{eq:second-order-polynomial-gossip} will be refered to as \emph{parameter-free polynomial iteration}, as it does not require any tuning of parameters, and by analogy with the terminology used in polynomial methods for the resolution of linear systems (see \cite[Section 6]{fischer1996polynomial}). It corresponds to the optimal polynomial iteration. For a detailed exposition on the parameter-free polynomial iteration and a discussion of its practicability, see Appendix \ref{ap:parameter_free_polynomial_iteration}.

However, in dynamic networks that are constantly changing, it is not a valid option to keep repeating the preprocessing step to update the coefficients $a_t, b_t, c_t$. Our approach consists in observing that there are sequences of coefficients like \eqref{eq:coeff} that --albeit they are not optimal-- work reasonably well on a large set of graphs. This implies that even if the details of the graph are not known to the algorithmic designer, she can make a choice of coefficients that have a fair performance.

More formally, we approximate the true spectral measure $\sigma$ of the graph with a simpler measure~$\tilde{\sigma}$, whose associated polynomials have known recursion coefficients $a_t, b_t, c_t$. We will show that in some cases, substituting the orthogonal polynomials w.r.t.~$\sigma$ with the ones orthogonal to~$\tilde{\sigma}$ does not worsen the efficiency of the gossip method much. In the next sections, we argue for two choices of the approximating measure $\tilde{\sigma}$. The first uses only the spectral dimension $d$ of the network, and gives the Jacobi polynomial iteration \eqref{eq:iteration}. The second one uses both the spectral dimension $d$ and the spectral gap $\gamma$ of $W$, and gives the Jacobi polynomial iteration with spectral gap. 

Figure \ref{fig:sim_2D_grid_log_all} reproduces Figure \ref{fig:sim_2D_grid_log} and adds the performance of the parameter-free polynomial iteration and the Jacobi polynomial iteration with spectral gap. It shows that in linear scale, the performance of the parameter-free polynomial iteration is indistinguishable from the performance of the Jacobi polynomial iterations with or without spectral gap, which are obtained through approximations of the spectral measure $\sigma$. However, the figure in log-scale shows that the asymptotic convergence of the methods depends on the coarseness of the approximation. The relevance of this asymptotic convergence to the practice depends on the application. 

\begin{figure}
	\begin{subfigure}{0.49\linewidth}
		\includegraphics[width=\linewidth, trim = 0 0 0 0]{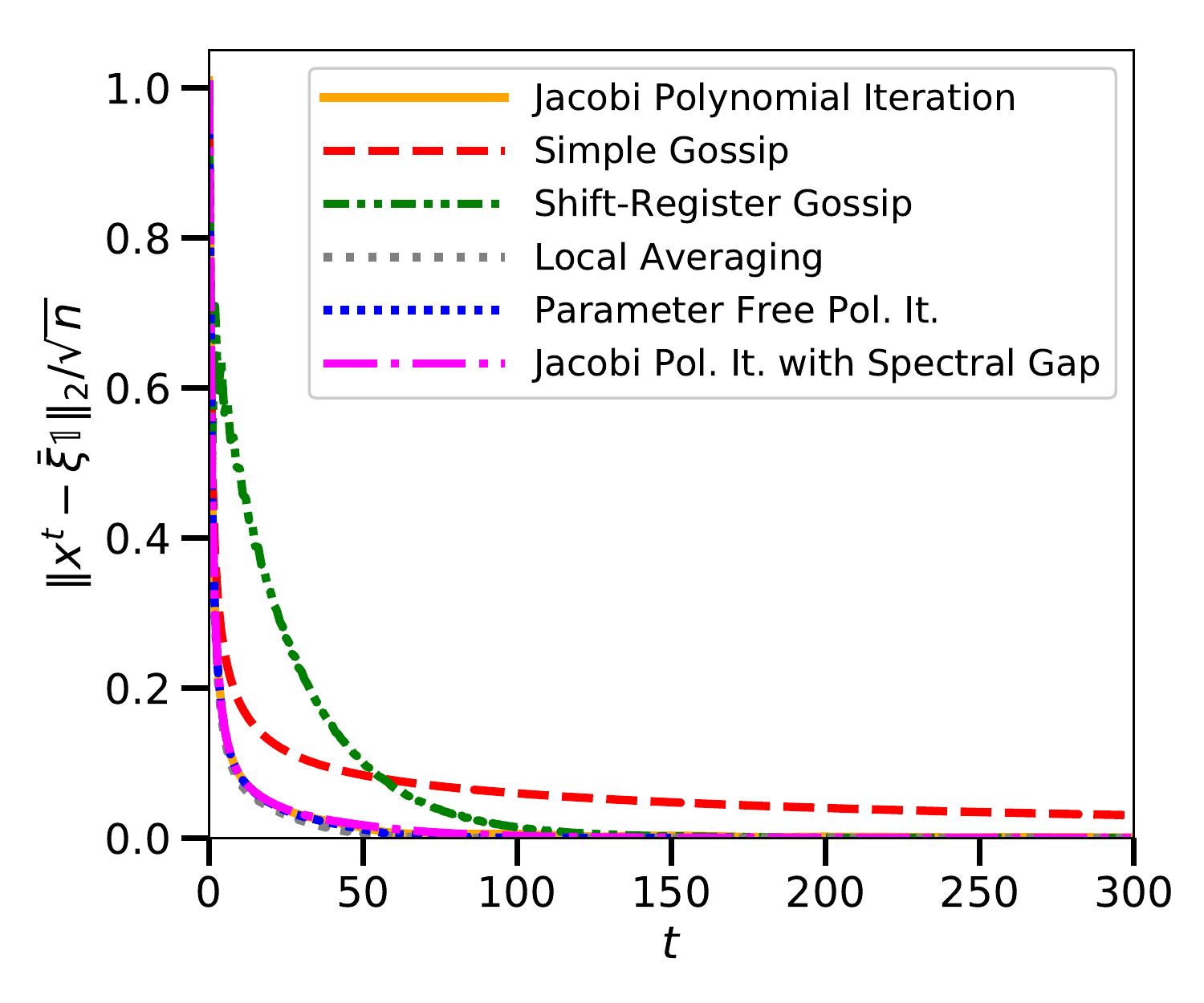}
		\caption{in linear scale}
	\end{subfigure}
	\begin{subfigure}{0.49\linewidth}
		\includegraphics[width=\linewidth, trim = 0 0 0 0]{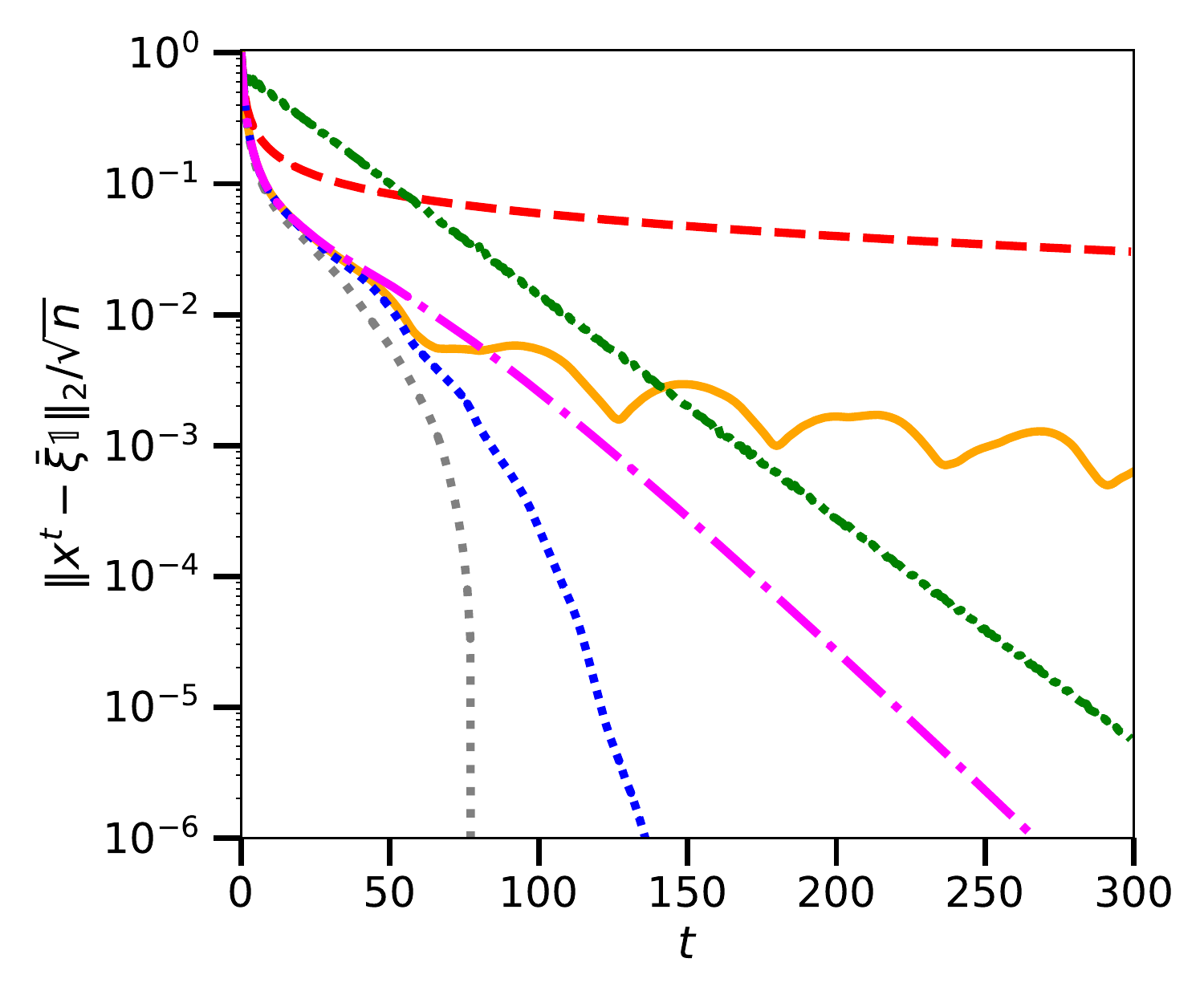}
		\caption{in log-scale}
	\end{subfigure}
	\caption{Performance of different gossip algorithms running on the 2D grid.}
	\label{fig:sim_2D_grid_log_all}
\end{figure}

\begin{remark}
The shift-register iteration $x^t = P_t(W)\xi$ defined in \eqref{eq:shift-register} can be seen as a best polynomial gossip iteration with some approximating measure. Indeed, the polynomials $P_t$, $t \geq 0$ are the orthogonal polynomials w.r.t.~some measure whose support is strictly included in $[-1,1]$ (see Proposition \ref{prop:shift-register-orthogonality-measure}). 
\end{remark}

\section{Design of polynomial gossip algorithms for graphs of given spectral dimension}
\label{sec:Jacobi-polynomial-gossip}

\subsection{The dimension $d$ and the rate of decrease of the spectral measure near $1$.}

We now assume that we are given a graph $G$ on which we would like to run the optimal polynomial gossip algorithm \eqref{eq:second-order-polynomial-gossip}. However, we do not know the spectral measure $\sigma$, nor the coefficients $a_t, b_t, c_t$. In this section, we give a heuristic motivating an approximation $\tilde{\sigma}$ of the spectral measure $\sigma$ using only the dimension $d$ of the graph. The heuristic is supported by the simulations of Section \ref{sec:simulations} and some rigorous theoretical support in Section \ref{sec:performance-guarantees}. 

Our approximation is given by the following non-rigorous intuition:
\begin{equation}
\label{eq:intuition}
\textrm{the graph $G$ is of dimension $d$} \qquad \Leftrightarrow \qquad \sigma([1-E,1]) \approx C E^{d/2} \qquad \textrm{as } E \ll 1 \, ,
\end{equation}
for some constant $C$. Of course, we have not defined the dimension of a graph, nor given a rigorous signification of the symbols ``$\approx$'' and ``$\ll$''. We come back to these questions in Section \ref{sec:spectral-dimension-infinite-graph}, but for now we assume that the reader has an intuitive understanding of these notions and finish drawing the heuristic picture. 

Eq.~\eqref{eq:intuition} describes the repartition of the mass of $\sigma$ near $1$. This mass near $1$ challenges the design of polynomial methods as the gossip polynomials $P$ are constrained to satisfy $P(1) = 1$ while minimizing $\int P^2\diff\sigma$. Moreover, eigenvalues of a graph close to $1$ are known to describe the large-scale structure of the graph and thus must be central in the design of gossip methods. The traditional design of gossip algorithms considered the spectral gap $\gamma$ between $1$ and the second largest eigenvalue, a quantity that typically gets very small in large graphs. Intuition \eqref{eq:intuition} also describes the behavior of the spectrum near $1$, but on a larger scale than the spectral gap. It describes how the set of the largest eigenvalues is distributed around $1$. 

\subsection{The Jacobi iteration for graphs of given dimension}
\label{sec:Jacobi-polynomial-iteration-subsection}

When a spectral measure satisfies the edge estimate \eqref{eq:intuition}, we approximate it with a measure satisfying the same estimate, namely 
\begin{equation*}
\diff \tilde{\sigma}(\lambda) = (1-\lambda)^{d/2-1}\bfone_{\{\lambda \in (-1,1)\}}\diff\lambda \, .
\end{equation*}
Note that we do not elaborate on the normalization of the approximate measure $\diff\tilde{\sigma}$ as it is only used to define an orthogonality relation between polynomials, in which the normalization does not matter. The orthogonal polynomials w.r.t the modified spectral measure $(1-\lambda)\diff \tilde{\sigma}(\lambda) = (1-\lambda)^{d/2}\bfone_{\{\lambda \in (-1,1)\}}\diff \lambda$ and their recursion coefficients are known as they correspond to the well-studied Jacobi polynomials \cite[Chapter IV]{szeg1939orthogonal}:
\begin{equation}
\label{eq:coeffs-Jacobi-d}
\begin{aligned}
&a_0^{(d)} = \frac{d+4}{2(2+d)} \, ,  && b_0^{(d)} = \frac{d}{2(2+d)} \, , \\
&a_t^{(d)} = \frac{(2t+d/2+1)(2t+d/2+2)}{2(t+1+d/2)^2} \, , && b_t^{(d)} = \frac{d^2(2t+d/2+1)}{8(t+1+d/2)^2(2t+d/2)} \, , \\
&c_t^{(d)} = \frac{t^2(2t+d/2+2)}{(t+1+d/2)^2(2t+d/2)} \, .
\end{aligned}
\end{equation}
These coefficients are derived in Appendix \ref{ap:Jacobi}. This approximation of the spectral measure gives the practical recursion
\begin{align}
\label{eq:Jabobi-polynomial-iteration}
&x^0 = \xi \, , &&x^1 = a_0^{(d)} W\xi + b_0^{(d)} \xi \, , &&x^{t+1} = a_t^{(d)} W x^t + b_t^{(d)} x^t - c_t^{(d)} x^{t-1} \, ,
\end{align} 
that only depends on $d$. It is just a rewriting of the Jacobi polynomial iteration \eqref{eq:iteration} given in the introduction of this paper. The Jacobi polynomial $\pi_t^{(d/2,0)}(\lambda)$ such that $x^t = \pi_t^{(d/2,0)}(W)\xi$ is plotted in Figure \ref{fig:comparison_polynomials_simple_jacobi} with $d=2$ and $t = 6$, along with the polynomial $\lambda^6$ associated with simple gossip. The Jacobi polynomial is smaller in magnitude near the edge of the spectrum.

\begin{figure}
	\includegraphics[width=0.5\linewidth]{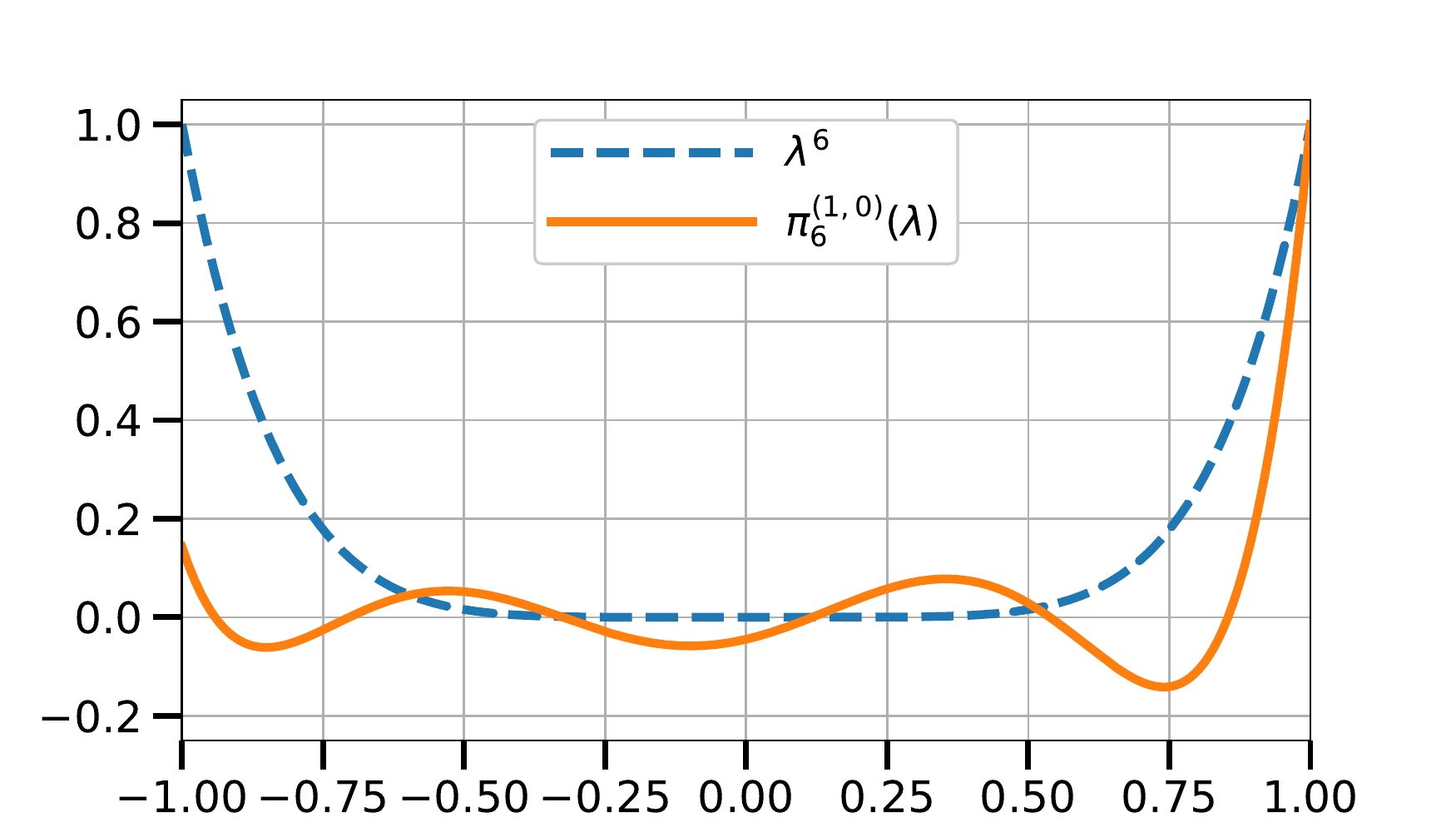}
	\caption{Comparison of the Jacobi polynomial $\pi_6^{(1,0)}(\lambda)$ with the polynomial of simple gossip $\lambda^6$.}
	\label{fig:comparison_polynomials_simple_jacobi}
\end{figure}

\subsection{Spectral dimension of a graph.}
\label{sec:spectral-dimension-infinite-graph}
In this section, we discuss the meaning of intuition \eqref{eq:intuition}. There are several definitions of the dimension of a graph. 

When referring to the dimension of a graph, many authors actually refer to some quantity $d$ that has been used in the construction of the graph. An example is the $d$-dimension grid $\{1,\dots,n\}^d$. Another example consists in removing edges in $\Z^d$ with probability $1-p$, independently of one another. The resulting graph $G$ is called a percolation bond \cite{grimmett1999percolation}. It is natural to consider that this graph is of dimension $d$. A more complicated example is the random geometric graph: choose $d \geq 1$, and sample $n$ points uniformly in the $d$-dimensional cube $[0,1]^d$, and connect with an edge all pairs of points closer than some chosen distance $r > 0$. It is natural to say that this random geometric graph is $d$-dimensional as it is the dimension of the surface it is built on. 

Mathematicians have developed more intrinsic definitions of the dimension of a graph \cite{durhuus2009hausdorff}; here we use the notion of \emph{spectral dimension}. This definition is of interest only for infinite graphs $G = (V,E)$. Here, we consider only locally finite graphs, meaning that each node has only a finite number of neighbors. As with Definition \ref{def:gossip-matrix}, one can define a gossip matrix $W$ with entries indexed by $V \times V$. If $G$ is infinite, $W$ is a doubly infinite array, but with only a finite number of non-zero elements in each line and column as the graph is locally finite. 

The spectral dimension of a graph $G$ is defined using a random walk on the graph, typically the simple random walk on $G$, but here we consider more generally the lazy random walk with transition matrix $\tilde{W} = (I+W)/2$. (We take the \emph{lazy} random walk to avoid periodicity issues.) 

\begin{definition}[Spectral dimension]
	\label{def:spectral-dimension}
	Denote $p_t$ the probability that the lazy random walk, when started from $v$, returns at $v$ at time $t$. The spectral dimension of the graph is, if it exists and is finite, the limit
\begin{equation*}
d_s = d_s(G,W,v) = -2 \, \lim_{t \to \infty} \frac{\ln p_t}{\ln t} \, .
\end{equation*}
\end{definition}

If the graph is connected and $W$ is the transition matrix of the simple random walk, this definition does not depend on the choice of the vertex $v$. Motivations for this definition are:

\begin{proposition}
	\label{prop:spectral-dim-Z^d}
	The spectral dimension of $\left(\Z^d, W\right)$ with $W = A(\Z^d)/d$ is $d$. 
\end{proposition}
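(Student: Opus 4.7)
The plan is to compute the return probability $p_t$ of the lazy random walk on $\Z^d$ explicitly by Fourier analysis on the group $\Z^d$, show that $p_t$ behaves like $C_d t^{-d/2}$ for large $t$, and then read off the spectral dimension from the definition.

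First I would diagonalize $\tilde{W} = (I+W)/2$ using the Fourier transform. Since $\tilde{W}$ is translation invariant, it acts on $\ell^2(\Z^d)$ as a Fourier multiplier on the torus $\T^d = [-\pi,\pi]^d$. A short calculation gives the symbol
\begin{equation*}
\widehat{\tilde{W}}(\theta) \;=\; \frac{1}{2} + \frac{1}{2d}\sum_{i=1}^{d}\cos\theta_i, \qquad \theta \in \T^d,
\end{equation*}
and the return probability has the integral representation
\begin{equation*}
p_t \;=\; \bigl(\tilde{W}^t\bigr)_{0,0} \;=\; \frac{1}{(2\pi)^d}\int_{\T^d} \widehat{\tilde{W}}(\theta)^{\,t}\, \diff\theta.
\end{equation*}

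Next I would carry out a Laplace-type asymptotic of this integral. Near $\theta = 0$, Taylor expansion yields $\widehat{\tilde{W}}(\theta) = 1 - |\theta|^2/(4d) + O(|\theta|^4)$, and the symbol is strictly bounded away from $\pm 1$ outside any neighborhood of $0$: the laziness forces $\widehat{\tilde{W}}(\theta) \geq 0$ everywhere (killing the would-be singularity at $\theta = (\pi,\dots,\pi)$ that plagues the non-lazy walk), and $\widehat{\tilde{W}}(\theta) = 1$ only at $\theta = 0$. Splitting the integral at scale $|\theta| \leq t^{-1/2}\log t$ and rescaling $\theta = u/\sqrt{t}$ on the dominant region, I obtain
\begin{equation*}
p_t \;=\; \frac{1}{(2\pi t)^{d/2}}\int_{\R^d} e^{-|u|^2/(4d)}\,\diff u \,\bigl(1+o(1)\bigr) \;=\; C_d\, t^{-d/2}\bigl(1+o(1)\bigr),
\end{equation*}
with $C_d = (d/\pi)^{d/2}/2^d$ (the exact constant is irrelevant). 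The contribution of the complementary region decays exponentially in $t$ because $\widehat{\tilde{W}}(\theta)^{\,t}$ there is bounded by $\rho^{\,t}$ for some $\rho < 1$.

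Finally, taking logarithms gives $\ln p_t = -\tfrac{d}{2}\ln t + O(1)$, so that
\begin{equation*}
-2\lim_{t\to\infty}\frac{\ln p_t}{\ln t} \;=\; d,
\end{equation*}
which is the spectral dimension by Definition \ref{def:spectral-dimension}. The main technical point, and essentially the only place where care is needed, is verifying the uniform lower bound on the distance of the symbol from $1$ away from the origin (ensuring the Gaussian regime truly dominates); once this is in place everything else is a standard local central limit theorem computation.
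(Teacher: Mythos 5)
Your proof is correct and follows the same route as the paper: establish that the lazy return probability satisfies $p_t \sim C_d t^{-d/2}$ and then read off the spectral dimension from Definition \ref{def:spectral-dimension}. The only difference is that the paper simply cites the local central limit theorem for this asymptotic, whereas you supply its standard Fourier-analytic proof (correctly noting that laziness keeps the symbol away from $-1$); your prefactor $C_d$ is off by a harmless constant, which, as you say, is irrelevant to the conclusion.
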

\begin{proposition}[The spectral dimension of the supercritical percolation cluster is $d$]
	\label{prop:spectral-dimension-percolation}
	Let $G_0$ be a supercritical percolation bond in $\Z^d$ with edge probability $p \in (p_c,1]$, meaning that a.s., there is an infinite connected component $G$ in $G_0$. Endow $G$ with the gossip matrix $W = I + (A-D)/(2d)$, where $A$ and $D$ are respectively the adjacency and the degree matrices of $G$. Fix $v \in V$. Then a.s.~on the event $\{v \in G\}$, $d_s(G,W,v) = d$. 
\end{proposition}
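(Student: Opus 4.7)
The plan is to reduce the statement to the sharp on-diagonal heat kernel bounds for random walks on the supercritical percolation cluster that were established by Barlow (\emph{Random walks on supercritical percolation clusters}, Ann.~Probab.\ 2004), and subsequently refined by Mathieu--Remy, Barlow--Hambly and others. The bulk of the work is thus already contained in this deep result; our role is only to check that it applies to the specific lazy walk appearing in Definition \ref{def:spectral-dimension} and to extract the limit of $\ln p_t/\ln t$.

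First I would unfold the definitions. With $W = I + (A-D)/(2d)$, the lazy transition matrix $\tilde{W} = (I+W)/2$ equals $I + (A-D)/(4d)$, so at each step the walker stays at $v$ with probability $1 - \deg_G(v)/(4d)$ and otherwise jumps to a uniformly chosen $\Z^d$-neighbor lying in the cluster. Since degrees in $G$ are bounded by $2d$, this holding probability is bounded between $1/2$ and $1$, and the walk is reversible with respect to the counting (uniform) measure on $V$. In particular, it is a nearest-neighbor reversible random walk on $G$ whose conductances are bounded above and below by explicit constants depending only on $d$.

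Next I would invoke the Barlow heat kernel estimates on the infinite supercritical cluster. Barlow's theorem gives, almost surely on $\{v \in G\}$, the existence of random constants $c_1,c_2>0$ and $t_0 = t_0(\omega,v)<\infty$ such that for all integers $t \geq t_0$,
\begin{equation*}
c_1 \, t^{-d/2} \;\leq\; p_t(v,v) \;\leq\; c_2 \, t^{-d/2}.
\end{equation*}
Barlow states these bounds for the simple (discrete-time or continuous-time) random walk on $G$, which is reversible with respect to the degree measure rather than the counting measure. Our lazy walk differs from Barlow's walk only by a bounded perturbation of the jump rates (and by the laziness, which simply time-changes the walk by a factor bounded between $1/2$ and $1$). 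Since the two walks have comparable Dirichlet forms and comparable stationary measures (up to the factor $\deg_G(v) \in [1,2d]$), the on-diagonal return probabilities are comparable up to multiplicative constants, so the two-sided bound above transfers from Barlow's walk to $\tilde W$ with different constants $c_1,c_2$.

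Finally, once the two-sided bound $c_1 t^{-d/2} \leq p_t \leq c_2 t^{-d/2}$ is in hand, taking logarithms and dividing by $\ln t$ gives
\begin{equation*}
\frac{\ln c_1}{\ln t} - \frac{d}{2} \;\leq\; \frac{\ln p_t}{\ln t} \;\leq\; \frac{\ln c_2}{\ln t} - \frac{d}{2},
\end{equation*}
and letting $t \to \infty$ yields $\ln p_t / \ln t \to -d/2$, so $d_s(G,W,v) = d$ by Definition \ref{def:spectral-dimension}. The main obstacle is really Barlow's theorem itself --- in particular, proving the matching lower bound, which requires isoperimetric and volume-regularity estimates on the cluster at all scales larger than some random threshold $t_0(\omega,v)$. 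I would not reprove these; I would simply cite them and check the comparison between Barlow's walk and $\tilde W$ as outlined above.
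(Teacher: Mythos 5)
Your plan is sound and reaches the right conclusion, but it takes a genuinely different route from the paper. The paper never invokes Barlow's full two-sided quenched Gaussian bounds. It works instead with a continuous-time walk $X_t$ on the cluster, takes the on-diagonal upper bound $\P_v(X_t=v)\le C t^{-d/2}$ from Mathieu--Remy, and proves the matching lower bound itself by a soft argument: reversibility gives $\P_v(X_{2t}=v)=\sum_w \P_v(X_t=w)^2$, Cauchy--Schwarz bounds this below by $\P_v(\Vert X_t-v\Vert_2\le\sqrt t)^2/\vert\{x\in G:\Vert x-v\Vert_2\le\sqrt t\}\vert$, and the quenched invariance principle of Andres et al.\ keeps the numerator bounded away from zero for large $t$. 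Since these bounds concern a walk different from the lazy chain $\tilde W$ of Definition \ref{def:spectral-dimension}, the paper then converts them into two-sided polynomial bounds on $\sigma([1-E,1])$ via the Laplace-transform identity $\P_v(X_t=v)=\int e^{t(\lambda-1)}\diff\sigma(\lambda)$ and concludes with Proposition \ref{prop:spectral-dimension-spectral-decay}. Your approach is more direct in that it stays with the discrete-time lazy walk and avoids the spectral-measure detour, but the burden shifts to (i) a much heavier citation --- Barlow's matching quenched lower bound, whereas only the logarithmic exponent is needed here, so the paper's CLT-based argument suffices --- and (ii) the comparison step between Barlow's walk (reversible w.r.t.\ the degree measure) and $\tilde W$ (reversible w.r.t.\ the counting measure, with holding probability in $[1/2,1)$). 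That comparison is correct and standard --- two-sided Gaussian bounds are equivalent to a parabolic Harnack inequality, which is stable under bounded perturbation of conductances and under uniformly elliptic laziness --- but it is precisely the step you should spell out, since on-diagonal \emph{lower} bounds, unlike upper bounds, do not transfer between comparable reversible walks by a one-line argument; if you prefer to avoid the Harnack machinery you can rerun the paper's Cauchy--Schwarz-plus-CLT argument directly for $\tilde W$.
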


The proofs of Propositions \ref{prop:spectral-dim-Z^d}, \ref{prop:spectral-dimension-percolation} are given in Appendix \ref{ap:spectral-dimensions}. The spectral dimension of a graph is related to the decay of the spectrum of $W$ near $1$.

\begin{definition}[Spectral measure of a possibly infinite graph]
	\label{def:spectral_measure_infinite_graph}
	Let $G$ be a graph and $W$ its gossip matrix. Fix $v \in V$. As $W$ is an auto-adjoint operator, bounded by $1$, acting on $\ell^2(V)$, there exists a unique positive measure $\sigma = \sigma(G,W,v)$ on $[-1,1]$, called the \emph{spectral measure}, such that for all polynomial $P$,
	\begin{equation*}
	\left\langle e_v, P(W)e_v \right\rangle_{\ell^2(V)} = \int_{-1}^{1} P(\lambda) \diff\sigma(\lambda) \, .
	\end{equation*} 
\end{definition}

For a deeper presentation of spectral graph theory, see \cite{mohar1989survey} and references therein. Note that when the graph $G$ is finite, it is easy to check that the spectral measure is the discrete measure $\sigma(G,W,v) = \sum_{i=1}^{n} (u_{v}^i)^2 \delta_{\lambda_i}$ where $\lambda_1, \dots, \lambda_n$ are the eigenvalues of $W$ and $u^1,\dots,u^n$ are the associated normalized eigenvectors. However, when the graph $G$ is infinite, the spectrum may exhibit a continuous part w.r.t.~the Lebesgue measure. 

\begin{proposition}[The spectral dimension is the spectral decay]
	\label{prop:spectral-dimension-spectral-decay}
Let $G$ be a graph, $W$ a gossip matrix on $G$ and $v$ a vertex. We denote $d_s = d_s(G,W,v)$ the spectral dimension and $\sigma = \sigma(G,W,v)$ the spectral measure. Then the limit $\lim_{E \to 0} \ln \sigma([1-E,1])/\ln E$ exists and is finite if and only if $d_s$ exists and is finite. In that case, 
\begin{equation*}
\lim_{E \to 0} \frac{\ln \sigma([1-E,1])}{\ln E} = \frac{d_s}{2} \, .
\end{equation*}
\end{proposition}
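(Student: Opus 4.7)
The plan is to recognize $p_t$ as a moment of the spectral measure $\sigma$ near its upper edge $\lambda=1$, and then apply a logarithmic Tauberian argument to convert moment decay into tail decay (and vice versa).

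\textbf{Step 1 (spectral representation).} Because $\tilde W = (I+W)/2$ is symmetric and bounded by $1$, Definition \ref{def:spectral_measure_infinite_graph} applied to the polynomial $P(\lambda) = ((1+\lambda)/2)^t$ gives
\[
p_t = \langle e_v, \tilde W^t e_v \rangle_{\ell^2(V)} = \int_{-1}^1 \left(\frac{1+\lambda}{2}\right)^t d\sigma(\lambda).
\]
Introduce $F(\eta) := \sigma([1-2\eta,1])$ for $\eta \in [0,1]$, which is bounded and nondecreasing in $\eta$. Since $\ln \sigma([1-E,1])/\ln E = \ln F(E/2)/\ln E$ and $\ln E$, $\ln(E/2)$ differ by a bounded additive constant, the statement reduces to the equivalence
\[
\lim_{t \to \infty} \frac{\ln p_t}{\ln t} = -\alpha \quad \Longleftrightarrow \quad \lim_{\eta \to 0^+} \frac{\ln F(\eta)}{\ln \eta} = \alpha,
\]
with $\alpha = d_s/2$ upon identifying the two limits.

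\textbf{Step 2 (Abelian direction).} Assume $\ln F(\eta)/\ln \eta \to \alpha$. For every $\epsilon>0$ there is $\eta_0$ with $\eta^{\alpha+\epsilon} \le F(\eta) \le \eta^{\alpha-\epsilon}$ on $(0,\eta_0]$. Changing variables $u = (1+\lambda)/2$ and integrating by parts,
\[
p_t = \int_0^1 u^t d\nu(u) = t\int_0^1 (1-\eta)^{t-1} F(\eta)\, d\eta + F(0),
\]
where $\nu$ is the pushforward of $\sigma$. On $[0,\eta_0]$ the bounds on $F$ and the identity $t\int_0^\infty (1-\eta)^{t-1}\eta^\beta d\eta \sim \Gamma(\beta+1) t^{-\beta}$ give matching upper and lower bounds of order $t^{-\alpha \pm \epsilon}$. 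The tail $[\eta_0,1]$ contributes at most $(1-\eta_0)^t = o(t^{-A})$ for every $A$. Letting $\epsilon \to 0$ yields $\ln p_t/\ln t \to -\alpha$.

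\textbf{Step 3 (Tauberian direction).} Assume $\ln p_t/\ln t \to -\alpha$. Using monotonicity of $u \mapsto u^t$ and $F$, split the integral $p_t = \int_0^1 u^t d\nu$ at $u = 1-\eta$:
\[
(1-\eta)^t F(\eta) \le p_t \le F(\eta) + (1-\eta)^t.
\]
The left inequality with $\eta = 1/t$ yields $F(1/t) \le e\, p_t$, so $\liminf_{\eta\to 0} \ln F(\eta)/\ln \eta \ge \alpha$. For the reverse bound, choose $\eta_t = K \ln t / t$ with $K$ sufficiently large that $(1-\eta_t)^t \le t^{-K} \ll p_t$; the right inequality then gives $F(\eta_t) \ge p_t/2$, and since $\ln \eta_t = -\ln t + O(\ln \ln t)$, we obtain $\limsup_{\eta\to 0}\ln F(\eta)/\ln \eta \le \alpha$. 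Combining, $\lim \ln F(\eta)/\ln \eta = \alpha$.

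\textbf{Main obstacle.} The Abelian direction is a direct Laplace-style estimate; the delicate step is the Tauberian one, where integrated information about $p_t$ must be converted into a pointwise bound on $F$. The monotonicity of $F$ is precisely what enables the truncation trick $p_t \le F(\eta) + (1-\eta)^t$ to yield a lower bound matching the hypothesis, and this monotone version of Karamata's Tauberian theorem is what forces the limits to exist simultaneously.
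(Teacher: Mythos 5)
Your proof is correct and follows essentially the same Abelian/Tauberian strategy as the paper: both identify $p_t = \int \left(\frac{1+\lambda}{2}\right)^t \diff\sigma(\lambda)$ and establish the four one-sided bounds by comparing the tail $F$ with power laws (the paper via Lemma \ref{lem:domination-measure-increasing-functions} and Beta-function asymptotics, you via integration by parts) and by truncating at a scale just above $1/t$ (the paper takes $t=\lceil E^{-\alpha}\rceil$ with $\alpha>1$, you take $\eta_t = K\ln t/t$). The only blemishes are cosmetic: the integration-by-parts identity should read $p_t = t\int_0^1(1-\eta)^{t-1}F(\eta)\,\diff\eta$ with no extra $F(0)$ term (harmless, since $F(0)>0$ forces both limits to equal $0$ anyway), and since $(1-1/t)^{-t}\geq e$ the constant in $F(1/t)\leq e\,p_t$ should be, say, $4$ for $t\geq 2$.
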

For a proof, see Appendix \ref{ap:proof-spectral-dimension-spectral-decay}. This proposition gives a rigorous equivalent to intuition \eqref{eq:intuition}. It uses the spectral dimension of the graph, which is an intrinsic property of the graph and turns out to coincide with our intuition of the dimension of a graph in examples of interest. Note that in Section \ref{sec:second_order_gossip}, the spectral measure $\sigma$ is defined as $\diff\sigma(\lambda) = \sum \langle\xi,u^i\rangle^2 \delta_{\lambda_i}$ whereas in this section, it is defined for finite graphs as $\diff\sigma(\lambda) = \sum (u^i_v)^2 \delta_{\lambda_i}$. Roughly speaking, intuition \eqref{eq:intuition} is valid for the former if $\xi$ projects evenly on all eigenvectors $u^i$. It is the case if $\xi$ has random i.i.d.~components for instance; this is used in Section \ref{sec:performance-guarantees}.

\section{Performance guarantees in graphs of spectral dimension $d$}
\label{sec:performance-guarantees}

In this section, we seek to give theoretical support to the empirical observations of Section \ref{sec:simulations}: Jacobi polynomial gossip improves on the non-asymptotic phase over existing methods. This is challenging because the analysis of gossip methods is simpler in the asymptotic regime. In our case, we use asymptotic properties of the Jacobi polynomials as $t \to \infty$. 

In order to be able to run an asymptotic analysis without falling in the asymptotic phase of exponential convergence, we run our method on infinite graphs $G = (V,E)$. In infinite graphs, it is impossible for information to have reached every node in any finite time. In practice, the conclusions drawed on infinite graphs should be taken as approximations of the behavior on very large graphs.

Of course, it is impossible for any gossip method to estimate the average of the values in the infinite graphs: indeed, within time $t$ the node $v$ can only share information with nodes that are closer than $t$ (w.r.t.~the shortest path distance in the graph). Even worse, the average of an infinite number of values is ill-defined. Thus additional assumptions on the observations $\xi_v$ are needed. Several choices could be possible here, to keep the discussion simple we assume that the observations $\xi_v$ are independent identically distributed (i.i.d.)~samples from a probability law $\nu$. The agents then seek to estimate the statistical mean $\mu = \int_\R \xi \diff\nu(\xi)$ of $\nu$. 

In practice, to build good estimates, the nodes should average their samples, thus it is natural to run gossip algorithms in this situation. An estimator performs well if it averages a lot of samples and averages them uniformly. Thus the mean square error (MSE) of the estimators measures the capacity of a gossip methods to average locally in the graph. 

This statistical gossip framework was already present in \cite{braca2008enforcing} and is not only used for its technical advantages. It is also a reasonable modeling of gossip of signals with a statistical structure in large networks. For instance, in sensor networks, observations are measurements of the environment corrupted by noise. The purpose of the gossip algorithm is to average observations to get a better estimate of the ground truth. Gossip algorithms are also used as building blocks in distributed statistical learning problems such as distributed optimization (see \cite{nedic2009distributed,scaman2017optimal,sayed2014adaptation,ram2010distributed,duchi2012dual,chen2012diffusion}) or distributed bandit algorithms (see \cite{szorenyi2013gossip,landgren2016distributed,korda2016distributed}). All of these problems have a statistical structure that simplifies the underlying gossip problem. For instance, in sensor networks, good estimates of the mean may not require using observations from nodes extremely far in the network.

\medskip
Let us now sum up the setting. The network of agents is modeled by a (possibly infinite, locally finite) graph $G=(V,E)$, that we endow with a gossip matrix $W$. We consider a probability law $\nu$ on $\R$, and $\mu = \int_{\R} \xi \, d\nu(\xi)$ its statistical mean. Each agent $v \in V$ is given a sample from $\nu$:  
\begin{equation*}
\xi_v, v \in V \underset{\rm i.i.d.}{\sim} \nu \, .
\end{equation*}
The following theorem gives the asymptotic MSE of the estimators built by the simple gossip method and the Jacobi polynomial iteration.

\begin{theorem}
	\label{thm:rate-decrease-Jacobi}
	Fix a vertex $v$ and denote $d_s = d_s(G,W,v)$ the spectral dimension of the graph.
	\begin{enumerate}
		\item\label{enu:simple-gossip} Let $x^t$ be the iterates of the simple gossip method \eqref{eq:simple_gossip}, or the iterates of the shift-register gossip method \eqref{eq:shift-register} with some parameter $\omega \in [1,2]$. Then
			\begin{equation}
			\liminf_{t\to\infty} \frac{\ln \E[(x^t_v - \mu)^2]}{\ln t} \geq - \frac{d_s}{2} \, .
			\end{equation} 
		\item\label{enu:jacobi} Let $x^t$ be the iterates of the Jacobi polynomial iteration \eqref{eq:Jabobi-polynomial-iteration} with parameter $d=d_s$. Then 
			\begin{equation}
			\label{eq:rate-decrease-Jacobi}
			\limsup_{t\to\infty} \frac{\ln \E[(x^t_v - \mu)^2]}{\ln t} \leq - d_s \, .
			\end{equation}
	\end{enumerate}
\end{theorem}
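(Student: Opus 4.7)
The reduction to a spectral integral is the backbone of the proof. Writing $\xi = \mu \bfone + \eta$ with $\eta_v$ i.i.d., centered, of variance $s^2 = \var(\nu)$, the normalization $P_t(1) = 1$ gives $P_t(W) \bfone = \bfone$, so $x^t - \mu \bfone = P_t(W) \eta$. Independence of the $\eta_v$ together with the symmetry of $P_t(W)$ then imply
\begin{equation*}
\E\bigl[(x^t_v - \mu)^2\bigr] = s^2 \sum_{w \in V} \bigl(P_t(W)\bigr)_{v,w}^2 = s^2 \bigl\langle e_v, P_t(W)^2 e_v \bigr\rangle = s^2 \int_{-1}^{1} P_t(\lambda)^2 \, \diff\sigma(\lambda) \, ,
\end{equation*}
where $\sigma = \sigma(G,W,v)$ is the spectral measure of Definition~\ref{def:spectral_measure_infinite_graph}. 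Proposition~\ref{prop:spectral-dimension-spectral-decay} then supplies the edge estimate $\sigma([1-E,1]) = E^{d_s/2 + o(1)}$ as $E \to 0^+$, and the whole proof reduces to estimating $\int P_t^2 \diff\sigma$ against this decay.

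For part~\ref{enu:simple-gossip}, the strategy is to show that $P_t$ stays bounded away from zero on an interval of width $\gtrsim 1/t$ around $\lambda = 1$. For simple gossip, $P_t(\lambda)^2 = \lambda^{2t} \geq e^{-4}$ on $[1 - 1/t, 1]$ directly. For shift-register with $\omega \in [1,2)$, the characteristic roots of the recurrence at fixed $\lambda$ are $r_\pm(\lambda) = (\omega\lambda \pm \sqrt{\omega^2 \lambda^2 - 4(\omega-1)})/2$; a short computation gives $r_+(1) = 1$, $r_-(1) = \omega - 1 < 1$ and $r_+'(1) = \omega/(2-\omega) > 0$, so that $P_t(\lambda) = r_+(\lambda)^t(1 + o_t(1))$ uniformly on a neighbourhood of $1$, whence $P_t(\lambda)^2 \geq 1/4$ on $[1 - c(\omega)/t, 1]$ for $t$ large. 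Integrating against $\diff\sigma$ on this window yields $\int P_t^2 \diff\sigma \gtrsim \sigma([1 - c/t, 1]) = t^{-d_s/2 + o(1)}$, which is the claim. The endpoint $\omega = 2$, where $P_t = T_t$ is Chebyshev, is handled separately by observing that $T_t(\lambda)^2 = (1 + T_{2t}(\lambda))/2$ and that under the spectral-dimension hypothesis $\sigma$ is non-atomic near~$1$, so $\int T_t^2 \diff\sigma$ stays bounded below by a positive constant, trivially satisfying $\liminf \ln \E[\cdot]/\ln t \geq 0 \geq -d_s/2$.

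For part~\ref{enu:jacobi}, the plan is to combine two classical asymptotics for the Jacobi polynomials $\pi_t^{(d/2, 0)}$ drawn from \cite{szeg1939orthogonal}. The Mehler-Heine formula $\pi_t^{(d/2,0)}(\cos(z/t)) \to \Gamma(d/2 + 1)(z/2)^{-d/2} J_{d/2}(z)$, uniformly for $z$ in compacts, gives a bound $|\pi_t(\lambda)| \leq C$ on the edge zone $\lambda \in [1 - R^2/(2t^2), 1]$. The Darboux asymptotics give
\begin{equation*}
\pi_t^{(d/2,0)}(\lambda)^2 \leq C\, t^{-d-1} (1-\lambda)^{-(d+1)/2} \, ,
\end{equation*}
uniformly for $\lambda \in [-1+\delta, 1-R^2/(2t^2)]$. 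Decomposing the integral over the three regions $[-1, -1+\delta]$, $[-1+\delta, 1 - R^2/(2t^2)]$ and $[1 - R^2/(2t^2), 1]$, the edge piece contributes at most $C^2 \sigma([1-R^2/(2t^2), 1]) = O(t^{-d_s + o(1)})$. Setting $d = d_s$, an integration by parts against the tail $F(\lambda) = \sigma([\lambda, 1])$ (controlled by $F(1-E) \leq E^{d_s/2 - o(1)}$) shows that the bulk piece is $C t^{-d-1}$ times an integral whose boundary term at $\lambda = 1-R^2/(2t^2)$ and whose remaining IBP integral both scale like $t^{1 + o(1)}$, so that the bulk contributes $O(t^{-d_s + o(1)})$ as well. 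Finally, a symmetric Mehler-Heine expansion at $\lambda = -1$ yields $|\pi_t(\lambda)|^2 = O(t^{-d})$ on a neighbourhood of $-1$, so the third region contributes $O(t^{-d_s})$. Summing establishes \eqref{eq:rate-decrease-Jacobi}.

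The main obstacle will be the bulk estimate. The Darboux bound $\pi_t^2(\lambda) \lesssim t^{-d-1} (1-\lambda)^{-(d+1)/2}$ combined with the spectral tail $\sigma([1-E,1]) = E^{d_s/2 + o(1)}$ produces an integral whose natural scale is exactly $t^{-d_s}$, with the boundary term of the integration by parts, the remaining IBP integral, and the edge contribution all of the same order and matching at the transition scale $1 - \lambda \asymp 1/t^2$ between the Mehler-Heine and Darboux regimes. Carrying through this matching rigorously, while propagating the $o(1)$ exponents uniformly through the integration by parts, is the computational heart of the argument; the remaining steps are direct applications of the asymptotic formulas.
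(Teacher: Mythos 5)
Your reduction $\E[(x^t_v-\mu)^2]=\var(\nu)\int P_t^2\,\diff\sigma$ is exactly the paper's Lemma~\ref{lem:proof-technical-1}, and your part~(2) is structurally the paper's proof: an edge zone $1-\lambda\lesssim t^{-2}$ where $|\pi_t^{(d/2,0)}|=O(1)$, a bulk where $\pi_t^{(d/2,0)}(\lambda)^2\lesssim t^{-d-1}(1-\lambda)^{-(d+1)/2}$ (this is precisely Szeg\H{o}'s bound, Proposition~\ref{prop:asymptotic-jacobi-in}, after dividing by $\binom{t+d/2}{t}^2\sim t^{d}$), and a left edge where $\pi_t^2=O(t^{-d})$; your integration by parts against $F(\lambda)=\sigma([\lambda,1])$ plays the role of the paper's Lemma~\ref{lem:domination-measure-increasing-functions}, and each piece indeed comes out as $t^{-d_s+o(1)}$. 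For part~(1) you take a genuinely different and cleaner route: instead of writing the shift-register polynomials in terms of Chebyshev polynomials (Proposition~\ref{prop:polynomials-shift-register-chebyshev}) and lower-bounding a Beta-type integral after a change of variables, you lower-bound $P_t^2$ by a constant on a window $[1-c(\omega)/t,1]$ and invoke $\sigma([1-c/t,1])=t^{-d_s/2+o(1)}$. This is correct for $\omega\in[1,2)$: the root computation $r_+(1)=1$, $r_-(1)=\omega-1$, $r_+'(1)=\omega/(2-\omega)$ checks out, and $r_-(\lambda)^t\to 0$ uniformly near $1$ since $r_-\leq\sqrt{\omega-1}<1$ there. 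What your version buys is brevity and a transparent reason for the $t^{-d_s/2}$ rate; what it loses is only the explicit constants.

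The one genuine gap is the endpoint $\omega=2$, where $P_t=T_t$. The claim that ``under the spectral-dimension hypothesis $\sigma$ is non-atomic near $1$, so $\int T_t^2\,\diff\sigma=\tfrac12+\tfrac12\int T_{2t}\,\diff\sigma$ stays bounded below'' is not justified: the hypothesis $\sigma([1-E,1])=E^{d_s/2+o(1)}$ only rules out an atom at $\lambda=1$ itself, not in a punctured neighbourhood of $1$, and even for a continuous measure $\int T_{2t}\,\diff\sigma=\int\cos(2t\theta)\,\diff\tilde{\sigma}(\theta)$ need not tend to $0$ (Riemann--Lebesgue decay requires more than continuity). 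Nor does the fallback of integrating over the region where $T_t$ is provably bounded away from zero rescue you: $T_t^2\geq\tfrac12$ is only guaranteed where $t\arccos\lambda$ stays small, i.e.\ on a window $1-\lambda\lesssim t^{-2}$, which yields $\int T_t^2\,\diff\sigma\gtrsim t^{-d_s}$ and hence only $\liminf\geq -d_s$, strictly weaker than the claimed $-d_s/2$; on the $1/t$-window your $\omega<2$ argument relies on, $\cos^2(t\arccos\lambda)$ genuinely oscillates down to $0$. (In fairness, the paper's own proof also degenerates at $\omega=2$: the interval $[2\sqrt{\omega-1}/\omega,1]$ collapses to $\{1\}$ and $u_{\max}=0$, so the Beta-integral lower bound becomes vacuous.) You should either restrict your statement to $\omega\in[1,2)$ or supply an actual argument for the Chebyshev endpoint.
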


See Appendix \ref{ap:proof-thm-rate-decrease-Jacobi} for a proof. The above theorem shows that the asymptotic MSE of the Jacobi polynomial iteration can be upper bounded using only the spectral dimension of the graph. The power decay of the MSE with the Jacobi polynomial iteration enjoys a better rate than with simple gossip and the shift-register iteration (regardless of the choice of $\omega$). In some cases, this rate can be proved optimal using the Hausdorff dimension of the graph. 

\begin{definition}[Hausdorff dimension]
The Hausdorff dimension of the graph $G$ at vertex $v$ is, if it exists, the limit 
\begin{equation*}
d_h = d_h(G,v) = \lim_{t \to\infty} \frac{\ln |B_t(v)|}{\ln t} \, .
\end{equation*}
If $G$ is connected, then $d_h$ does not depend on the choice of $v$.
\end{definition}

\begin{proposition}
	\label{prop:lower-bound}
Let $x^t = P_t(W)\xi$ be any polynomial gossip method on a graph $G$ with Hausdorff dimension $d_h$. Then 
\begin{equation}
\liminf_{t\to\infty} \frac{\ln \E[(x^t_v - \mu)^2]}{\ln t} \geq -d_h \, .
\end{equation}
\end{proposition}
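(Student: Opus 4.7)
The plan is to exploit two facts: the polynomial iterate $x^t_v$ is a linear combination of only the samples $\xi_w$ inside the ball $B_t(v)$, and the coefficients of this combination must sum to one. Cauchy--Schwarz then forces the variance of $x^t_v$ to be at least $1/|B_t(v)|$ times the variance of $\nu$, and the Hausdorff dimension converts this ball-size bound into the claimed power law.

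More concretely, first I would write $x^t_v = (P_t(W)\xi)_v = \sum_{w \in V} \alpha_w \xi_w$ with $\alpha_w = (P_t(W))_{v,w}$. Because $W$ is supported on the graph and $\deg P_t \leq t$, the matrix $P_t(W)$ only connects vertices at graph-distance at most $t$, so $\alpha_w = 0$ whenever $w \notin B_t(v)$. Next, from $W\bfone = \bfone$ and $P_t(1) = 1$ we get $P_t(W)\bfone = \bfone$, i.e.\ $\sum_{w \in B_t(v)} \alpha_w = 1$, which lets me rewrite
\begin{equation*}
x^t_v - \mu = \sum_{w \in B_t(v)} \alpha_w (\xi_w - \mu) \, .
\end{equation*}

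Now, assuming (as is implicit in the statement) that $\nu$ has finite positive variance $\sigma_\nu^2$, independence of the $\xi_w$ gives
\begin{equation*}
\E\bigl[(x^t_v - \mu)^2\bigr] = \sigma_\nu^2 \sum_{w \in B_t(v)} \alpha_w^2 \, .
\end{equation*}
Cauchy--Schwarz applied to $1 = \sum_{w \in B_t(v)} \alpha_w \cdot 1$ yields $\sum_{w \in B_t(v)} \alpha_w^2 \geq 1/|B_t(v)|$, hence $\E[(x^t_v - \mu)^2] \geq \sigma_\nu^2 / |B_t(v)|$. Taking logarithms, dividing by $\ln t$ and invoking the definition $|B_t(v)| = t^{d_h + o(1)}$ produces
\begin{equation*}
\liminf_{t \to \infty} \frac{\ln \E[(x^t_v - \mu)^2]}{\ln t} \geq \liminf_{t \to \infty} \frac{\ln \sigma_\nu^2 - \ln |B_t(v)|}{\ln t} = -d_h \, .
\end{equation*}

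There is no real obstacle here; the only point that requires a moment of care is the locality claim $\alpha_w = 0$ for $w \notin B_t(v)$, which follows from an easy induction showing that $(W^k)_{v,w} = 0$ whenever the shortest-path distance between $v$ and $w$ exceeds $k$, and from the fact that a polynomial of degree at most $t$ is a linear combination of $W^0, \dots, W^t$. Everything else is Cauchy--Schwarz and the definition of $d_h$.
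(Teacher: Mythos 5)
Your proposal is correct and follows essentially the same route as the paper's proof: locality of $P_t(W)$ (at most $|B_t(v)|$ nonzero entries of $P_t(W)e_v$), the normalization $P_t(W)\bfone = \bfone$, and Cauchy--Schwarz to get $\E[(x^t_v-\mu)^2] \geq \var(\nu)/|B_t(v)|$, followed by the definition of the Hausdorff dimension. The only cosmetic difference is that you compute the variance directly from independence of the $\xi_w$, where the paper invokes its Lemma~\ref{lem:proof-technical-1}, which encapsulates the same calculation.
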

See Appendix \ref{ap:proof-lower-bound} for a proof. Note that this lower bound it attained if $x^t$ is the local average of values:
\begin{equation*}
x^t_v = \frac{1}{|B_t(v)|} \sum_{w \in B_t(v)} \xi_w \, .
\end{equation*} 
Thus reaching this lower bound means that the polynomial gossip method averages locally. Theorem \ref{thm:rate-decrease-Jacobi} shows that it is the case with the Jacobi polynomial iteration if $d= d_s = d_h$.
\begin{corollary}
	Assume that the spectral and the Hausdorff dimensions have the same value $d= d_h = d_s$. If $x^t$ are the iterates of the Jacobi polynomial iteration \eqref{eq:Jabobi-polynomial-iteration}, we obtain the optimal asymptotic convergence rate
	\begin{equation*}
	\lim_{t\to\infty} \frac{\ln \E[(x^t_v - \mu)^2]}{\ln t} = - d_h \, .
	\end{equation*}
\end{corollary}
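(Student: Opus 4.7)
The plan is to extract the corollary immediately from the two results that have just been stated: the upper bound on the limsup of $\ln \E[(x^t_v - \mu)^2]/\ln t$ provided by Theorem \ref{thm:rate-decrease-Jacobi}\eqref{enu:jacobi}, and the universal lower bound on the liminf for arbitrary polynomial gossip methods provided by Proposition \ref{prop:lower-bound}. Since these already give matching values under the hypothesis $d_s = d_h = d$, essentially no new analytical work is required; the proof amounts to verifying the hypotheses of each cited result and chaining the two inequalities.

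More concretely, I would proceed in two short steps. First, since the Jacobi polynomial iteration \eqref{eq:Jabobi-polynomial-iteration} is tuned with parameter $d$, and under the assumption $d = d_s$ this matches the spectral dimension at $v$, Theorem \ref{thm:rate-decrease-Jacobi}\eqref{enu:jacobi} applies and yields
\begin{equation*}
\limsup_{t\to\infty} \frac{\ln \E[(x^t_v - \mu)^2]}{\ln t} \leq - d_s = -d_h.
\end{equation*}
Second, since $x^t = \pi_t^{(d/2,0)}(W)\xi$ is by construction a polynomial gossip iterate (each $\pi_t^{(d/2,0)}$ is a polynomial of degree $t$ normalized so that $\pi_t^{(d/2,0)}(1)=1$, as recalled in Section \ref{sec:Jacobi-polynomial-iteration-subsection}), Proposition \ref{prop:lower-bound} applies and yields
\begin{equation*}
\liminf_{t\to\infty} \frac{\ln \E[(x^t_v - \mu)^2]}{\ln t} \geq - d_h.
\end{equation*}

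Finally, combining these two bounds with the trivial inequality $\liminf \leq \limsup$ sandwiches both quantities between $-d_h$ and $-d_h$, so the limit exists and equals $-d_h$, as claimed. There is no genuine obstacle in this argument, since the two substantive ingredients have already been proven in the appendices referred to by the excerpt; the only thing worth emphasizing in the written proof is that the hypothesis $d = d_s$ is exactly what is needed to invoke Theorem \ref{thm:rate-decrease-Jacobi}\eqref{enu:jacobi}, and that the Jacobi iteration does fall in the class of polynomial gossip methods to which Proposition \ref{prop:lower-bound} applies.
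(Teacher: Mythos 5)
Your proposal is correct and matches the paper's (implicit) argument exactly: the corollary is stated as an immediate consequence of chaining the upper bound of Theorem \ref{thm:rate-decrease-Jacobi}\eqref{enu:jacobi} with the lower bound of Proposition \ref{prop:lower-bound} under the hypothesis $d = d_s = d_h$. Your added checks that the Jacobi iteration is a polynomial gossip method and that the tuning $d = d_s$ is what licenses the theorem are exactly the right points to verify.
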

\smallskip
\textbf{Application to the grid.} Proposition \ref{prop:spectral-dim-Z^d} states that the spectral dimension of $\Z^d$ is $d$, which coincides the Hausdorff dimension. 
\begin{corollary}
Let $x^t$ be the iterates of the Jacobi polynomial iteration \eqref{eq:Jabobi-polynomial-iteration} on the grid $\Z^d$. Then we obtain the optimal asymptotic convergence rate
\begin{equation*}
\lim_{t\to\infty} \frac{\ln \E[(x^t_v - \mu)^2]}{\ln t} = - d \, .
\end{equation*}
\end{corollary}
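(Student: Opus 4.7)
The plan is to combine the two preceding dimension computations with the previous corollary. Specifically, the corollary just before the statement says that whenever the spectral and Hausdorff dimensions at $v$ agree with the parameter $d$ used in the Jacobi iteration, Theorem \ref{thm:rate-decrease-Jacobi} (upper bound) and Proposition \ref{prop:lower-bound} (lower bound) match, yielding the exact limit $-d_h$. So the proof reduces to checking that both dimensions of $\Z^d$ equal $d$.

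First I would invoke Proposition \ref{prop:spectral-dim-Z^d}, which says that the spectral dimension of $(\Z^d, W)$ with $W = A(\Z^d)/(2d)$ (or the lazy variant implicit in Definition \ref{def:spectral-dimension}) equals $d$. This gives $d_s(\Z^d, W, v) = d$ for any vertex $v$, by translation invariance.

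Second I would verify that the Hausdorff dimension of $\Z^d$ is also $d$. For any $v \in \Z^d$, the ball $B_t(v)$ in the shortest-path (equivalently, $\ell^1$) metric satisfies $|B_t(v)| = \sum_{k=0}^{t} \binom{d}{\cdot} \cdot (\text{lattice points at $\ell^1$-distance }k)$, and an elementary count gives $|B_t(v)| = \Theta(t^d)$ as $t \to \infty$ (there exist constants $0 < c_d \leq C_d < \infty$ with $c_d t^d \leq |B_t(v)| \leq C_d t^d$ for $t \geq 1$). Taking logarithms and dividing by $\ln t$, the bounded additive terms from $\ln c_d, \ln C_d$ vanish in the limit, so $d_h(\Z^d, v) = d$.

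With $d_s = d_h = d$, the Jacobi polynomial iteration tuned with this $d$ is covered by the preceding corollary: the upper bound $\limsup \ln\E[(x_v^t-\mu)^2]/\ln t \leq -d$ from part (\ref{enu:jacobi}) of Theorem \ref{thm:rate-decrease-Jacobi} matches the universal lower bound $\liminf \ln\E[(x_v^t-\mu)^2]/\ln t \geq -d_h = -d$ from Proposition \ref{prop:lower-bound}. Hence the limit exists and equals $-d$. There is no serious obstacle here: the only nontrivial ingredient, the identification $d_s(\Z^d) = d$, is already established in Proposition \ref{prop:spectral-dim-Z^d}, and the Hausdorff dimension calculation is a one-line lattice ball count.
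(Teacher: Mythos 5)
Your proposal is correct and follows essentially the same route as the paper: the paper's proof is precisely the observation that Proposition \ref{prop:spectral-dim-Z^d} gives $d_s(\Z^d)=d$, that this coincides with the Hausdorff dimension (your $|B_t(v)|=\Theta(t^d)$ count), and that the preceding corollary then combines the upper bound of Theorem \ref{thm:rate-decrease-Jacobi}\eqref{enu:jacobi} with the lower bound of Proposition \ref{prop:lower-bound}. The only slip is cosmetic: the gossip matrix in Proposition \ref{prop:spectral-dim-Z^d} is $W=A(\Z^d)/d$, not $A(\Z^d)/(2d)$ (the factor $1/2$ enters only through the lazy walk $(I+W)/2$ in Definition \ref{def:spectral-dimension}).
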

Note that Theorem \ref{thm:rate-decrease-Jacobi} also gives that if $x^t$ are the iterates of the simple gossip method, then $\lim_{t\to\infty} \ln \E[(x^t_v - \mu)^2] / \ln t = - d/2$. (The theorem actually only gives the lower bound, but the proof technique, combined with the fact that the spectrum of $\Z^d$ is symmetric, actually gives the result.) This result could have been anticipated intuitively as follows. Under the simple gossip iteration, the information of the measurement $\xi_v$ diffuses following a simple random walk on the grid. According to the central limit theorem, at large $t$, the information is approximately distributed according to a Gaussian distribution of standard deviation $\sqrt{t}$, which is approximately supported by $\Theta(\sqrt{t}^d)$ nodes. This means that at time $t$, a node $v$ gets the information of $\Theta(t^{d/2})$ neighbors. As a consequence, the MSE $\E[(x^t_v - \mu)^2]$ scales like $t^{-d/2}$.

\smallskip
\textbf{Application to the percolation bonds.} Let $G$ be the random infinite cluster of a supercritical percolation in $\Z^d$ as defined in Proposition \ref{prop:spectral-dimension-percolation}. The proposition gives that the spectral dimension of $G$ is a.s.~$d$, which is also a lower bound for the Hausdorff dimension. But it is trivial that the Hausdorff dimension is smaller than $d$, thus the two coincide.
\begin{corollary}
Let $G$ be the random infinite cluster of a supercritical percolation in $\Z^d$, and $v\in\Z^d$. Let $x^t$ be the iterates of the Jacobi polynomial iteration \eqref{eq:Jabobi-polynomial-iteration}. Then a.s.~on the event $\{v\in G\}$, 
\begin{equation*}
\lim_{t\to\infty} \frac{\ln \E_\xi[(x^t_v - \mu)^2]}{\ln t} = - d \, .
\end{equation*}
\end{corollary}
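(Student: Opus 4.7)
The plan is to reduce the statement to the preceding corollary, which already handles any graph $G$ at any vertex $v$ where the spectral and Hausdorff dimensions both equal $d$. The entire task therefore boils down to verifying, almost surely on $\{v \in G\}$, that $d_s(G,W,v) = d_h(G,v) = d$.

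For the spectral dimension this is exactly the content of Proposition \ref{prop:spectral-dimension-percolation}, so nothing new is required on that side. For the Hausdorff dimension, the upper bound $d_h(G,v) \leq d$ is immediate from $G \subseteq \Z^d$: the graph-distance ball in $G$ of radius $t$ around $v$ is contained in the corresponding ball in the ambient grid, which has size $\Theta(t^d)$, so $\limsup_t \ln|B_t^G(v)|/\ln t \leq d$.

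The matching lower bound $d_h(G,v) \geq d$ is the only step not already in the text. I would derive it from the general comparison $d_s \leq d_h$, valid for any infinite locally finite graph: by a Nash-type Cauchy--Schwarz bound
\begin{equation*}
1 \leq 2 \sum_{w \in B_{c'\sqrt{t}}(v)} \tilde{p}_t(v,w) \leq 2 \, |B_{c'\sqrt{t}}(v)|^{1/2} \Bigl( \sum_w \tilde{p}_t(v,w)^2 \Bigr)^{1/2} = 2 \, |B_{c'\sqrt{t}}(v)|^{1/2} p_{2t}^{1/2} \, ,
\end{equation*}
(using concentration of the lazy walk on a ball of diffusive radius together with reversibility), one obtains $p_{2t} \geq c/|B_{c'\sqrt{t}}(v)|$. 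Taking logarithms, dividing by $\ln t$, and letting $t\to\infty$ yields $d_s/2 \leq d_h/2$. Applied here with $d_s = d$, this forces $d_h \geq d$, so combined with the upper bound it gives $d_h = d$, and in particular the limit defining $d_h$ exists.

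Once $d_h = d_s = d$ is established almost surely on $\{v \in G\}$, the preceding corollary applies verbatim and delivers the claimed rate. The bulk of the genuine work has already been packaged into Proposition \ref{prop:spectral-dimension-percolation} (which rests on the heat-kernel estimates of Barlow for the supercritical cluster) and into the upper bound part of Theorem \ref{thm:rate-decrease-Jacobi}; the only obstacle in this short argument is justifying $d_s \leq d_h$ cleanly, but this is a classical Nash-inequality computation and deserves mention precisely because the body of the paper uses it as ``$d_s$ is a lower bound for $d_h$'' without further comment.
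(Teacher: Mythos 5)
Your proposal matches the paper's argument: the paper likewise reduces to the preceding corollary by noting that Proposition \ref{prop:spectral-dimension-percolation} gives $d_s = d$ a.s.\ on $\{v\in G\}$, that $d_s$ is a lower bound for $d_h$, and that $d_h \leq d$ trivially since $G \subseteq \Z^d$. The only difference is that you spell out the Cauchy--Schwarz/Nash justification of $d_s \leq d_h$, which the paper asserts without comment (the same computation already appears, in Euclidean-ball form, in its proof of the return-probability lower bound in Lemma \ref{lem:aux-1}); this is a welcome clarification but not a different route.
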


\begin{remark}
The Jacobi polynomial iteration \eqref{eq:Jabobi-polynomial-iteration} is derived so that $x^t = \pi_t^{(\alpha,\beta)}(W)\xi$, where $\pi_t^{(\alpha,\beta)}$ are the orthogonal polynomials w.r.t.~the Jacobi measure {$\sigma^{(\alpha,\beta)}(\diff\lambda) = (1-\lambda)^\alpha(1+\lambda)^\beta\diff\lambda$} on $[-1,1]$, with $\alpha = d/2, \beta=0$, $d$ is the spectral dimension. A curious reader could wonder what happens for other choices of $\alpha$ and $\beta$ (while keeping $d$ fixed). This question is investigated at length in Appendix \ref{ap:tuning-Jacobi}. The conclusion is that the natural choice $\alpha = d/2, \beta=0$ is optimal (up to constant factors) but there are other choices that are optimal.
\end{remark}

\section{The Jacobi polynomial iteration with spectral gap}
\label{sec:jacobi-with-spectral-gap}

In this section, we adapt the Jacobi polynomial iteration to the case where the spectral gap $\gamma$ of the gossip matrix $W$ is given. This allows to obtain accelerated asymptotic rates of convergence, that compete with the state-of-the-art accelerated algorithms for gossip. 

We assume that we are given the spectral dimension $d$ of the graph, which determines the density of eigenvalues near $1$, and the spectral gap $\gamma = 1-\lambda_2(W)$, the distance between the largest and the second largest eigenvalue. Given these parameters, we can approximate the spectral measure of~$W$ with 
\begin{equation*}
\diff\tilde{\sigma}(\lambda) = ((1-\gamma)-\lambda)^{d/2-1}\bfone_{\{\lambda\in (-1,1-\gamma)\}}\diff\lambda \, .
\end{equation*}
Following the recommendation of Proposition \ref{prop:opt_polynomial}, this means that we should consider the polynomial iteration associated with the orthogonal polynomials w.r.t.~$(1-\lambda)\diff\tilde{\sigma}(\lambda) = (1-\lambda)((1-\gamma)-\lambda)^{d/2-1}\bfone_{\{\lambda\in (-1,1-\gamma)\}}\diff\lambda$. We do not know how to compute the recurrence formula for this measure, thus we used the orthogonal polynomials w.r.t.~$((1-\gamma)-\lambda)\diff\tilde{\sigma}(\lambda) = ((1-\gamma)-\lambda)^{d/2}\bfone_{\{\lambda\in (-1,1-\gamma)\}}\diff\lambda$, which is a rescaled version of a Jacobi measure. The corresponding polynomial method is called the \emph{Jacobi polynomial iteration with spectral gap}.

A recursive formula for orthogonal polynomials w.r.t.~$((1-\gamma)-\lambda)\diff\tilde{\sigma}(\lambda)$ is derived in Section~\ref{ap:recurrence-relation-rescaled-jacobi}. Taking $\alpha = d/2$ and $\beta = 0$ in equations \eqref{eq:rescaled-Jacobi}, we get the practical recursion:
\begin{equation}
\label{eq:Jac_pol_it_gap}
\begin{aligned}
&x^t = \frac{y^t}{\delta_t} \, ,  \\
&y^0 = \xi \, , \qquad \delta_0 = 1 \, , \\
&y^1 = a_0^{(d,\gamma)}W\xi + b_0^{(d,\gamma)}\xi \, , \qquad \delta_1 = a_0^{(d,\gamma)} + b_0^{(d,\gamma)} \, , \\
&y^{t+1} = a_t^{(d,\gamma)}Wy^t + b_t^{(d,\gamma)}y^t-c_t^{(d,\gamma)}y^{t-1} \, , \qquad t \geq 1 \, ,\\
&\delta_{t+1} = \left(a_t^{(d,\gamma)} + b_t^{(d,\gamma)}\right)\delta_t-c_t^{(d,\gamma)}\delta_{t-1} \, , \qquad t \geq 1 \, , \\
&a_t^{(d,\gamma)} = a_t^{(d)}\left(1-\frac{\gamma}{2}\right)^{-1} \, , \qquad b_t^{(d,\gamma)} = b_t^{(d)} + \frac{\gamma}{2}\left(1-\frac{\gamma}{2}\right)^{-1}a_t^{(d)} \, , \qquad t \geq 0 \, , \\
&c_t^{(d,\gamma)} = c_t^{(d)} \, , \qquad t \geq 1 \, , 
\end{aligned}
\end{equation}
where the coefficients $a_t^{(d)}, b_t^{(d)}, c_t^{(d)}$ are defined in \eqref{eq:coeffs-Jacobi-d}.
\begin{theorem}[Asymptotic rate of convergence]
	\label{thm:asymptotic-rate-with-spectral-gap}
Let $\gamma>0$ be a lower bound on the spectral gap of the gossip matrix $W$ and $d$ any positive real. Let $\xi = (\xi^t_v)_{v\in V}$ be any family of initial observations and $x^t = (x^t_v)_{v\in V}$ be the sequence of iterates generated by the Jacobi polynomial iteration with spectral gap \eqref{eq:Jac_pol_it_gap}. Then
\begin{equation*}
\limsup_{t \to \infty} \Vert x^t-\bar{\xi}\bfone \Vert_2^{1/t} \leq \frac{1-\gamma/2}{(1+\sqrt{\gamma/2})^2} \, .
\end{equation*}
\end{theorem}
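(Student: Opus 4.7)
The key observation is that the iteration \eqref{eq:Jac_pol_it_gap} writes as $x^t=\pi_t(W)\xi$ for polynomials $\pi_t$ with $\pi_t(1)=1$. Indeed, the auxiliary sequence $y^t=q_t(W)\xi$ corresponds to orthogonal (but unnormalized) polynomials $q_t$, and $\delta_t$ satisfies the same three-term recurrence evaluated at $\lambda=1$, so $\delta_t=q_t(1)$ and $\pi_t=q_t/\delta_t$. Decomposing $\xi=\bar\xi\bfone+\xi^\perp$ with $\xi^\perp\perp\bfone$, and using $\pi_t(W)\bfone=\bfone$ (since $W\bfone=\bfone$ and $\pi_t(1)=1$), the error becomes $x^t-\bar\xi\bfone=\pi_t(W)\xi^\perp$. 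Since $W$ restricted to $\bfone^\perp$ has spectrum contained in $[-1,1-\gamma]$, the spectral theorem gives
\begin{equation*}
\Vert x^t-\bar\xi\bfone\Vert_2 \;\leq\; \Bigl(\sup_{\lambda\in[-1,1-\gamma]}|\pi_t(\lambda)|\Bigr)\Vert\xi^\perp\Vert_2,
\end{equation*}
so everything reduces to estimating $\sup_{[-1,1-\gamma]}|\pi_t|$.

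The next step is to identify $\pi_t$ with a rescaled classical Jacobi polynomial. The affine map $s\colon\lambda\mapsto(2\lambda+\gamma)/(2-\gamma)$ sends $[-1,1-\gamma]$ bijectively onto $[-1,1]$ and pushes the weight $((1-\gamma)-\lambda)^{d/2}\bfone_{\{\lambda\in(-1,1-\gamma)\}}\diff\lambda$ forward to a positive multiple of the Jacobi measure $(1-\mu)^{d/2}\bfone_{\{\mu\in(-1,1)\}}\diff\mu$ with parameters $(\alpha,\beta)=(d/2,0)$. By uniqueness of orthogonal polynomials up to rescaling (Definition \ref{def:orthogonal_polynomials}) and the normalization $\pi_t(1)=1$,
\begin{equation*}
\pi_t(\lambda)=\frac{P_t^{(d/2,0)}(s(\lambda))}{P_t^{(d/2,0)}(s_0)},\qquad s_0:=s(1)=\frac{2+\gamma}{2-\gamma}>1,
\end{equation*}
where $P_t^{(\alpha,\beta)}$ is the standard Jacobi polynomial. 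I would then invoke two classical facts (see \cite{szeg1939orthogonal}): (i) when $\alpha\geq\beta\geq-1/2$, the sup-norm of $P_t^{(\alpha,\beta)}$ on $[-1,1]$ is attained at $+1$; since $\alpha=d/2\geq0=\beta$, this yields $\sup_{\mu\in[-1,1]}|P_t^{(d/2,0)}(\mu)|=P_t^{(d/2,0)}(1)$; and (ii) the Darboux-type asymptotic $\lim_{t\to\infty}|P_t^{(\alpha,\beta)}(z)|^{1/t}=|z+\sqrt{z^2-1}|$ for $z\in\C\setminus[-1,1]$ (outer branch), together with the polynomial growth $P_t^{(\alpha,\beta)}(1)=\binom{t+\alpha}{t}$, so $P_t^{(\alpha,\beta)}(1)^{1/t}\to1$.

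Combining these gives
\begin{equation*}
\limsup_{t\to\infty}\Bigl(\sup_{\lambda\in[-1,1-\gamma]}|\pi_t(\lambda)|\Bigr)^{1/t}=\frac{1}{s_0+\sqrt{s_0^2-1}},
\end{equation*}
and the proof is completed by the elementary calculation $s_0^2-1=8\gamma/(2-\gamma)^2$, whence
\begin{equation*}
s_0+\sqrt{s_0^2-1}=\frac{2+\gamma+2\sqrt{2\gamma}}{2-\gamma}=\frac{(\sqrt{2}+\sqrt{\gamma})^2}{(\sqrt{2}-\sqrt{\gamma})(\sqrt{2}+\sqrt{\gamma})}=\frac{1+\sqrt{\gamma/2}}{1-\sqrt{\gamma/2}},
\end{equation*}
whose reciprocal equals $(1-\gamma/2)/(1+\sqrt{\gamma/2})^2$, as claimed. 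The main analytic input driving the rate is the Jacobi asymptotic (ii); the slightly delicate point is (i), since a priori the sup of $|P_t^{(d/2,0)}|$ on $[-1,1]$ could occur at $\mu=-1$—this is ruled out exactly by the condition $\alpha\geq\beta$, which is why the paper chose $\beta=0$.
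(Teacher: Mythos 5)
Your proposal is correct and follows essentially the same route as the paper: bound $\Vert x^t-\bar{\xi}\bfone\Vert_2$ by the sup of the normalized polynomial over $[-1,1-\gamma]$, pull back through the affine rescaling to the standard Jacobi polynomial $P_t^{(d/2,0)}$, control the sup on $[-1,1]$ by the polynomially growing $\binom{t+d/2}{t}$ (Szeg\H{o} 7.32.1, the paper's Proposition \ref{prop:max-jacobi}), and apply the exterior asymptotic $P_t^{(\alpha,\beta)}(z)\sim c\,t^{-1/2}(z+\sqrt{z^2-1})^t$ at $z=(2+\gamma)/(2-\gamma)$ (Proposition \ref{prop:asymptotic-jacobi-out}), with the same closing algebra. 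The only cosmetic difference is that you pass to $t$-th roots immediately while the paper tracks the $t^{(d+1)/2}$ prefactor explicitly; both yield the stated rate.
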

This shows that the Jacobi polynomial iteration with spectral gap enjoys linear convergence. The asymptotic rate of convergence is equivalent to $1-\sqrt{2\gamma}$ as $\gamma \to 0$. This justifies that we obtain an accelerated asymptotic rate of convergence that compares with the state-of-the art accelerated gossip methods (see Figure \ref{fig:sim_2D_grid_log_all}).

Note that the asymptotic rate of convergence does not depend on $d$. However, the choice of~$d$ may have an important effect during the non-asymptotic phase $t < 1/\sqrt{\gamma}$. In this phase, the spectral gap $\gamma$ can be neglected in the approximation of the spectral measure, and it is important that the densities of eigenvalues of $\sigma$ and $\tilde{\sigma}$ match near the upper edge of the spectrum. This is why one should choose $d$ as the spectral dimension of the graph. 

\section{The parallel between the gossip methods and distributed Laplacian solvers}
\label{sec:laplacian-solvers}

There is a natural parallel between gossip methods and iterative methods that solve linear systems. Loosely speaking, simple gossip corresponds to gradient descent on the quadratic minimization problem associated to the linear system, shift-register gossip to Polyak's heavy-ball method and the parameter-free polynomial iteration to the conjugate gradient algorithm (see \cite{fischer1996polynomial} or \cite{polyakintroduction} for references on these subjects). In this parallel, the fact that we can reach perfect gossip in $n$ steps (see Remark \ref{rem:perfect-gossip}) translates into the finite convergence of the conjugate gradient algorithm in a number of iterations equal to the dimension of the ambient space. In the distributed resolution of linear systems, the problem that the recursion coefficients $a_t, b_t, c_t$ can not be computed in a centralized manner has also appeared and it motivated the development of inner-product free iterations.

The Jacobi polynomial iterations presented above were motivated by the facts that (a) the parameter-free polynomial iteration is not feasible in the distributed setting of gossip, and (b) the gossip matrix $W$ exhibits a structure due to the low-dimension manifold on which the agents live. Interestingly, the literature on multi-agent systems  deals with some minimization problems with the same properties. Examples are given by the estimation of quantities on graphs from relative measurements, in which the agents $v \in V$ try to estimate some quantity $x_v, v\in V$ defined over the graph, from noisy relative measurements over the edges of the graph:
\begin{equation*}
\xi_{v,w} = x_v - x_w + \eta_{v,w} \, , \qquad \{v,w\} \in E \, . 
\end{equation*}
This problem has applications in network localization, where the $x_v$ are the positions of the agents and the $\xi_{v,w}$ come from measurements of the distances and directions between the neighbors. It also has similar applications in time synchronization of clocks over networks, where $x_v$ is the offset of the clock of node $v$; and to motion consensus, where $x_v$ is the speed of agent $v$. For an introduction to estimation on graphs from relative measurements and its applications, see \cite{barooah2008estimation} and references therein. Note that the quantities $x_v$ can only be determined up to a global constant from the measurements; either we seek the true solution up to a constant only, either we assume that some agents know their true value. 

In natural approach to solve the problem is to determine estimates $y_v$ of $x_v$ that minimize 
\begin{equation*}
\frac{1}{2} \sum_{v,w} W_{v,w}\left(\xi_{v,w}-(y_v-y_w)\right)^2 \, ,
\end{equation*}
where $W_{v,w}$ are some weights on the edges of the graph. Indeed, this corresponds to finding the maximum likelihood estimator if the noise $\eta_{v,w}$ is i.i.d.~Gaussian and $W_{v,w}$ is the inverse variance of $\eta_{v,w}$. The above minimization problem is a quadratic problem whose covariance matrix is the Laplacian $I-W$. It can be solved using gradient descent or spectral-gap based accelerations like the heavy-ball method. However, the conjugate gradient algorithm can not be applied here as it involves centralized computations. The Jacobi polynomial iterations developed in this paper can be adapted to this situation in order to develop accelerations exploiting the structure of the Laplacian $I-W$. Experimenting how this performs in real-world situations is left for future work. 

\section{Message passing seen as a polynomial gossip algorithm}
\label{sec:MP}

This section develops another application of the polynomial point-of-view on gossip algorithms. It is independent of the Jacobi polynomial iterations developed in Sections \ref{sec:Jacobi-polynomial-gossip}-\ref{sec:jacobi-with-spectral-gap}; we show that the message passing algorithm for gossip of \cite{moallemi2005consensus} has a natural derivation as a polynomial gossip algorithm and uses this point of view to derive convergence rates.

The message passing algorithm of \cite{moallemi2005consensus} (in its zero-temperature limit) defines quantities on the edges of the graph $G$ with the following recursion: for $v,w \in V$ linked by an edge in the graph $G$, it defines $K_{vw}^0 = 0$, $M_{vw}^0 = 0$, and 
\begin{align}
& K_{vw}^{t+1} = 1 + \sum_{u \in \cN(v), \, u \neq w} K_{uv}^{t} \, , &&  M_{vw}^{t+1} =\frac{1}{K_{vw}^{t+1}} \bigg(\xi_v + \sum_{u \in \cN(v), \, u \neq w} K_{uv}^{t} M_{uv}^{t}\bigg)  \, , \label{eq:MP_1}
\end{align}
where $\cN(v)$ denotes the set of neighbors of $v$. $K_{vw}$ and $M_{vw}$ are interpreted as messages going from $v$ to $w$ in $G$: $M_{vw}^t$ corresponds to an average of observations gathered by $v$ and transmitted to $w$; $K_{vw}^t$ is the corresponding number of observations. We recommend \cite[Section II.A]{moallemi2005consensus} and Lemma \ref{lem:MP_technical_lemma} for a detailed description of this intuition. At each time step $t$, the output of the algorithm is 
\begin{equation}
\label{eq:MP_2}
x_v^t = \frac{\xi_v + \sum_{u \in \cN(v)} K_{uv}^{t} M_{uv}^{t}}{1 + \sum_{u \in \cN(v)} K_{uv}^{t}} \, .
\end{equation}
This gossip methods performs exact local averaging on trees, as shown by the following proposition. 
\begin{proposition}
	\label{prop:message-passing_trees}
	Assume that $G$ is a tree. Then for all $t \geq 1$, $v \in V$,
	\begin{align*}
	&x_v^t = \frac{1}{\vert  B_{v}(t) \vert }  \sum_{w \in B_{v}(t)} \xi_w \, .
	\end{align*}
\end{proposition}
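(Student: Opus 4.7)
The plan is to prove the desired identity by interpreting the messages $K_{vw}^t$ and $M_{vw}^t$ as a count and an average over a specific piece of the tree, namely the ball of radius $t-1$ inside the subtree ``behind'' $v$ when looking from $w$. Concretely, for each directed edge $v\to w$, let $T_{v\to w}$ denote the connected component containing $v$ in the forest obtained from $G$ by deleting the edge $\{v,w\}$, and let $B_{v\to w}(s) = \{u \in T_{v\to w} : d(u,v) \leq s\}$, where $d$ is the graph distance. The key claim I will establish by induction on $t\geq 1$ is
\[
K_{vw}^t = |B_{v\to w}(t-1)|, \qquad M_{vw}^t = \frac{1}{K_{vw}^t}\sum_{u\in B_{v\to w}(t-1)} \xi_u.
\]

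The base case $t=1$ is immediate from $K_{vw}^0=M_{vw}^0=0$ and the recursion \eqref{eq:MP_1}, since both sides reduce to $\{v\}$ and $\xi_v$. For the inductive step, the crucial fact that uses $G$ being a tree is the disjoint decomposition
\[
T_{v\to w} = \{v\} \sqcup \bigsqcup_{u\in\cN(v),\, u\neq w} T_{u\to v},
\]
which holds because removing the edges $\{v,u\}$ for $u\in\cN(v)\setminus\{w\}$ from $T_{v\to w}$ isolates $v$ and separates the remaining subtrees. Moreover, for $u\in\cN(v)\setminus\{w\}$ and $z\in T_{u\to v}$, the unique path from $z$ to $v$ passes through $u$, so $d(z,v)=1+d(z,u)$, which yields $B_{v\to w}(t) = \{v\}\sqcup\bigsqcup_{u\in\cN(v),\,u\neq w} B_{u\to v}(t-1)$. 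Plugging the inductive hypothesis for $K_{uv}^t$ and $M_{uv}^t$ into \eqref{eq:MP_1} and using this decomposition gives the counting identity for $K_{vw}^{t+1}$ directly, and the sum identity for $M_{vw}^{t+1}$ after clearing the $1/K_{vw}^{t+1}$ factor (the $K_{uv}^t$ in the numerator is exactly what converts each inductive average back to a sum of observations, which combines cleanly with $\xi_v$).

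Once the claim is proved, the final output formula follows by the same disjoint decomposition applied to the whole graph at $v$: since $G$ itself is a tree, $V = \{v\}\sqcup\bigsqcup_{u\in\cN(v)} T_{u\to v}$, hence $B_v(t) = \{v\}\sqcup\bigsqcup_{u\in\cN(v)} B_{u\to v}(t-1)$. Substituting into \eqref{eq:MP_2}, the denominator $1+\sum_{u\in\cN(v)} K_{uv}^t$ becomes $|B_v(t)|$, and the numerator $\xi_v + \sum_{u\in\cN(v)} K_{uv}^t M_{uv}^t$ becomes $\sum_{w\in B_v(t)}\xi_w$, giving the claimed equality.

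The only mildly delicate point is making sure the tree structure is used correctly to guarantee that the subtrees $T_{u\to v}$ for $u\in\cN(v)\setminus\{w\}$ are pairwise disjoint and do not intersect $\{v\}$; this is where the absence of cycles is essential and where an analogous statement fails on a general graph. Everything else is bookkeeping on the recursion, and no technical obstacle is expected.
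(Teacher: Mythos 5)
Your proposal is correct and follows essentially the same route as the paper: your set $B_{v\to w}(t-1)$ coincides with the paper's $B_{vw}(t)$ (vertices of $B_w(t)$ on the $v$-side of the edge), your inductive claim is exactly the paper's Lemma \ref{lem:MP_technical_lemma}, and the disjoint decomposition of the subtree at $v$ plus the final assembly via $B_v(t)=\{v\}\sqcup\bigsqcup_{u\in\cN(v)}B_{u\to v}(t-1)$ mirror the paper's argument step for step. No gaps.
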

See Appendix \ref{ap:proof-message-passing-trees} for a proof. Nothing prevents from running the message passing recursion \eqref{eq:MP_1}-\eqref{eq:MP_2} in a graph $G$ with loops. In the case of regular graphs, we are able to interpret the message passing algorithm as a polynomial gossip algorithm.
\begin{theorem}
	\label{thm:connection_mp_polynomial}
	Assume $G$ is $d$-regular, meaning that each vertex has degree $d$, $d \geq 2$. Assume further that $W = A(G)/d$. Denote $\sigma(\T_d) = \sigma(\T_d,W,v)$ the spectral measure of the infinite $d$-regular tree at any vertex $v$ (see Definition \ref{def:spectral_measure_infinite_graph}). Then the output $x^t$ of the message passing algorithm \eqref{eq:MP_1}-\eqref{eq:MP_2} on $G$ can also be obtained as $x^t = \pi_t(W)\xi$ where $\pi_0, \pi_1, \dots$ are the orthogonal polynomials w.r.t.~$(1-\lambda)\sigma(\T_d)(\diff\lambda)$.
\end{theorem}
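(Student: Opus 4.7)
The strategy has three parts: show that on any $d$-regular graph the message-passing output takes the form $x^t = p_t(W)\xi$ for a polynomial $p_t$ that depends only on $d$ (so only on the recursion, not on the global structure of $G$); evaluate $p_t$ by specializing to the infinite tree $\T_d$, where Proposition~\ref{prop:message-passing_trees} asserts that $p_t(W)$ performs local averaging; and finally recognize $p_t$ as the orthogonal polynomial w.r.t.~$(1-\lambda)\sigma(\T_d)(\diff\lambda)$ via an optimality argument.

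\textbf{Step 1: reducing message passing to a polynomial recursion.} Since $G$ is $d$-regular, the recursion for $K_{vw}^t$ collapses to $K_{vw}^{t+1} = 1 + (d-1)K_{vw}^t$, which is edge-independent; write $K^t$ for this common value and $\Lambda_t = 1 + dK^t$. Setting $N_{vw}^t = K^t M_{vw}^t$, the message recursion becomes the linear recursion
\begin{equation*}
N_{vw}^{t+1} = \xi_v + \sum_{u \in \cN(v),\, u \neq w} N_{uv}^t \, , \qquad N_{vw}^0 = 0 \, .
\end{equation*}
I claim, by induction on $t$, that $N_{vw}^t = (Q_t(W)\xi)_v - (Q_{t-1}(W)\xi)_w$ for polynomials $Q_t$ determined by $Q_0 = 0$, $Q_1 = 1$ and
\begin{equation*}
Q_{t+1}(\lambda) = 1 + d\lambda\, Q_t(\lambda) - (d-1)\, Q_{t-1}(\lambda) \, .
\end{equation*}
The induction uses only $d$-regularity, via the identity $\sum_{u \in \cN(v)} (P(W)\xi)_u = d\,(WP(W)\xi)_v$ for any polynomial $P$. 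Substituting into \eqref{eq:MP_2} then yields $x^t = p_t(W)\xi$ with
\begin{equation*}
p_t(\lambda) = \frac{1 + d\lambda\, Q_t(\lambda) - d\,Q_{t-1}(\lambda)}{\Lambda_t} \, .
\end{equation*}
One checks $p_t(1) = 1$ by reducing to $Q_t(1) - Q_{t-1}(1) = K^t$: both $\Delta_t := Q_t(1) - Q_{t-1}(1)$ and $K^t$ satisfy the same scalar recurrence $\Delta_{t+1} = 1 + (d-1)\Delta_t$ with $\Delta_1 = 1$. Crucially, the polynomial $p_t$ is determined entirely by $d$.

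\textbf{Step 2: identification via optimality on $\T_d$.} Since $p_t$ depends only on $d$, the same $p_t$ governs message passing on the infinite $d$-regular tree $\T_d$. There, Proposition~\ref{prop:message-passing_trees} gives $p_t(W)e_v = |B_v(t)|^{-1} \bfone_{B_v(t)}$, so that $p_t(W)e_v$ is uniform on $B_v(t)$. To identify $p_t$ with the orthogonal polynomial w.r.t.~$(1-\lambda)\sigma(\T_d)(\diff\lambda)$ normalized by $p_t(1)=1$, I invoke Proposition~\ref{prop:opt_polynomial}: it suffices to show that $p_t$ is the unique minimizer of $\int P(\lambda)^2 \sigma(\T_d)(\diff\lambda)$ over polynomials $P$ with $P(1)=1$ and $\deg P \leq t$. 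Using Definition~\ref{def:spectral_measure_infinite_graph} and symmetry of $W$, this integral equals $\langle e_v, P(W)^2 e_v\rangle_{\ell^2(V)} = \sum_w P(W)_{vw}^2$. The vector $P(W)e_v$ is supported on the finite ball $B_v(t)$, and since $W\bfone = \bfone$ on the $d$-regular tree we have $\sum_w P(W)_{vw} = P(1) = 1$. Cauchy--Schwarz then gives $\sum_w P(W)_{vw}^2 \geq 1/|B_v(t)|$ with equality iff $P(W)e_v$ is uniform on $B_v(t)$, which is precisely what $p_t$ achieves. By uniqueness of the minimizer, $p_t$ is the orthogonal polynomial, concluding the proof. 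The main obstacle is the inductive step: one has to discover the ansatz $N_{vw}^t = (Q_t(W)\xi)_v - (Q_{t-1}(W)\xi)_w$, which cleanly separates the edge asymmetry into a ``$v$-part minus $w$-part'' splitting and turns the non-backtracking recursion on directed edges into a scalar recurrence in $\lambda$; once this ansatz is in hand the rest of the argument is essentially mechanical.
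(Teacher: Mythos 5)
Your proof is correct, but the identification step takes a genuinely different route from the paper's. The paper first collapses the edge-indexed messages into a vertex-indexed second-order recursion $S^{t+1}=dWS^t-(d-1)S^{t-1}$, $L_{t+1}=2+(d-1)L_t$, $x^t=S^t/L_t$, and then matches this term-by-term against the explicit three-term recurrence of the orthogonal polynomials of the modified Kesten--McKay measure, which is computed separately in Appendix \ref{ap:Kesten-McKay} via the Bernstein--Szeg\H{o} machinery (Propositions \ref{prop:normalized-representation} and \ref{prop:bernstein-szego}). You instead only extract from the recursion that $x^t=p_t(W)\xi$ for \emph{some} polynomial $p_t$ of degree $t$ with $p_t(1)=1$ depending only on $d$ (your directed-edge ansatz $N_{vw}^t=(Q_t(W)\xi)_v-(Q_{t-1}(W)\xi)_w$ checks out, and your $1+d\lambda Q_t-dQ_{t-1}$ coincides with the paper's $q_t$), and then identify $p_t$ abstractly: on $\T_d$ message passing is exact local averaging by Proposition \ref{prop:message-passing_trees}, local averaging is the unique minimizer of $\int P^2\,\diff\sigma(\T_d)$ over $\{P(1)=1,\deg P\leq t\}$ by the Cauchy--Schwarz argument (essentially the one in Appendix \ref{ap:proof-lower-bound}), and the unique minimizer is the orthogonal polynomial w.r.t.~$(1-\lambda)\sigma(\T_d)(\diff\lambda)$ by Proposition \ref{prop:opt_polynomial} (applicable since ${\rm Supp}\,\sigma(\T_d)$ is an interval, so $T=\infty$). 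What each approach buys: the paper's computation also yields the explicit recursion coefficients, which are needed to state the iteration in closed form and to prove Theorem \ref{thm:convergence_rate_message_passing}; your variational argument bypasses the Bernstein--Szeg\H{o} computation entirely and makes the remark after the theorem --- that message passing is the optimal polynomial method ``when one believes the graph is a tree'' --- into the actual mechanism of the proof rather than an afterthought. One small point of care: you implicitly use that Proposition \ref{prop:message-passing_trees} holds on the \emph{infinite} tree $\T_d$; its inductive proof is purely local, so this is fine, but it is worth saying explicitly.
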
 

See Appendix \ref{ap:proof-connection-mp-polynomial} for a proof. In words, the theorem above states that message passing corresponds to the best polynomial gossip algorithm when one \emph{believes} the graph is a tree. This is not surprising as message passing algorithms are often derived by neglecting loops in a graph.

An easy follow-up of this theorem is that the iterates $x^t$ defined in \eqref{eq:MP_1}-\eqref{eq:MP_2} follow a second-order recursion (in $d$-regular graphs). Actually the spectral measure $\sigma(\T_d)$ of the infinite $d$-regular tree, also called the Kesten-McKay measure, can be computed explicitly (see \cite[Section 2.2]{sodin2007random}),
\begin{equation*}
\sigma(\T_d)(\diff\lambda) = \frac{d}{2\pi (1 - \lambda^2)}  {\left({\frac{4(d-1)}{d^2}-\lambda^2}\right)^{1/2}} \bfone_{\left[-2\sqrt{d-1}/d,2\sqrt{d-1}/d\right]}(\lambda)\diff\lambda \, .
\end{equation*}
The recurrence relation of the modified Kesten-McKay measure $(1-\lambda)\sigma(\T_d)(\diff\lambda)$ is derived in Appendix \ref{ap:Kesten-McKay}. It shows that 
\begin{align*}
&x^0 = \xi \, , \qquad x^1 = a_0 W\xi + b_0 \xi \, , \qquad x^{t+1} = a_t W x^t - c_t x^{t-1} \, , \\
&a_0 = \frac{d}{d+1} \, , \qquad b_0 =  \frac{1}{d+1} \, , \qquad a_t = \frac{\frac{d}{d-1}-2(d-1)^{-(t+1)}}{1-\frac{2}{d}(d-1)^{-(t+1)}} \, , \qquad c_t = \frac{\frac{1}{d-1}-\frac{2}{d}(d-1)^{-t}}{1-\frac{2}{d}(d-1)^{-(t+1)}} \, , \quad t \geq 1 \, .
\end{align*} 
Theorem \ref{thm:connection_mp_polynomial} gives a way to study the convergence of the message passing algorithms on $d$-regular graphs with loops. For instance, using asymptotic properties of the orthogonal polynomials w.r.t.~$(1-\lambda)\sigma(\T_d)(\diff\lambda)$, we obtain the convergence rate of the message passing algorithm as a function of the spectral gap $\gamma$ of the matrix:
\begin{theorem}
\label{thm:convergence_rate_message_passing}
Assume $G$ is $d$-regular, meaning that each vertex has degree $d$, $d \geq 3$. Assume further that $W = A(G)/d$, and denote $\tilde{\gamma}$ its absolute spectral gap. Let $\xi = (\xi^t_v)_{v\in V}$ be any family of initial observations and $x^t = (x^t_v)_{v\in V}$ be the sequence of iterates generated by equations \eqref{eq:MP_1}-\eqref{eq:MP_2}. Then 
\begin{enumerate}
	\item If $\tilde{\gamma} < 1- 2\sqrt{d-1}/d$,
	\begin{equation*}
	\limsup_{t \to \infty} \Vert x^t-\bar{\xi}\bfone \Vert_2^{1/t} \leq  \frac{(1-\tilde{\gamma})+\sqrt{(1-\tilde{\gamma})^2-4(d-1)/d^2}}{1+\sqrt{1-4(d-1)/d^2}} \, .
	\end{equation*}
	Moreover, the upper bound is reached if there exists an eigenvector $u$ corresponding to an eigenvalue of $W$ of magnitude $1-\tilde{\gamma}$ such that $\langle u,\xi \rangle \neq 0$.
	\item If $\tilde{\gamma} \geq 1- 2\sqrt{d-1}/d$,
	\begin{equation*}
	\limsup_{t \to \infty} \Vert x^t-\bar{\xi}\bfone \Vert_2^{1/t} \leq  \frac{2\sqrt{d-1}/d}{1+\sqrt{1-4(d-1)/d^2}} \, .
	\end{equation*}
\end{enumerate}
\end{theorem}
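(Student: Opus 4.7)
The plan is to start from Theorem \ref{thm:connection_mp_polynomial}, which says $x^t = \pi_t(W)\xi$ with $\pi_t$ the orthogonal polynomials w.r.t.\ the modified Kesten--McKay measure $(1-\lambda)\sigma(\T_d)(\diff\lambda)$, normalized so that $\pi_t(1) = 1$. Diagonalizing $W$ with eigenvalues $1 = \lambda_1 > \lambda_2 \geq \dots \geq \lambda_n$ and orthonormal eigenvectors $u^1,\dots,u^n$, the error decomposes as $\Vert x^t - \bar\xi\bfone\Vert_2^2 = \sum_{i\geq 2} \langle \xi,u^i\rangle^2 \pi_t(\lambda_i)^2$, and by definition of $\tilde\gamma$ every $\lambda_i$ with $i\geq 2$ lies in $[-1+\tilde\gamma,\,1-\tilde\gamma]$. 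The entire problem thus reduces to controlling $\limsup_t |\pi_t(\lambda)|^{1/t}$ uniformly on this interval, and to showing equality is attained when $\xi$ has a non-trivial projection onto an eigenvector at modulus $1-\tilde\gamma$.

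The explicit three-term recurrence $\pi_{t+1}(\lambda) = a_t\lambda\pi_t(\lambda) - c_t\pi_{t-1}(\lambda)$ stated just before the theorem has coefficients that converge exponentially fast to $a_\infty := d/(d-1)$ and $c_\infty := 1/(d-1)$. Fixing $\lambda$, I plan to invoke the Poincaré--Perron theorem: the asymptotic behavior of $\pi_t(\lambda)$ is governed by the characteristic polynomial $r^2 - a_\infty\lambda\, r + c_\infty = 0$, whose roots are
\begin{equation*}
r_{\pm}(\lambda) = \frac{a_\infty\lambda \pm \sqrt{(a_\infty\lambda)^2 - 4c_\infty}}{2}, \qquad r_+(\lambda)\,r_-(\lambda) = \frac{1}{d-1}.
\end{equation*}
The transition between real and complex roots happens exactly at $|\lambda| = c := 2\sqrt{d-1}/d$, i.e.\ at the upper edge of the support of $\sigma(\T_d)$.

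For $|\lambda| > c$ the roots are real with distinct moduli, and Poincaré--Perron yields $|\pi_t(\lambda)|^{1/t} \to |r_+(\lambda)|$ unless the trajectory is the exceptional subdominant one. At $\lambda = 1$ one has $\pi_t(1)\equiv 1$ together with $r_+(1) = 1$ and $r_-(1) = 1/(d-1)$, so the solution is manifestly dominant; by continuity and analyticity of $\lambda\mapsto \pi_t(\lambda)$, this extends to the full interval $(c,1]$. For $|\lambda| \leq c$ the roots are complex conjugates of modulus $\sqrt{c_\infty} = 1/\sqrt{d-1}$, which gives the uniform bound $\limsup_t |\pi_t(\lambda)|^{1/t} \leq 1/\sqrt{d-1}$. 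A short algebraic simplification, using the identity $1 + \sqrt{1 - 4(d-1)/d^2} = 2(d-1)/d$, identifies $r_+(1-\tilde\gamma)$ with the rate stated in case 1 and $1/\sqrt{d-1}$ with the rate stated in case 2. Since $\mu\mapsto r_+(\mu)$ is strictly increasing on $(c,1]$, the supremum over eigenvalues in $[-1+\tilde\gamma,\,1-\tilde\gamma]$ is attained at modulus $1-\tilde\gamma$ when $\tilde\gamma < 1-c$, giving case 1; otherwise the relevant spectrum lies entirely in $[-c,c]$ and only case 2 applies. The matching lower bound in case 1 follows by isolating, in the spectral sum, the term for an eigenvalue of modulus $1-\tilde\gamma$ on which $\xi$ projects non-trivially and applying the same Poincaré--Perron asymptotic from below.

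The hard step will be Poincaré--Perron's exceptional-solution caveat: a priori the solution could be accidentally subdominant at some $\lambda \in (c,1-\tilde\gamma]$, which would ruin both the case-1 lower bound and the continuity argument above. Anchoring at $\lambda = 1$ and invoking analyticity in $\lambda$ is one route. The cleaner and more robust alternative I would use is to observe that the coefficients $(a_t, c_t)$ place $\pi_t$ in the Nevai class associated with the Kesten--McKay measure, so classical ratio and $n$-th root asymptotics for orthogonal polynomials outside the essential support give $\pi_t(\lambda) \sim C(\lambda)\, r_+(\lambda)^t$ with $C(\lambda)\neq 0$ throughout $(c,\infty)$. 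This stronger statement simultaneously settles the upper bound and supplies the exact exponential growth needed for the lower bound in case 1.
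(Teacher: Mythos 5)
Your proposal is correct in substance but follows a genuinely different route for the key analytic step. The paper also starts from Theorem \ref{thm:connection_mp_polynomial} and the spectral decomposition of $\Vert x^t-\bar\xi\bfone\Vert_2^2$ over the eigenvalues in $[-(1-\tilde\gamma),1-\tilde\gamma]$, but it never touches the three-term recurrence again: it uses the explicit Bernstein--Szeg\H{o} representation $\tilde p_t(\lambda)=(\sqrt{d-1}+\lambda)U_t(\lambda)-T_{t+1}(\lambda)$ from Appendix \ref{ap:Kesten-McKay}, bounds $|\tilde p_t|$ by $(\sqrt{d-1}+1)(t+1)+1$ on $[-1,1]$ (your regime of complex characteristic roots), reduces the supremum over the part of the spectral interval lying outside the support to the two endpoints using the fact that all zeros of $\tilde p_t$ lie in the interior of $[-1,1]$ (Proposition \ref{prop:zeros}), and obtains the exact exponential asymptotics $\tilde p_t(x)\sim C(d,x)(x\pm\sqrt{x^2-1})^t$ with $C(d,x)\neq 0$ by a short computation with the Chebyshev formulas (Lemma \ref{lem:asymptotic_p}); dividing by $p_t(1)$ gives both stated rates and the matched lower bound. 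Your route replaces this with Poincar\'e--Perron/Nevai-class asymptotics for the recurrence, which works and is more robust (it survives any perturbation of the measure keeping the recurrence coefficients convergent), but two steps need care. First, on $[-2\sqrt{d-1}/d,2\sqrt{d-1}/d]$ the limiting characteristic roots have \emph{equal} moduli, so the classical Poincar\'e--Perron theorem does not apply; you only need an upper bound there, which follows from the elementary fact that $\limsup_t\Vert y_t\Vert^{1/t}\leq\rho(A)$ for any perturbed iteration $y_{t+1}=(A+\epsilon_t)y_t$ with $\epsilon_t\to0$ --- invoke that instead. Second, your ``anchor at $\lambda=1$ and extend by continuity/analyticity'' argument for ruling out subdominance on the interval outside the support is not a proof: dominance at one point does not propagate. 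Your fallback via Nevai-class ratio asymptotics outside the essential support is the correct fix, and is exactly what the explicit Chebyshev computation delivers for free in the paper. Net effect: the paper's proof is shorter and entirely self-contained, while yours trades the special structure of the Kesten--McKay measure for general orthogonal-polynomial machinery.
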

A consequence of this theorem is that the rate of convergence of the message passing algorithm is $1-c\tilde{\gamma}+o(\tilde{\gamma})$ as $\tilde{\gamma}\to 0$, for some constant $c$. This proves that message passing has a diffusive (or unaccelerated) behavior on graphs with a small spectral gap. Figure \ref{fig:sim_2D_grid_log_mp} shows this diffusive convergence rate on the 2D grid. 
\begin{figure}
	\begin{subfigure}{0.49\linewidth}
		\includegraphics[width=\linewidth, trim = 0 0 0 0]{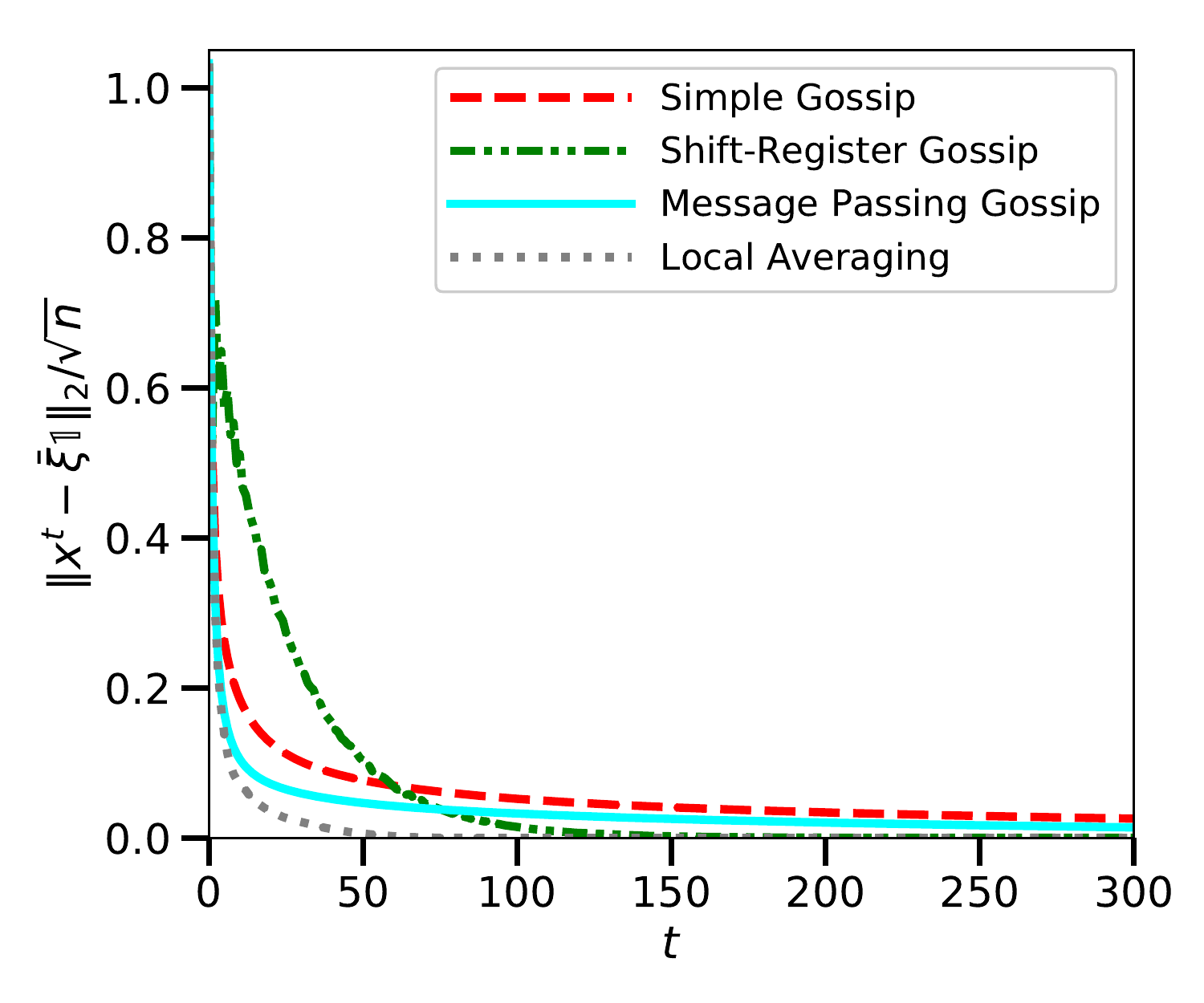}
		\caption{in linear scale}
	\end{subfigure}
	\begin{subfigure}{0.49\linewidth}
		\includegraphics[width=\linewidth, trim = 0 0 0 0]{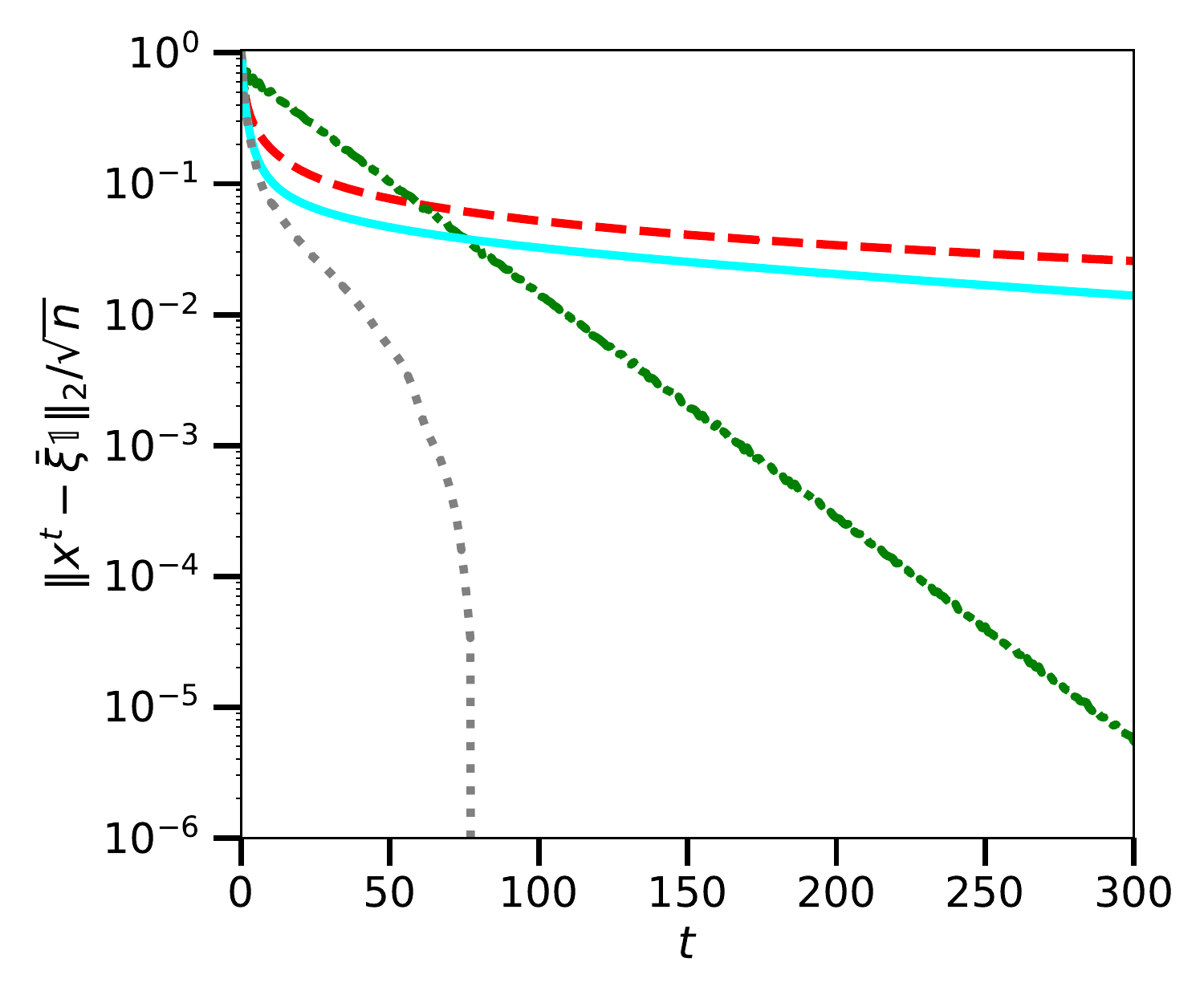}
		\caption{in log-scale}
	\end{subfigure}
	\caption{Performance of different gossip algorithms running on the 2D grid.}
	\label{fig:sim_2D_grid_log_mp}
\vspace{1cm}
	\begin{subfigure}{0.49\linewidth}
		\includegraphics[width=\linewidth, trim = 0 0 0 0]{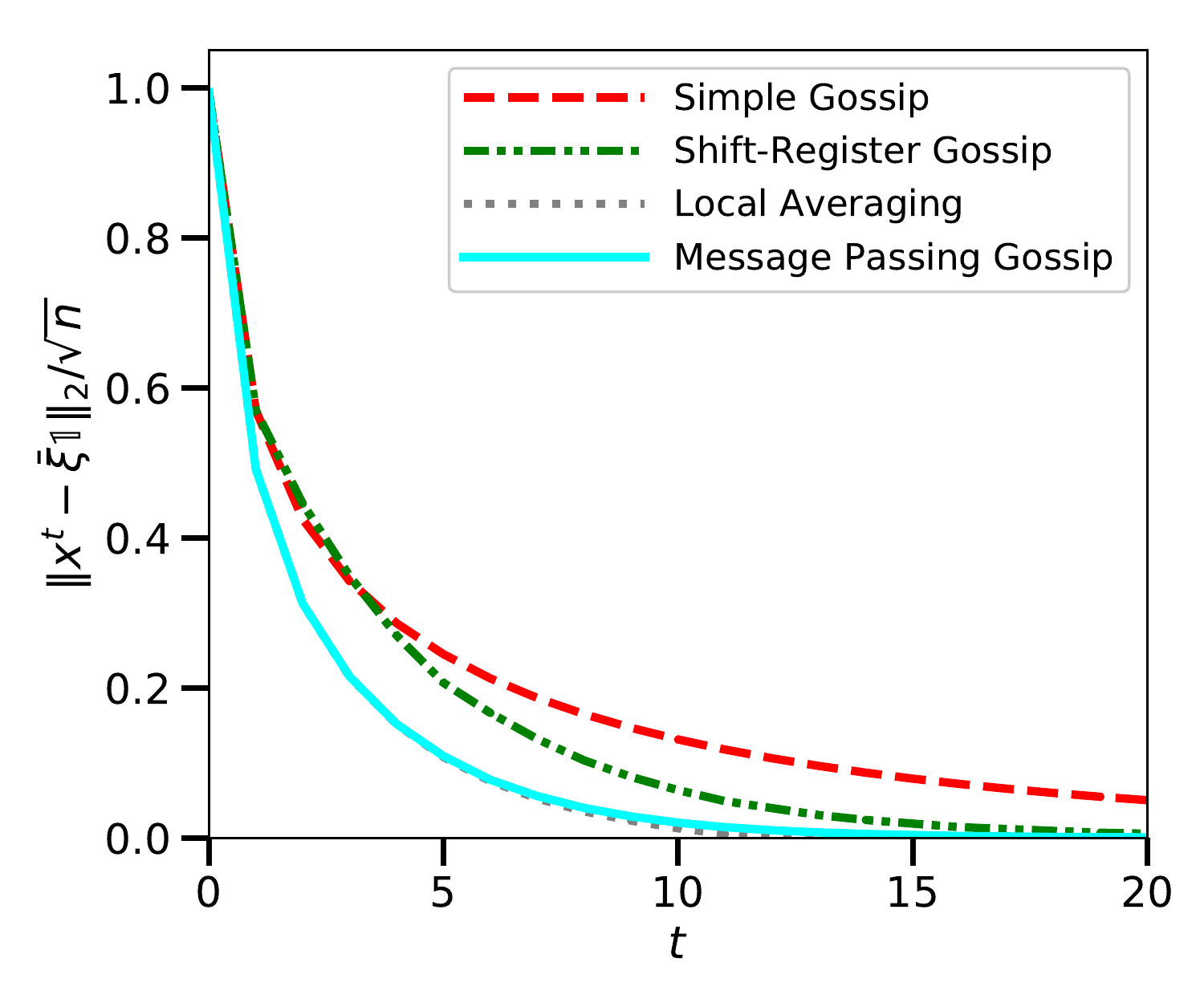}
		\caption{in linear scale}
	\end{subfigure}
	\begin{subfigure}{0.49\linewidth}
		\includegraphics[width=\linewidth, trim = 0 0 0 0]{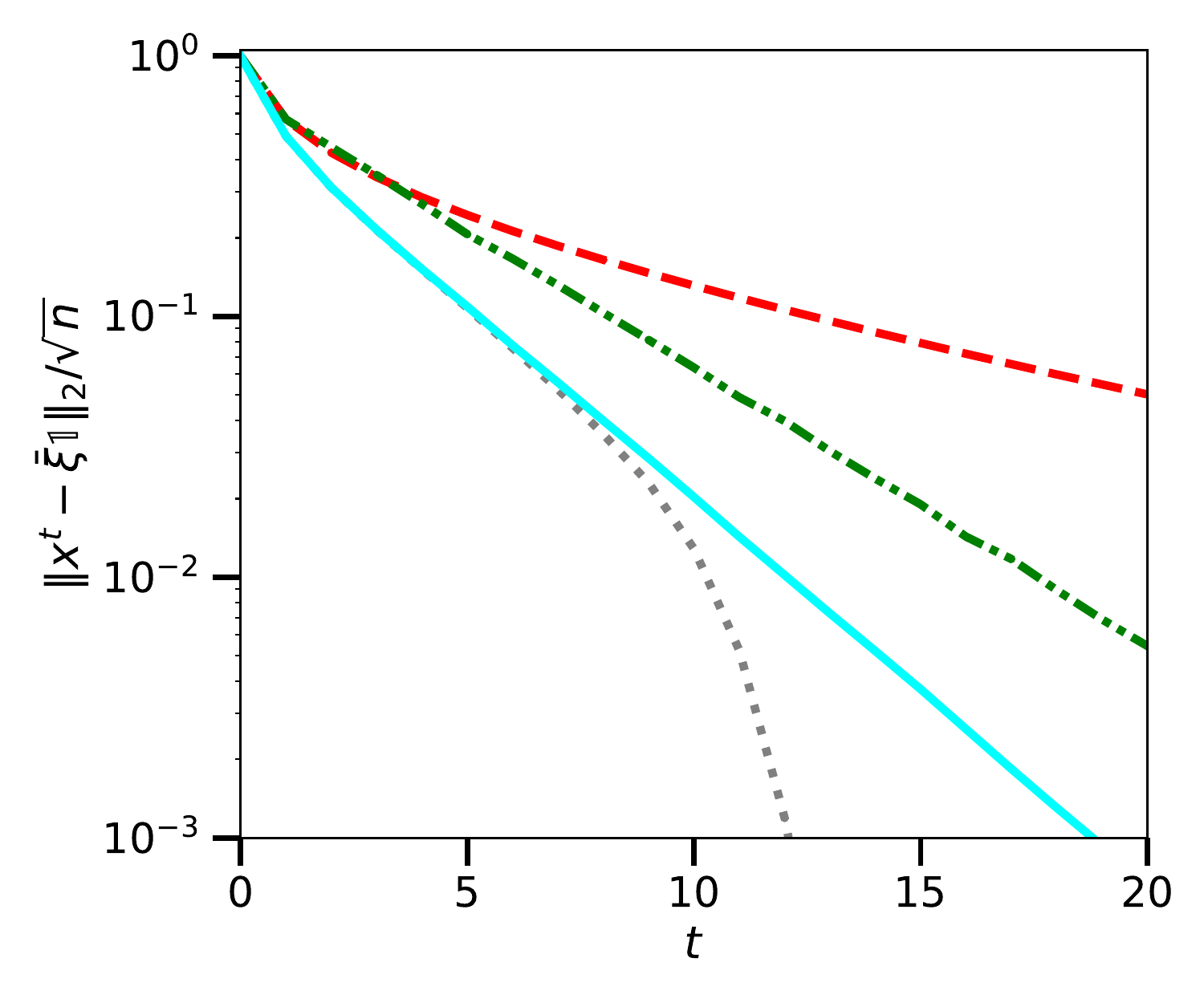}
		\caption{in log-scale}
	\end{subfigure}
	\caption{Performance of different gossip algorithms running a uniformly random $3$-regular graph of size $n=2000$.}
	\label{fig:rrg_curve}
\end{figure}
\medskip

However, the message passing algorithm can be competitive in situations with a large spectral gap. For instance, McKay's Theorem \cite[Theorem 1.1]{sodin2007random} states that the spectral measure of a uniformly random $d$-regular graph on $n$ vertices converges to the spectral measure $\sigma(\T_d)$ of the $d$-regular tree (in law, for the weak-convergence topology). This suggests that the message passing algorithm is well-suited for uniformly sampled large regular graphs. We give simulations in Figure~\ref{fig:rrg_curve} on uniformly sampled $3$-regular graphs of size $n=2000$. The results were averaged over $10$ graphs. We observe that in this case, message passing matches closely the lower-bound. Note that in this case, we do not have a diffusive rate of convergence because the spectral gap $\gamma$ does not converge to $0$ as $n \to \infty$ (see \cite{friedman2003proof} for a proof that $\gamma \to 1-2\sqrt{d-1}/d$ in probability).

\section{Conclusion}

Gossip methods based on the spectral gap were designed to improve the slow convergence rate of simple gossip. However, these methods are paradoxically bad at averaging locally in the intermediate regime before consensus is reached. In this paper, we propose another acceleration of simple gossip based on (i) the polynomial-based point of view, which designs iterations that are efficient at all times, and (ii) the Jacobi approximation, which uses prior information on the spectral dimension of the graph, a more natural property than the spectral gap. 

This paper advocates for the use of the polynomial point of view to design gossip algorithm, as it allows to use different types of prior information about the graph (spectral gap, spectral dimension, tree-like structure, etc.) and gives tools to prove the convergence of the designed algorithms.  

\section*{Acknowledgements}

We acknowledge support from the European Research Council (grant SEQUOIA 724063) and from the DGA. 

\bibliographystyle{amsalpha}
\bibliography{paper_gossip}

\newcommand{\etalchar}[1]{$^{#1}$}
\providecommand{\bysame}{\leavevmode\hbox to3em{\hrulefill}\thinspace}
\providecommand{\MR}{\relax\ifhmode\unskip\space\fi MR }
\providecommand{\MRhref}[2]{%
  \href{http://www.ams.org/mathscinet-getitem?mr=#1}{#2}
}
\providecommand{\href}[2]{#2}
\begin{thebibliography}{SBFH{\etalchar{+}}13}

\bibitem[ABDH13]{andres2013invariance}
Sebastian Andres, Martin~T. Barlow, Jean-Dominique Deuschel, and Ben~M. Hambly,
  \emph{Invariance principle for the random conductance model}, Probability
  Theory and Related Fields \textbf{156} (2013), no.~3-4, 535--580.

\bibitem[AS14]{Arioli:2014:CAI:2638909.2639021}
Mario Arioli and Jennifer Scott, \emph{Chebyshev acceleration of iterative
  refinement}, Numerical Algorithms \textbf{66} (2014), no.~3, 591--608.

\bibitem[BGPS06]{boyd2006randomized}
Stephen Boyd, Arpita Ghosh, Balaji Prabhakar, and Devavrat Shah,
  \emph{Randomized gossip algorithms}, IEEE Transactions on Information Theory
  \textbf{52} (2006), no.~6, 2508--2530.

\bibitem[BH08]{barooah2008estimation}
Prabir Barooah and Joao~P Hespanha, \emph{Estimation from relative
  measurements: Electrical analogy and large graphs}, IEEE Transactions on
  Signal Processing \textbf{56} (2008), no.~6, 2181--2193.

\bibitem[Bis11]{biskup2011recent}
Marek Biskup, \emph{Recent progress on the random conductance model},
  Probability Surveys \textbf{8} (2011).

\bibitem[BMM08]{braca2008enforcing}
Paolo Braca, Stefano Marano, and Vincenzo Matta, \emph{Enforcing consensus
  while monitoring the environment in wireless sensor networks}, IEEE
  Transactions on Signal Processing \textbf{56} (2008), no.~7, 3375--3380.

\bibitem[CS12]{chen2012diffusion}
Jianshu Chen and Ali~H. Sayed, \emph{Diffusion adaptation strategies for
  distributed optimization and learning over networks}, IEEE Transactions on
  Signal Processing \textbf{60} (2012), no.~8, 4289--4305.

\bibitem[CSY06]{cao2006accelerated}
Ming Cao, Daniel~A. Spielman, and Edmund~M. Yeh, \emph{Accelerated gossip
  algorithms for distributed computation}, 44th Annual Allerton Conference on
  Communication, Control, and Computation, 2006, pp.~952--959.

\bibitem[DAW12]{duchi2012dual}
John~C. Duchi, Alekh Agarwal, and Martin~J. Wainwright, \emph{Dual averaging
  for distributed optimization: Convergence analysis and network scaling}, IEEE
  Transactions on Automatic control \textbf{57} (2012), no.~3, 592--606.

\bibitem[DFM99]{diekmann1999efficient}
Ralf Diekmann, Andreas Frommer, and Burkhard Monien, \emph{Efficient schemes
  for nearest neighbor load balancing}, Parallel Computing \textbf{25} (1999),
  no.~7, 789--812.

\bibitem[Dur09]{durhuus2009hausdorff}
Bergfinnur Durhuus, \emph{Hausdorff and spectral dimension of infinite random
  graphs}, Acta Phys. Polon. \textbf{40} (2009), 3509--3532.

\bibitem[Fis96]{fischer1996polynomial}
Bernd Fischer, \emph{Polynomial based iteration methods for symmetric linear
  systems}, Springer, 1996.

\bibitem[Fri03]{friedman2003proof}
Joel Friedman, \emph{A proof of alon's second eigenvalue conjecture},
  Proceedings of the thirty-fifth annual ACM Symposium on Theory of computing,
  ACM, 2003, pp.~720--724.

\bibitem[Gau04]{gautschi2004orthogonal}
Walter Gautschi, \emph{Orthogonal polynomials: computation and approximation},
  Oxford University Press on Demand, 2004.

\bibitem[Gri99]{grimmett1999percolation}
Geoffrey Grimmett, \emph{What is percolation?}, Percolation, Springer, 1999,
  pp.~1--31.

\bibitem[KSS16]{korda2016distributed}
Nathan Korda, Bal{\'a}zs Sz{\"o}r{\'e}nyi, and Li~Shuai, \emph{Distributed
  clustering of linear bandits in peer to peer networks}, International
  Conference on Machine Learning, vol.~48, 2016, pp.~1301--1309.

\bibitem[LACM13]{liu2013analysis}
Ji~Liu, Brian D.~O. Anderson, Ming Cao, and A.~Stephen Morse, \emph{Analysis of
  accelerated gossip algorithms}, Automatica \textbf{49} (2013), no.~4,
  873--883.

\bibitem[LL10]{lawler2010random}
Gregory~F. Lawler and Vlada Limic, \emph{Random walk: a modern introduction},
  vol. 123, Cambridge University Press, 2010.

\bibitem[LSL16]{landgren2016distributed}
Peter Landgren, Vaibhav Srivastava, and Naomi~Ehrich Leonard, \emph{On
  distributed cooperative decision-making in multiarmed bandits}, 2016 European
  Control Conference (ECC), IEEE, 2016, pp.~243--248.

\bibitem[MR{\etalchar{+}}04]{mathieu2004isoperimetry}
Pierre Mathieu, Elisabeth Remy, et~al., \emph{Isoperimetry and heat kernel
  decay on percolation clusters}, The Annals of Probability \textbf{32} (2004),
  no.~1A, 100--128.

\bibitem[MR05]{moallemi2005consensus}
Ciamac~C. Moallemi and Benjamin~Van Roy, \emph{Consensus propagation}, Advances
  on Neural Information Processing Systems, MIT Press, 2005, pp.~899--906.

\bibitem[MS07]{muller2007spectral}
Peter M{\"u}ller and Peter Stollmann, \emph{Spectral asymptotics of the
  laplacian on supercritical bond-percolation graphs}, Journal of Functional
  Analysis \textbf{252} (2007), no.~1, 233--246.

\bibitem[MW89]{mohar1989survey}
Bojan Mohar and Wolfgang Woess, \emph{A survey on spectra of infinite graphs},
  Bulletin of the London Mathematical Society \textbf{21} (1989), no.~3,
  209--234.

\bibitem[Nev79]{nevai1979orthogonal}
Paul~G. Nevai, \emph{Orthogonal polynomials}, vol.~23, American Mathematical
  Soc., 1979.

\bibitem[Nev86]{nevai1986geza}
Paul Nevai, \emph{G{\'e}za {F}reud, orthogonal polynomials and {C}hristoffel
  functions. a case study}, Journal of Approximation Theory \textbf{48} (1986),
  no.~1, 3--167.

\bibitem[NO09]{nedic2009distributed}
Angelia Nedic and Asuman Ozdaglar, \emph{Distributed subgradient methods for
  multi-agent optimization}, IEEE Transactions on Automatic Control \textbf{54}
  (2009), no.~1, 48--61.

\bibitem[OLBC10]{olver2010nist}
Frank~W.J. Olver, Daniel~W. Lozier, Ronald~F. Boisvert, and Charles~W. Clark,
  \emph{Nist handbook of mathematical functions}, Cambridge University Press,
  2010.

\bibitem[Pen03]{penrose2003random}
Mathew Penrose, \emph{Random geometric graphs}, no.~5, Oxford University Press,
  2003.

\bibitem[Pol87]{polyakintroduction}
Boris~T Polyak, \emph{Introduction to optimization}, Optimization Software,
  Inc, New York (1987).

\bibitem[RT17]{rebeschini2017accelerated}
Patrick Rebeschini and Sekhar Tatikonda, \emph{Accelerated consensus via
  min-sum splitting}, Advances on Neural Information Processing Systems, 2017.

\bibitem[Say14]{sayed2014adaptation}
Ali~H. Sayed, \emph{Adaptation, learning, and optimization over networks},
  Foundations and Trends{\textregistered} in Machine Learning \textbf{7}
  (2014), no.~4-5, 311--801.

\bibitem[SBB{\etalchar{+}}17]{scaman2017optimal}
Kevin Scaman, Francis Bach, S{\'e}bastien Bubeck, Yin~Tat Lee, and Laurent
  Massouli{\'e}, \emph{Optimal algorithms for smooth and strongly convex
  distributed optimization in networks}, International Conference on Machine
  Learning, 2017.

\bibitem[SBFH{\etalchar{+}}13]{szorenyi2013gossip}
Balazs Szorenyi, R{\'o}bert Busa-Fekete, Istv{\'a}n Hegedus, R{\'o}bert
  Orm{\'a}ndi, M{\'a}rk Jelasity, and Bal{\'a}zs K{\'e}gl, \emph{Gossip-based
  distributed stochastic bandit algorithms}, International Conference on
  Machine Learning, 2013, pp.~19--27.

\bibitem[SGB10]{sardellitti2010fast}
Stefania Sardellitti, Massimiliano Giona, and Sergio Barbarossa, \emph{Fast
  distributed average consensus algorithms based on advection-diffusion
  processes}, IEEE Transactions on Signal Processing \textbf{58} (2010), no.~2,
  826--842.

\bibitem[Sha09]{shah2009gossip}
Devavrat Shah, \emph{Gossip algorithms}, Foundations and
  Trends{\textregistered} in Networking \textbf{3} (2009), no.~1, 1--125.

\bibitem[Sod07]{sodin2007random}
Sasha Sodin, \emph{Random matrices, nonbacktracking walks, and orthogonal
  polynomials}, Journal of Mathematical Physics \textbf{48} (2007), no.~12,
  123503.

\bibitem[SRNV10]{ram2010distributed}
Srinivasan Sundhar~Ram, Angelia Nedi{\'c}, and Venugopal~V. Veeravalli,
  \emph{Distributed stochastic subgradient projection algorithms for convex
  optimization}, Journal of optimization theory and applications \textbf{147}
  (2010), no.~3, 516--545.

\bibitem[Sze39]{szeg1939orthogonal}
Gabor Szeg\"o, \emph{Orthogonal polynomials}, vol.~23, American Mathematical
  Soc., 1939.

\bibitem[Woe00]{woess2000random}
Wolfgang Woess, \emph{Random walks on infinite graphs and groups}, vol. 138,
  Cambridge University Press, 2000.

\end{thebibliography}

\appendix

\section{Details of the simulations of Section \ref{sec:simulations}}
\label{ap:details-simulations}

\textbf{2D grid.}
We run simulations on a $40 \times 40$ square lattice ($n \!=\! 1600$ vertices) endowed with the gossip matrix defined in \eqref{eq:natural_gossip_matrix} with $d_{\rm max} = 4$. The results are plotted in Figure {\scshape\ref{fig:sim_2D_grid}} and a $20 \times 20$ grid is plotted in Figure {\scshape\ref{fig:grid_drawing}} for visualization.

\textbf{3D grid.} We run simulations on a $12 \times 12 \times 12$ cubic lattice ($n = 1728$ vertices) endowed with the gossip matrix defined in \eqref{eq:natural_gossip_matrix} with $d_{\rm max} = 6$. The results are plotted in Figure {\scshape \ref{fig:sim_3D_grid}}.  

\textbf{2D percolation bond.} We build a 2D percolation bond by taking a $40 \times 40$ 2D grid, and keep each edge independently with probability $p= 0.6$. To avoid connectivity issues, we consider $G$ the largest connected component of the resulting graph, endowed with the gossip matrix defined in \eqref{eq:natural_gossip_matrix} with $d_{\rm max} = 4$. The results are plotted in Figure {\scshape\ref{fig:sim_2D_percolation}} and a $20 \times 20$ percolation bond is plotted in Figure {\scshape\ref{fig:percolation_drawing}} for visualization.

\textbf{3D percolation bond.} We build a 3D percolation bond by taking a $12 \times 12 \times 12$ 3D grid and keep each edge independently with probability $p= 0.4$. To avoid connectivity issues, we consider $G$ the largest connected component of the resulting graph, endowed with the gossip matrix defined in \eqref{eq:natural_gossip_matrix} with $d_{\rm max} = 6$. The results are plotted in Figure {\scshape\ref{fig:sim_3D_percolation}}.

\textbf{2D random geometric graph.}
We build a 2D random geometric graph $G$ by sampling $n = 1600$ points uniformly in the unit square $[0,1]^2$ and linking pairs closer than $1.5/\sqrt{n} = 0.0375$. To avoid connectivity issues, we consider $G$ the largest connected component of the resulting graph. We build a gossip matrix $W$ on $G$ with the formulas: $W_{vw}  = \max(\deg v, \deg w)^{-1} \quad \textrm{if }v \in \cN(w)$ and $W_{vv} = 1 - \sum_{w \in \cN(v)} \max(\deg v, \deg w)^{-1}$. The results are shown in Figure~{\scshape \ref{fig:sim_2D_rgg}}. 

\textbf{3D random geometric graph.} We build a 3D random geometric graph $G$ by sampling $n = 1728$ points in the unit cube $[0,1]^3$ and linking pairs closer than $1.5/n^{1/3} = 0.125$. To avoid connectivity issues, we consider $G$ the largest connected component of the resulting graph. We build a gossip matrix $W$ on $G$ with the formulas: $W_{vw}  = \max(\deg v, \deg w)^{-1} \quad \textrm{if }v \in \cN(w)$ and $W_{vv} = 1 - \sum_{w \in \cN(v)} \max(\deg v, \deg w)^{-1}$. The results are shown in Figure {\scshape \ref{fig:sim_3D_rgg}}.

\section{Toolbox from orthogonal polynomials}
\label{ap:toolbox-orthogonal-polynomials}

In this appendix, we describe the tools from the theory of orthogonal polynomials that we use in this paper. The definition of the orthogonal polynomials $\pi_t$ w.r.t.~some measure $\tau$ is given in Definition \ref{def:orthogonal_polynomials}.  We start by giving some general properties of orthogonal polynomials in Section \ref{ap:general-properties}. We then describe two parametrized measures with respect to which orthogonal polynomials can be explicitly described: the Jacobi polynomials, in Section \ref{ap:jacobi-polynomials}, and the polynomials orthogonal to some measure of the form $(1-\lambda^2)^{1/2}/\rho(\lambda)$, where $\rho$ is some polynomial, in Section \ref{ap:polynomials-orthogonal-szego}. We finally give in Section \ref{ap:asymptotics-jacobi} some asymptotic properties of the Jacobi polynomials as $t \to \infty$.

\subsection{General properties}
\label{ap:general-properties}

\begin{proposition}[{from \cite[Theorem 3.3.1]{szeg1939orthogonal}}]
	\label{prop:zeros}
	Let $\pi_t$ be a family of orthogonal polynomials w.r.t.~some measure $\tau$ on some interval $[a,b]$. Then the zeros of $\pi_t$ are real, distinct and located in the interior of $[a,b]$.
\end{proposition}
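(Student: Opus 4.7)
The plan is to use the classical contradiction argument based on orthogonality: I will construct an auxiliary polynomial of degree strictly less than $t$ whose product with $\pi_t$ does not change sign on $[a,b]$, which forces a nonzero integral against $\tau$ and contradicts orthogonality.

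First I would isolate the zeros of $\pi_t$ that actually matter, namely the points of sign change inside $(a,b)$. Let $x_1 < x_2 < \cdots < x_k$ denote the distinct real numbers in the open interval $(a,b)$ at which $\pi_t$ changes sign (equivalently, the real zeros of odd multiplicity lying in the interior of $[a,b]$). Note $k \leq t$ since $\deg \pi_t = t$. The goal is to show $k = t$, which then forces all zeros to be simple, real, and in $(a,b)$.

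Next I would form $q(\lambda) = (\lambda - x_1)(\lambda - x_2) \cdots (\lambda - x_k)$ and consider $\pi_t(\lambda) q(\lambda)$. By the choice of the $x_i$, the product $\pi_t \cdot q$ has the same sign on all of $[a,b]$ away from this finite set of points: at every $x_i$ both factors change sign simultaneously, and any other real zero of $\pi_t$ is of even multiplicity so does not flip the sign. Since the support of $\tau$ contains infinitely many points (the support of $\tau$ must be infinite for the full orthogonal family $\pi_0, \ldots, \pi_t$ to exist), this gives
\begin{equation*}
\int_a^b \pi_t(\lambda) q(\lambda) \, \diff\tau(\lambda) \neq 0.
\end{equation*}
On the other hand, if $k < t$, then $\deg q = k \leq t-1$, so by Definition \ref{def:orthogonal_polynomials} (writing $q$ in the orthogonal basis $\pi_0, \ldots, \pi_{t-1}$) we have $\langle \pi_t, q \rangle_\tau = 0$, a contradiction. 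Therefore $k = t$, and since $\deg \pi_t = t$, all zeros of $\pi_t$ are the $x_i$: they are real, simple, and lie in the open interval $(a,b)$.

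The only delicate point I expect is the sign argument: one must carefully justify that the product $\pi_t \cdot q$ is of constant sign on $[a,b]$ up to a set of $\tau$-measure zero, and that $\tau$ gives positive mass to a set on which this product is nonzero. This is why I need the implicit hypothesis that $\tau$ has sufficiently large support (at least $t+1$ points, which is built into the existence statement of Definition \ref{def:orthogonal_polynomials}); otherwise one could not even define $\pi_t$. Everything else — the degree count, the orthogonality expansion, and the conclusion that $\pi_t$ has exactly $t$ simple real zeros in $(a,b)$ — follows immediately.
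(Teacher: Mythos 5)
Your argument is correct, and it is essentially the canonical proof of this fact; the paper itself gives no proof of Proposition~\ref{prop:zeros}, simply citing \cite[Theorem 3.3.1]{szeg1939orthogonal}, and your write-up reproduces Szeg\H{o}'s argument (sign-change zeros, the auxiliary polynomial $q$, constant sign of $\pi_t q$, contradiction with $\langle \pi_t, q\rangle_\tau = 0$ when $k<t$). One small point to tidy: your parenthetical claim that the support of $\tau$ ``must be infinite'' is neither true nor needed --- Definition~\ref{def:orthogonal_polynomials} only guarantees (and only requires) that the support have at least $t+1$ points for $\pi_t$ to exist, and this is the relevant case in the paper, where spectral measures of finite graphs are finitely supported. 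You in fact state the correct condition in your final paragraph; the clean way to close the sign argument is to observe that $\pi_t q$ vanishes exactly on the zero set of $\pi_t$, which has at most $t$ points, so it cannot vanish on all of ${\rm Supp}\,\tau$, whence $\int \pi_t q \,\diff\tau > 0$ (after fixing the sign). With that wording fixed, the proof is complete.
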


In Proposition \ref{prop:recurrence_relation}, it is stated that the orthogonal polynomials satisfy a three-term recurrence relation. We write here the short proof as it is used in Appendix \ref{ap:parameter_free_polynomial_iteration}. 

\begin{proof}[Proof of Proposition \ref{prop:recurrence_relation}]
The polynomial $\lambda \pi_t(\lambda)$ of the variable $\lambda$ is of degree $t+1$, thus it can be decomposed over the orthogonal basis $\pi_0(\lambda),\pi_1(\lambda),\dots,\pi_{t+1}(\lambda)$:
\begin{equation*}
\lambda \pi_t(\lambda) = \sum_{s=0}^{t+1} \frac{\langle \lambda \pi_t, \pi_s \rangle_\tau}{\langle \pi_s, \pi_s \rangle_\tau} \pi_s(\lambda)\, .
\end{equation*}
Note that $ \langle \lambda \pi_t, \pi_s \rangle_\tau = \int \lambda \pi_t(\lambda) \pi_s(\lambda) \diff\tau(\lambda) = \langle  \pi_t, \lambda \pi_s \rangle_\tau = 0$ when $s \leq t-2$ because in this case $\lambda \pi_s(\lambda) \in \R_{t-1}[X]$ and $\pi_t$ is orthogonal to $\R_{t-1}[X]$. Thus 
\begin{equation*}
\lambda \pi_t(\lambda) = \frac{\langle \lambda \pi_t, \pi_{t+1} \rangle_\tau}{\langle \pi_{t+1}, \pi_{t+1} \rangle_\tau} \pi_{t+1}(\lambda) + \frac{\langle \lambda \pi_t, \pi_t \rangle_\tau}{\langle \pi_t, \pi_t \rangle_\tau} \pi_t(\lambda) + \frac{\langle \lambda \pi_t, \pi_{t-1} \rangle_\tau}{\langle \pi_{t-1}, \pi_{t-1} \rangle_\tau} \pi_{t-1}(\lambda) \, ,
\end{equation*}
with the convention $\pi_{-1} = 0$. Note that $\langle \lambda \pi_t, \pi_{t+1} \rangle_\tau$ is non-zero as otherwise it would imply that $\lambda\pi_t$ is a polynomial of degree smaller or equal to $t$, which is absurd. We get the recursion formula by denoting
\begin{align}	
\label{eq:recurrence-coeffs-general}
&a_t = \frac{\langle\pi_{t+1},\pi_{t+1}\rangle_\tau}{\langle \lambda \pi_t, \pi_{t+1} \rangle_\tau} \, , 
&&b_t = -\frac{\langle \pi_{t+1}, \pi_{t+1} \rangle_\tau\langle \lambda\pi_t, \pi_t\rangle_\tau}{\langle \lambda \pi_t, \pi_{t+1} \rangle_\tau\langle\pi_t,\pi_t\rangle_\tau} \, , 
&&c_t = \frac{\langle \pi_{t+1}, \pi_{t+1} \rangle_\tau\langle\lambda\pi_t,\pi_{t-1}\rangle_\tau}{\langle \lambda \pi_t, \pi_{t+1} \rangle_\tau\langle\pi_{t-1},\pi_{t-1}\rangle_\tau} \, .
\end{align}
\end{proof}

\subsection{Jacobi polynomials}
\label{ap:jacobi-polynomials}

\begin{definition}[{from \cite[Chapter IV]{szeg1939orthogonal}}]
	\label{def:Jacobi-polynomials}
Let $\alpha, \beta>-1$. The Jacobi polynomials $P_t^{(\alpha,\beta)}$ are the orthogonal polynomials w.r.t.~the Jacobi measure
\begin{equation*}
\label{eq:jacobi-measure}
\sigma^{(\alpha,\beta)}(\diff\lambda) = (1-\lambda)^\alpha(1+\lambda)^\beta\bfone_{\{\lambda \in (-1,1)\}}\diff\lambda\, ,
\end{equation*}
normalized such that $P_t^{(\alpha,\beta)}(1) = {t + \alpha \choose t}$. 
\end{definition}

\begin{example}[{from \cite[Section 2.4]{szeg1939orthogonal}}]
	\label{ex:chebyshev}
\begin{enumerate}
	\item The Chebyshev polynomials $T_t$ of the first kind are the orthogonal polynomials w.r.t.~$\sigma^{(-1/2,-1/2)}(\diff\lambda) = (1-\lambda^2)^{-1/2}$ and normalized such that $T_t(1)=1$. They are, up to some rescaling, a family of Jacobi polynomials. They satisfy the trigonometric formula
	\begin{equation*}
	T_t(\cos \theta) = \cos(t\theta) \, .
	\end{equation*} 
	\item The Chebyshev polynomials $U_t$ of the second kind are the orthogonal polynomials w.r.t. $\sigma^{(1/2,1/2)}(\diff\lambda) = (1-\lambda^2)^{1/2}$ and normalized such that $U_t(1)=t+1$. They are, up to some rescaling, a family of Jacobi polynomials. They satisfy the trigonometric formula
	\begin{equation*}
	U_t(\cos \theta) = \frac{\sin (t+1)\theta}{\sin \theta} \, .
	\end{equation*} 
\end{enumerate}
\end{example}

A remarkable property of the Jacobi polynomials is that their recurrence relation can be computed explicitly.

\begin{proposition}[{from \cite[Section 4.4.5]{szeg1939orthogonal}}]
	\label{prop:recurrence-jacobi}
Let $\alpha, \beta > -1$. The Jacobi polynomials $P_t^{\alpha,\beta}$ satisfy the three recurrence formula
\begin{align*}
&P_0^{(\alpha,\beta)}(\lambda) = 1 \, , \quad P_1^{(\alpha,\beta)}(\lambda) = \frac{1}{2}(\alpha+\beta+2)\lambda + \frac{1}{2}(\alpha-\beta) \, , \\
&2(t+1)(t+1+\alpha+\beta)(2t+\alpha+\beta)P_{t+1}^{(\alpha,\beta)}(\lambda) \\
&\qquad= (2t+\alpha+\beta+1)[(2t+\alpha+\beta+2)(2t+\alpha+\beta)\lambda+\alpha^2-\beta^2]P_t^{(\alpha,\beta)}(\lambda)  \\
&\qquad\quad -2(t+\alpha)(t+\beta)(2t+\alpha+\beta+2)P_{t-1}^{(\alpha,\beta)}(\lambda) \, .
\end{align*}
\end{proposition}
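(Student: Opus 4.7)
The plan is to apply the three-term recurrence guaranteed by Proposition \ref{prop:recurrence_relation} to the Jacobi polynomials, and then to pin down the three recurrence coefficients by independent algebraic evaluations. Specifically, Proposition \ref{prop:recurrence_relation} produces scalars $A_t, B_t, C_t$ such that
\[
P_{t+1}^{(\alpha,\beta)}(\lambda) = (A_t\lambda + B_t)\,P_t^{(\alpha,\beta)}(\lambda) - C_t\,P_{t-1}^{(\alpha,\beta)}(\lambda), \qquad t \geq 1,
\]
and the task reduces to checking that these $A_t, B_t, C_t$ take exactly the values appearing in the statement. The initial conditions $P_0^{(\alpha,\beta)} = 1$ and $P_1^{(\alpha,\beta)}(\lambda) = \tfrac{1}{2}(\alpha+\beta+2)\lambda + \tfrac{1}{2}(\alpha-\beta)$ will fall out of the same computation as base cases.

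The key auxiliary tool is the Rodrigues formula
\[
P_t^{(\alpha,\beta)}(\lambda) = \frac{(-1)^t}{2^t\,t!}\,(1-\lambda)^{-\alpha}(1+\lambda)^{-\beta}\,\frac{d^t}{d\lambda^t}\!\left[(1-\lambda)^{\alpha+t}(1+\lambda)^{\beta+t}\right],
\]
which I would establish first by verifying that its right-hand side is a polynomial of degree $t$, is orthogonal to $\R_{t-1}[X]$ with respect to $\sigma^{(\alpha,\beta)}$ via $t$ successive integrations by parts (the boundary terms vanish thanks to the factors $(1\mp\lambda)^{\alpha+t}$), and equals $\binom{t+\alpha}{t}$ at $\lambda=1$ by Leibniz' rule (only the term that differentiates $(1-\lambda)^{\alpha+t}$ exactly $t$ times survives at $\lambda=1$); uniqueness from Definition \ref{def:orthogonal_polynomials} then identifies it with $P_t^{(\alpha,\beta)}$. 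From this representation three closed identities follow: the leading coefficient $k_t$ of $P_t^{(\alpha,\beta)}$ equals $\Gamma(2t+\alpha+\beta+1)/(2^t\,t!\,\Gamma(t+\alpha+\beta+1))$, $P_t^{(\alpha,\beta)}(1) = \binom{t+\alpha}{t}$, and $P_t^{(\alpha,\beta)}(-1) = (-1)^t\binom{t+\beta}{t}$. Matching leading coefficients on both sides of the recurrence forces $A_t = k_{t+1}/k_t$, while evaluating at $\lambda = \pm 1$ yields a $2\times 2$ linear system in $(B_t, C_t)$ whose determinant does not vanish for $\alpha,\beta > -1$.

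The principal obstacle is purely algebraic: one must collapse the Gamma-function ratios arising from $k_{t+1}/k_t$ and from successive binomial coefficients into clean rational functions of $t$. Multiplying the recurrence through by the common denominator $2(t+1)(t+1+\alpha+\beta)(2t+\alpha+\beta)$ clears every Gamma ratio, after which the coefficients of $\lambda P_t^{(\alpha,\beta)}$, of $P_t^{(\alpha,\beta)}$, and of $P_{t-1}^{(\alpha,\beta)}$ can be read off and checked to coincide with those in the statement. The only degenerate case, $2t+\alpha+\beta = 0$, forces $t=0$ with $\alpha+\beta=0$, and is handled either directly from the explicit expression for $P_1^{(\alpha,\beta)}$ or by an analytic-continuity argument in the parameters $(\alpha,\beta)$.
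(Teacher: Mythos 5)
Your outline is correct, but note that the paper does not prove this proposition at all: it is imported verbatim from Szeg\H{o} (Section 4.4.5 / Theorem 4.5.1 of \cite{szeg1939orthogonal}) as a classical fact, so there is no in-paper argument to compare against. What you propose is essentially the standard textbook derivation, and it is sound: the existence of \emph{some} three-term recurrence comes from Proposition \ref{prop:recurrence_relation}, the Rodrigues formula gives you the three anchors $k_t = 2^{-t}\binom{2t+\alpha+\beta}{t}$, $P_t^{(\alpha,\beta)}(1)=\binom{t+\alpha}{t}$ and $P_t^{(\alpha,\beta)}(-1)=(-1)^t\binom{t+\beta}{t}$, and matching leading coefficients plus the two endpoint evaluations uniquely determines $(A_t,B_t,C_t)$. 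Your nondegeneracy claims check out: the $2\times 2$ determinant is $(-1)^t\bigl[\binom{t+\alpha}{t}\binom{t-1+\beta}{t-1}+\binom{t-1+\alpha}{t-1}\binom{t+\beta}{t}\bigr]\neq 0$ for $\alpha,\beta>-1$, and for $t\geq 1$ none of the factors $t+1$, $t+1+\alpha+\beta$, $2t+\alpha+\beta$ can vanish (each is strictly positive since $\alpha+\beta>-2$), so the degenerate case you flag never actually occurs in the range where the recurrence is asserted. What remains is the routine but nontrivial algebra of verifying that the stated $B_t$ and $C_t$ satisfy the two endpoint identities — the $\lambda=1$ identity is exactly the relation the paper itself exploits in Appendix \ref{ap:Jacobi} (equivalent to $a_t+b_t-c_t=1$ after normalization), which is reassuring. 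The trade-off is the usual one: the paper buys brevity by citation, while your route buys self-containedness at the cost of establishing the Rodrigues formula and grinding through Gamma-ratio bookkeeping.
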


\begin{example}
The Chebyshev polynomials of the first and the second kind satisfy the same recurrence formula, but with different initializations:
\begin{align*}
&T_0(\lambda) = 1 \, , &&T_1(\lambda) = \lambda \, , &&T_{t+1}(\lambda) = 2\lambda T_t(\lambda) - T_{t-1}(\lambda) \, , \\
&U_0(\lambda) = 1 \, , &&U_1(\lambda) = 2\lambda \, , &&U_{t+1}(\lambda) = 2\lambda U_t(\lambda) - U_{t-1}(\lambda) \, . 
\end{align*}
\end{example}

\begin{proposition}[{from \cite[Theorem 7.32.1]{szeg1939orthogonal}}]
	\label{prop:max-jacobi}
Let $\alpha, \beta \geq 1/2$. Then 
\begin{equation*}
\max_{\lambda \in [-1,1]} \left\vert P_t^{(\alpha,\beta)}(\lambda)\right\vert = {t+\max(\alpha,\beta) \choose t} \, .
\end{equation*}
\end{proposition}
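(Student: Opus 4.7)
The plan is to use a Sturm--Liouville type invariant (a "Sonin function") for the Jacobi differential equation, combining $P_t^{(\alpha,\beta)}$ with its derivative so that the resulting quantity is bounded on $[-1,1]$ but dominates the square of the polynomial itself. Specifically, I would introduce
\begin{equation*}
G(\lambda) = \bigl[P_t^{(\alpha,\beta)}(\lambda)\bigr]^2 + \frac{1-\lambda^2}{t(t+\alpha+\beta+1)}\bigl[(P_t^{(\alpha,\beta)})'(\lambda)\bigr]^2
\end{equation*}
(the case $t=0$ being trivial). For $\lambda \in [-1,1]$ the prefactor $(1-\lambda^2)/[t(t+\alpha+\beta+1)]$ is non-negative, so $[P_t^{(\alpha,\beta)}(\lambda)]^2 \leq G(\lambda)$ pointwise, with equality at $\lambda = \pm 1$.

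Next I would differentiate $G$ and eliminate $P''$ via the Jacobi ODE $(1-\lambda^2)y'' + [\beta-\alpha-(\alpha+\beta+2)\lambda]y' + t(t+\alpha+\beta+1)y = 0$. The $2PP'$ term coming from the first square cancels exactly with the $-2PP'$ produced by substituting the ODE, and the remaining terms combine into
\begin{equation*}
G'(\lambda) = \frac{2\bigl[(\alpha+\beta+1)\lambda + (\alpha-\beta)\bigr]}{t(t+\alpha+\beta+1)}\bigl[(P_t^{(\alpha,\beta)})'(\lambda)\bigr]^2.
\end{equation*}

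Because the sign of $G'$ is governed by the affine factor $(\alpha+\beta+1)\lambda + (\alpha-\beta)$, which vanishes at most once in $(-1,1)$ (specifically at $\lambda_0 = (\beta-\alpha)/(\alpha+\beta+1)$, which lies in $(-1,1)$ under the hypothesis), $G$ is first decreasing then increasing on $[-1,1]$. Its maximum over that interval is therefore attained at one of the two endpoints. At those endpoints the derivative term of $G$ vanishes, so $G(1) = [P_t^{(\alpha,\beta)}(1)]^2 = \binom{t+\alpha}{t}^2$ and $G(-1) = [P_t^{(\alpha,\beta)}(-1)]^2 = \binom{t+\beta}{t}^2$, using $P_t^{(\alpha,\beta)}(-1) = (-1)^t\binom{t+\beta}{t}$ (which follows from the symmetry $P_t^{(\alpha,\beta)}(-\lambda) = (-1)^t P_t^{(\beta,\alpha)}(\lambda)$ of the Jacobi family). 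Combining everything,
\begin{equation*}
\bigl[P_t^{(\alpha,\beta)}(\lambda)\bigr]^2 \leq G(\lambda) \leq \max\!\left(\binom{t+\alpha}{t}^2,\binom{t+\beta}{t}^2\right) = \binom{t+\max(\alpha,\beta)}{t}^2,
\end{equation*}
and this bound is attained at the endpoint $\lambda = 1$ if $\alpha \geq \beta$ and at $\lambda = -1$ otherwise.

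The only real "work" is the bookkeeping in computing $G'(\lambda)$ and keeping track of signs; the hypothesis $\alpha,\beta \geq 1/2$ is in fact more than needed (the argument goes through as long as $\alpha,\beta > -1/2$, so that $\lambda_0 \in (-1,1)$ and $t+\alpha+\beta+1 > 0$). The main thing one has to be careful about is verifying that $\lambda_0$ lies strictly inside $(-1,1)$ so that $G$ has a unique minimum in the interior — otherwise the monotonicity conclusion could fail and one would have to handle the boundary cases separately.
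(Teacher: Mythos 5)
Your argument is correct, and it fills a genuine gap in the sense that the paper does not prove this statement at all: Proposition \ref{prop:max-jacobi} is imported verbatim from Szeg\H{o}'s book (Theorem 7.32.1), so there is no in-paper proof to compare against. What you have written is essentially the classical Sonin--P\'olya-type argument that underlies Szeg\H{o}'s own treatment: the computation of $G'$ checks out (substituting $(1-\lambda^2)P'' = -[\beta-\alpha-(\alpha+\beta+2)\lambda]P' - t(t+\alpha+\beta+1)P$ into $G' = 2PP' - \tfrac{2\lambda}{t(t+\alpha+\beta+1)}(P')^2 + \tfrac{2(1-\lambda^2)}{t(t+\alpha+\beta+1)}P'P''$ does cancel the $2PP'$ terms and leaves exactly the affine factor you state), the sign analysis of $(\alpha+\beta+1)\lambda+(\alpha-\beta)$ is right, and the endpoint values $G(\pm1)=P_t^{(\alpha,\beta)}(\pm1)^2$ follow since the derivative term carries the factor $1-\lambda^2$. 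Two small points you should make explicit: first, the final step silently uses that ${t+\alpha \choose t}=\prod_{k=1}^{t}\frac{k+\alpha}{k}$ is increasing in $\alpha$ for $\alpha>-1$, which is what turns $\max\bigl({t+\alpha\choose t},{t+\beta\choose t}\bigr)$ into ${t+\max(\alpha,\beta)\choose t}$; second, your remark that the hypothesis can be relaxed is correct but slightly conservative --- if, say, $\beta\leq -1/2<\alpha$, then $\lambda_0\leq -1$ and $G$ is monotone non-decreasing on all of $[-1,1]$, so the conclusion still holds; Szeg\H{o}'s theorem in fact only requires $\max(\alpha,\beta)\geq -1/2$, and it is only when \emph{both} parameters drop below $-1/2$ that the maximum migrates to an interior critical point. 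Neither point affects the validity of the proof under the stated hypothesis $\alpha,\beta\geq 1/2$.
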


\subsection{Polynomials orthogonal w.r.t.~$(1-\lambda^2)^{1/2}/\rho(\lambda)$, $\rho$ polynomial}
\label{ap:polynomials-orthogonal-szego}

In this section, we present how one can compute the recurrence relation for some orthogonal polynomials w.r.t.~a weight of the form $(1-\lambda^2)^{1/2}/\rho(\lambda)$, $\rho$ polynomial.

\begin{proposition}[{from \cite[Theorem 1.2.1]{szeg1939orthogonal}}]
	\label{prop:normalized-representation}
Let $\rho$ be a real polynomial of degree $l$ which is non-negative for $\lambda \in [-1,1]$. Then there exists a polynomial $h$ of degree $l$ such that for all real $\theta$, $\rho(\cos \theta) = \vert h(e^{i\theta})\vert^2$.
\end{proposition}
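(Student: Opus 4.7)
The plan is to make the substitution $z = e^{i\theta}$, so that $\cos\theta = (z+z^{-1})/2$, and thereby reduce the statement to the factorization of a palindromic polynomial. First, define
$$q(z) = z^l \rho\!\left(\frac{z+z^{-1}}{2}\right).$$
Expanding $((z+z^{-1})/2)^k = 2^{-k}\sum_{j=0}^{k}\binom{k}{j}z^{2j-k}$ and multiplying by $z^l$ shows that $q$ is a polynomial in $z$ of degree at most $2l$, with real coefficients, satisfying the reciprocal relation $q(z) = z^{2l}q(1/z)$. In particular, its nonzero roots come in pairs $(z_0, 1/z_0)$ and, by reality, also in conjugate pairs.

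Next, I would locate the roots of $q$ on the unit circle. The identity $q(e^{i\theta}) = e^{il\theta}\rho(\cos\theta)$ shows that $e^{i\theta_0}$ is a root of $q$ if and only if $\cos\theta_0$ is a root of $\rho$ in $[-1,1]$. Because $\rho \geq 0$ on $[-1,1]$, every such root of $\rho$ has even multiplicity. The map $z \mapsto (z+z^{-1})/2$ is conformal on $\{|z|=1\}\setminus\{\pm 1\}$, while at $z=\pm 1$ a direct computation gives $(z+z^{-1})/2 \mp 1 = \pm (z\mp 1)^2/(2z)$; in both cases, the multiplicity of any unit-circle root of $q$ is even.

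Then I would construct $h$ by selecting roots. Let $z_1,\dots,z_p$ be the roots of $q$ with $|z_j|<1$ (listed with multiplicity), and $\zeta_1,\dots,\zeta_q$ be the roots on the unit circle, each listed with half its multiplicity in $q$. The reciprocal symmetry forces $2p + 2q = \deg q = 2l$, hence $p+q=l$, and I set
$$h(z) = \alpha \prod_{j=1}^p (z-z_j)\prod_{j=1}^q (z-\zeta_j),$$
a polynomial of degree $l$ with $\alpha \in \C$ to be fixed.

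Finally, I would verify $|h(e^{i\theta})|^2 = \rho(\cos\theta)$. Using $|e^{i\theta}-w|^2 = (e^{i\theta}-w)(e^{-i\theta}-\bar w)$ together with the identity $e^{-i\theta}-\bar w = -\bar w\,e^{-i\theta}(e^{i\theta}-1/\bar w)$ for $w\neq 0$, I can rewrite $|h(e^{i\theta})|^2$ --- up to $|\alpha|^2$, a real constant, and a power of $e^{i\theta}$ that cancels between $h$ and $\overline h$ --- as a product over \emph{all} roots of $q$ of factors $(e^{i\theta}-\cdot)$. Here the reality of the coefficients of $q$ is crucial: the set $\{\bar z_j\}$ coincides with the set $\{z_j\}$, so the reciprocal roots $1/\bar z_j$ coincide with $1/z_j$, and one recovers exactly the factorization of $q(e^{i\theta}) = c_{2l}\prod(e^{i\theta}-\cdot) = e^{il\theta}\rho(\cos\theta)$. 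Choosing $|\alpha|^2$ to absorb the resulting real constant yields $\rho(\cos\theta)$, as desired. The main obstacle is the bookkeeping in this last step, particularly the simultaneous tracking of conjugate and reciprocal pairs and of the doubled multiplicities on the unit circle.
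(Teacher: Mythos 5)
The paper does not prove this statement: it is imported verbatim from Szeg\H{o} (Theorem 1.2.1), so there is no in-paper argument to compare against. Your proof is the classical Fej\'er--Riesz-type root-splitting argument, which is exactly the route Szeg\H{o} takes, and it is essentially correct: forming the self-reciprocal polynomial $q(z)=z^l\rho((z+z^{-1})/2)$ of degree $2l$ with $q(0)\neq 0$, observing that its roots pair off under $z\mapsto 1/z$ and under conjugation, that the unit-circle roots have even multiplicity, and assembling $h$ from the roots inside the disc plus half of each unit-circle root is the standard and complete proof. One sentence is stated too broadly: it is \emph{not} true that every root of $\rho$ in $[-1,1]$ has even multiplicity --- $\rho(\lambda)=1-\lambda$ is non-negative on $[-1,1]$ with a simple root at $1$. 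Evenness holds only for roots in the open interval $(-1,1)$; at the endpoints the multiplicity may be odd, and it is precisely the quadratic degeneration of the substitution there, $(z+z^{-1})/2-1=(z-1)^2/(2z)$ and $(z+z^{-1})/2+1=(z+1)^2/(2z)$ (note both signs are $+$, not the $\pm$ you wrote), that doubles the multiplicity and rescues the conclusion that all unit-circle roots of $q$ are of even order. Since your argument already treats $z=\pm1$ separately via exactly this computation, the slip is harmless, and the final bookkeeping with conjugate and reciprocal pairs goes through as you describe.
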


\begin{proposition}[{from \cite[Theorem 2.6]{szeg1939orthogonal}}]
	\label{prop:bernstein-szego}
Let $\rho$ be a real polynomial of degree $l$ taking positive values on the interval $[-1,1]$, and 
\begin{equation*}
\tau(\diff \lambda) = \frac{(1-\lambda^2)^{1/2}}{\rho(\lambda)}\diff\lambda \, .
\end{equation*}
 Let $h$ be a polynomial of degree $l$ such that $\rho(\cos \theta) = \vert h(e^{i\theta})\vert^2$ (see Proposition \ref{prop:normalized-representation}), and decompose $h(e^{i\theta}) = c(\theta) + is(\theta)$, $c(\theta)$ and $s(\theta)$ real. Then the polynomials 
 \begin{equation*}
 \pi_t(\cos \theta) = c(\theta) U_t(\cos\theta) - \frac{s(\theta)}{\sin\theta} T_{t+1}(\cos\theta) 
 \end{equation*}
 are orthogonal w.r.t.~$\tau$.
\end{proposition}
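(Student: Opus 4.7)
The plan is to reduce everything to one trigonometric identity followed by a short contour integral. Using the classical relations $U_t(\cos\theta)\sin\theta=\sin((t+1)\theta)$, $T_{t+1}(\cos\theta)=\cos((t+1)\theta)$, and $c(\theta)+is(\theta)=h(e^{i\theta})$, multiplying the definition of $\pi_t$ by $\sin\theta$ yields the workhorse identity
\begin{equation*}
\pi_t(\cos\theta)\sin\theta \;=\; c(\theta)\sin((t+1)\theta) - s(\theta)\cos((t+1)\theta) \;=\; \operatorname{Im}\!\left[\overline{h(e^{i\theta})}\,e^{i(t+1)\theta}\right].
\end{equation*}
Every subsequent step is a manipulation of this formula.

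First I would use it to verify that $\pi_t$ is actually a polynomial in $\cos\theta$. Choosing $h$ with real coefficients (possible by Fej\'er--Riesz since $\rho$ is real), the right-hand side expands as $\sum_{k=0}^{l} h_k\sin((t+1-k)\theta)$, an odd trigonometric polynomial. Any such polynomial is divisible by $\sin\theta$, and the quotient is expressible in Chebyshev polynomials of the second kind via $\sin(n\theta)/\sin\theta = U_{n-1}(\cos\theta)$, hence in powers of $\cos\theta$. Reading off the top frequency $h_0\,e^{i(t+1)\theta}$ shows that, for $t$ larger than the degree of $h$, $\pi_t$ is a polynomial of degree exactly $t$ with leading coefficient $2^t h_0$.

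Next I would establish orthogonality $\langle\pi_s,\pi_t\rangle_\tau = 0$ for $s<t$. After the substitution $\lambda=\cos\theta$, applying the key identity twice and using $\rho(\cos\theta) = |h(e^{i\theta})|^2$, the integrand simplifies to
\begin{equation*}
\frac{\pi_s(\cos\theta)\pi_t(\cos\theta)\sin^2\theta}{\rho(\cos\theta)} \;=\; \tfrac{1}{2}\cos\bigl((t-s)\theta\bigr) \;-\; \tfrac{1}{2}\operatorname{Re}\!\left[\frac{\overline{h(e^{i\theta})}}{h(e^{i\theta})}\,e^{i(s+t+2)\theta}\right].
\end{equation*}
The first piece integrates to $0$ on $[0,\pi]$ because $t-s$ is a nonzero integer. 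For the second, parity gives $\int_0^\pi = \tfrac{1}{2}\int_{-\pi}^{\pi}$, and the substitution $z=e^{i\theta}$ converts the integral into a contour integral over $|z|=1$ of $\frac{h^*(z)}{h(z)}\,z^{\,s+t+1-l}\,dz$, where $h^*(z) = z^l h(1/z)$ satisfies $\overline{h(z)} = z^{-l}h^*(z)$ on the unit circle. With the Fej\'er--Riesz normalization placing all zeros of $h$ outside the closed unit disc (the correct sign convention, as a one-variable example quickly confirms), $h^*/h$ is holomorphic inside; then for $s+t+1\geq l$ the whole integrand is holomorphic inside the unit disc, so the integral vanishes by Cauchy's theorem.

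The main obstacle I expect is the low-index regime $s+t+1<l$, where the contour integrand picks up a pole at $z=0$ and, simultaneously, $\pi_t$ can have formal degree larger than $t$ because of the cancellation structure in the first step. The cleanest way out is to restrict the claim to indices $t\geq l$, in which case both issues disappear; alternatively, one tracks the residue at $z=0$ and checks that it cancels the non-principal part of $\pi_t$ using $h(z)h^*(z) = z^l\rho(\cos\theta)$ on the unit circle. Once this technicality is handled, orthogonality holds for all $s<t$ and the proposition is proved.
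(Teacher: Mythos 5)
The paper does not prove this proposition: it is quoted directly from Szeg\H{o} (Theorem 2.6 there, the Bernstein--Szeg\H{o} construction), so there is no in-paper proof to compare against. Your argument is essentially the classical one, and it is correct: the identity $\pi_t(\cos\theta)\sin\theta=\operatorname{Im}[\overline{h(e^{i\theta})}\,e^{i(t+1)\theta}]$ checks out, the product-to-sum step gives exactly the two terms you write, and the contour-integral conclusion is valid under the normalization you specify.

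Two points deserve emphasis. First, the normalization of $h$: the proposition as stated in the paper (and its Proposition on the representation $\rho(\cos\theta)=|h(e^{i\theta})|^2$) does not pin $h$ down, and the conclusion genuinely fails for a bad choice (e.g.\ multiplying $h$ by a unimodular constant already destroys polynomiality of $\pi_t$). You correctly identify that one must take the Fej\'er--Riesz/Szeg\H{o} representative with real coefficients and all zeros outside the closed unit disc --- which exists here because $\rho>0$ on $[-1,1]$ forces $|h|^2>0$ on the circle --- so that $1/h$ and $h^*/h$ are holomorphic on the closed disc. Second, the low-index regime you worry about can be avoided more cheaply than by tracking residues: instead of pairing $\pi_s$ with $\pi_t$, test $\pi_t$ against the basis $U_j$, $j\leq t-1$, of $\R_{t-1}[X]$. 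Then $\overline{h}/\rho=1/h$ and the contour integrand becomes $(z^{t-j-1}-z^{t+j+1})/h(z)$, holomorphic in the disc for every $j<t$ with no constraint involving $l$; the only surviving caveat is that $\deg\pi_t=t$ requires $2t+2>l$. In the paper's applications ($\rho$ of degree $1$ for the Kesten--McKay measure, degree $2$ for shift-register) neither restriction ever bites.
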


\subsection{Asymptotics for the Jacobi polynomials}
\label{ap:asymptotics-jacobi}

To prove the asymptotic performance guarantees of the polynomial iterations we build in this paper, we need the following asymptotic properties of the Jacobi polynomials. 

\begin{proposition}[{from \cite[Theorem 8.21.7]{szeg1939orthogonal}}]
	\label{prop:asymptotic-jacobi-out}
Let $\alpha, \beta > -1$, and $\lambda > 1$ a real number. Then there exists a positive constant $c = c(\alpha,\beta,\lambda)$ such that 
\begin{equation*}
P_t^{(\alpha,\beta)}(\lambda) \underset{t\to\infty}{\sim} \frac{c}{t^{1/2}} \left(\lambda+\sqrt{\lambda^2-1}\right)^t \, .
\end{equation*}
\end{proposition}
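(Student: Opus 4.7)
The plan is to extract the asymptotic from the generating function of the Jacobi polynomials via singularity analysis, an approach that keeps the algebra tied to the analytic structure of a single explicit function. Recall the classical identity (Szeg\H{o}, Section 4.4)
\begin{equation*}
F(z;\lambda) := \sum_{t \geq 0} P_t^{(\alpha,\beta)}(\lambda) z^t = \frac{2^{\alpha+\beta}}{R(z)\,(1-z+R(z))^\alpha (1+z+R(z))^\beta}, \qquad R(z) = \sqrt{1-2\lambda z + z^2}.
\end{equation*}
First I would locate the singularities. The radicand factors as $(z-z_-)(z-z_+)$ with $z_\pm = \lambda \pm \sqrt{\lambda^2-1}$; for $\lambda>1$ both roots are real and positive with $z_- z_+ = 1$, so $z_- < 1 < z_+$. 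At $z_-$ the auxiliary factors $(1 \mp z + R)$ evaluate to $1 \mp z_- > 0$, which contributes no singularity, so $z_-$ is the unique dominant singularity. This already identifies the exponential rate $z_-^{-1} = \lambda + \sqrt{\lambda^2-1}$.

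Next I would determine the precise local behavior of $F$ at $z_-$. Writing $z = z_- - u$ with $u \to 0^+$, one has $R(z)^2 = (z-z_-)(z-z_+) = 2\sqrt{\lambda^2-1}\, u + O(u^2)$, so $R(z) \sim (2\sqrt{\lambda^2-1})^{1/2}\sqrt{z_- - z}$. Substituting and evaluating the remaining (regular) factors at $z_-$ yields
\begin{equation*}
F(z;\lambda) \sim \frac{A(\alpha,\beta,\lambda)}{\sqrt{z_- - z}}, \qquad A = \frac{2^{\alpha+\beta}}{(2\sqrt{\lambda^2-1})^{1/2}(1-z_-)^\alpha (1+z_-)^\beta} > 0.
\end{equation*}

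Finally I would invoke the standard transfer theorem of singularity analysis: since $F$ is analytic in a suitable slit neighborhood of $z_-$ and has the square-root singularity above, its coefficients satisfy $[z^t]F(z) \sim A z_-^{-t-1/2}/\sqrt{\pi t}$. Unpacking gives
\begin{equation*}
P_t^{(\alpha,\beta)}(\lambda) \;\sim\; \frac{c}{\sqrt{t}}\,\bigl(\lambda+\sqrt{\lambda^2-1}\bigr)^t, \qquad c = \frac{A}{\sqrt{\pi z_-}}>0,
\end{equation*}
which is the claim.

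The main obstacle is the analyticity hypothesis of the transfer theorem: one must extend $F$ analytically into a \emph{delta-domain} at $z_-$, which requires choosing a consistent branch of $R$ with cut along $[z_-,z_+]$ and checking that the quantities $1 \pm z + R$ stay away from the negative real axis so the powers $(\cdot)^{\alpha}$, $(\cdot)^\beta$ remain well-defined. An alternative route that avoids invoking a black-box transfer theorem is to start from the Cauchy integral representation
\begin{equation*}
P_t^{(\alpha,\beta)}(\lambda) = \frac{1}{2\pi i} \oint_\gamma F(z;\lambda)\, z^{-t-1}\, dz,
\end{equation*}
deform $\gamma$ to a Hankel-type contour hugging the branch cut $[z_-,z_+]$, and evaluate by Watson's lemma; the leading contribution then comes from a neighborhood of $z_-$ and reproduces exactly the square-root singular profile computed above, yielding the same $t^{-1/2}(\lambda+\sqrt{\lambda^2-1})^t$ behaviour.
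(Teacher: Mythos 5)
The paper does not prove this proposition at all: it is imported verbatim from Szeg\H{o}'s book (Theorem 8.21.7), so there is no in-paper argument to compare against. Your derivation is correct and is in fact essentially the classical proof of that theorem, namely Darboux's method (in its modern guise of singularity analysis) applied to the generating function $\sum_t P_t^{(\alpha,\beta)}(\lambda)z^t = 2^{\alpha+\beta}R^{-1}(1-z+R)^{-\alpha}(1+z+R)^{-\beta}$. The singularity location, the square-root profile $F(z)\sim A\,(z_--z)^{-1/2}$, and the resulting constant are all right; as a sanity check, for $\alpha=\beta=-1/2$ your formula gives $c=1/(2\sqrt{\pi})$, which matches $P_t^{(-1/2,-1/2)}(\lambda)=\binom{t-1/2}{t}T_t(\lambda)\sim \tfrac{1}{2\sqrt{\pi}}t^{-1/2}(\lambda+\sqrt{\lambda^2-1})^t$. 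Two small tightenings: (i) the condition you actually need is not that $1\pm z+R$ avoid the negative real axis but that they are non-vanishing on the slit domain $\C\setminus[z_-,z_+]$ (which is simply connected, so non-vanishing already yields single-valued analytic powers); squaring the equations $R=z-1$ and $R=-(1+z)$ shows the only candidate zero is at $z=0$, where both factors equal $2$, so this holds. (ii) For the transfer theorem you should record that the singular expansion has a controlled remainder, $F(z)=A(z_--z)^{-1/2}+O\bigl((z_--z)^{1/2}\bigr)$ uniformly in a delta-domain, which follows since $R^2=(z_--z)(z_+-z)$ and the remaining factors are analytic and non-zero at $z_-$; the remainder then contributes $O(z_-^{-t}t^{-3/2})$ and does not affect the leading term. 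With those routine checks your argument is a complete, self-contained proof of the cited asymptotic.
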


In the special case of the Chebyshev polynomials, we also have similar non-asymptotic bounds.

\begin{lemma}
	\label{lem:bounds-chebyshev}
For all $\lambda>1$, for all $t \geq 0$,
\begin{align}
&\frac{1}{2}\left(\lambda+\sqrt{\lambda^2-1}\right)^t \leq T_t(\lambda) \leq \left(\lambda+\sqrt{\lambda^2-1}\right)^t \, , \label{eq:aux-9}\\
&\left(\lambda+\sqrt{\lambda^2-1}\right)^t \leq U_t(\lambda) \leq (t+1)\left(\lambda+\sqrt{\lambda^2-1}\right)^t \, . \label{eq:aux-10}
\end{align}
\end{lemma}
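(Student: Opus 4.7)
The plan is to rewrite both Chebyshev polynomials at arguments $\lambda>1$ in closed form via the substitution $z = \lambda+\sqrt{\lambda^2-1}$, so that $z>1$ and $\lambda = (z+z^{-1})/2$. This is the analogue of the trigonometric formulas of Example \ref{ex:chebyshev} after replacing $\cos\theta$ by $\cosh\theta$; concretely, one checks (e.g.\ by verifying both sides satisfy the same three-term recurrence with matching initial conditions) the identities
\begin{equation*}
T_t(\lambda) = \frac{z^t + z^{-t}}{2}\, ,\qquad U_t(\lambda) = \frac{z^{t+1} - z^{-(t+1)}}{z - z^{-1}} \, .
\end{equation*}
Both identities would be derived in a single line from the recurrence of Example \ref{ex:chebyshev} (the characteristic equation $X^2 - 2\lambda X + 1 = 0$ has roots $z$ and $z^{-1}$).

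For the bounds on $T_t$, since $z>1$ implies $0 < z^{-t} \leq z^t$, I would simply observe that $z^t/2 \leq (z^t + z^{-t})/2 \leq z^t$, which is exactly \eqref{eq:aux-9}.

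For $U_t$, I would factor the closed form as a finite geometric sum
\begin{equation*}
U_t(\lambda) = z^{-t}\,\frac{z^{2(t+1)} - 1}{z^2 - 1} = \sum_{k=0}^{t} z^{t - 2k} = z^t + z^{t-2} + \dots + z^{-t} \, .
\end{equation*}
The lower bound \eqref{eq:aux-10} follows because this sum contains the term $z^t$ and all terms are positive; the upper bound follows because the sum has $t+1$ terms, each bounded by $z^t$.

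There is no serious obstacle: the entire argument is a short computation once the closed-form expressions are in hand. The only point requiring minimal care is justifying the closed forms for $\lambda>1$; I will do so by noting that $z$ and $z^{-1}$ are the two roots of $X^2 - 2\lambda X + 1 = 0$, so both sequences $(z^t + z^{-t})/2$ and $(z^{t+1}-z^{-(t+1)})/(z-z^{-1})$ satisfy the Chebyshev recurrence $p_{t+1} = 2\lambda p_t - p_{t-1}$, with the correct initial values for $T_t$ and $U_t$ respectively.
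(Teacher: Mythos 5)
Your proof is correct and follows essentially the same route as the paper: both rest on the closed forms $T_t(\lambda)=(z^t+z^{-t})/2$ and $U_t(\lambda)=(z^{t+1}-z^{-(t+1)})/(z-z^{-1})$ with $z=\lambda+\sqrt{\lambda^2-1}>1$, from which the bounds are immediate. The only cosmetic differences are that the paper justifies these identities by holomorphic continuation of the trigonometric formulas off the unit circle (where you use the characteristic equation of the recurrence), and bounds $U_t$ by estimating the ratio $(1-z^{-2t-2})/(1-z^{-2})$ directly rather than expanding it as the geometric sum $\sum_{k=0}^{t}z^{t-2k}$; both variants are equally valid.
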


\begin{proof}
We start by deriving a classic expression for the Chebyshev polynomials. The identities 
\begin{align*}
&T_t(\cos\theta) = \cos(t\theta) \, , &&U_t(\cos\theta) = \frac{\sin((t+1)\theta)}{\sin \theta} \, ,
\end{align*}
can be interpreted as 
\begin{align*}
&T_t\left(\frac{z+z^{-1}}{2}\right) = \frac{z^t+z^{-t}}{2} \, , &&U_t\left(\frac{z+z^{-1}}{2}\right) = \frac{z^{t+1}-z^{-(t+1)}}{z-z^{-1}} \, , &&\text{for $z=e^{i\theta}$.}
\end{align*}
The above equations are equalities of holomorphic functions on the unit circle, it implies that the identities must be true for all complex numbers $z \neq 0$; we use it here for real numbers $z$.

For $\lambda > 1$, write $\lambda = (z+z^{-1})/2$, $z > 1$. This is equivalent to $z = \lambda + \sqrt{\lambda^2-1}$. Then
\begin{equation*}
T_t(\lambda) = \frac{z^t+z^{-t}}{2} = \frac{1+z^{-2t}}{2}z^t = \frac{1+z^{-2t}}{2}\left( \lambda + \sqrt{\lambda^2-1}\right)^t \, .
\end{equation*}
As $z > 1$, 
\begin{equation*}
\frac{1}{2} \leq \frac{1+z^{-2t}}{2} \leq 1 \, .
\end{equation*}
This proves the inequalities \eqref{eq:aux-9}. Further,
\begin{equation*}
U_t(\lambda) = \frac{z^{t+1}-z^{-(t+1)}}{z-z^{-1}} = \frac{1-z^{-2t-2}}{1-z^{-2}}z^t = \frac{1-z^{-2t-2}}{1-z^{-2}}\left( \lambda + \sqrt{\lambda^2-1}\right)^t \, .
\end{equation*}
As $z > 1$, 
\begin{equation*}
1 \leq \frac{1-z^{-2t-2}}{1-z^{-2}} \leq t+1 \, .
\end{equation*}
This proves the inequalities \eqref{eq:aux-10}.
\end{proof}

\begin{proposition}[{from \cite[Theorem 7.32.2]{szeg1939orthogonal}}]
	\label{prop:asymptotic-jacobi-in}
Let $\alpha,\beta>-1$. There exists two constants $C_1, C_2 > 0$ such that,
\begin{equation*}
\left\vert P_t^{(\alpha,\beta)}(\cos\theta)\right\vert \leq \begin{cases}
C_1\theta^{-\alpha-1/2}t^{-1/2} &\text{if }1/t\leq\theta\leq\pi/2 \, ,  \\
C_2t^\alpha &\text{if }0\leq\theta\leq 1/t \, .
\end{cases}
\end{equation*}
\end{proposition}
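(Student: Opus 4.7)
The statement is Szegő's classical two-region bound on Jacobi polynomials near the endpoint $\lambda = 1$, and the natural proof strategy is to split the interval $[0,\pi/2]$ according to whether $\theta$ is above or below $1/t$, handling each regime by a different asymptotic device. The transition occurs at the so-called ``turning point'' $\theta \asymp 1/t$ of the Jacobi differential equation $(1-\lambda^2) y'' + (\beta-\alpha -(\alpha+\beta+2)\lambda) y' + t(t+\alpha+\beta+1) y = 0$ satisfied by $P_t^{(\alpha,\beta)}$.

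For the oscillatory region $1/t \leq \theta \leq \pi/2$, the plan is to invoke Darboux's asymptotic formula (a more quantitative form of Szegő's Theorem~8.21.8/8.21.13), which states
\begin{equation*}
P_t^{(\alpha,\beta)}(\cos\theta) = \frac{k(\theta)}{\sqrt{t}} \cos(N\theta + \gamma) + R_t(\theta),
\end{equation*}
where $k(\theta) = \pi^{-1/2}(\sin(\theta/2))^{-\alpha-1/2}(\cos(\theta/2))^{-\beta-1/2}$, $N = t+(\alpha+\beta+1)/2$, $\gamma = -(\alpha+1/2)\pi/2$, and the remainder $R_t(\theta)$ is of order $t^{-3/2}(\sin(\theta/2))^{-\alpha-3/2}$ uniformly on $[1/t, \pi/2]$. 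Since $\sin(\theta/2) \asymp \theta$ on this range, the main term yields a bound of order $\theta^{-\alpha-1/2} t^{-1/2}$, and the remainder is dominated by $t^{-3/2} \theta^{-\alpha-3/2} \leq \theta^{-\alpha-1/2} t^{-1/2}$ precisely because $\theta \geq 1/t$ forces $(\theta t)^{-1} \leq 1$. Combining the two gives constant $C_1$.

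For the boundary region $0 \leq \theta \leq 1/t$, the plan is to use the Mehler--Heine formula (Szegő, Theorem~8.1.1):
\begin{equation*}
\lim_{t\to\infty} t^{-\alpha} P_t^{(\alpha,\beta)}\!\left(\cos(z/t)\right) = \left(\frac{z}{2}\right)^{-\alpha} J_\alpha(z),
\end{equation*}
uniformly for $z$ in compact subsets of $[0,\infty)$, where $J_\alpha$ is the Bessel function of the first kind. Setting $z = t\theta$ restricts $z$ to $[0,1]$; on this compact set, the function $z \mapsto (z/2)^{-\alpha} J_\alpha(z)$ is continuous (its behavior $J_\alpha(z) \sim (z/2)^\alpha/\Gamma(\alpha+1)$ as $z \to 0^+$ exactly cancels the prefactor), hence bounded, which gives $|P_t^{(\alpha,\beta)}(\cos\theta)| \leq C_2 t^\alpha$. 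One must however upgrade the pointwise limit to an estimate valid for every $t\geq 1$, not only asymptotically; this is done by combining the Mehler--Heine limit with Proposition~\ref{prop:recurrence-jacobi} (to handle the finitely many small $t$) or by using an integral representation of $P_t^{(\alpha,\beta)}$ giving a uniform-in-$t$ bound of the desired order.

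The main obstacle is the uniformity of the Darboux estimate down to the scale $\theta \sim 1/t$: the naive asymptotic expansion has remainder $o(t^{-1/2})$ only for fixed $\theta$, and one needs an explicit control of the remainder as $\theta \to 0$. The cleanest route is the Liouville--Green (WKB) normalization of the Jacobi ODE, which after the substitution $y(\theta) = (\sin(\theta/2))^{\alpha+1/2}(\cos(\theta/2))^{\beta+1/2} P_t^{(\alpha,\beta)}(\cos\theta)$ transforms the equation into a perturbation of the harmonic oscillator and yields the quantitative remainder term simultaneously on the entire range $[1/t, \pi/2]$. With this uniform Darboux bound in hand, both halves of the statement follow in a few lines.
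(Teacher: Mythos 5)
The paper does not prove this proposition: it is imported verbatim as Theorem 7.32.2 of Szeg\H{o}'s book, so there is no in-paper argument to compare against. Your two-region plan is nonetheless a faithful reconstruction of the standard route to this bound, and the glue you supply is correct: on $[1/t,\pi/2]$ the uniform Darboux/Liouville--Green formula has main term of size $t^{-1/2}(\sin(\theta/2))^{-\alpha-1/2}\asymp t^{-1/2}\theta^{-\alpha-1/2}$ and remainder $O\bigl(t^{-3/2}\theta^{-\alpha-3/2}\bigr)$, and the observation that $(t\theta)^{-1}\leq 1$ lets the remainder be absorbed into the main term; on $[0,1/t]$ the substitution $z=t\theta$ confines $z$ to $[0,1]$, where $(z/2)^{-\alpha}J_\alpha(z)$ is bounded, and you correctly flag that the Mehler--Heine limit must be upgraded to a bound for every $t$ (uniformity on compacts handles large $t$, and the finitely many remaining polynomials are bounded on $[-1,1]$). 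The one honest caveat is that, as written, this is a proof plan rather than a proof: the two quantitative inputs you rely on---the uniform Darboux estimate with explicit remainder down to $\theta\asymp 1/t$ (Szeg\H{o}, Theorem 8.21.13) and the locally uniform Mehler--Heine formula (Theorem 8.1.1)---are themselves substantial theorems, and you defer their proofs to the same source the proposition is quoted from. That is mathematically legitimate (those results are established independently of Theorem 7.32.2), but it means your argument does not supply content beyond the citation the paper already gives; a genuinely self-contained alternative would be the Sonin--P\'olya monotonicity argument applied to the Jacobi differential equation, which yields both regimes without any asymptotic expansion.
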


\section{Some basic tools for the proofs}

\subsection{Comparing integrals using a domination of the cumulative distribution function}

\begin{lemma}
	\label{lem:domination-measure-increasing-functions}
	Let $\sigma, \tau$ be two positive measures on some interval $[a,b]$ such that for all $\lambda \in [a,b]$,
	\begin{equation}
	\label{eq:aux-11}
	\sigma([\lambda,b]) \leq \tau([\lambda,b]) \, .
	\end{equation}
	Then for all continuous non-decreasing functions $f:[a,b] \to \R_{\geq 0}$, 
	\begin{equation*}
	\int_{[a,b]} f(\lambda)\diff\sigma(\lambda) \leq \int_{[a,b]} f(\lambda)\diff\tau(\lambda) \, .
	\end{equation*}
\end{lemma}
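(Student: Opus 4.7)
The plan is to use the layer-cake (horizontal slicing) representation of $f$ to reduce the integral inequality to the hypothesis, which is precisely an inequality on superlevel-set measures.

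First I would observe that, since $f \geq 0$, the layer-cake formula gives
\begin{equation*}
f(\lambda) = \int_0^\infty \mathbf{1}_{\{f(\lambda) > s\}}\, \diff s \, ,
\end{equation*}
so by Tonelli's theorem,
\begin{equation*}
\int_{[a,b]} f(\lambda)\,\diff\sigma(\lambda) = \int_0^\infty \sigma(\{\lambda \in [a,b] : f(\lambda) > s\})\, \diff s \, ,
\end{equation*}
and likewise with $\tau$ in place of $\sigma$. It then suffices to show that for every $s \geq 0$,
\begin{equation*}
\sigma(\{f > s\}) \leq \tau(\{f > s\}) \, .
\end{equation*}

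Next I would use the monotonicity (and continuity) of $f$ to identify the superlevel sets. Because $f$ is non-decreasing, the set $L_s := \{\lambda \in [a,b] : f(\lambda) > s\}$ is either empty or an upward-closed subinterval of $[a,b]$, i.e.\ of the form $(\lambda_s, b]$ or $[\lambda_s, b]$ with $\lambda_s := \inf L_s$. In either case $L_s \subseteq [\lambda_s, b]$ and $(\lambda_s, b] \subseteq L_s$, so
\begin{equation*}
\sigma(L_s) \leq \sigma([\lambda_s, b]) \leq \tau([\lambda_s, b]) \quad \text{and} \quad \tau((\lambda_s, b]) \leq \tau(L_s) \, .
\end{equation*}
The hypothesis \eqref{eq:aux-11} directly gives $\sigma([\lambda_s,b]) \leq \tau([\lambda_s,b])$, but to conclude I need to pass from the closed endpoint to the possibly open one; here continuity of $f$ helps since $f^{-1}((s,\infty)) \cap [a,b]$ is open in $[a,b]$, but the required bound holds in either case because any atom of $\tau$ at $\lambda_s$ only enlarges the right-hand side. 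Concretely, $\sigma(L_s) \leq \sigma([\lambda_s,b]) \leq \tau([\lambda_s,b])$, and if $L_s = (\lambda_s, b]$, the measure $\tau(\{\lambda_s\})$ contributes to $\tau([\lambda_s, b])$ but not to $\tau(L_s)$ — which is fine for the direction we want if we instead compare $\sigma(L_s)$ to $\tau(L_s)$ via the limit $\sigma([\lambda_s + \epsilon, b]) \leq \tau([\lambda_s + \epsilon, b])$ as $\epsilon \downarrow 0$.

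The only subtle point — and the main place to be careful — is therefore this boundary issue at the endpoint $\lambda_s$: I would handle it cleanly by noting that for any $\lambda' > \lambda_s$ we have $[\lambda', b] \subseteq L_s$, hence $\sigma([\lambda', b]) \leq \tau([\lambda', b]) \leq \tau(L_s)$, and then letting $\lambda' \downarrow \lambda_s$ and using monotone convergence of $\sigma$ to obtain $\sigma(L_s \setminus \{\lambda_s\}) \leq \tau(L_s)$; combined with the trivial observation that if $\lambda_s \in L_s$ then $\{\lambda_s\}$ contributes to both $\sigma$ and $\tau$ via the closed inequality, this yields $\sigma(L_s) \leq \tau(L_s)$ in all cases. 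Integrating this pointwise-in-$s$ inequality from $0$ to $\infty$ and applying Tonelli once more delivers the desired conclusion.
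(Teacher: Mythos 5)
Your proof is correct and follows essentially the same route as the paper: a layer-cake representation plus Tonelli reduces the claim to comparing $\sigma$ and $\tau$ on superlevel sets of $f$, which are upper intervals by monotonicity. The only cosmetic difference is that the paper slices with the non-strict sets $\{u \leq f(\lambda)\}$, which by continuity and monotonicity are exactly closed intervals $[\lambda^*(u),b]$ matching the hypothesis directly, whereas your strict sets force the (correctly handled) limiting argument at the left endpoint.
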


\begin{proof}
	For any $u \in \R_{\geq 0}$, denote $\lambda^*(u) = \min\{\lambda\vert u \leq f(\lambda)\}$. 
	\begin{align*}
	\int_{[a,b]} f(\lambda)\diff\sigma(\lambda) &= \int_{[a,b]}\int_{\R_{\geq 0}} \bfone_{\{u \leq f(\lambda)\}} f(\lambda)\diff u \diff\sigma(\lambda) = \int_{\R_{\geq 0}} \left( \int_{[a,b]} \bfone_{\{u \leq f(\lambda)\}} \diff\sigma(\lambda) \right)\diff u \\
	&= \int_{\R_{\geq 0}} \sigma([\lambda^*(u),b])\diff u \, .
	\end{align*}
	The proof is finished using \eqref{eq:aux-11} and similar equalities for $\tau$.
\end{proof}.

\subsection{The gamma and beta function}

The gamma function $\Gamma$ and the beta function $B$ are defined as \cite[Section 5.2, Section 5.12]{olver2010nist}
\begin{align*}
&\Gamma(z) = \int_{0}^{\infty} e^{-u}u^{z-1}\diff u \, , \quad z \geq 0 \, ; &&B(a,b) =  \int_{0}^{1} s^{a-1}(1-s)^{b-1}\diff s = \frac{\Gamma(a)\Gamma(b)}{\Gamma(a+b)} \, , \quad a,b>0 \, .
\end{align*}
The asymptotic ratios of the gamma functions are given in \cite[Eq.~5.11.12]{olver2010nist}: for $c, d \in \R$,
\begin{equation*}
\frac{\Gamma(z+c)}{\Gamma(z+d)} \sim z^{c-d} \quad \text{as }z \to +\infty \, . 
\end{equation*}
This gives the asymptotic of the beta function
\begin{equation}
\label{eq:asymptotic-beta}
B(a,b) \sim \frac{\Gamma(b)}{a^b} \quad \text{as }a \to +\infty \, .
\end{equation}

\section{Proof of Proposition \ref{prop:opt_polynomial}}
\label{ap:opt_polynomial_proof}

Note first that as $\diff\tau(\lambda) = (1-\lambda)\diff\sigma(\lambda)$ is a measure on $[-1,1]$, if $\tilde{\pi}_0,\dots,\tilde{\pi}_{T-1}$ is a sequence of orthogonal polynomials w.r.t.~$\tau$, then the zeros of the polynomials $\tilde{\pi}_0,\dots,\tilde{\pi}_{T-1}$ are located in the interior of $[-1,1]$ (see Proposition \ref{prop:zeros}). In particular, $\tilde{\pi}_t(1) \neq 0$, $t <T$. Thus it is possible to build a family $\pi_0 = \tilde{\pi}_0/\tilde{\pi}_0(1),\dots,\pi_{T-1} = \tilde{\pi}_{T-1}/\tilde{\pi}_{T-1}(1)$ of orthogonal polynomials normalized to take value $1$ at $1$, as it is done in Proposition \ref{prop:opt_polynomial}.

The polynomial $\pi_t$ satisfies $\pi_t(1)=1$ and $\deg \pi_t = t$. We now consider some polynomial $Q_t$ also satisfying $Q_t(1)=1$ and $\deg Q_t =t$, and show that 
\begin{align*}
&\int \pi_t(\lambda)^2\diff\sigma(\lambda) \leq \int Q_t(\lambda)^2\diff\sigma(\lambda) \, , &&\text{i.e. }&&\langle\pi_t,\pi_t\rangle_\sigma \leq \langle Q_t, Q_t \rangle_\sigma \, . 
\end{align*}
The polynomial $Q_t-\pi_t$ vanishes at $1$ thus there exists a polynomial $R_{t-1}$ of degree at most $t-1$ such that $Q_t(\lambda) = \pi_t(\lambda) + (1-\lambda)R_{t-1}(\lambda)$. Then
\begin{equation*}
\langle Q_t, Q_t \rangle_\sigma = \langle\pi_t,\pi_t\rangle_\sigma + 2\langle\pi_t,(1-\lambda)R_{t-1}\rangle_\sigma + \langle (1-\lambda)R_{t-1},(1-\lambda)R_{t-1}\rangle_\sigma \, .
\end{equation*}
Note that $\langle\pi_t,(1-\lambda)R_{t-1}\rangle_\sigma = \langle\pi_t,R_{t-1}\rangle_\tau= 0$ because $\pi_t$ is orthogonal to all polynomials of degree smaller or equal to $t-1$ w.r.t.~$\langle.,.\rangle_\tau$. Moreover, 
\begin{equation*}
\langle (1-\lambda)R_{t-1},(1-\lambda)R_{t-1}\rangle_\sigma = \int (1-\lambda)^2R_{t-1}(\lambda)^2\diff\sigma(\lambda) \geq 0 \, .
\end{equation*}
Thus $\langle Q_t, Q_t \rangle_\sigma \geq \langle\pi_t,\pi_t\rangle_\sigma$. This shows that $\pi_t$ is a minimizer.

We now show that the minimizer $\pi_t$ is unique. There is equality $\langle Q_t, Q_t \rangle_\sigma = \langle\pi_t,\pi_t\rangle_\sigma$ if and only if $\int (1-\lambda)^2R_{t-1}(\lambda)^2\diff\sigma(\lambda) = 0$, i.e. $(1-\lambda)R_{t-1}$ vanishes on ${\rm Supp\,}\sigma$. But the cardinal of ${\rm Supp\,}\sigma$ is at least $T$ while $(1-\lambda)R_{t-1}$ is a polynomial of degree at most $t \leq T-1$. Thus the equality case is reached if and only if $R_{t-1} = 0$, i.e. $Q_t = \pi_t$.

\section{The parameter-free polynomial iteration}
\label{ap:parameter_free_polynomial_iteration}

In this section, we give the details of the implementation of the parameter-free polynomial iteration in a centralized setting. We explicit the computation of the optimal coefficients $a_t$, $b_t$ and $c_t$. It is used in the simulation of Figure \ref{fig:sim_2D_grid_log_all}. 

The parameter free polynomial iteration is Eq.~\eqref{eq:second-order-polynomial-gossip}, where the coefficients $a_t, b_t, c_t$, $t\geq 1$, are determined in Eq.~\eqref{eq:recurrence-coeffs-general}. The results are repeated here for convenience:
\begin{align*}
&x^0 = \xi \, , &&x^1 = a_0 W\xi + b_0 \xi \, , &&x^{t+1} = a_t W x^t + b_t x^t - c_t x^{t-1} \, , \\
&a_t = \frac{\langle\pi_{t+1},\pi_{t+1}\rangle_\tau}{\langle \lambda \pi_t, \pi_{t+1} \rangle_\tau} \, , 
&&b_t = -\frac{\langle \pi_{t+1}, \pi_{t+1} \rangle_\tau\langle \lambda\pi_t, \pi_t\rangle_\tau}{\langle \lambda \pi_t, \pi_{t+1} \rangle_\tau\langle\pi_t,\pi_t\rangle_\tau} \, , 
&&c_t = \frac{\langle \pi_{t+1}, \pi_{t+1} \rangle_\tau\langle\lambda\pi_t,\pi_{t-1}\rangle_\tau}{\langle \lambda \pi_t, \pi_{t+1} \rangle_\tau\langle\pi_{t-1},\pi_{t-1}\rangle_\tau} \, .
\end{align*} 
where $\tau = (1-\lambda)\sigma$, $\sigma$ is defined in \eqref{eq:error-as-spectral-finite}. Note that the scalar products that appear in the formulas for $a_t, b_t, c_t$ can be computed from the iterates $x^t = \pi_t(W)\xi$, $t\geq 0$. For instance,
\begin{align*}
\langle \lambda \pi_t, \pi_{t-1} \rangle_\tau &= \int \lambda\pi_t(\lambda)\pi_{t-1}(\lambda)(1-\lambda)\diff\sigma(\lambda) \\
&= \sum_{i=1}^n \langle\xi,u^i\rangle^2 \lambda_i \pi_t(\lambda_i) \pi_{t-1}(\lambda_i)(1-\lambda_i) \\
&= \langle W\pi_t(W)\xi , (I-W)\pi_{t-1}(W)\xi \rangle \\
&= \langle Wx^t, x^{t-1}-Wx^{t-1} \rangle \, .
\end{align*}
Note that the last line requires the computation of a scalar product $\langle ., . \rangle$ over $\R^V$, which means summing over $v \in V$. This is possible in simulations where we can centralize the information of the nodes $v \in V$. However in practical situation where the coordinates of $x^t$ are distributed among the nodes, such a computation requires many additional communication steps. This makes the parameter free polynomial iteration impractical. 

The computation of the other scalar products give
\begin{align*}
&b_t = -a_t\frac{\langle x^t-Wx^t, Wx^t\rangle}{\langle x^t, x^t-Wx^t\rangle} \, , &&c_t = a_t \frac{\langle x^t, x^{t-1}-Wx^{t-1}\rangle}{\langle x^{t-1}, x^{t-1}-Wx^{t-1}\rangle} \, ,
\end{align*}
and as $a_t + b_t - c_t = 1$ (that follows from $\pi_{t}(1)=1$ for all $t$), we get for $t \geq 1$,
\begin{align*}
&\tilde{b}_t = -\frac{\langle x^t-Wx^t, Wx^t\rangle}{\langle x^t, x^t-Wx^t\rangle} \, , \qquad \tilde{c}_t = \frac{\langle x^t, x^{t-1}-Wx^{t-1}\rangle}{\langle x^{t-1}, x^{t-1}-Wx^{t-1}\rangle} \, , \\
&x^{t+1} = \frac{1}{1+\tilde{b}_t-\tilde{c}_t}\left(Wx^t+\tilde{b}_tx^t-\tilde{c}_tx^{t-1}\right) \, .
\end{align*}
Similarly, one can compute that 
\begin{align*}
&x^1 = \frac{1}{1+\tilde{b}^0}\left(W\xi + \tilde{b}^0\xi\right) \, , &&\tilde{b}^0 = -\frac{\langle\xi-W\xi,W\xi\rangle}{\langle\xi,\xi-W\xi\rangle} \, ,
\end{align*}
which gives the initialization of the parameter-free polynomial iteration. 

\section{Proofs of Propositions \ref{prop:spectral-dim-Z^d} and \ref{prop:spectral-dimension-percolation}}
\label{ap:spectral-dimensions}

\subsection{Proof of Proposition \ref{prop:spectral-dim-Z^d}}
The return probability $p_t$ of the lazy random walk on $\Z^d$ is equivalent to $C/t^{d/2}$ for some constant $C$. It is, for instance, a consequence of the local central limit theorem for random walks on $\Z^d$ \cite[Theorem 2.1.1]{lawler2010random}. Thus the spectral dimension of $\Z^d$ is $d$.

\subsection{Proof of Proposition \ref{prop:spectral-dimension-percolation}}
The return probabilities of the random walk on the supercritical percolation cluster have rather been studied in continuous time. The continuous-time random walk is defined as follows: the random walk at $w$ waits at an exponential time of parameter $1$ before picking a site $w'$ out of the $2d$ neighboring sites uniformly randomly. If there is an edge in the percolation configuration between $w$ and $w'$, the random walk jumps to $w'$, otherwise it stays in $w$ and starts again. Denote $X_t$ the continuous-time random walk, and $\P_w$ the probability w.r.t.~this random walk when it is started from some vertex $w$.  

\begin{lemma}
	\label{lem:aux-1}
	There exists two constants $c = c(d,p), C = c(d,p)>0$ such that, a.s.~on the set $\{v\in G\}$, there exists a random time $t_0$ such that for $t\geq t_0$, 
	\begin{equation*}
	\frac{c}{t^{d/2}} \leq \P_v(X_t=v) \leq \frac{C}{t^{d/2}} \, .
	\end{equation*}
\end{lemma}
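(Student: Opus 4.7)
The plan is to reduce this to the by-now classical Gaussian heat kernel estimates for the continuous-time random walk on the infinite supercritical percolation cluster, as established by Barlow (``Random walks on supercritical percolation clusters'', Ann.~Probab.~2004) and, for the corresponding lower bound, Mathieu--Remy. These results state that almost surely on $\{v \in G\}$ there exist a random time $t_0 = t_0(\omega,v)$ and deterministic constants $c_1, c_2, C_1, C_2 > 0$ (depending only on $d$ and $p$) such that for every vertex $w \in G$ and every $t \geq t_0 \vee |v-w|^{1+\delta}$ (for some fixed $\delta > 0$),
\begin{equation*}
\frac{c_1}{t^{d/2}} \exp\!\left(-c_2 \frac{|v-w|^2}{t}\right) \;\leq\; \P_v(X_t = w)\,\mu_w^{-1} \;\leq\; \frac{C_1}{t^{d/2}} \exp\!\left(-C_2 \frac{|v-w|^2}{t}\right),
\end{equation*}
where $\mu_w$ is the (bounded, uniformly positive) invariant measure of the walk at $w$.

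The second step is to specialize $w = v$. Then the exponential factor is identically $1$, $\mu_v$ is bounded above and below by constants depending only on $d$, and the range condition $t \geq |v-w|^{1+\delta}$ is vacuous. This immediately yields the two-sided on-diagonal bound
\begin{equation*}
\frac{c}{t^{d/2}} \;\leq\; \P_v(X_t = v) \;\leq\; \frac{C}{t^{d/2}} \qquad \text{for all } t \geq t_0,
\end{equation*}
with $c, C$ depending only on $d$ and $p$, which is precisely the claim.

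The hard part is of course contained inside the cited Gaussian estimates: one needs a Poincar\'e/Sobolev inequality on the cluster, control of the volume growth of balls in the chemical distance, and Nash--Aronson or Davies-type arguments. Since these ingredients have all been proved in the percolation literature (and the paper's use of the lemma is only to feed into the spectral dimension via Proposition~\ref{prop:spectral-dimension-spectral-decay}), my proof would not reprove them from scratch: I would instead state the Barlow--Mathieu--Remy estimates precisely, verify that the hypotheses apply to our lazy walk on the infinite cluster (after the standard translation from the discrete-time lazy chain to the continuous-time chain, which only changes constants), and read off the on-diagonal consequence. The only genuine obstacle, beyond bookkeeping, is checking that passing from the continuous-time return probability to the discrete-time lazy return probability $p_t = \langle e_v, \tilde W^t e_v\rangle$ used elsewhere in the paper preserves the $t^{-d/2}$ decay; this follows from a standard comparison showing $p_t \asymp \P_v(X_t = v)$ up to multiplicative constants, e.g.\ by integrating against the Poisson weights of the continuous-time embedding.
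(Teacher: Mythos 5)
Your proposal is correct, but it takes a different route from the paper for the lower bound, and it outsources strictly more than the paper does. The paper only imports the \emph{on-diagonal upper bound} $\P_v(X_t=v)\leq Ct^{-d/2}$ from Mathieu--Remy, and then derives the matching lower bound by an elementary two-step argument: reversibility gives $\P_v(X_{2t}=v)=\sum_w \P_v(X_t=w)^2$, Cauchy--Schwarz over the ball of radius $\sqrt{t}$ gives $\P_v(\Vert X_t-v\Vert_2\leq\sqrt t)^2\leq C_1 t^{d/2}\,\P_v(X_{2t}=v)$, and the quenched invariance principle of Andres et al.\ bounds the left-hand side below by a constant for $t$ large. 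You instead invoke the full two-sided Gaussian heat kernel estimates (Barlow 2004) and read off the diagonal. Both are valid; your route is shorter on paper but rests on much heavier machinery, whereas the paper's lower-bound argument needs only the CLT for the walk and is the standard ``cheap'' way to get an on-diagonal lower bound from volume growth. Two small points to tidy up: the attribution is slightly off (Mathieu--Remy prove the upper bound; the two-sided Gaussian bounds are Barlow's), and your final paragraph about converting to the discrete-time lazy return probability is not needed for this lemma --- the statement is already about the continuous-time walk $X_t$, and the passage to the spectral measure happens afterwards via the Laplace transform, not via a discrete-time comparison.
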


\begin{proof}
The upper bound is proved in \cite[Theorem 1.2]{mathieu2004isoperimetry}. As noted in \cite[Lemma 5.1]{biskup2011recent}, the lower bound can be proved using a central limit theorem on $X_t$; we repeat the argument here as our random walk differs slightly from theirs. As $X_t$ is reversible w.r.t.~the uniform measure on $G$,
\begin{equation*}
\P_v(X_{2t} = v) = \sum_{w \in G} \P_v(X_t = w)\P_w(X_t = v) = \sum_{w \in G} \P_v(X_t = w)^2 \, .
\end{equation*}
By the Cauchy-Schwarz inequality, 
\begin{align*}
\P_v(\Vert X_t-v \Vert_2 \leq \sqrt{t})^2 &= \left(\sum_{x \in G} \bfone_{\{\Vert x-v \Vert_2 \leq \sqrt{t}\}} \P_v( X_t =x)\right)^2 \\
&\leq \left\vert \{\Vert x \in G:  x-v \Vert_2 \leq \sqrt{t}\} \right\vert \left(\sum_{w\in G}\P_v(X_t = w)^2\right) \\
&\leq C_1t^{d/2}\P_v(X_{2t} = v) \, ,
\end{align*}
for some constant $C_1$. Now using \cite[Theorem 1.1(a)]{andres2013invariance}, there exists a deterministic variance $\sigma^2$ such that the law of $(X_t-v)/\sqrt{t}$ converges a.s.~on the event $\{v \in G\}$ to a centered Gaussian with variance $\sigma^2$. Thus there exists a deterministic constant $c_1 > 0$ and a random time $t_1$ such that for $t \geq t_1$, $\P_t(\Vert X_t-v \Vert_2 \leq \sqrt{t})^2 \geq c_1$. This finishes the proof of the lower bound. 
\end{proof}

We now finish the proof of the proposition using Lemma \ref{lem:aux-1}. If $\mu^t$ denotes the law of $X_t$,
\begin{align*}
&\frac{\diff}{\diff t} \E\left[\mu^t\right] = (W-I)\mu^t\, , &&\text{thus}\quad \mu^t = e^{t(W-I)}\mu^0 \, .
\end{align*}
Thus 
\begin{equation*}
\P_v(X_t =v) = \langle \delta_v, \mu^t \rangle = \langle \delta_v, e^{t(W-I)} \delta_v \rangle \stackrel{\text{(Definition \ref{def:spectral_measure_infinite_graph})}}{=} \int e^{t(\lambda -1)} \diff\sigma(\lambda) \, .
\end{equation*}

As a consequence, Lemma \ref{lem:aux-1} translates into bounds on the Laplace transform of $\sigma$: a.s.~on $\{v \in G\}$, for $t$ large enough,
	\begin{equation*}
	\frac{c}{t^{d/2}} \leq \int e^{t(\lambda -1)} \diff\sigma(\lambda) \leq \frac{C}{t^{d/2}} \, .
	\end{equation*}
Some bounds on the spectral density of $\sigma$ near $1$ easily follow (see \cite[Lemma 4.5]{muller2007spectral}): there exists constants $c',C'>0$ such that a.s.~on $\{v\in G \}$, for $E$ small enough,
\begin{equation*}
c'E^{d/2} \leq \sigma([1-E,1]) \leq C'E^{d/2} \, .
\end{equation*}
The proof is finished using Proposition \ref{prop:spectral-dimension-spectral-decay}. 

\section{Proof of Proposition \ref{prop:spectral-dimension-spectral-decay}}
\label{ap:proof-spectral-dimension-spectral-decay}

We start by assuming that $l = \lim_{E\to 0} \ln \sigma([1-E,1])/\ln E$
exists and is finite. We show that $d_s$ exists and that $l = d_s/2$. To this end, we define
\begin{align*}
&\underline{d}_s = -2\limsup_{t\to\infty} \frac{\ln p_t}{\ln t} \, , &&\bar{d}_s = -2\liminf_{t\to\infty} \frac{\ln p_t}{\ln t} \, ,
\end{align*}
where $p_t$ is defined as in Definition \ref{def:spectral-dimension}. Note that 
\begin{equation}
\label{eq:aux-8}
p_t = \left\langle e_v, \left(\frac{I+W}{2}\right)^te_v\right\rangle \stackrel{\text{(Definition \ref{def:spectral_measure_infinite_graph})}}{=} \int \left(\frac{1+\lambda}{2}\right)^t\diff\sigma(\lambda) \, .
\end{equation}
\smallskip

{\bfseries Proof that $\bar{d_s}/2\leq l$.} Consider $l_+ > l$. Then there exists constants $c_1, c_2>0$ such that for all $E \in [0,2]$, 
\begin{equation*}
\sigma([1-E,1])\geq c_1 E^{l_+} = c_2 \sigma^{(l_+-1,0)}([1-E,1]) \, ,
\end{equation*}
where $\sigma^{(l_+-1,0)}(\diff\lambda) = (1-\lambda)^{l_+-1}\diff\lambda$. Then
\begin{align*}
p_t &\stackrel{\eqref{eq:aux-8}}{=} \int_{[-1,1]} \left(\frac{1+\lambda}{2}\right)^{t}\diff\sigma(\lambda) \stackrel{\text{(Lemma \ref{lem:domination-measure-increasing-functions})}}{\geq} c_2\int_{-1}^1 \left(\frac{1+\lambda}{2}\right)^{t-1}(1-\lambda)^{l_+-1}\diff\lambda \\
&\hspace{-0.5cm}\stackrel{(u = (1+\lambda)/2)}{=} c_3 \int_{0}^{1} u^{t}(1-u)^{l_+-1}\diff u = c_3B\left(t+1,l_+\right) \stackrel{ \eqref{eq:asymptotic-beta}}{\underset{t\to\infty}{\sim}} \frac{c_4}{t^{l_+}} \, ,
\end{align*}
for some constant $c_3, c_4 > 0$. Thus 
\begin{align*}
&\liminf_{t\to\infty} \frac{\ln p_t}{\ln t} \geq -l_+ \, ,  &&\text{i.e.}\quad \frac{\bar{d}_s}{2} \leq l_+ \, .
\end{align*}
This being true for all $l_+ > l$, this proves $\bar{d}_s/2 \leq l$.
\smallskip

{\bfseries Proof that $\underline{d}_s/2 \geq l$.} Consider $l_- < l$. Then there exists constants $C_1, C_2$ such that for all $E \in [0,2]$,
\begin{equation*}
\sigma([1-E,1])\leq C_1E^{l_-} = C_2\sigma^{(l_--1,0)}([1-E,1]) \, ,
\end{equation*}
where $\sigma^{(l_--1,0)}(\diff\lambda) = (1-\lambda)^{l_--1}\diff\lambda$. Then 
\begin{align*}
p_t &\stackrel{\eqref{eq:aux-8}}{=} \int_{[-1,1]} \left(\frac{1+\lambda}{2}\right)^{t}\diff\sigma(\lambda) \stackrel{\text{(Lemma \ref{lem:domination-measure-increasing-functions})}}{\leq} C_2\int_{-1}^1 \left(\frac{1+\lambda}{2}\right)^{t}(1-\lambda)^{l_--1}\diff\lambda \\
&\hspace{-0.5cm}\stackrel{(u = (1+\lambda)/2)}{=} C_3 \int_{0}^{1} u^{t}(1-u)^{l_--1}\diff u = C_3B\left(t+1,l_-\right) \underset{t\to\infty}{\sim} \frac{C_4}{t^{l_-}} \, ,
\end{align*}
for some constants $C_3, C_4$. Thus 
\begin{align*}
&\limsup_{t\to\infty} \frac{\ln p_t}{\ln t} \leq -l_- \, ,  &&\text{i.e.}\quad \frac{\underline{d}_s}{2} \geq l_- \, .
\end{align*}
This being true for all $l_- < l$, this proves $\underline{d}_s/2 \geq l$.
\smallskip

Finally, we have proven $l \leq \underline{d}_s/2 \leq \bar{d}_s/2 \leq l$. Thus the limit $d_s = -2\lim_{t\to\infty} \ln p_t/\ln t$ exists and is equal to $2l$. 

\medskip
Conversely, we assume now that $d_s$ exists and is finite. We show that $l = \lim_{E\to 0} \ln \sigma([1-E,1])/\ln E$ exists and that $l = d_s/2$. To this end, we define
\begin{align*}
&\underline{l} = \liminf_{E\to 0} \frac{\ln \sigma([1-E,1])}{\ln E} \, , &&\bar{l} = \limsup_{E\to 0} \frac{\ln \sigma([1-E,1])}{\ln E} \, .
\end{align*}
\smallskip

{\bfseries Proof that $\underline{l} \geq d_s/2$.}
For any $t \in \N$, we have $\bfone_{\{\lambda \geq 1-E\}} \leq (1-E/2)^{-t}\left((1+\lambda)/2\right)^{t} $, thus by integrating against $\diff\sigma(\lambda)$, 
\begin{align*}
&\sigma([1-E,1]) \leq \left(1-\frac{E}{2}\right)^{-t} \int \left(\frac{1+\lambda}{2}\right)^{t} \sigma(\diff\lambda) \, ,\\
&\frac{\ln \sigma([1-E,1])}{\ln E} \geq \frac{\ln \int \left(\frac{1+\lambda}{2}\right)^{t} \sigma(\diff\lambda)}{\ln t}\frac{\ln t}{\ln E} - \frac{t \ln\left(1-\frac{E}{2}\right)}{\ln E} \, .
\end{align*}
We choose $t(E) = \lfloor E^{-1}\rfloor$. Then we get
\begin{equation*}
\underline{l} = \liminf_{E\to 0} \frac{\ln \sigma([1-E,1])}{\ln E} \geq -\frac{d_s}{2} (-1) - 0 = \frac{d_s}{2}
\end{equation*}

{\bfseries Proof that $\bar{l} \leq d_s/2$.}
For any $t \in \N$, we have $((1+\lambda)/2)^t - (1-E/2)^t \leq \bfone_{\{\lambda\geq 1-E\}}$, thus by integrating against $\diff\sigma(\lambda)$, 
\begin{align*}
&\int\left(\frac{1+\lambda}{2}\right)^t\diff\sigma(\lambda) - \left(1-\frac{E}{2}\right)^t \leq \sigma([1-E,1]) \, .
\end{align*}
Let $d>d_s$. There exists a constant $c>0$ such that $\int((1+\lambda)/2)^t\diff\sigma(\lambda)\geq c/t^{d/2}$. Then 
\begin{equation*}
\ln\left(\frac{c}{t^{d/2}}-\left(1-\frac{E}{2}\right)^t\right) \leq \ln\sigma([1-E,1]) \, .
\end{equation*}
Let $\alpha > 1$. We choose $t(E) = \lceil E^{-\alpha}\rceil$. Then
\begin{equation*}
\left(1-\frac{E}{2}\right)^{t(E)} = \exp\left(t(E)\ln\left(1-\frac{E}{2}\right)\right) \leq \exp\left(-\frac{t(E)E}{2}\right) \leq \exp\left(-\frac{1}{2}E^{1-\alpha}\right)
\end{equation*}
decreases super-polynomially fast as $E\to 0$. Moreover
\begin{equation*}
\frac{c}{t(E)^{d/2}} \underset{E\to 0}{\sim} cE^{\alpha d/2} \, .
\end{equation*}
Finally, 
\begin{equation*}
\bar{l} = \limsup_{E\to 0} \frac{\ln \sigma([1-E,1])}{\ln E} \leq \frac{\alpha d}{2} \, .
\end{equation*}
As this is true for all $\alpha > 1, d > d_s$, we have $\bar{l} \leq d_s/2$.
\smallskip 

Finally, we have proven that $d_s/2 \leq \underline{l} \leq \bar{l}\leq d_s/2$. Then the limit $l = \lim_{E\to 0} \ln \sigma([1-E,1])/\ln E$ exists and $l = d_s/2$. 

\section{Computation of the recursion coefficients of some orthogonal polynomials}
\label{ap:computation_recursion_coeffs}

\subsection{A rescaling lemma for orthogonal polynomials}

We start with a lemma giving the change in the recursion coefficients of orthogonal polynomials when the underlying measure undergoes an affine transformation. It is used in the next subsections. 

\begin{lemma}
	\label{lem:rescaled_orthogonal_pol}
	Let $\sigma$ be a measure on $\R$, $\pi_0, \dots, \pi_{T-1}$ a sequence of orthogonal polynomials w.r.t.~$\sigma$ and 
	\begin{equation}
	\label{eq:reccurence_relation_orth_pol}
	\pi_{t+1}(\lambda) = (a_t\lambda+b_t) \pi_t(\lambda) - c_t \pi_{t-1}(\lambda) \, , \qquad t \geq 1 \, , 
	\end{equation}
	their recurrence formula (see Definition \ref{def:orthogonal_polynomials} and Theorem \ref{prop:recurrence_relation}). 
	
	Let $\varphi:\lambda \mapsto \alpha \lambda + \beta$, $\alpha\neq 0$ be a linear function and $\tilde{\sigma}$ be the image measure of $\sigma$ by $\varphi$ (which means that for all measurable set $A$, $\tilde{\sigma}(A) = \sigma(\varphi^{-1}(A))$). Then a sequence of orthogonal polynomials w.r.t.~$\tilde{\sigma}$ is given by the formula
	\begin{equation*}
	\tilde{\pi}_t(\tilde{\lambda}) := \pi_t\left(\varphi^{-1}(\tilde{\lambda})\right) =  \pi_t\left(\frac{\tilde{\lambda}-\beta}{\alpha}\right) \, .
	\end{equation*}
	These polynomials follow the recursion formula
	\begin{align*}
	& \tilde{\pi}_{t+1}(\tilde{\lambda}) = (\tilde{a}_{t}\tilde{\lambda}+\tilde{b}_t) \tilde{\pi}_t(\tilde{\lambda}) - \tilde{c}_{t} \tilde{\pi}_{t-1}(\tilde{\lambda}) \, , \\
	&\tilde{a}_t = \frac{a_t}{\alpha} \, , \qquad \tilde{b}_t = b_t - \frac{a_t \beta}{\alpha} \, , \qquad \tilde{c}_t = c_t \, .
	\end{align*}
\end{lemma}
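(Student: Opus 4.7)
The plan is to verify the two claims separately by direct change of variables, which is the natural tool given that $\tilde{\sigma}$ is defined as the pushforward of $\sigma$ under $\varphi$.

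First, I would check that the family $\tilde{\pi}_t(\tilde{\lambda}) = \pi_t(\varphi^{-1}(\tilde{\lambda}))$ is indeed a sequence of orthogonal polynomials with respect to $\tilde{\sigma}$. The degree condition $\deg \tilde{\pi}_t = t$ is immediate because $\varphi^{-1}$ is an affine function with $\alpha \neq 0$, so composing a degree-$t$ polynomial with it yields another degree-$t$ polynomial. For the orthogonality, the standard change-of-variables formula for pushforward measures gives
\begin{equation*}
\int \tilde{\pi}_s(\tilde{\lambda})\tilde{\pi}_t(\tilde{\lambda})\,\diff\tilde{\sigma}(\tilde{\lambda}) = \int \pi_s(\varphi^{-1}(\tilde{\lambda}))\pi_t(\varphi^{-1}(\tilde{\lambda}))\,\diff\tilde{\sigma}(\tilde{\lambda}) = \int \pi_s(\lambda)\pi_t(\lambda)\,\diff\sigma(\lambda),
\end{equation*}
which vanishes for $s \neq t$ by assumption on $\pi_0,\dots,\pi_{T-1}$.

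Second, I would derive the new recurrence coefficients by substituting $\lambda = \varphi^{-1}(\tilde{\lambda}) = (\tilde{\lambda}-\beta)/\alpha$ into the original recurrence \eqref{eq:reccurence_relation_orth_pol}. This yields
\begin{equation*}
\tilde{\pi}_{t+1}(\tilde{\lambda}) = \left(a_t \frac{\tilde{\lambda}-\beta}{\alpha} + b_t\right)\tilde{\pi}_t(\tilde{\lambda}) - c_t\,\tilde{\pi}_{t-1}(\tilde{\lambda}) = \left(\frac{a_t}{\alpha}\tilde{\lambda} + b_t - \frac{a_t\beta}{\alpha}\right)\tilde{\pi}_t(\tilde{\lambda}) - c_t\,\tilde{\pi}_{t-1}(\tilde{\lambda}),
\end{equation*}
from which the announced formulas $\tilde{a}_t = a_t/\alpha$, $\tilde{b}_t = b_t - a_t\beta/\alpha$, $\tilde{c}_t = c_t$ can be read off directly.

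There is no real obstacle here: the lemma is a bookkeeping statement and both claims reduce to one-line arguments (change of variables for the measure, and affine substitution for the recurrence). The only small subtlety worth mentioning is that $\tilde{\pi}_t$ is an orthogonal family but not necessarily the canonical one in the sense of Definition \ref{def:orthogonal_polynomials} (which is unique only up to rescaling); this is consistent with the statement, which only asserts that $\tilde{\pi}_t$ \emph{is} a sequence of orthogonal polynomials with respect to $\tilde{\sigma}$, not the normalized one.
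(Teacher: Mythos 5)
Your proof is correct and follows essentially the same route as the paper's: the change-of-variables identity for the pushforward measure establishes orthogonality, the degree is preserved because $\varphi^{-1}$ is affine with $\alpha\neq 0$, and the new recurrence coefficients are read off by substituting $\lambda = (\tilde{\lambda}-\beta)/\alpha$ into the original recurrence. Nothing is missing.
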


\begin{proof}
	By change of variable, 
	\begin{align*}
	\int\tilde{\pi}_t(\tilde{\lambda})\tilde{\pi}_s(\tilde{\lambda})\diff \tilde{\sigma}(\tilde{\lambda}) &= \int \pi_t\left(\varphi^{-1}(\tilde{\lambda})\right) \pi_s\left(\varphi^{-1}(\tilde{\lambda})\right) \diff\tilde{\sigma}(\tilde{\lambda}) \\
	&= \int \pi_t\left(\varphi^{-1}(\varphi(\lambda))\right) \pi_s\left(\varphi^{-1}(\varphi(\lambda))\right) \diff\sigma(\lambda) \\
	&= \int \pi_t(\lambda) \pi_s(\lambda) \diff\sigma(\lambda) = \bfone_{\{s=t\}} \, ,
	\end{align*}
	and $\deg \tilde{\pi}_t = t$ thus $\tilde{\pi}_0, \dots, \tilde{\pi}_{T-1}$ are orthogonal polynomials w.r.t.~$\tilde{\sigma}$. The recurrence relation for $\tilde{\pi}_t$ follows by evaluating the recurrence relation \eqref{eq:reccurence_relation_orth_pol} for $\pi_t$ in $(\tilde{\lambda}-\beta)/\alpha$. 
\end{proof}

\subsection{Jacobi polynomials}
\label{ap:Jacobi}

Let $\alpha, \beta>-1$. In this section, we derive, using the recurrence formula for the Jacobi polynomial $P_t^{(\alpha,\beta)}$ of Proposition \ref{prop:recurrence-jacobi}, a similar recurrence relation for the polynomials $\pi_t^{(\alpha,\beta)}$ orthogonal w.r.t.~the Jacobi measure $\sigma^{(\alpha,\beta)}$, but normalized such that $\pi_t^{(\alpha,\beta)}(1) = 1$.

Substituting $P_t^{(\alpha,\beta)} = {t + \alpha \choose t}\pi_t^{(\alpha,\beta)}$ in the recurrence relation of Proposition \ref{prop:recurrence-jacobi}, we get 
\begin{align*}
&2(t+1)(t+1+\alpha+\beta)(2t+\alpha+\beta){t+1 + \alpha \choose t+1}\pi_{t+1}^{(\alpha,\beta)}(\lambda) \\
&\qquad= (2t+\alpha+\beta+1)[(2t+\alpha+\beta+2)(2t+\alpha+\beta)\lambda+\alpha^2-\beta^2]{t + \alpha \choose t}\pi_t^{(\alpha,\beta)}(\lambda)  \\
&\qquad\quad -2(t+\alpha)(t+\beta)(2t+\alpha+\beta+2){t-1 + \alpha \choose t-1}\pi_{t-1}^{(\alpha,\beta)}(\lambda) \, .
\end{align*}
Using that $(t+1){t+1 + \alpha \choose t+1} = (t+1+\alpha){t + \alpha \choose t}$ and $t{t+ \alpha \choose t} = (t+\alpha){t-1 + \alpha \choose t-1}$, we can divide the above equation by ${t + \alpha \choose t}$. We get
\begin{align*}
&2(t+1+\alpha+\beta)(2t+\alpha+\beta)(t+1+\alpha)\pi_{t+1}^{(\alpha,\beta)}(\lambda) \\
&\qquad= (2t+\alpha+\beta+1)[(2t+\alpha+\beta+2)(2t+\alpha+\beta)\lambda+(\alpha+\beta)(\alpha-\beta)]\pi_t^{(\alpha,\beta)}(\lambda)  \\
&\qquad\quad -2t(t+\beta)(2t+\alpha+\beta+2)\pi_{t-1}^{(\alpha,\beta)}(\lambda) \, .
\end{align*}
Summing up, we obtain the recursion formula
\begin{align*}
&\pi_0(\lambda) = 1 \, , \qquad \pi_1(\lambda) = a_0^{(\alpha,\beta)}\lambda + b_0^{(\alpha,\beta)} \, , \\
&\pi_{t+1}^{(\alpha,\beta)}(\lambda) = \left(a_t^{(\alpha,\beta)}\lambda + b_t^{(\alpha,\beta)}\right) \pi_t^{(\alpha,\beta)}(\lambda) - c_t^{(\alpha,\beta)} \pi_{t-1}^{(\alpha,\beta)}(\lambda) \, , 
\end{align*}
with the recursion coefficients
\begin{equation}
\label{eq:Jacobi_weights}
\begin{aligned}
&a_0^{(\alpha,\beta)} = \frac{\alpha+\beta+2}{2(1+\alpha)} \, , \qquad b_0^{(\alpha,\beta)} = \frac{\alpha-\beta}{2(1+\alpha)} \, , \\
&a_t^{(\alpha,\beta)} = \frac{(2t+\alpha+\beta+1)(2t+\alpha+\beta+2)}{2(t+1+\alpha+\beta)(t+1+\alpha)} \, , \\
&b_t^{(\alpha,\beta)} = \frac{(2t+\alpha+\beta+1)(\alpha+\beta)(\alpha-\beta)}{2(t+1+\alpha+\beta)(t+1+\alpha)(2t+\alpha+\beta)} \, ,\\ 
&c_t^{(\alpha,\beta)} = \frac{t(t+\beta)(2t+\alpha+\beta+2)}{(t+1+\alpha+\beta)(2t+\alpha+\beta)(t+1+\alpha)} \, .
\end{aligned}
\end{equation}

\subsection{Rescaled Jacobi polynomials.}
\label{ap:recurrence-relation-rescaled-jacobi}

Let $\alpha, \beta > -1$. In this section, we determine a recursion formula for the orthogonal polynomials $\pi_t^{(\alpha,\beta,\gamma)}$ w.r.t.~ the rescaled Jacobi measure 
\begin{equation*}
\diff\sigma^{(\alpha,\beta,\gamma)}(\lambda) = ((1-\gamma)-\lambda)^\alpha(1+\lambda)^\beta\bfone_{\{\lambda \in (-1,1-\gamma)\}}\diff\lambda\, , 
\end{equation*}
The polynomials $\pi^{(\alpha,\beta,\gamma)}_t$ are normalized such that $\pi^{(\alpha,\beta,\gamma)}_t(1)=1$.

Note that, up to a rescaling, $\diff\sigma^{(\alpha,\beta,\gamma)}$ is the image measure of the Jacobi measure $\diff\sigma^{(\alpha,\beta)}$ (defined in \eqref{eq:jacobi-measure}) by the linear function $\varphi_\gamma(\lambda) = (1-\gamma/2)\lambda-\gamma/2$. Thus Lemma \ref{lem:rescaled_orthogonal_pol} gives a family of orthogonal polynomials $P^{(\alpha,\beta,\gamma)}_t$ w.r.t.~$\diff\sigma^{(\alpha,\beta,\gamma)}$ and their recursion formula:
\begin{align*}
&P^{(\alpha,\beta,\gamma)}_t(\tilde{\lambda}) = \pi^{(\alpha,\beta)}_t\left(\varphi_\gamma^{-1}(\tilde{\lambda})\right) \, , \\
&P^{(\alpha,\beta,\gamma)}_{t+1}(\tilde{\lambda}) = \left(a_t^{(\alpha,\beta,\gamma)}\tilde{\lambda} + b_t^{(\alpha,\beta,\gamma)}\right)P^{(\alpha,\beta,\gamma)}_{t}(\tilde{\lambda})-c_t^{(\alpha,\beta,\gamma)}P^{(\alpha,\beta,\gamma)}_{t-1}(\tilde{\lambda}) \, , \\
&a_t^{(\alpha,\beta,\gamma)} = a_t^{(\alpha,\beta)}\left(1-\frac{\gamma}{2}\right)^{-1} \, , \qquad b_t^{(\alpha,\beta,\gamma)} = b_t^{(\alpha,\beta)} + \frac{\gamma}{2}a_t^{(\alpha,\beta)}\left(1-\frac{\gamma}{2}\right)^{-1} \, , \qquad c_t^{(\alpha,\beta,\gamma)} = c_t^{(\alpha, \beta)} \, .
\end{align*}
However, the polynomials $P_t^{(\alpha,\beta,\gamma)}$ are not normalized such that $P_t^{(\alpha,\beta,\gamma)} = 1$. Indeed, $P_t^{(\alpha,\beta,\gamma)} = \pi_t^{(\alpha,\beta)}\left(\left(1-\gamma/2\right)^{-1}\left(1+\gamma/2\right)\right)$. It is difficult to deduce the recurrence relation for $\pi_t^{(\alpha,\beta,\gamma)} = P_t^{(\alpha,\beta,\gamma)}/P_t^{(\alpha,\beta,\gamma)}(1)$ from the recurrence relation for $P_t^{(\alpha,\beta,\gamma)}$. One can circumvent this difficulty by using that the normalization $P_t^{(\alpha,\beta,\gamma)}(1)$ also follows the recurrence relation 
\begin{equation*}
P^{(\alpha,\beta,\gamma)}_{t+1}(1) = \left(a_t^{(\alpha,\beta,\gamma)} + b_t^{(\alpha,\beta,\gamma)}\right)P^{(\alpha,\beta,\gamma)}_{t}(1)-c_t^{(\alpha,\beta,\gamma)}P^{(\alpha,\beta,\gamma)}_{t-1}(1) \, .
\end{equation*}
Summing things up, we get 
\begin{equation}
\label{eq:rescaled-Jacobi}
\begin{aligned}
&\pi_t^{(\alpha,\beta,\gamma)}(\lambda) = \frac{P_t^{(\alpha,\beta,\gamma)}(\lambda)}{P_t^{(\alpha,\beta,\gamma)}(1)} \, , \\
&P_0^{(\alpha,\beta,\gamma)}(\lambda) = 1 \, , \qquad P_0^{(\alpha,\beta,\gamma)}(1) = 1 \, , \\
&P_1^{(\alpha,\beta,\gamma)}(\lambda) = a_0^{(\alpha,\beta,\gamma)}\lambda + b_0^{(\alpha,\beta,\gamma)} \, , \qquad P_1^{(\alpha,\beta,\gamma)}(1) = a_0^{(\alpha,\beta,\gamma)} + b_0^{(\alpha,\beta,\gamma)} \, , \\
&P^{(\alpha,\beta,\gamma)}_{t+1}(\lambda) = \left(a_t^{(\alpha,\beta,\gamma)}\tilde{\lambda} + b_t^{(\alpha,\beta,\gamma)}\right)P^{(\alpha,\beta,\gamma)}_{t}(\lambda)-c_t^{(\alpha,\beta,\gamma)}P^{(\alpha,\beta,\gamma)}_{t-1}(\lambda) \, , \qquad t \geq 1 \, ,\\
&P^{(\alpha,\beta,\gamma)}_{t+1}(1) = \left(a_t^{(\alpha,\beta,\gamma)} + b_t^{(\alpha,\beta,\gamma)}\right)P^{(\alpha,\beta,\gamma)}_{t}(1)-c_t^{(\alpha,\beta,\gamma)}P^{(\alpha,\beta,\gamma)}_{t-1}(1) \, , \qquad t \geq 1 \, , \\
&a_t^{(\alpha,\beta,\gamma)} = a_t^{(\alpha,\beta)}\left(1-\frac{\gamma}{2}\right)^{-1} \, , \qquad b_t^{(\alpha,\beta,\gamma)} = b_t^{(\alpha,\beta)} + \frac{\gamma}{2}a_t^{(\alpha,\beta)}\left(1-\frac{\gamma}{2}\right)^{-1} \, , \qquad t \geq 0 \, , \\
&c_t^{(\alpha,\beta,\gamma)} = c_t^{(\alpha, \beta)} \, , \qquad t \geq 1 \, . 
\end{aligned}
\end{equation}
These equations give a practical way to compute the polynomials $\pi_t^{(\alpha,\beta,\gamma)}$ because all recursion coefficients can be computed explicitly using the formulas \eqref{eq:Jacobi_weights}.

\subsection{Polynomials orthogonal to the modified Kesten-McKay measure}
\label{ap:Kesten-McKay}
In this section, we determine the recurrence formula for the orthogonal polynomials $\pi_t$ w.r.t.~the modified Kesten-McKay measure
\begin{equation*}
(1-\lambda)\sigma(\T_d)(\diff\lambda) = \frac{d}{2\pi (1 + \lambda)}  {\left({\frac{4(d-1)}{d^2}-\lambda^2}\right)^{1/2}} \bfone_{\left[-2\sqrt{d-1}/d,2\sqrt{d-1}/d\right]}(\lambda)\diff\lambda \, .
\end{equation*}
The polynomials $\pi_t$ are normalized such that $\pi_t(1)=1$.

The measure $\diff\sigma(\T_d)$ is, up to a rescaling factor, the image measure of
\begin{equation}
\label{eq:Kesten-McKay-rescaled}
\diff\tilde{\sigma}(\lambda) = \frac{(1-\lambda^2)^{1/2}}{d+2\sqrt{d-1}\lambda}\bfone_{[-1,1]}(\lambda)\diff\lambda
\end{equation}
by the linear map $\varphi: \lambda \mapsto 2\sqrt{d-1}\lambda/d$. We thus compute a family of orthogonal polynomials w.r.t.~$\tilde{\sigma}$ and then use Lemma \ref{lem:rescaled_orthogonal_pol}. 

The orthogonal polynomials w.r.t.~$\tilde{\sigma}$ are given by Proposition \ref{prop:bernstein-szego}. Following the cited theorem, we define $\rho(\lambda) = d + 2\sqrt{d-1}\lambda$ and
\begin{align*}
&\rho(\cos \theta) = 2\sqrt{d-1}\cos \theta +d = |h(e^{i\theta})|^2 \, , \\
&h(e^{i\theta}) = \sqrt{d-1} + e^{i\theta} = \underbrace{\sqrt{d-1}+\cos\theta}_{:= c(\theta)} + i\underbrace{\sin\theta}_{:= s(\theta)} \, .
\end{align*}
Then we have the following family $\tilde{p}_t$ of orthogonal polynomials w.r.t.~$\tilde{\sigma}$. 
\begin{align*}
&\tilde{p}_t(\cos\theta) = c(\theta)U_t(\cos\theta) - \frac{s(\theta)}{\sin\theta}T_{t+1}(\cos\theta) \, , \\
&\tilde{p}_t(\lambda) = (\sqrt{d-1}+\lambda)U_t(\lambda)-T_{t+1}(\lambda) \, ,
\end{align*}
where $T_t$ and $U_t$ denote the $t$-th Chebyshev polynomial of the first kind and the second kind respectively. As the Chebyshev polynomials $T_t$ and $U_t$ both satisfy the same recurrence relation 
\begin{align*}
T_{t+1}(\lambda) &= 2\lambda T_t(\lambda) - T_{t-1}(\lambda) \, , \\
U_{t+1}(\lambda) &= 2\lambda U_t(\lambda) - U_{t-1}(\lambda) \, , \quad t\geq 1 \, ,
\end{align*}
the same relation follows for $\tilde{p}_t$:
\begin{equation*}
\tilde{p}_{t+1}(\lambda) = 2\lambda \tilde{p}_t(\lambda) - \tilde{p}_{t-1}(\lambda) \, , \quad t\geq 1 \, ,
\end{equation*}
with initial condition $\tilde{p}_0(\lambda) = \sqrt{d-1}$ and $\tilde{p}_1(\lambda) = 2\sqrt{d-1}\lambda+1$.

Lemma \ref{lem:rescaled_orthogonal_pol} gives the rescaled orthogonal polynomials $p_t(\lambda) = \tilde{p}_t\left(\varphi^{-1}(\lambda)\right)$ w.r.t.~$\diff\sigma(\T_d)$:
\begin{align}
\label{eq:rescaled-McKay}
&p_0(\lambda) = \sqrt{d-1} \, , 
&&p_1(\lambda) = d\lambda+1 \, , 
&&p_{t+1}(\lambda) = \frac{d}{\sqrt{d-1}}\lambda p_t(\lambda)-p_{t-1}(\lambda) \, , \quad t \geq 1 \, .
\end{align}
As $\pi_t(\lambda) = p_t(\lambda)/p_t(1)$, it now remains to compute $p_t(1)$. The sequence $p_t(1)$, $t\geq 1$ satisfies a second-order recurrence relation with fixed coefficients, it thus can be solved explicitly
\begin{equation*}
p_t(1) = \frac{1}{d-1}\left(d(d-1)^{(t+1)/2}-2(d-1)^{(1-t)/2}\right) \, .
\end{equation*}
By substituting $p_t(\lambda) = p_t(1)\pi_t(\lambda)$ in \eqref{eq:rescaled-McKay}, one obtains
\begin{align*}
&\pi_0(\lambda) = 1 \, , \qquad \pi_1(\lambda) = a_0\lambda+b_0 \, , \qquad \pi_{t+1}(\lambda) = a_t\lambda \pi_t(\lambda)-c_t\pi_{t-1}(\lambda) \, , \quad t \geq 1 \, , \\
&a_0 = \frac{d}{d+1} \, , \qquad b_0 =  \frac{1}{d+1} \, , \qquad a_t = \frac{\frac{d}{d-1}-2(d-1)^{-(t+1)}}{1-\frac{2}{d}(d-1)^{-(t+1)}} \, , \qquad c_t = \frac{\frac{1}{d-1}-\frac{2}{d}(d-1)^{-t}}{1-\frac{2}{d}(d-1)^{-(t+1)}} \, , \quad t \geq 1 \, .
\end{align*}

\section{Proof of Theorem \ref{thm:rate-decrease-Jacobi}}
\label{ap:proof-thm-rate-decrease-Jacobi}

The proof is divided in four subsections. Appendix \ref{ap:proof-thm-rate-decrease-Jacobi_preliminaries} develops tools that we use both in the proof of the theorem. We then prove the theorem in Sections \ref{ap:proof-thm-rate-decrease-Jacobi_simple}, \ref{ap:proof-thm-rate-decrease-Jacobi_shift-register}and \ref{ap:proof-thm-rate-decrease-Jacobi_jacobi}. Finally, in Appendix \ref{ap:tuning-Jacobi}, we discuss the choice of the parameters of the Jacobi polynomials in the Jacobi polynomial iteration. In all this appendix, we denote $\sigma = \sigma(G,W,v)$ the spectral measure of $G$.

\subsection{Preliminaries}
\label{ap:proof-thm-rate-decrease-Jacobi_preliminaries}

The first lemma relates the MSE of the estimator $x^t_v$ to the spectral measure.

\begin{lemma}
	\label{lem:proof-technical-1}
Write $x^t = P_t(W)\xi$ using the polynomial gossip point of view. Then 
\begin{equation*}
\E[(x^t_v-\mu)^2] = (\var \nu)\left\Vert P_t(W)e_v \right\Vert_{\ell^2(V)}^2 = (\var \nu)\int P_t(\lambda)^2\diff\sigma(\lambda) \, .
\end{equation*}
\end{lemma}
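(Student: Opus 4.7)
The plan is to decompose the observations as $\xi = \mu \bfone + \eta$, where $\eta = (\eta_v)_{v\in V}$ has i.i.d.~components with mean $0$ and variance $\var\nu$, and then propagate this decomposition through the polynomial action of $P_t(W)$. Since $W$ is stochastic, $W\bfone = \bfone$, so $P_t(W)\bfone = P_t(1)\bfone = \bfone$ (using $P_t(1)=1$ from the definition of a polynomial gossip method). In particular, reading off coordinate $v$, we obtain $\sum_{w\in V}(P_t(W))_{v,w} = 1$, which is what lets us cleanly subtract $\mu$.

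From $x^t_v = \sum_{w\in V}(P_t(W))_{v,w}\xi_w$ (a finite sum, since $P_t(W)$ has finite-range kernel on a locally finite graph), the previous identity gives
\begin{equation*}
x^t_v - \mu = \sum_{w\in V}(P_t(W))_{v,w}\,\eta_w \, .
\end{equation*}
Taking expectation of the square and using independence and $\var\eta_w = \var\nu$,
\begin{equation*}
\E[(x^t_v-\mu)^2] = (\var\nu)\sum_{w\in V}(P_t(W))_{v,w}^2 \, .
\end{equation*}

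To recognize the right-hand side as $\|P_t(W)e_v\|_{\ell^2(V)}^2$, I will use the symmetry of $W$ (hence of $P_t(W)$): $(P_t(W))_{v,w}=\langle e_v,P_t(W)e_w\rangle_{\ell^2}=\langle P_t(W)e_v,e_w\rangle_{\ell^2}$, so by Parseval in the orthonormal basis $(e_w)_{w\in V}$ of $\ell^2(V)$,
\begin{equation*}
\sum_{w\in V}(P_t(W))_{v,w}^2 = \sum_{w\in V}\langle P_t(W)e_v,e_w\rangle_{\ell^2}^2 = \left\Vert P_t(W)e_v\right\Vert_{\ell^2(V)}^2 \, .
\end{equation*}
Finally, expanding the norm and applying Definition~\ref{def:spectral_measure_infinite_graph} to the polynomial $P_t^2$,
\begin{equation*}
\left\Vert P_t(W)e_v\right\Vert_{\ell^2(V)}^2 = \langle e_v, P_t(W)^2 e_v\rangle_{\ell^2(V)} = \int P_t(\lambda)^2\,\diff\sigma(\lambda) \, ,
\end{equation*}
which yields both equalities of the lemma.

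There is no real obstacle here; the only point to be slightly careful about is the infinite-graph setting, where $\bfone \notin \ell^2(V)$, but this is harmless because $P_t(W)$ is a finite-degree polynomial in $W$ and hence acts coordinatewise by a finite sum, so the identity $P_t(W)\bfone=\bfone$ is a pointwise statement that does not require $\bfone$ to lie in $\ell^2$.
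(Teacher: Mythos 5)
Your proof is correct and follows essentially the same route as the paper's: both arguments reduce to unbiasedness via $P_t(1)=1$ and $W\bfone=\bfone$, then compute the variance using the i.i.d.\ structure of $\xi$ and the symmetry of $P_t(W)$ to identify it with $(\var\nu)\Vert P_t(W)e_v\Vert_{\ell^2(V)}^2$, and finish with Definition~\ref{def:spectral_measure_infinite_graph} applied to $P_t^2$. The only cosmetic difference is that you work entrywise with the decomposition $\xi=\mu\bfone+\eta$ while the paper phrases the same computation as $\var\langle\xi,P_t(W)e_v\rangle$; your remark about $\bfone\notin\ell^2(V)$ in the infinite-graph case is a valid and welcome clarification.
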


\begin{proof}
As $P_t(1) = 1$, we have 
\begin{equation*}
\E[x^t] = \E[P_t(W)\xi] = P_t(W)\E[\xi] = P_t(W)\mu\bfone = P_t(1)\mu\bfone = \mu\bfone \, .
\end{equation*}
In words, the estimator $x^t_v$ is unbiased. Thus 
\begin{equation*}
\E[(x^t_v-\mu)^2] = \var x^t_v = \var \left\langle P_t(W)\xi, e_v \right\rangle_{\ell^2(V)} = \var \left\langle \xi, P_t(W)e_v \right\rangle_{\ell^2(V)} = (\var \nu)\left\Vert P_t(W)e_v \right\Vert_{\ell^2(V)}^2  \, , 
\end{equation*}
using that $W$ is symmetric and that the $\xi_w$, $w \in V$ are i.i.d.~random variables. Then
\begin{equation*}
\E[(x^t_v-\mu)^2] = (\var \nu)\left\langle P_t(W)e_v , P_t(W)e_v \right\rangle_{\ell^2(V)} = (\var \nu)\left\langle e_v , P_t(W)^2e_v \right\rangle_{\ell^2(V)}  \, .
\end{equation*}
The proof is finished using the Definition \ref{def:spectral_measure_infinite_graph} of the spectral measure. 
\end{proof}

In the statement of Theorem \ref{thm:rate-decrease-Jacobi}, we have stated results in terms of the spectral dimension $d_s = 2\lim_{t\to\infty} \ln \sigma([1-E,1])/\ln E$. In the proof here, we will be more precise. We show how the results of Theorem \ref{thm:rate-decrease-Jacobi} actually depend on different definitions of the dimension. 

\begin{definition}
	Let $\tau$ be a probability measure on $[-1,1]$. We define
	\begin{enumerate}
		\item the right upper dimension $\overline{\dim}_\rightarrow\tau \in [0,\infty]$ of the measure $\tau$ as 
		\begin{equation*}
		\overline{\dim}_\rightarrow\tau = 2 \, \limsup_{E\to 0} \frac{\ln\tau([1-E,1])}{\ln E} \, ,
		\end{equation*}
		\item the right lower dimension $\underline{\dim}_\rightarrow\tau \in [0,\infty]$ of the measure $\tau$ as 
		\begin{equation*}
		\underline{\dim}_\rightarrow\tau = 2 \, \liminf_{E\to 0} \frac{\ln\tau([1-E,1])}{\ln E} \, ,
		\end{equation*}
		\item the left upper dimension $\overline{\dim}_\leftarrow\tau\in [0,\infty]$ of the measure $\tau$ as 
		\begin{equation*}
		\overline{\dim}_\leftarrow\tau = 2 \, \limsup_{E\to 0} \frac{\ln\tau([-1,-1+E])}{\ln E} \, .
		\end{equation*}
		\item the left lower dimension $\underline{\dim}_\leftarrow\tau\in [0,\infty]$ of the measure $\tau$ as 
		\begin{equation*}
		\underline{\dim}_\leftarrow\tau = 2 \, \liminf_{E\to 0} \frac{\ln\tau([-1,-1+E])}{\ln E} \, .
		\end{equation*}
	\end{enumerate}
\end{definition}

\subsection{Proof of Theorem \ref{thm:rate-decrease-Jacobi}: simple gossip}
\label{ap:proof-thm-rate-decrease-Jacobi_simple}

In the case of simple gossip, $P_t(\lambda) = \lambda^t$.

\begin{proposition}
	\label{prop:aux-1}
Let $\tau$ be a probability measure on $[-1,1]$. Then 
\begin{equation*}
\liminf_{t\to\infty} \frac{\int \lambda^{2t}\diff\tau(\lambda)}{\ln t} \geq -\frac{\min\left(\overline{\dim}_\rightarrow \tau, \overline{\dim}_\leftarrow \tau\right)}{2}
\end{equation*}
\end{proposition}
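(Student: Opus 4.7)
My plan is to lower-bound $\int\lambda^{2t}\diff\tau(\lambda)$ by discarding everything except small neighborhoods of the two endpoints $\pm 1$, where $\lambda^{2t}$ remains close to $1$. Because $2t$ is even, $\lambda^{2t}\geq (1-E)^{2t}$ on both $[1-E,1]$ and $[-1,-1+E]$, so
$$\int\lambda^{2t}\diff\tau(\lambda)\geq (1-E)^{2t}\bigl(\tau([1-E,1])+\tau([-1,-1+E])\bigr).$$
It is crucial that $2t$ is even: this is what lets the mass of $\tau$ near $-1$ contribute on exactly the same footing as the mass near $+1$, and is the structural reason behind the $\min$ appearing in the conclusion.

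The next step is to translate the definitions of the upper dimensions into usable lower bounds on these masses. Writing $d_R=\overline{\dim}_\rightarrow\tau$, $d_L=\overline{\dim}_\leftarrow\tau$, $d=\min(d_R,d_L)$, I will unwind the $\limsup$, being careful with the sign flip caused by $\ln E<0$: for every $\epsilon>0$ and all $E$ sufficiently small,
$$\tau([1-E,1])\geq E^{d_R/2+\epsilon}\qquad\text{and}\qquad\tau([-1,-1+E])\geq E^{d_L/2+\epsilon},$$
so $\tau([1-E,1])+\tau([-1,-1+E])\geq E^{d/2+\epsilon}$. (If $d_R$ or $d_L$ equals $+\infty$ the corresponding bound is vacuous, but the claim is still trivially satisfied whenever $d=+\infty$.)

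The last step is to optimize in $E$: I set $E=1/t$, so that $(1-1/t)^{2t}\to e^{-2}$ is bounded below, say by $e^{-3}$, for all sufficiently large $t$. Combining everything yields
$$\int\lambda^{2t}\diff\tau(\lambda)\geq e^{-3}\,t^{-d/2-\epsilon}.$$
Taking logarithms, dividing by $\ln t$, and sending $t\to\infty$ then $\epsilon\to 0$ delivers $\liminf_{t\to\infty}\ln\!\int\lambda^{2t}\diff\tau(\lambda)/\ln t\geq -d/2$, which is the claim.

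There is no genuine obstacle here; the argument is mostly bookkeeping. The two points that require care are the sign-flip in the logarithmic manipulation of the $\limsup$ (which is what converts an upper bound on a ratio of negative quantities into a lower bound on a mass) and the parity of $2t$ that legitimizes keeping the left endpoint. Apart from that, the choice $E=1/t$ is the natural scale that balances $(1-E)^{2t}$ against $E^{d/2+\epsilon}$.
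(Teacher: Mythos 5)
Your proof is correct. The localization inequality $\int\lambda^{2t}\,\diff\tau(\lambda)\geq (1-E)^{2t}\bigl(\tau([1-E,1])+\tau([-1,-1+E])\bigr)$ is valid (the two intervals are disjoint for $E<1$ and $\lambda^{2t}=|\lambda|^{2t}\geq(1-E)^{2t}$ on each), the sign flip in unwinding the $\limsup$ is handled correctly, and since $E\mapsto E^{a}$ is decreasing in $a$ for $0<E<1$ the sum of the two masses is indeed bounded below by $E^{\min(d_R,d_L)/2+\epsilon}$, which with $E=1/t$ gives the claim after letting $\epsilon\to 0$. (As a side remark, the displayed statement is missing a $\ln$ in the numerator; both you and the paper's proof clearly work with $\ln\int\lambda^{2t}\,\diff\tau$.)

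Your route is genuinely different from the paper's. The paper fixes $d>\overline{\dim}_\rightarrow\tau$, dominates $\tau$ from below near $+1$ by a constant multiple of the Jacobi-type measure $(1-\lambda)^{d/2-1}\diff\lambda$, invokes the integral-comparison lemma for non-decreasing integrands (Lemma \ref{lem:domination-measure-increasing-functions}) to reduce to $\int_0^1\lambda^{2t}(1-\lambda)^{d/2-1}\diff\lambda=B(2t+1,d/2)\sim c/t^{d/2}$, and then treats the left endpoint ``by symmetry.'' You instead keep only the two intervals of width $E=1/t$ around $\pm 1$ and use a single value of the cumulative mass there, which makes the argument entirely elementary --- no comparison lemma, no Beta function --- and treats both endpoints simultaneously, at the harmless cost of an $\epsilon$ in the exponent. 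What the paper's heavier machinery buys is a clean two-sided asymptotic $\asymp t^{-d/2}$ for the model measure and reuse of the same lemma and Beta asymptotics elsewhere (in the shift-register bound and in Proposition \ref{prop:spectral-dimension-spectral-decay}); for the purely logarithmic-scale statement at hand your argument is sufficient and arguably more transparent about why the $\min$ of the two edge dimensions appears.
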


\begin{proof}
Let $d > \overline{\dim}_\rightarrow \tau$. As $\overline{\dim}_\rightarrow \tau = 2\limsup_{E\to0}\ln \sigma([1-E,1])/\ln E$, there exists constants $c_1, c_2 > 0$ such that for all $E \in [0,2]$, 
\begin{equation}
\label{eq:aux-12}
\tau([1-E,1]) \geq c_1E^{d/2} = c_2\sigma^{(d/2-1,0)}([1-E,1])
\end{equation}
where $\sigma^{(d/2-1,0)}(\diff\lambda) = (1 - \lambda)^{d/2-1}\diff\lambda$. Then using jointly Lemma \ref{lem:domination-measure-increasing-functions} and Eq.~\eqref{eq:aux-12},  
\begin{align*}
\int \lambda^{2t}\diff\tau(\lambda) &\geq \int_{[0,1]} \lambda^{2t}\diff\tau(\lambda) \geq c_1 \int_{0}^{1} \lambda^{2t} (1-\lambda)^{d/2-1}\diff\lambda = B(2t+1,d/2)   \stackrel{ \eqref{eq:asymptotic-beta}}{\underset{t\to\infty}{\sim}} \frac{c_3}{t^{d/2}} \, ,
\end{align*}
for some constant $c_3$. Thus
\begin{equation*}
\liminf_{t\to\infty} \frac{\int \lambda^{2t}\diff\tau(\lambda)}{\ln t} \geq -\frac{d}{2} \, .
\end{equation*}
This being true for all $d>\overline{\dim}_\rightarrow \tau$, this proves
\begin{equation*}
\liminf_{t\to\infty} \frac{\int \lambda^{2t}\diff\tau(\lambda)}{\ln t} \geq -\frac{\overline{\dim}_\rightarrow \tau}{2} \, .
\end{equation*}
The proof at the other edge of the spectrum is the same by symmetry.
\end{proof}

The proof of Theorem \ref{thm:rate-decrease-Jacobi} for simple gossip follows easily. Indeed, if $\tau = \sigma$ is the spectral measure of the graph, then  $\overline{\dim}_\rightarrow \sigma = d_s$. Thus
\begin{equation*}
\liminf_{t\to\infty} \frac{\ln \E[(x^t_v-\mu)^2]}{\ln t} \stackrel{(\text{Lemma \ref{lem:proof-technical-1}})}{=} \liminf_{t\to\infty} \frac{\ln \int \lambda^{2t}\diff\sigma(\lambda)}{\ln t} \stackrel{\text{(Proposition \ref{prop:aux-1})}}{\geq} -\frac{d_s}{2} \, .
\end{equation*} 

\subsection{Proof of Theorem \ref{thm:rate-decrease-Jacobi}: shift-register}
\label{ap:proof-thm-rate-decrease-Jacobi_shift-register}
In the case of the shift-register gossip iteration, $P_t(\lambda)$ satisfies the second-order recurrence relation 
\begin{align}
\label{eq:shift-register-polynomials}
&P_0(\lambda) = 1 \, , &&P_1(\lambda) = \lambda \, , &&P_{t+1}(\lambda) = \omega\lambda P_t(\lambda) + (1-\omega)P_{t-1}(\lambda) \, .
\end{align}
The case $\omega=1$ corresponds to simple gossip: it has been treated above. We now assume $\omega \in (1,2]$.

\begin{proposition}
	\label{prop:polynomials-shift-register-chebyshev}
Let $P_t$ be the polynomials defined in Eq.~\eqref{eq:shift-register-polynomials} with $\omega \in (1,2]$. Then 
\begin{equation*}
P_t(\lambda) = (\omega-1)^{t/2} \left[\left(2-\frac{2}{\omega}\right)T_t\left(\frac{\omega}{2\sqrt{\omega-1}}\lambda\right) + \left(\frac{2}{\omega}-1\right)U_t\left(\frac{\omega}{2\sqrt{\omega-1}}\lambda\right)\right]
\end{equation*}
where $T_t$ and $U_t$ are the Chebyshev polynomials of the first and second kind respectively (see Example~\ref{ex:chebyshev}).
\end{proposition}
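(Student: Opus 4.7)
The plan is to exploit the fact that both Chebyshev families $T_t$ and $U_t$ obey the same three-term recurrence $Q_{t+1}(\mu) = 2\mu Q_t(\mu) - Q_{t-1}(\mu)$, and to show that any suitably rescaled version of them solves the shift-register recurrence \eqref{eq:shift-register-polynomials}. Since a second-order linear recurrence is determined by its two initial conditions, I will then simply fit the coefficients.

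Concretely, I will introduce the change of variable $\mu = \mu(\lambda) = \frac{\omega}{2\sqrt{\omega-1}}\lambda$ (note that $\omega-1 > 0$ since $\omega \in (1,2]$, so the square root makes sense) and define, for a generic Chebyshev-type sequence $C_t$ satisfying $C_{t+1}(\mu) = 2\mu C_t(\mu) - C_{t-1}(\mu)$, the rescaled polynomial $\tilde{C}_t(\lambda) := (\omega-1)^{t/2} C_t(\mu(\lambda))$. A direct computation gives
\begin{equation*}
\tilde{C}_{t+1}(\lambda) = (\omega-1)^{1/2} \cdot 2\mu(\lambda)\cdot \tilde{C}_t(\lambda) - (\omega-1)\tilde{C}_{t-1}(\lambda) = \omega\lambda\,\tilde{C}_t(\lambda) + (1-\omega)\tilde{C}_{t-1}(\lambda),
\end{equation*}
which is exactly the shift-register recurrence \eqref{eq:shift-register-polynomials}. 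Hence $\tilde{T}_t$ and $\tilde{U}_t$ both solve that recurrence, and so does every linear combination $A\tilde{T}_t + B\tilde{U}_t$.

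It then remains to pick $A,B$ so that $A\tilde{T}_0 + B\tilde{U}_0 = P_0 = 1$ and $A\tilde{T}_1 + B\tilde{U}_1 = P_1 = \lambda$. Using $T_0=U_0=1$, $T_1(\mu)=\mu$, $U_1(\mu)=2\mu$, this yields the $2\times 2$ linear system $A+B = 1$ and $\frac{\omega}{2}(A+2B)=1$, whose unique solution is $A = 2-\tfrac{2}{\omega}$ and $B = \tfrac{2}{\omega}-1$. By uniqueness of the solution to a second-order linear recurrence with prescribed initial data, $P_t = A\tilde{T}_t + B\tilde{U}_t$ for all $t \geq 0$, which is exactly the claimed formula.

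There is no real obstacle here: the only subtleties are (i) checking that $\omega-1 > 0$ so that $\sqrt{\omega-1}$ is real (satisfied for $\omega \in (1,2]$) and (ii) being careful with the half-integer powers of $(\omega-1)$ when matching coefficients at $t=1$. Both are routine. Once the recurrence-preservation computation above is verified and the initial-condition system is solved, the proposition follows immediately from the uniqueness of solutions to second-order linear recurrences.
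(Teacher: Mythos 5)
Your proof is correct and follows essentially the same route as the paper: both arguments rest on the substitution $\mu = \frac{\omega}{2\sqrt{\omega-1}}\lambda$ together with the factor $(\omega-1)^{t/2}$ to convert between the shift-register recurrence and the common Chebyshev recurrence, and then match the two initial conditions to identify the coefficients $2-\frac{2}{\omega}$ and $\frac{2}{\omega}-1$. The only (immaterial) difference is the direction of the rescaling: the paper rescales $P_t$ into a Chebyshev-recurrence solution and invokes the fact that $T_t,U_t$ span the solution space, whereas you rescale $T_t,U_t$ into shift-register solutions and conclude by uniqueness of the solution with prescribed initial data.
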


\begin{proof}
Consider the rescaled version $Q_t$ of $P_t$ given by the formula 
\begin{equation}
\label{eq:aux-17}
P_t(\lambda) = (\omega-1)^{t/2}Q_t\left(\frac{\omega}{2\sqrt{\omega-1}}\lambda\right) \, .
\end{equation}
If follows from Eq.~\eqref{eq:shift-register-polynomials} that
\begin{align*}
&Q_0(\lambda) = 1 \, , &&Q_1(\lambda) = \frac{2}{\omega}\lambda \, , &&Q_{t+1}(\lambda) = 2\lambda Q_t(\lambda) - Q_{t-1}(\lambda) \, . 
\end{align*}
Thus the sequence $Q_t, t\geq 0$ satisfies the same recurrence relation as the Chebyshev polynomials, but with a different initialization. As a consequence, it must be a linear combination of the two sequences of Chebyshev polynomials: there exists $\mu, \nu \in \R$ such that for all $t$, 
\begin{equation*}
Q_t(\lambda) = \mu T_t(\lambda) + \nu U_t(\lambda)
\end{equation*}
The computation of the weights $\mu, \nu$ is straightforward from the initialization $Q_0, Q_1$. This proves the proposition. 
\end{proof}

\begin{proposition}
	\label{prop:shift-register-orthogonality-measure}
Let $\omega \in (1,2]$. The polynomials $P_t, t\geq 0$ defined in Eq.~\eqref{eq:shift-register-polynomials} are the orthogonal polynomials w.r.t.~the measure 
\begin{equation*}
\tau(\diff\lambda) = \frac{\left(\left(2\sqrt{\omega-1}/\omega\right)^2-\lambda^2\right)^{1/2}}{1-\lambda^2} \, . 
\end{equation*}
\end{proposition}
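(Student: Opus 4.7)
The strategy is to apply the Bernstein--Szegő construction of Proposition \ref{prop:bernstein-szego}, which yields orthogonal polynomials for measures of the form $(1-\mu^2)^{1/2}/\rho(\mu)\diff\mu$ with $\rho$ a polynomial positive on $[-1,1]$. The key observation is that, after a linear rescaling, $\tau$ takes exactly this form.

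First I would perform the change of variables $\mu = \omega\lambda/(2\sqrt{\omega-1})$, which sends the support $[-2\sqrt{\omega-1}/\omega,2\sqrt{\omega-1}/\omega]$ of $\tau$ bijectively to $[-1,1]$. A direct computation shows that $\tau$ then becomes, up to a multiplicative constant, the measure $\tilde{\tau}(\diff\mu) = (1-\mu^2)^{1/2}/(\omega^2-4(\omega-1)\mu^2)\diff\mu$. The denominator $\rho(\mu) := \omega^2 - 4(\omega-1)\mu^2$ is positive on $[-1,1]$ when $\omega\in(1,2]$, since its only real roots are $\pm\omega/(2\sqrt{\omega-1})$, which lie outside $[-1,1]$ (the boundary case $\omega=2$ makes them equal to $\pm 1$, which is still harmless at the endpoints).

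Next I would apply Proposition \ref{prop:normalized-representation} to produce a polynomial $h$ of degree $2$ with $\rho(\cos\theta)=|h(e^{i\theta})|^2$. A natural ansatz $h(z) = 1-(\omega-1)z^2$ works: expanding $|h(e^{i\theta})|^2 = 1+(\omega-1)^2 - 2(\omega-1)\cos(2\theta)$ and using $\cos(2\theta)=2\cos^2\theta-1$ recovers $\omega^2-4(\omega-1)\cos^2\theta$. Plugging the real and imaginary parts $c(\theta) = 1-(\omega-1)\cos(2\theta)$ and $s(\theta) = -(\omega-1)\sin(2\theta)$ into Proposition \ref{prop:bernstein-szego}, and simplifying using the identities $\sin(2\theta)/\sin\theta = 2\cos\theta$ and $\cos\theta\sin((t+1)\theta) - \sin\theta\cos((t+1)\theta) = \sin(t\theta)$, one obtains after a little algebra that the Bernstein--Szegő polynomials collapse to
\begin{equation*}
\tilde{p}_t(\mu) = 2(\omega-1)T_t(\mu) + (2-\omega)U_t(\mu).
\end{equation*}

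Finally, comparing with Proposition \ref{prop:polynomials-shift-register-chebyshev} and recalling the change of variables, one checks that $P_t(\lambda) = \omega^{-1}(\omega-1)^{t/2}\,\tilde{p}_t(\mu(\lambda))$, so the $P_t$ are scalar multiples of the orthogonal polynomials $\tilde{p}_t$ expressed back in the variable $\lambda$. Invoking Lemma \ref{lem:rescaled_orthogonal_pol} to transport the orthogonality from the variable $\mu$ (and measure $\tilde{\tau}$) back to the variable $\lambda$ (and measure $\tau$) concludes the proof. The main obstacle is the trigonometric simplification in step three: the raw Bernstein--Szegő formula involves $U_t$, $T_{t+1}$ and the factor $s(\theta)/\sin\theta$, and it takes a careful use of Chebyshev identities such as $T_t = U_t - \mu U_{t-1}$ to recognize the result as the specific linear combination of $T_t$ and $U_t$ produced by Proposition \ref{prop:polynomials-shift-register-chebyshev}; once this match is made, the rest is bookkeeping.
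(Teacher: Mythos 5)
Your proof is correct and follows essentially the same route as the paper: rescale $\tau$ by $\mu=\omega\lambda/(2\sqrt{\omega-1})$ to a Bernstein--Szeg\H{o} measure $(1-\mu^2)^{1/2}/\rho(\mu)$ with quadratic $\rho$, apply Proposition \ref{prop:bernstein-szego} (your $h(z)=1-(\omega-1)z^2$ is the paper's $h$ up to the constant $2\sqrt{\omega-1}$), identify the resulting polynomials with $\omega Q_t=2(\omega-1)T_t+(2-\omega)U_t$ as in Proposition \ref{prop:polynomials-shift-register-chebyshev}, and transport back with Lemma \ref{lem:rescaled_orthogonal_pol}. The computations check out (indeed $U_t-(\omega-1)U_{t-2}=2(\omega-1)T_t+(2-\omega)U_t$), and your remark on the degenerate endpoint case $\omega=2$ is a point the paper itself glosses over.
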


\begin{proof}
The orthogonal polynomials w.r.t.~the measure 
\begin{equation*}
\tilde{\tau}(\diff\lambda) = \frac{\left(1-\lambda^2\right)^{1/2}}{\left(\omega/(2\sqrt{\omega-1})\right)^2-\lambda^2}
\end{equation*}
are computed using Proposition \ref{prop:bernstein-szego} with 
\begin{equation*}
\rho(\cos \theta) = \frac{\omega^2}{4(\omega-1)}-\cos^2\theta \, . 
\end{equation*}
Simple computations give that $\rho(\cos \theta) = |h(e^{i\theta})|^2$ with
\begin{equation*}
h(e^{i\theta}) = \left\vert\frac{\omega}{2\sqrt{\omega-1}}\left[  \left(2-\frac{2}{\omega}\right)\frac{1-e^{2i\theta}}{2} + \left(\frac{2}{\omega}-1\right) \right]\right\vert^2 \, .
\end{equation*}
Proposition \ref{prop:bernstein-szego} then gives that the polynomials $(2-2/\omega)T_t+(2/\omega-1)U_t$ are orthogonal w.r.t.~$\tilde{\tau}$. But these polynomials are the polynomials $Q_t$ defined in Eq.~\eqref{eq:aux-17}. We then use Lemma \ref{lem:rescaled_orthogonal_pol} to prove that $P_t$ is orthogonal w.r.t.~$\tau$.
\end{proof}

\begin{lemma}
Let $P_t$ be the polynomials defined in Eq.~\eqref{eq:shift-register-polynomials} with $\omega \in (1,2]$ and $\tau$ a measure on $[-1,1]$. Then 
\begin{equation*}
\liminf_{t\to\infty} \frac{\int P_t(\lambda)^2\diff\tau(\lambda)}{\ln t} \geq -\frac{\min\left(\overline{\dim}_\rightarrow \tau, \overline{\dim}_\leftarrow \tau\right)}{2}
\end{equation*}
\end{lemma}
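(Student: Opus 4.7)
The strategy is to combine the Chebyshev representation of $P_t$ from Proposition~\ref{prop:polynomials-shift-register-chebyshev} with the non-asymptotic bounds of Lemma~\ref{lem:bounds-chebyshev} to establish an \emph{edge estimate}: the polynomial $P_t$ stays bounded away from $0$ on a window of size at least of order $1/t$ around each endpoint $\pm 1$. Restricting the integral to such a window and invoking the definition of the upper dimension then yields the claimed asymptotic lower bound, in close analogy with the proof of Proposition~\ref{prop:aux-1} for simple gossip.

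Set $c = \omega/(2\sqrt{\omega-1}) \geq 1$. For $\omega \in (1,2]$ both coefficients $2-2/\omega$ and $2/\omega-1$ in Proposition~\ref{prop:polynomials-shift-register-chebyshev} are non-negative, so for $\lambda \in [1/c, 1]$ (where $c\lambda \geq 1$) the bounds of Lemma~\ref{lem:bounds-chebyshev} yield, after combining $T_t \geq \tfrac12(c\lambda+\sqrt{c^2\lambda^2-1})^t$ and $U_t \geq (c\lambda+\sqrt{c^2\lambda^2-1})^t$,
\begin{equation*}
P_t(\lambda) \;\geq\; (\omega-1)^{t/2}\bigl[\tfrac{1}{2}(2-2/\omega) + (2/\omega-1)\bigr]\bigl(c\lambda+\sqrt{c^2\lambda^2-1}\bigr)^t \;=\; \frac{1}{\omega}\, g(\lambda)^t,
\end{equation*}
where $g(\lambda) := (\omega-1)^{1/2}\bigl(c\lambda+\sqrt{c^2\lambda^2-1}\bigr)$. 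A direct computation gives $g(1) = 1$ and, for $\omega \in (1,2)$, $g'(1) = \omega/(2-\omega)$, so by continuity there exist constants $\kappa_1(\omega), \kappa_2(\omega) > 0$ with $g(1-\eps)^t \geq \kappa_1$ whenever $0 \leq \eps \leq \kappa_2/t$. For the boundary value $\omega = 2$ the polynomials reduce to the Chebyshev polynomials of the first kind ($P_t = T_t$), and the identity $T_t(\cos\theta) = \cos(t\theta)$ yields the (stronger) bound $P_t(\lambda) \geq 1/2$ on $[1 - \kappa'/t^2, 1]$. In either case there is a constant $C > 0$ and a window $I_t \subset [1-C/t,\, 1]$ on which $P_t(\lambda)^2 \geq C$.

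For any $d > \overline{\dim}_\rightarrow \tau$, the definition of the upper dimension provides $\tau([1-E,1]) \geq E^{d/2}$ for all $E$ small enough, hence
\begin{equation*}
\int P_t(\lambda)^2 \diff\tau(\lambda) \;\geq\; C\, \tau(I_t) \;\geq\; C'\, t^{-d/2},
\end{equation*}
so $\liminf_{t\to\infty} \ln\!\int P_t^2\diff\tau \,/\, \ln t \geq -d/2$; letting $d \downarrow \overline{\dim}_\rightarrow \tau$ yields the bound $-\overline{\dim}_\rightarrow\tau/2$. A parallel argument at the left edge exploits the symmetry $P_t(-\lambda) = (-1)^t P_t(\lambda)$ (immediate by induction on \eqref{eq:shift-register-polynomials}), so that $P_t(\lambda)^2 = P_t(-\lambda)^2$ inherits an edge estimate at $-1$ and the bound $-\overline{\dim}_\leftarrow\tau/2$ follows in the same way. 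Taking the better of the two lower bounds proves the lemma.

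The only technical wrinkle is the boundary case $\omega = 2$: there $c = 1$ and the expansion factor $c+\sqrt{c^2-1}$ from Lemma~\ref{lem:bounds-chebyshev} degenerates to $1$, so the ``exponential growth past $[-1,1]$'' driving the generic argument disappears and the separate treatment via $P_t = T_t$ must be invoked. Otherwise, the plan is a direct ``edge estimate + restrict integration + invoke dimension'' scheme, entirely parallel to the simple gossip case of Proposition~\ref{prop:aux-1}.
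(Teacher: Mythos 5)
Your route for $\omega\in(1,2)$ is correct and genuinely different from the paper's. The paper integrates over the \emph{macroscopic} interval $[2\sqrt{\omega-1}/\omega,1]$, keeps the exponential lower bound $T_t\geq\tfrac12(\cdot)^t$ from Lemma~\ref{lem:bounds-chebyshev}, replaces $\tau$ by the model measure $(1-\lambda)^{d/2-1}\diff\lambda$ via the domination Lemma~\ref{lem:domination-measure-increasing-functions}, and then evaluates the resulting integral through a $\cosh$ change of variables to extract the $t^{-d/2}$ rate. You instead localize on a shrinking window $[1-\kappa_2/t,1]$ where $P_t$ is bounded below by a constant (using $g(1)=1$ and $g'(1)=\omega/(2-\omega)<\infty$), and bound $\tau$ of that window directly from the definition of $\overline{\dim}_\rightarrow\tau$. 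This is more elementary — no domination lemma, no explicit integral computation — and the width-$1/t$ window is exactly what produces the exponent $d/2$. The left-edge argument via $P_t(-\lambda)=(-1)^tP_t(\lambda)$ matches what the paper does implicitly ("the same by symmetry").

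There is, however, a genuine quantitative gap at the endpoint $\omega=2$. There $P_t=T_t$, your window is $[1-\kappa'/t^2,1]$, and the definition of the dimension only gives $\tau([1-\kappa'/t^2,1])\gtrsim(\kappa'/t^2)^{d/2}=\kappa''t^{-d}$; so your chain $\int P_t^2\diff\tau\geq C\,\tau(I_t)\geq C't^{-d/2}$ fails — it yields $-d$, not $-d/2$. The statement "$I_t\subset[1-C/t,1]$" goes the wrong way: you need $I_t$ to \emph{contain} an interval of width $\asymp 1/t$, not to be contained in one. This is not easily patched: the constant lower bound on $T_t^2$ genuinely holds only on a $\theta$-window of width $\asymp 1/t$, i.e.\ a $\lambda$-window of width $\asymp 1/t^2$, and off that window $T_t$ oscillates through zero, so a pointwise edge estimate cannot deliver $t^{-d/2}$ at $\omega=2$. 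To be fair, the paper's own proof degenerates there as well (the interval $[2\sqrt{\omega-1}/\omega,1]$ collapses to the point $\{1\}$ and $u_{\max}=\cosh^{-1}(1)=0$, so one cannot choose $u_{\min}\in(0,u_{\max})$); the case $\omega=2$ would need a separate argument, e.g.\ exploiting $T_t(\lambda)^2=\tfrac12(1+T_{2t}(\lambda))$ or an average over the $\asymp t$ intervals where $\cos^2(t\theta)\geq\tfrac14$, rather than a single edge window. You should either restrict your claim to $\omega\in(1,2)$ or supply such an argument; as written, the $\omega=2$ sentence asserts a conclusion your estimate does not support.
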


\begin{proof}
\begin{align*}
\int P_t(\lambda)^2\diff\tau(\lambda) &\geq \int_{[2\sqrt{\omega-1}/\omega,1]} P_t(\lambda)^2\diff\tau(\lambda) \\
&\hspace{-0.9cm}\stackrel{\text{(Proposition \ref{prop:polynomials-shift-register-chebyshev})}}{\geq} c_1(\omega-1)^t \int_{[2\sqrt{\omega-1}/\omega,1]}T_t\left(\frac{\omega}{2\sqrt{\omega-1}}\lambda\right)^2\diff\tau(\lambda)\\
&\hspace{-1mm}\stackrel{\eqref{eq:aux-9}}{\geq} c_2(\omega-1)^t \int_{[2\sqrt{\omega-1}/\omega,1]} \left(\frac{\omega}{2\sqrt{\omega-1}}\lambda + \sqrt{\frac{\omega^2}{4(\omega-1)}\lambda^2-1}\right)^{2t} \diff\tau(\lambda) \, . 
\end{align*}
where $c_i > 0$ is a constant independent of $t$. Let $d > \overline{\dim}_\rightarrow \tau$. As $\overline{\dim}_\rightarrow \tau = 2\limsup_{E\to0}\ln \sigma([1-E,1])/\ln E$, there exists constants $c_3, c_4 > 0$ such that for all $E \in [0,2]$, 
\begin{equation}
\label{eq:aux-18}
\tau([1-E,1]) \geq c_3E^{d/2} = c_4\sigma^{(d/2-1,0)}([1-E,1])
\end{equation}
where $\sigma^{(d/2-1,0)}(\diff\lambda) = (1 - \lambda)^{d/2-1}\diff\lambda$. Then using jointly Lemma \ref{lem:domination-measure-increasing-functions} and Eq.~\eqref{eq:aux-18}, 
\begin{align*}
\int P_t(\lambda)^2\diff\tau(\lambda) &\geq c_5(\omega-1)^t \int_{2\sqrt{\omega-1}/\omega}^1 \left(\frac{\omega}{2\sqrt{\omega-1}}\lambda + \sqrt{\frac{\omega^2}{4(\omega-1)}\lambda^2-1}\right)^{2t} (1-\lambda)^{d/2-1}\diff\lambda \\
&\geq c_6(\omega-1)^t \int_{0}^{\cosh^{-1}(\omega/(2\sqrt{\omega-1}))} e^{2tu}\left(1-\frac{2\sqrt{\omega-1}}{\omega}\cosh u\right)^{d/2-1}\sinh u \, \diff u \, .
\end{align*}
where in the last step we made the change of variable $\omega/(2\sqrt{\omega-1})\lambda + \sqrt{\omega^2/(4(\omega-1))\lambda^2-1} = e^u$, i.e. $\lambda = 2\sqrt{\omega-1}/\omega \, \cosh u$. Denote $u_{\max} = \cosh^{-1}(\omega/(2\sqrt{\omega-1}))$ to shorten notations. As $\cosh$ is a convex function, for $u \in [0,u_{\max}]$, 
\begin{align*}
&\cosh u_{\max} - \cosh u \leq \frac{\cosh u_{\max} - 1}{u_{\max}} (u_{\max} - u ) \, , \\
&\Leftrightarrow 1 - \frac{2\sqrt{\omega-1}}{\omega} \cosh u \leq \left(1-\frac{2\sqrt{\omega-1}}{\omega}\right)\left(1-\frac{u}{u_{\max}}\right)
\end{align*}
Moreover, choose some constant $u_{\min} \in (0,u_{\max})$ so that we can lower bound with a constant $c_7$: for all $u \in [u_{\min},u_{\max}]$, $\sinh u \geq c_7$. This finally gives: 
\begin{equation*}
\int P_t(\lambda)^2\diff\tau(\lambda) \geq c_8(\omega-1)^t \int_{u_{\min}}^{u_{\max}} e^{2tu}\left(1-\frac{u}{u_{\max}}\right)^{d/2-1} \, \diff u \, .
\end{equation*}
After the change of variable $w = 2t(u_{\max}-u)$, this gives
\begin{equation*}
\int P_t(\lambda)^2\diff\tau(\lambda) \geq c_8(\omega-1)^t \int_{0}^{2t(u_{\max}-u_{\min})} e^{2tu_{\max}}e^{-w}\left(\frac{w}{2tu_{\max}}\right)^{d/2-1} \frac{1}{2t}\, \diff w \, .
\end{equation*}
Note that $e^{2tu_{\max}} = (\omega-1)^{-t}$, thus there exists a constant $c_9>0$ such that 
\begin{equation*}
\int P_t(\lambda)^2\diff\tau(\lambda) \geq c_9 \frac{1}{t^{d/2}} \int_{0}^{2t(u_{\max}-u_{\min})} e^{-w}w^{d/2-1} \, \diff w \, .
\end{equation*}
This proves that 
\begin{equation*}
\liminf_{t\to\infty} \frac{\int P_t(\lambda)^2\diff\tau(\lambda)}{\ln t} \geq -\frac{d}{2} \, .
\end{equation*}
This being true for all $d > \overline{\dim}_\rightarrow \tau$, this proves
\begin{equation*}
\liminf_{t\to\infty} \frac{\int P_t(\lambda)^2\diff\tau(\lambda)}{\ln t} \geq -\frac{\overline{\dim}_\rightarrow \tau}{2} \, .
\end{equation*}
The proof at the other edge of the spectrum is the same by symmetry. 
\end{proof}

\subsection{Proof of Theorem \ref{thm:rate-decrease-Jacobi}: Jacobi polynomial iteration}
\label{ap:proof-thm-rate-decrease-Jacobi_jacobi}

In this section, we use again the notation of Appendix \ref{ap:Jacobi}: in the case of the Jacobi polynomial iteration \eqref{eq:Jabobi-polynomial-iteration}, we have $x^t = \pi_t^{(d_s/2,0)}(W)\xi$, where $\pi_t^{(\alpha,\beta)}$ is the rescaled Jacobi polynomial; $\pi_t^{(\alpha,\beta)} =  P_t^{(\alpha,\beta)}/{t+\alpha \choose t}$ where $P_t^{(\alpha,\beta)}$ is the traditional Jacobi polynomial. Lemma \ref{lem:proof-technical-1} suggests to study the quantity $\int \pi_t^{(d_s/2,0)}(\lambda)^2\diff\sigma(\lambda)$. However we study here the behavior of $\int \pi_t^{(\alpha,\beta)}(\lambda)^2\diff\sigma(\lambda)$ for any $(\alpha,\beta)$. This will be useful in Appendix \ref{ap:tuning-Jacobi} to give a motivation for the choice $\alpha = d/2, \beta = 0$ complementary to the intuition developed in Section \ref{sec:Jacobi-polynomial-gossip}, and will allow us to discuss the performance of other choices. 

\begin{proposition}
	\label{prop:jacobi-general-asymptotic}
Let $\tau$ be a probability measure on $[-1,1]$ and $\alpha,\beta\geq-1/2$. Then 
\begin{equation*}
\limsup_{t\to\infty} \frac{\ln\int \pi_t^{(\alpha,\beta)}(\lambda)^2\diff\tau(\lambda)}{\ln t} \leq - \min\left(2\alpha+1, \underline{\dim}_\rightarrow\tau, 2(\alpha-\beta)+\underline{\dim}_\leftarrow\tau\right) \, .
\end{equation*}
\end{proposition}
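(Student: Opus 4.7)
The strategy is to combine pointwise upper bounds on $\pi_t^{(\alpha,\beta)}$ coming from Proposition~\ref{prop:asymptotic-jacobi-in} with the decay estimates of $\tau$ at the two endpoints of $[-1,1]$. After normalizing by $\binom{t+\alpha}{t}\sim t^\alpha/\Gamma(\alpha+1)$ and using the symmetry $P_t^{(\alpha,\beta)}(-\lambda)=(-1)^tP_t^{(\beta,\alpha)}(\lambda)$ to transport Proposition~\ref{prop:asymptotic-jacobi-in} to the left endpoint, I obtain four pointwise estimates for $|\pi_t^{(\alpha,\beta)}(\lambda)|^2$: on the right plateau $[1-c/t^2,1]$, it is bounded by $C$; on the right bulk $[0,1-c/t^2]$, by $Ct^{-2\alpha-1}(1-\lambda)^{-\alpha-1/2}$; on the left plateau $[-1,-1+c/t^2]$, by $Ct^{2(\beta-\alpha)}$ (coming from $\pi_t^{(\alpha,\beta)}(-1)=(-1)^t\binom{t+\beta}{t}/\binom{t+\alpha}{t}$); and on the left bulk $[-1+c/t^2,0]$, by $Ct^{-2\alpha-1}(1+\lambda)^{-\beta-1/2}$. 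Since $\alpha,\beta\ge-1/2$, the two bulk bounds are non-decreasing in $\lambda$ and $-\lambda$ respectively, so Lemma~\ref{lem:domination-measure-increasing-functions} applies directly.

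Next, fix $\epsilon>0$ and put $d_+=\underline{\dim}_{\rightarrow}\tau-\epsilon$, $d_-=\underline{\dim}_{\leftarrow}\tau-\epsilon$. By definition of the lower dimensions, $\tau([1-E,1])\le C_\epsilon E^{d_+/2}$ and $\tau([-1,-1+E])\le C_\epsilon E^{d_-/2}$ for all sufficiently small $E$. Restricting to a fixed neighborhood of $1$ and applying Lemma~\ref{lem:domination-measure-increasing-functions} with dominating measure $(1-\lambda)^{d_+/2-1}\diff\lambda$, the right-bulk contribution is bounded by
\begin{equation*}
Ct^{-2\alpha-1}\int_{c/t^2}^{\eta} u^{d_+/2-\alpha-3/2}\diff u,
\end{equation*}
where $u=1-\lambda$. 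Three sub-cases arise according to the sign of $d_+/2-\alpha-1/2$: the integral is $O(1)$ if $d_+>2\alpha+1$, $O(\log t)$ if $d_+=2\alpha+1$, and $O(t^{2\alpha+1-d_+})$ if $d_+<2\alpha+1$. In all three cases, multiplying by $t^{-2\alpha-1}$ yields a bound of order $t^{-\min(2\alpha+1,d_+)}$ up to a logarithmic factor. The right plateau contributes at most $C\tau([1-c/t^2,1])\le C't^{-d_+}$, which is absorbed in the same bound.

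The left edge is handled identically via the reflection $\tilde\lambda=-\lambda$, which makes $(1-\tilde\lambda)^{-\beta-1/2}$ non-decreasing in $\tilde\lambda$ and reduces the computation to the right-edge argument with $\alpha$ replaced by $\beta$ inside the integral: the left-bulk contribution is at most $t^{-2\alpha-1}$ times a factor $O((\log t)\cdot t^{\max(0,2\beta+1-d_-)})$, i.e.\ $t^{-\min(2\alpha+1,\,2(\alpha-\beta)+d_-)}$ up to logs, and the left plateau contributes $t^{2(\beta-\alpha)}\cdot t^{-d_-}=t^{-(2(\alpha-\beta)+d_-)}$, again absorbed. The trivial middle region $[-1/2,1/2]$ contributes only $O(t^{-2\alpha-1})$. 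Summing the three contributions yields
\begin{equation*}
\int \pi_t^{(\alpha,\beta)}(\lambda)^2\diff\tau(\lambda)\le C\,t^{-m}(\log t)^{O(1)}, \qquad m=\min\bigl(2\alpha+1,\,d_+,\,2(\alpha-\beta)+d_-\bigr).
\end{equation*}
Taking logarithms, dividing by $\log t$ and letting $\epsilon\to 0$ yields the claimed bound. The main obstacle is the bookkeeping of the three sub-cases of the radial integral at each endpoint and checking that the hypothesis $\alpha,\beta\ge-1/2$ is used precisely where the monotonicity of the bulk bounds is needed to invoke Lemma~\ref{lem:domination-measure-increasing-functions}.
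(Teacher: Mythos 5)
Your proposal is correct and follows essentially the same route as the paper: the same pointwise bounds from Proposition~\ref{prop:asymptotic-jacobi-in} (transported to the left edge by the $\alpha\leftrightarrow\beta$ symmetry), the same split at $1-c/t^2\sim 1-\cos(1/t)$, the same use of Lemma~\ref{lem:domination-measure-increasing-functions} against the model measure $(1-\lambda)^{d/2-1}\diff\lambda$, and the same case analysis of the resulting radial integral. The only point stated too quickly is that Lemma~\ref{lem:domination-measure-increasing-functions} does not apply \emph{directly} to the bulk bound truncated at $1-c/t^2$ (it is not non-decreasing on all of $[0,1]$); one must first cap it at its value $t^{2\alpha+1}$ near the endpoint, exactly as the paper does, and the extra term this produces is of the same order $t^{-d_+}$ as the plateau contribution you already absorb.
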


Before proving this proposition, we use it to finish the proof of the theorem. If $\tau = \sigma$ is the spectral measure of the graph, then $\underline{\dim}_\rightarrow\sigma = d_s$. Thus taking $\alpha = d_s/2$, $\beta=0$ in Proposition \ref{prop:jacobi-general-asymptotic}, we get
\begin{equation}
\label{eq:aux-5}
\limsup_{t\to\infty} \frac{\ln\int \pi_t^{(d_s/2,0)}(\lambda)^2\diff\sigma(\lambda)}{\ln t} \leq - \min(d_s + 1, d_s, d_s + \underline{\dim}_\leftarrow\sigma) = -d_s \, ,
\end{equation}
as $\underline{\dim}_\leftarrow\sigma \geq 0$. One can conclude the proof using Lemma \ref{lem:proof-technical-1}.

\smallskip
We now turn to the the proof of Proposition \ref{prop:jacobi-general-asymptotic}. 

\begin{lemma}
	\label{lem:proof-decrease-aux-2}
Let $\tau$ be a probability measure on $[-1,1]$ and $\alpha \geq -1/2$, $\beta >-1$. Then 
\begin{equation*}
\limsup_{t\to\infty} \frac{\ln\int_{[0,1]} P_t^{(\alpha,\beta)}(\lambda)^2\diff\tau(\lambda)}{\ln t} \leq -\min(1,\underline{\dim}_\rightarrow\tau-2\alpha) \, .
\end{equation*}
\end{lemma}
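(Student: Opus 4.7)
The plan is to split the integration domain $[0,1]$ according to the two regimes of the asymptotic bound for Jacobi polynomials given in Proposition~\ref{prop:asymptotic-jacobi-in}, after the change of variables $\lambda = \cos\theta$. Concretely, I would write
\begin{equation*}
\int_{[0,1]} P_t^{(\alpha,\beta)}(\lambda)^2 \, d\tau(\lambda) = \underbrace{\int_{\cos(1/t)}^{1} P_t^{(\alpha,\beta)}(\lambda)^2\, d\tau(\lambda)}_{=:I_1} + \underbrace{\int_{0}^{\cos(1/t)} P_t^{(\alpha,\beta)}(\lambda)^2\, d\tau(\lambda)}_{=:I_2},
\end{equation*}
which matches the split $\theta \in [0,1/t]$ versus $\theta \in [1/t,\pi/2]$.

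For $I_1$, apply the bound $|P_t^{(\alpha,\beta)}(\lambda)| \leq C_2 t^\alpha$ on $[\cos(1/t),1]$. Since $1-\cos(1/t) \leq 1/t^2$ for $t$ large, any $d < \underline{\dim}_\rightarrow\tau$ yields constants such that $\tau([\cos(1/t),1]) \leq \tau([1-1/t^2,1]) \leq C t^{-d}$. Hence $I_1 \lesssim t^{2\alpha - d}$, which is already $\leq t^{-\min(1, d-2\alpha)}$ up to constants.

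For $I_2$, use $|P_t^{(\alpha,\beta)}(\cos\theta)|^2 \leq C_1^2 \theta^{-2\alpha-1} t^{-1}$. Converting back, the integrand is a non-decreasing non-negative function of $\lambda$ on $[0,\cos(1/t)]$ (since $-2\alpha-1 \leq 0$), so I can extend it to a non-decreasing function on $[0,1]$ and then apply Lemma~\ref{lem:domination-measure-increasing-functions}: pick $d < \underline{\dim}_\rightarrow\tau$, so that $\tau([1-E,1]) \leq C \sigma^{(d/2-1,0)}([1-E,1])$ on $[0,2]$, and replace $\tau$ by $C\sigma^{(d/2-1,0)}(d\lambda) = C(1-\lambda)^{d/2-1}d\lambda$. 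After the substitution $\lambda = \cos\theta$, using $1-\cos\theta \asymp \theta^2$ and $\sin\theta \asymp \theta$ near $0$, the dominant integral becomes
\begin{equation*}
I_2 \;\lesssim\; t^{-1}\int_{1/t}^{\pi/2} \theta^{d-2-2\alpha}\, d\theta.
\end{equation*}

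The main technical point is now to discuss the three cases for the exponent $d-2-2\alpha$: if $d < 2\alpha + 1$, the integral diverges at the lower endpoint and scales like $t^{2\alpha+1-d}$, giving $I_2 \lesssim t^{2\alpha - d} = t^{-(d-2\alpha)}$; if $d = 2\alpha+1$, it scales like $t^{-1}\ln t$, which is $\leq t^{-1+\epsilon}$ for any $\epsilon>0$; and if $d > 2\alpha + 1$, the integral converges and $I_2 \lesssim t^{-1}$. Combining with the bound on $I_1$ in each case gives $\int P_t^{(\alpha,\beta)}(\lambda)^2 d\tau \lesssim t^{-\min(1, d-2\alpha)}$ up to arbitrarily small losses in the exponent. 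Taking logarithms, dividing by $\ln t$, and letting first $t\to\infty$ and then $d\uparrow \underline{\dim}_\rightarrow\tau$ yields the claimed $\limsup \leq -\min(1, \underline{\dim}_\rightarrow\tau - 2\alpha)$. The main obstacle is the careful matching of regimes at the transition exponent $d = 2\alpha+1$; this is handled just by noting that we are proving a $\limsup$ inequality and are free to pick any $d$ strictly below $\underline{\dim}_\rightarrow\tau$, so the borderline logarithmic factor is harmless.
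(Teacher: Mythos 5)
Your proposal follows essentially the same route as the paper's proof: the same split of $[0,1]$ at $\cos(1/t)$ matching the two regimes of Proposition~\ref{prop:asymptotic-jacobi-in}, the same domination of $\tau$ by $C(1-\lambda)^{d/2-1}\diff\lambda$ for $d<\underline{\dim}_\rightarrow\tau$ via Lemma~\ref{lem:domination-measure-increasing-functions} (including the need to cap the monotone test function near $\lambda=1$, which the paper does explicitly and you gloss over slightly, though the resulting extra term is of order $t^{2\alpha-d}$ and harmless), and the same final estimate, which you compute in the $\theta$ variable with a three-case discussion where the paper stays in $\lambda$ and records the answer as $\max(0,-2\beta-2)$. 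The argument is correct.
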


\begin{proof}
Let $d < \underline{\dim}_\rightarrow\tau$. As $\underline{\dim}_\rightarrow\tau = 2 \liminf_{E\to 0} \ln\tau([1-E,1])/\ln E$, there exists constants $C_1, C_2$ such that for all $E\in[0,2]$, 
\begin{equation}
\label{eq:aux-13}
\tau([1-E,1]) \leq C_1E^{d/2} = C_2\sigma^{(d/2-1,0)}([1-E,1])
\end{equation}
where $\sigma^{(d/2-1,0)}(\diff\lambda) = (1-\lambda)^{d/2-1}\diff\lambda$. 

For the proof of this result, we use the asymptotic bound on the Jacobi polynomials given by Proposition \ref{prop:asymptotic-jacobi-in}, thus we divide the integral 
\begin{equation*}
\int_{[0,1]} P_t^{(\alpha,\beta)}(\lambda)^2\diff\tau(\lambda) = \int_{[\cos 1/t,1]} P_t^{(\alpha,\beta)}(\lambda)^2\diff\tau(\lambda) + \int_{[0,\cos 1/t[} P_t^{(\alpha,\beta)}(\lambda)^2\diff\tau(\lambda) \, ,
\end{equation*}
and treat the two terms separately.
\begin{enumerate}[label=(\alph*)]
	\item \begin{align*}
		\int_{[\cos 1/t,1]} P_t^{(\alpha,\beta)}(\lambda)^2\diff\tau(\lambda) &\leq C_3 t^{2\alpha}\tau\left(\left[\cos\frac{1}{t},1\right]\right) \leq C_1C_3t^{2\alpha}\left(1-\cos\frac{1}{t}\right)^{d/2} \\&\leq C_4t^{2\alpha-d}
	\end{align*}
	for some constants $C_3, C_4$. Thus
	\begin{equation}
	\label{eq:aux-3}
	\limsup_{t\to\infty} \frac{\ln\int_{[\cos 1/t,1]} P_t^{(\alpha,\beta)}(\lambda)^2\diff\tau(\lambda)}{\ln t} \leq 2\alpha-d \, .
	\end{equation}
	\item \begin{equation}
	\label{eq:aux-14}
	\int_{[0,\cos 1/t[} P_t^{(\alpha,\beta)}(\lambda)^2\diff\tau(\lambda) \leq C_5t^{-1}\int_{[0,\cos(1/t)[} (\arccos \lambda)^{-2\alpha-1}\diff\tau(\lambda)
	\end{equation}
	We then use jointly Eq.~\eqref{eq:aux-13} and Lemma \ref{lem:domination-measure-increasing-functions} with the function 
	\begin{equation*}
	f(\lambda) = (\arccos\lambda)^{-2\alpha-1}\bfone_{\{\lambda < \cos1/t\}}+t^{2\alpha+1}\bfone_{\{\lambda \geq \cos1/t\}} \, .
	\end{equation*}
	Note that $f$ is non-decreasing as $\alpha \geq 1/2$. We get
	\begin{align*}
	\int_{[0,\cos(1/t)[} &(\arccos \lambda)^{-2\alpha-1}\diff\tau(\lambda) \\&\leq C_2\int_{[0,\cos(1/t)[} (\arccos \lambda)^{-2\alpha-1}(1-\lambda)^{d/2-1}\diff\lambda + C_1t^{2\alpha+1} \left(1-\cos \frac{1}{t}\right)^{d/2} 
	\end{align*}
	Now using the simple inequality $\arccos \lambda \geq \sqrt{2}\sqrt{1-\lambda}$, we get
	\begin{equation}
	\label{eq:aux-15}
	\begin{aligned}
	\int_{[0,\cos(1/t)[} &(\arccos \lambda)^{-2\alpha-1}\diff\tau(\lambda) \\&\leq C_5\int_{[0,\cos(1/t)[}(1-\lambda)^{-\alpha+d/2-3/2}\diff\lambda + C_6t^{2\alpha+1-d} 
	\end{aligned}
	\end{equation}
	for some constants $C_5, C_6$. 
	Now if $\beta$ is a real number, 
	\begin{equation}
	\label{eq:aux-1}
	\lim_{t\to\infty} \frac{\ln \int_{0}^{\cos 1/t}(1-\lambda)^\beta\diff\lambda}{\ln t} = \max(0,-2\beta-2) \, .
	\end{equation}
	Indeed, if $\beta \neq -1$, 
	\begin{align*}
	\int_{0}^{\cos 1/t}(1-\lambda)^\beta\diff\lambda &= \left[-\frac{(1-\lambda)^{\beta+1}}{\beta+1}\right]_0^{\cos 1/t} = \frac{1}{\beta+1}\left[1-\left(1-\cos\frac{1}{t}\right)^{\beta+1}\right] \\
	&\underset{t\to\infty}{=} \frac{1}{\beta+1}\left[1-t^{-2\beta-2}+o(t^{-2\beta-2})\right] \sim C(\beta) t^{\max(0,-2\beta-2)} \, .
	\end{align*}
	for some constant $C(\beta)$ depending on $\beta$. This proves the statement \eqref{eq:aux-1} for $\beta \neq -1$. The result for $\beta = -1$ follows easily by noting that both terms in \eqref{eq:aux-1} are decreasing in $\beta$.
	
	Merging finally Eqs. \eqref{eq:aux-14}, \eqref{eq:aux-15} and \eqref{eq:aux-1}, we get 
	\begin{equation}
	\label{eq:aux-4}
	\begin{aligned}
	\limsup_{t\to\infty} \frac{\ln\int_{[0,\cos 1/t[} P_t^{(\alpha,\beta)}(\lambda)^2\diff\tau(\lambda)}{\ln t}&\leq -1+\max(0,2\alpha+1-d) \\
	&= \max(-1,2\alpha-d) = -\min(1,d-2\alpha) \, .
	\end{aligned}
	\end{equation}
\end{enumerate}
Finally
\begin{align*}
\limsup_{t\to\infty}& \frac{\ln\int_{[0,1]} P_t^{(\alpha,\beta)}(\lambda)^2\diff\tau(\lambda)}{\ln t} \\
&\leq \limsup_{t\to\infty} \frac{2\max\left(\int_{[\cos 1/t,1]} P_t^{(\alpha,\beta)}(\lambda)^2\diff\tau(\lambda),\int_{[0,\cos 1/t[} P_t^{(\alpha,\beta)}(\lambda)^2\diff\tau(\lambda)\right)}{\ln t} \\
&\leq \max\left(\limsup_{t\to\infty} \frac{\ln\int_{[\cos 1/t,1]} P_t^{(\alpha,\beta)}(\lambda)^2\diff\tau(\lambda)}{\ln t},\limsup_{t\to\infty} \frac{\ln\int_{[0,\cos 1/t[} P_t^{(\alpha,\beta)}(\lambda)^2\diff\tau(\lambda)}{\ln t}\right) \\
&\stackrel{\eqref{eq:aux-3},\eqref{eq:aux-4}}{\leq} \max(2\alpha-d,-\min(1,d-2\alpha)) = -\min(1,d-2\alpha) \, .
\end{align*}
As this is true for all $d < \underline{\dim}_\rightarrow\tau$, the lemma is proved.
\end{proof}

\begin{proof}[Proof of Proposition \ref{prop:jacobi-general-asymptotic}]
If we denote $\tilde{\tau}$ the symmetric measure of $\tau$ w.r.t.~$0$ (i.e. the image measure of $\tau$ by the map $\lambda \mapsto -\lambda$), we have 
\begin{equation*}
\int_{[-1,0]} P_t^{(\alpha,\beta)}(\lambda)^2\diff\tau(\lambda) = \int_{[0,1]} P_t^{(\alpha,\beta)}(-\lambda)^2\diff\tilde{\tau}(\lambda) = \int_{[0,1]} P_t^{(\beta,\alpha)}(\lambda)^2\diff\tilde{\tau}(\lambda)
\end{equation*}
Thus according to Lemma \ref{lem:proof-decrease-aux-2},
\begin{equation}
\label{eq:proof-decrease-aux-3}
\limsup_{t\to\infty} \frac{\ln\int_{[-1,0]} P_t^{(\alpha,\beta)}(\lambda)^2\diff\tau(\lambda)}{\ln t} \leq -\min(1,\underline{\dim}_\rightarrow\tilde{\tau}-2\beta) = -\min(1,\underline{\dim}_\leftarrow\tau-2\beta) \, .
\end{equation}
Finally, using that  $\pi_t^{(\alpha,\beta)} =  P_t^{(\alpha,\beta)}/{t+\alpha \choose t}$, 
\begin{align*}
\limsup_{t\to\infty} \frac{\ln\int_{[-1,1]} \pi_t^{(\alpha,\beta)}(\lambda)^2\diff\tau(\lambda)}{\ln t} &\leq \limsup_{t\to\infty} \frac{\ln\int_{[-1,1]} P_t^{(\alpha,\beta)}(\lambda)^2\diff\tau(\lambda)}{\ln t} - 2\limsup_{t\to\infty} \frac{\ln{t+\alpha \choose t}}{\ln t} \\
&\hspace{-3cm}\leq \limsup_{t\to\infty} \frac{\ln\left( 2\max\left(\int_{[-1,0]} P_t^{(\alpha,\beta)}(\lambda)^2\diff\tau(\lambda),\int_{[0,1]} P_t^{(\alpha,\beta)}(\lambda)^2\diff\tau(\lambda)\right)\right)}{\ln t} - 2\alpha \\
&\hspace{-3cm}\leq \max\left(\limsup_{t\to\infty} \frac{\ln\int_{[-1,0]} P_t^{(\alpha,\beta)}(\lambda)^2\diff\tau(\lambda)}{\ln t},\limsup_{t\to\infty} \frac{\ln\int_{[0,1]} P_t^{(\alpha,\beta)}(\lambda)^2\diff\tau(\lambda)}{\ln t}\right) -2\alpha \\
&\hspace{-4cm}\stackrel{\rm(\eqref{eq:proof-decrease-aux-3}, Lemma\,\ref{lem:proof-decrease-aux-2})}{\leq} \max\left(-\min(1,\underline{\dim}_\leftarrow\tau-2\beta),-\min(1,\underline{\dim}_\rightarrow\tau-2\alpha)\right)-2\alpha \\
&\hspace{-3cm}\leq -\min(1,\underline{\dim}_\leftarrow\tau-2\beta,\underline{\dim}_\rightarrow\tau-2\alpha) - 2\alpha \\
&\hspace{-3cm}= -\min(2\alpha+1, 2(\alpha-\beta)+\underline{\dim}_\leftarrow\tau,\underline{\dim}_\rightarrow\tau) \, .
\end{align*}
\end{proof}

\subsection{Tuning of the parameters $\alpha$ and $\beta$}
\label{ap:tuning-Jacobi}

In this section, we discuss the performance of the polynomial gossip iteration $x^t = \pi_t^{(\alpha,\beta)}(W)\xi$ using the tools developed in the proof above. The Jacobi polynomial iteration introduced in Section \ref{sec:Jacobi-polynomial-iteration-subsection} corresponds to the specific choice $\alpha = d_s/2, \beta=0$, where $d_s$ is the spectral measure of the graph. Thanks to the tools developed in the proof above, we can explore the effect of changing $\alpha$ and $\beta$ analytically. Inspired by \eqref{eq:aux-5}, we define optimality as follows

\begin{definition}
	Let $\alpha, \beta > -1, d_\leftarrow, d_\rightarrow \geq 0$. We say that $(\alpha,\beta)$ is optimal for $(d_\leftarrow, d_\rightarrow)$ if for any spectral measure $\sigma$ such that $\underline{\dim}_\rightarrow \sigma = d_\rightarrow$ and $\underline{\dim}_\leftarrow \sigma = d_\leftarrow$,
\begin{equation*}
\limsup_{t\to\infty} \frac{\ln\int \pi_t^{(\alpha,\beta)}(\lambda)^2\diff\sigma(\lambda)}{\ln t} \leq -d_\rightarrow \, .
\end{equation*}
\end{definition}
The following theorem is an analogue of the optimality theorem \ref{thm:rate-decrease-Jacobi}\eqref{enu:jacobi} in the general case.

\begin{theorem}
	\label{thm:Jacobi-decrease-general}
Consider a graph $G$, a gossip matrix $W$ and a vertex $v$. Denote $\sigma = \sigma(G,W,v)$ the spectral measure of the graph. Let $\xi_v, v\in V$ be i.i.d.~samples from a distribution of mean $\mu$. 

Let $\alpha, \beta>-1$ and define the polynomial iteration $x^t = \pi_t^{(\alpha,\beta)}(W)\xi$. If $(\alpha,\beta)$ is optimal for $(\underline{\dim}_\leftarrow\sigma,\underline{\dim}_\rightarrow\sigma)$, then 
\begin{equation}
\label{eq:aux-7}
\limsup_{t\to\infty} \frac{\ln\E\left[(x^t_v-\mu)^2\right]}{\ln t} \leq -\underline{\dim}_\rightarrow\sigma \, .
\end{equation}
\end{theorem}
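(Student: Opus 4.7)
The statement is essentially a direct bookkeeping corollary of two ingredients already in hand: the MSE-to-spectral-integral identity from Lemma \ref{lem:proof-technical-1}, and the very definition of optimality of the pair $(\alpha,\beta)$. So my plan is to carry out one rewriting and one limit comparison, and no genuine new estimate is needed.

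First, I would apply Lemma \ref{lem:proof-technical-1} with $P_t = \pi_t^{(\alpha,\beta)}$ to obtain
\begin{equation*}
\E\bigl[(x^t_v-\mu)^2\bigr] \;=\; (\var \nu)\int \pi_t^{(\alpha,\beta)}(\lambda)^2 \diff\sigma(\lambda) \, ,
\end{equation*}
where $\sigma = \sigma(G,W,v)$ is the spectral measure at the vertex $v$. Taking logarithms and dividing by $\ln t$ gives
\begin{equation*}
\frac{\ln \E[(x^t_v-\mu)^2]}{\ln t} \;=\; \frac{\ln \var\nu}{\ln t} + \frac{\ln \int \pi_t^{(\alpha,\beta)}(\lambda)^2\diff\sigma(\lambda)}{\ln t} \, .
\end{equation*}
The first term on the right is a constant over $\ln t$ and so tends to $0$ as $t\to\infty$ (note $\var\nu>0$ is implicit; the degenerate case $\var\nu=0$ is trivial since then the MSE vanishes identically). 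Passing to the $\limsup$ thus reduces \eqref{eq:aux-7} to bounding the logarithmic decay of $\int \pi_t^{(\alpha,\beta)}(\lambda)^2 \diff\sigma(\lambda)$.

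Second, the assumption that $(\alpha,\beta)$ is optimal for $(\underline{\dim}_\leftarrow\sigma,\underline{\dim}_\rightarrow\sigma)$ says, by definition, precisely that
\begin{equation*}
\limsup_{t\to\infty} \frac{\ln \int \pi_t^{(\alpha,\beta)}(\lambda)^2\diff\sigma(\lambda)}{\ln t} \;\leq\; -\underline{\dim}_\rightarrow\sigma \, .
\end{equation*}
Combining the two displays above yields \eqref{eq:aux-7}, which is what we wanted.

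There is no real obstacle here because all of the analytical work has been packaged into Proposition \ref{prop:jacobi-general-asymptotic} (which explains when $(\alpha,\beta)$ is optimal, namely whenever $2\alpha+1\geq \underline{\dim}_\rightarrow\sigma$ and $2(\alpha-\beta)+\underline{\dim}_\leftarrow\sigma \geq \underline{\dim}_\rightarrow\sigma$) and into Lemma \ref{lem:proof-technical-1}. The only subtlety worth flagging is that the constant $\var\nu$ must be finite and the initial observations must have well-defined second moments, which is implicit in writing an MSE at all; otherwise the argument is purely a translation between the probabilistic quantity $\E[(x^t_v-\mu)^2]$ and the spectral-theoretic quantity $\int \pi_t^{(\alpha,\beta)}(\lambda)^2 \diff\sigma(\lambda)$ whose asymptotic was already controlled in the previous subsection.
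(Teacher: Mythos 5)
Your proof is correct and is essentially the paper's own argument: the theorem is treated there as an immediate consequence of Lemma \ref{lem:proof-technical-1} (the MSE-to-spectral-integral identity) combined with the definition of optimality applied to $\sigma$ itself, which is exactly the two-step translation you carry out. Your handling of the constant $\var\nu$ and the degenerate case is a harmless extra precision.
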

In the section above, we prove that $(d_\rightarrow/2,0)$ is optimal for $(d_\leftarrow,d_\rightarrow)$ (for any $d_\leftarrow, d_\rightarrow \geq -1/2$). We now explore other choices. According to Proposition \ref{prop:jacobi-general-asymptotic}, to prove that $(\alpha,\beta)$ is optimal for $(d_\leftarrow,d_\rightarrow)$, it is sufficient to prove that 
\begin{align}
\min(2\alpha+1, d_\rightarrow, 2(\alpha-\beta)+d_\leftarrow) = d_\rightarrow 
\qquad &\Leftrightarrow \qquad \begin{cases}
2\alpha+1 \geq d_\rightarrow \\
2(\alpha-\beta)+d_\leftarrow \geq d_\rightarrow
\end{cases} \nonumber \\
&\Leftrightarrow \qquad \begin{cases}
\alpha \geq \frac{1}{2}(d_\rightarrow-1)\\
\beta \leq \alpha + \frac{d_\leftarrow-d_\rightarrow}{2} 
\end{cases} \, . \label{eq:aux-6}
\end{align}
This gives a wide range of optimal parameters. For instance, the parameter $\alpha$ can be chosen arbitrarily large. In Figures {\scshape\ref{fig:tuning-1-parameters}} and {\scshape\ref{fig:tuning-2-parameters}}, the shaded regions corresponds to region for $(\alpha,\beta)$ defined by \eqref{eq:aux-6} with $(d_\leftarrow,d_\rightarrow)= (2,2)$. 

\begin{figure}
	\begin{subfigure}{0.49\linewidth}
		\includegraphics[width=\linewidth, trim = 0 0 0 0]{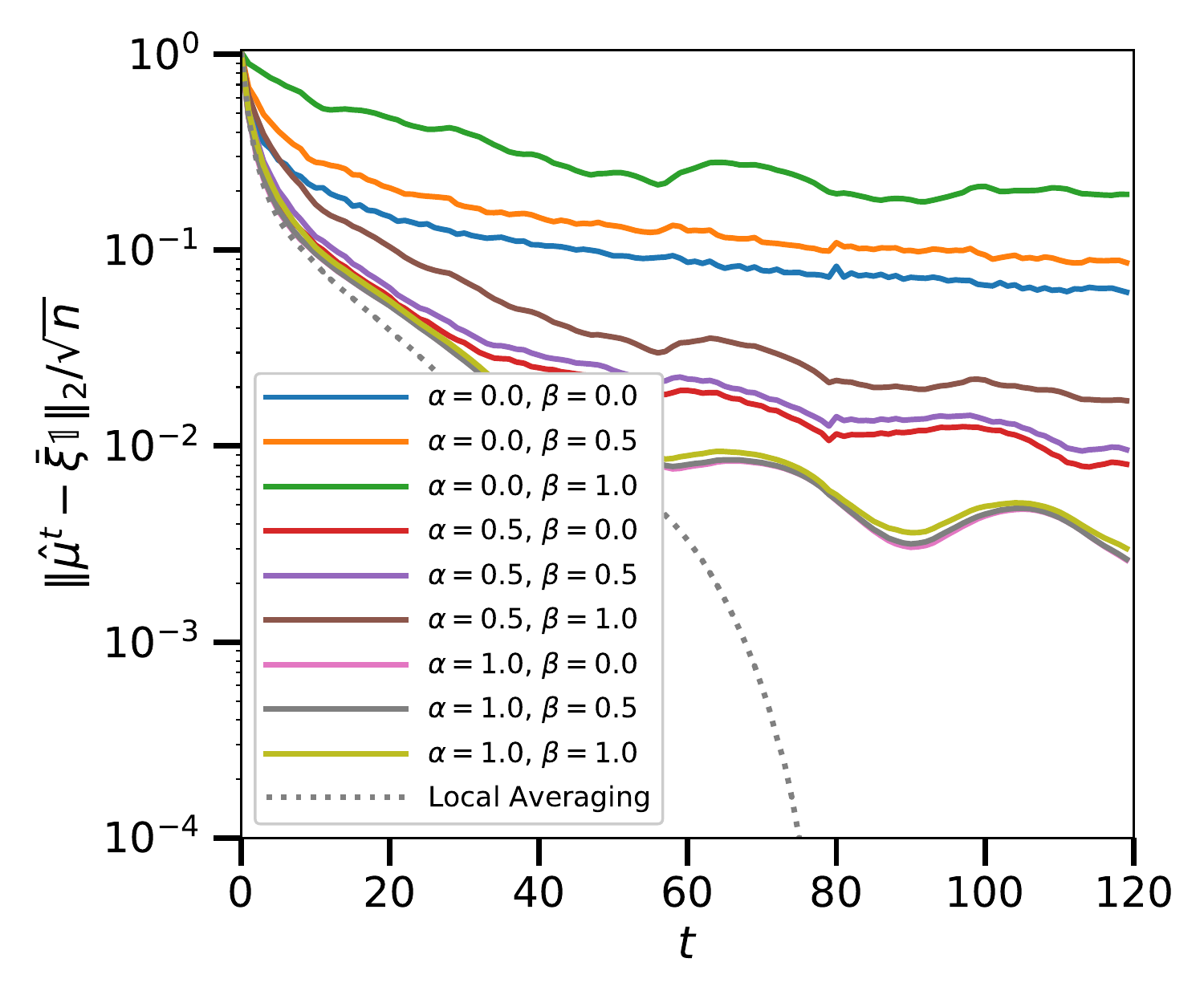}
		\caption{Performance curves}
		\label{fig:tuning-1-curves}
	\end{subfigure}
	\begin{subfigure}{0.49\linewidth}
		\includegraphics[width=\linewidth, trim = 0 0 0 0]{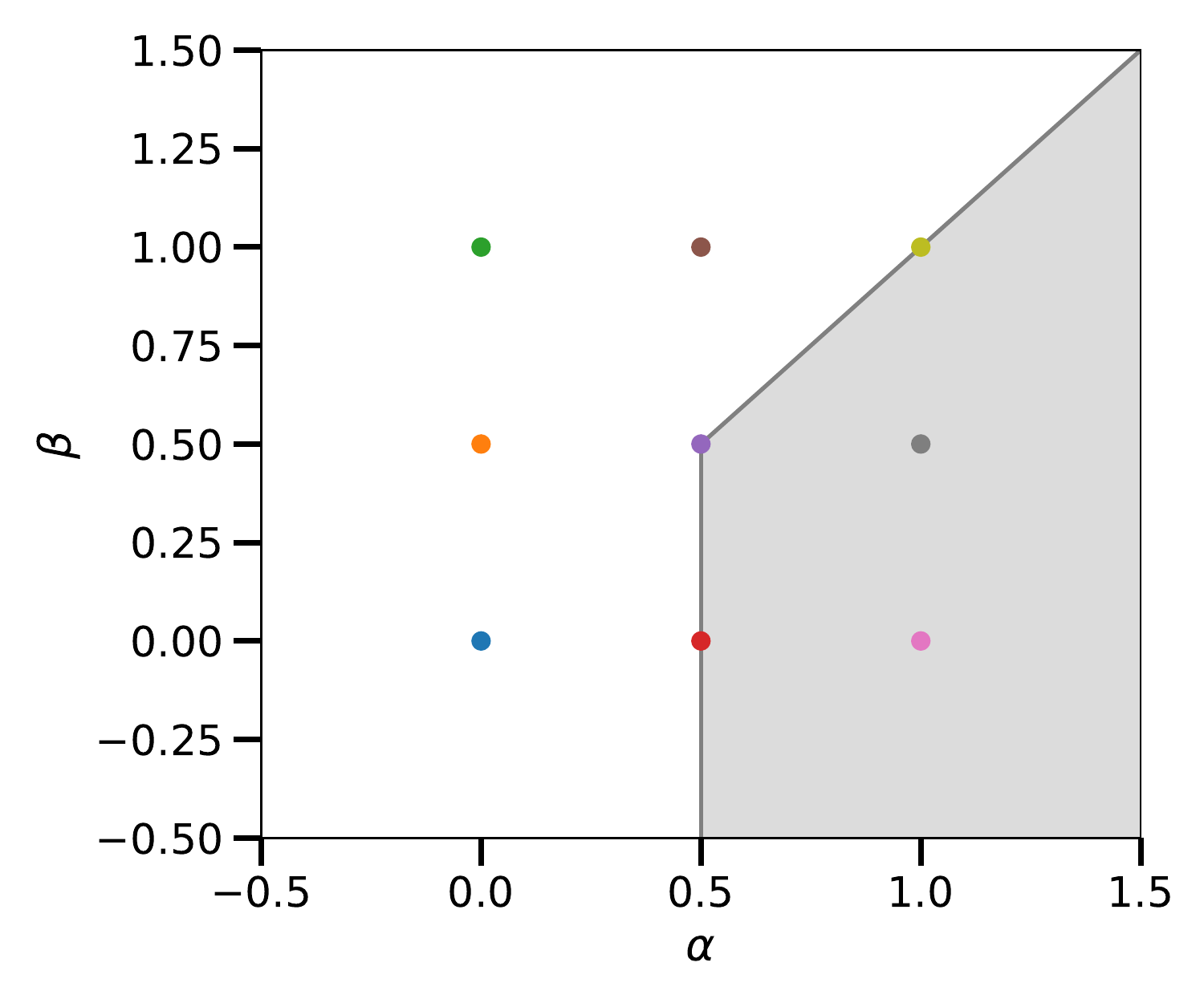}
		\caption{Parameter space}
		\label{fig:tuning-1-parameters}
	\end{subfigure}
	\caption{Simulations of polynomial iterations using Jacobi polynomials with different parameters $(\alpha,\beta)$: frontier tightness.}
	\label{fig:tuning-1} 
	\vspace{1cm}
	
	\begin{subfigure}{0.49\linewidth}
		\includegraphics[width=\linewidth, trim = 0 0 0 0]{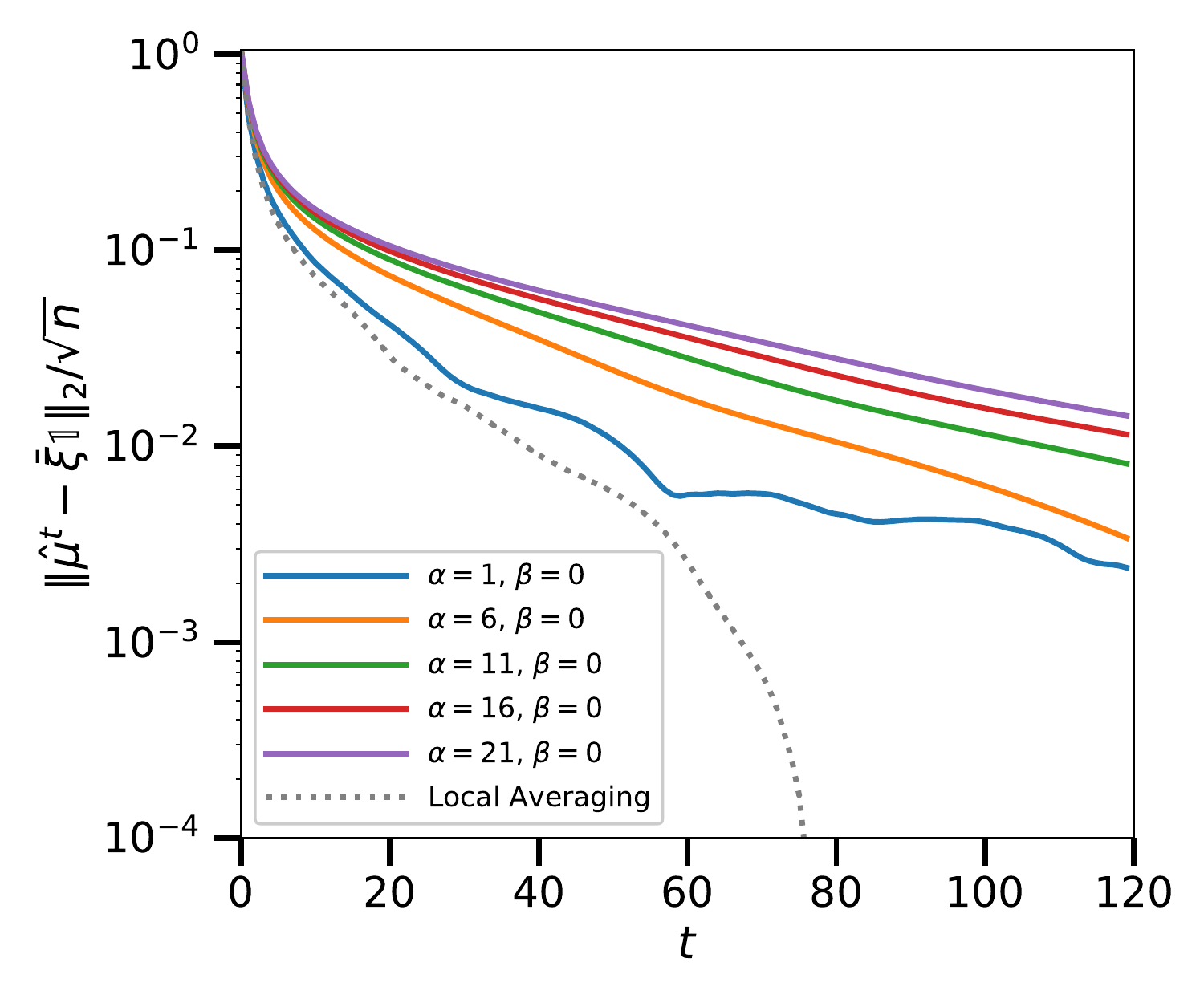}
		\caption{Performance curves}
		\label{fig:tuning-2-curves}
	\end{subfigure}
	\begin{subfigure}{0.49\linewidth}
		\includegraphics[width=\linewidth, trim = 0 0 0 0]{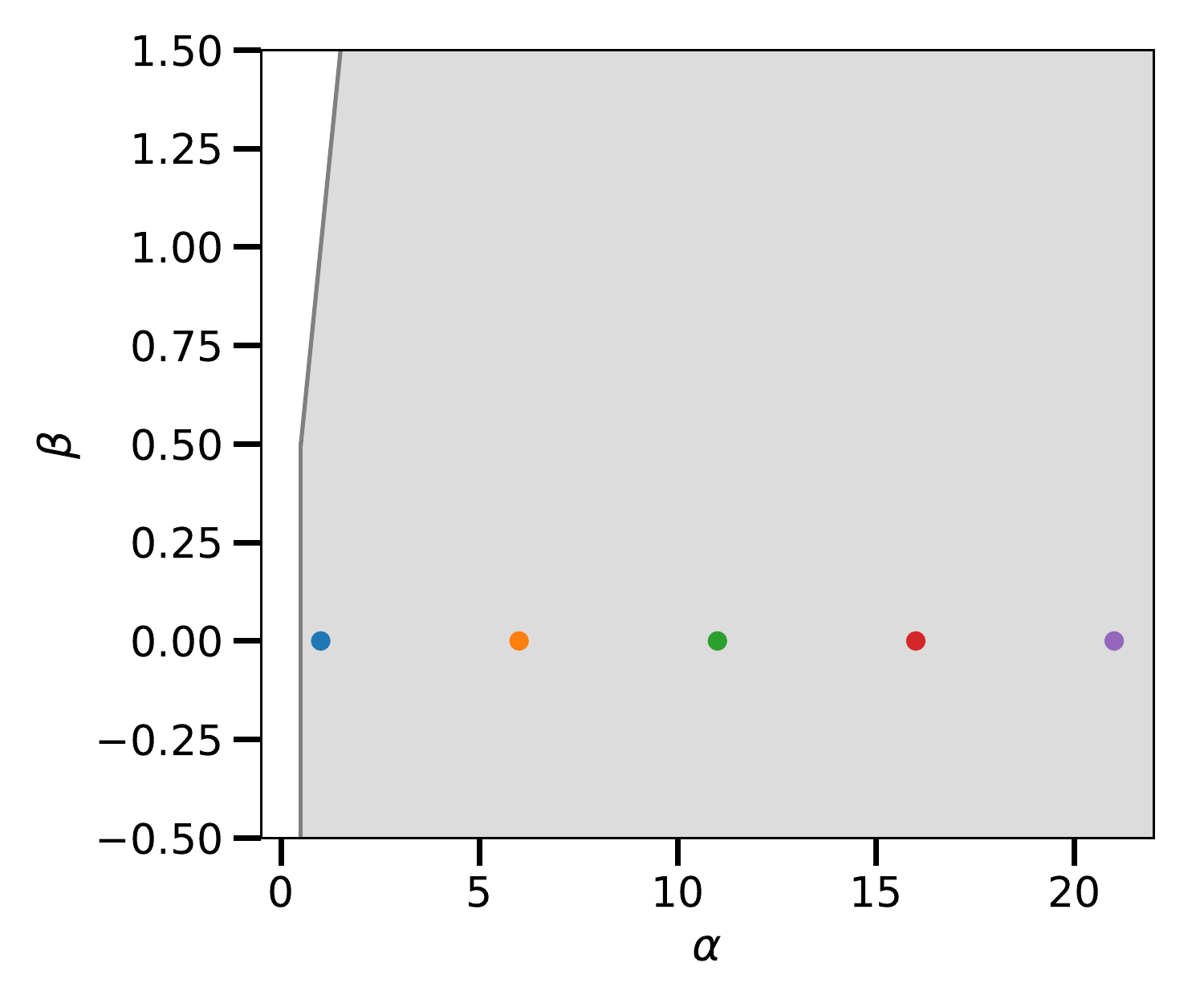}
		\caption{Parameter space}
		\label{fig:tuning-2-parameters}
	\end{subfigure}
	\caption{Simulations of polynomial iterations using Jacobi polynomials with different parameters $(\alpha,\beta)$: large $\alpha$ asymptotic.}
	\label{fig:tuning-2} 
\end{figure}

Note however that we have only proved that \eqref{eq:aux-6} are \emph{sufficient} conditions for the optimality Theorem \ref{thm:Jacobi-decrease-general} to hold. To explore the tightness of our condition, we present in Figure \ref{fig:tuning-1} the results of simulations on the 2D grid. The setting is the same as in Section \ref{sec:simulations} (see also Appendix \ref{ap:details-simulations} for details).  Note that for the 2D infinite grid, $d_\leftarrow = d_\rightarrow= 2$ (see Proposition \ref{prop:spectral-dim-Z^d} and the symmetry of the spectrum of $\Z^d$ that follows from \cite[Eq.(7.4)]{woess2000random}). The curves in Figure {\scshape\ref{fig:tuning-1-curves}} closest to the local averaging are those satisfying the condition \eqref{eq:aux-6}, thus our condition seems tight.

Finally, note that the result \eqref{eq:aux-7} of Theorem \ref{thm:Jacobi-decrease-general} gives the rate of the power decay of the MSE, but neglects constants and sub-polynomial factors. These factors depend on $(\alpha,\beta)$ and can be significant for extreme values of $(\alpha,\beta)$. For instance, in Figure \ref{fig:tuning-2}, we run simulations in the same setting as before, but for choices of parameters deeper in the optimality zone \eqref{eq:aux-6}. The performance worses as $\alpha$ gets bigger. So contrarily to what is suggested by \eqref{eq:aux-6} and Theorem \ref{thm:rate-decrease-Jacobi}, taking large values for $\alpha$ is a bad idea in practice. This can also be hinted at by the limit \cite[Eq.~(18.6.2)]{olver2010nist}
\begin{equation*}
\lim_{\alpha\to\infty} \pi_t^{(\alpha,\beta)}(\lambda) = \left(\frac{1+\lambda}{2}\right)^t \, .
\end{equation*}
This means that, as $\alpha \to \infty$, the polynomial gossip $x^t = \pi_t^{(\alpha,\beta)}(W)\xi$ converges to the simple gossip $x^t = \tilde{W}^t\xi$ with the gossip matrix $\tilde{W} = (I+W)/2$. We know from Theorem \ref{thm:rate-decrease-Jacobi}\eqref{enu:simple-gossip} that simple gossip is suboptimal. 

Overall, theory and practice suggest that the choice $\alpha = \underline{\dim}_\rightarrow\sigma/2$, $\beta=0$ that we make in Section \ref{sec:Jacobi-polynomial-iteration-subsection} is relevant. 

\section{Proof of Proposition \ref{prop:lower-bound}}
\label{ap:proof-lower-bound}

Note that the intuition lying behind the proposition is very simple: the unbiased estimator $x^t_v$ are linear combination of observations corresponding to vertices in the ball $B_v(t)$, thus it must have variance greater than $\var \nu/\vert B_v(t)\vert \approx \var \nu/t^{d_h}$. 

A more rigorous argument goes as follows: using that $W$ is a gossip matrix, it is easy to show by induction that for all $s \geq 0$ and $v, w \in V$, if $(W^s)_{vw} > 0$, then there exists a path of length $s$ linking $v$ to $w$ in $G$. As $\deg P_t \leq t$, this implies that $P_t(W)e_v$ has at most $\vert B_v(t)\vert $ non-zero entries. Furthermore, the entries of $P_t(W)e_v$ sum to $1$ because $W\bfone = \bfone$ and $P_t(1) = 1$. Thus, using the Cauchy-Schwarz inequality,
\begin{align*}
1 &= \left( \sum_{w \in V} (P_t(W)e_v)_w \right)^2 = \left(\sum_{w \in V} (P_t(W)e_v)_w \bfone_{\{(P_t(W)e_v)_w > 0\}}\right)^2 \\
&\leq \left\Vert P_t(W)e_v \right\Vert_{\ell^2(V)}^2 \sum_{w \in V} \bfone_{\{(P_t(W)e_v)_w > 0\}} \leq \left\Vert P_t(W)e_v\right\Vert_{\ell^2(V)}^2 \vert B_v(t)\vert  \\
&\hspace{-0.6cm}\stackrel{\text{(Lemma \ref{lem:proof-technical-1})}}{=} \E[(x^t_v-\mu)^2]|B_v(t)| \, .
\end{align*} 
Thus 
\begin{equation*}
\liminf_{t\to\infty} \frac{\ln \E[(x^t_v-\mu)^2]}{\ln t} \geq \liminf_{t\to\infty}- \frac{\ln |B_v(t)|}{\ln t}= -d_h \, .
\end{equation*}

\section{Proof of Theorem \ref{thm:asymptotic-rate-with-spectral-gap}}

In this section, we use the notation of Appendix \ref{ap:recurrence-relation-rescaled-jacobi}. As $x^t = \pi_t^{(d/2,0,\gamma)}(W)\xi$, we have 
\begin{equation}
\label{eq:aux-gap-0}
\Vert x^t-\bar{\xi}\bfone \Vert_2^2 = \sum_{i=2}^{n} \langle \xi, u^i \rangle^2 \pi_t^{(d/2,0,\gamma)}(\lambda_i)^2 \leq \Vert \xi - \bar{\xi}\bfone\Vert_2^2 \left(\sup_{\lambda \in [-1, 1-\gamma]} \vert\pi_t^{(d/2,0,\gamma)}(\lambda)\vert\right)^2 \, ,
\end{equation}
where $\lambda_2, \dots, \lambda_n$ are the eigenvalues of $W$ different from $1$, that lie in $[-1, 1-\gamma]$ by definition of~$\gamma$, and $u^2, \dots, u^n$ are the corresponding normalized eigenvectors.
\begin{align}
\sup_{\lambda \in [-1, 1-\gamma]} \vert\pi_t^{(d/2,0,\gamma)}(\lambda)\vert &\leq \frac{1}{\vert P_t^{(d/2,0,\gamma)}(1)\vert}\sup_{\lambda \in [-1, 1-\gamma]}\vert P_t^{(d/2,0,\gamma)}(\lambda)\vert \nonumber\\
&= \frac{1}{\vert\pi_t^{(d/2,0)}(\varphi_\gamma^{-1}(1))\vert}\sup_{\lambda \in \varphi_\gamma^{-1}([-1, 1-\gamma])}\vert\pi_t^{(d/2,0)}(\lambda)\vert \nonumber\\
&= \frac{1}{\left\vert\pi_t^{(d/2,0)}\left(\frac{1+\gamma/2}{1-\gamma/2}\right)\right\vert}\sup_{\lambda \in [-1, 1]}\vert\pi_t^{(d/2,0)}(\lambda)\vert \nonumber\\ 
&= \frac{1}{\left\vert P_t^{(d/2,0)}\left(\frac{1+\gamma/2}{1-\gamma/2}\right)\right\vert}\sup_{\lambda \in [-1, 1]}\vert P_t^{(d/2,0)}(\lambda)\vert \label{eq:aux-gap-1}
\end{align}
where $P_t^{(\alpha,\beta)}$ is the Jacobi polynomial, see Appendix \ref{ap:Jacobi}.

By Proposition \ref{prop:max-jacobi}, 
\begin{equation}
\label{eq:aux-gap-2}
\sup_{\lambda \in [-1, 1]}\vert P_t^{(d/2,0)}(\lambda)\vert = {t+d/2\choose t} \underset{t\to\infty}{\sim} t^{d/2} \, , 
\end{equation}
an by Proposition \ref{prop:asymptotic-jacobi-out} applied in $x = \frac{1+\gamma/2}{1-\gamma/2}$, there exists a positive constant $c$ such that,
\begin{align}
\label{eq:aux-gap-3}
P_t^{(d/2,0)}\left(\frac{1+\gamma/2}{1-\gamma/2}\right) \underset{t\to\infty}{\sim} ct^{-1/2}\left(\frac{(1+\sqrt{\gamma/2})^2}{1-\gamma/2}\right)^t \, .
\end{align}
Combining \eqref{eq:aux-gap-1}, \eqref{eq:aux-gap-2} and \eqref{eq:aux-gap-3}, we get that there exists a constant $C$ such that 
\begin{equation*}
\sup_{\lambda \in [-1, 1-\gamma]} \vert\pi_t^{(d/2,0,\gamma)}(\lambda)\vert \leq Ct^{(d+1)/2}\left(\frac{1-\gamma/2}{(1+\sqrt{\gamma/2})^2}\right)^t \, , 
\end{equation*}
and we conclude using \eqref{eq:aux-gap-0}.

\section{Proof of Proposition \ref{prop:message-passing_trees}}
\label{ap:proof-message-passing-trees}

Let $t \geq 0$ and $v, w \in V$ be two vertices linked by an edge in $G$. Define $B_{vw}(t)$ as the set of vertices $u$ in $B_w(t)$ such that all paths in the tree $G$ going from $u$ to $w$ pass though $v$.
\begin{center}
	\includegraphics[width=0.7\textwidth]{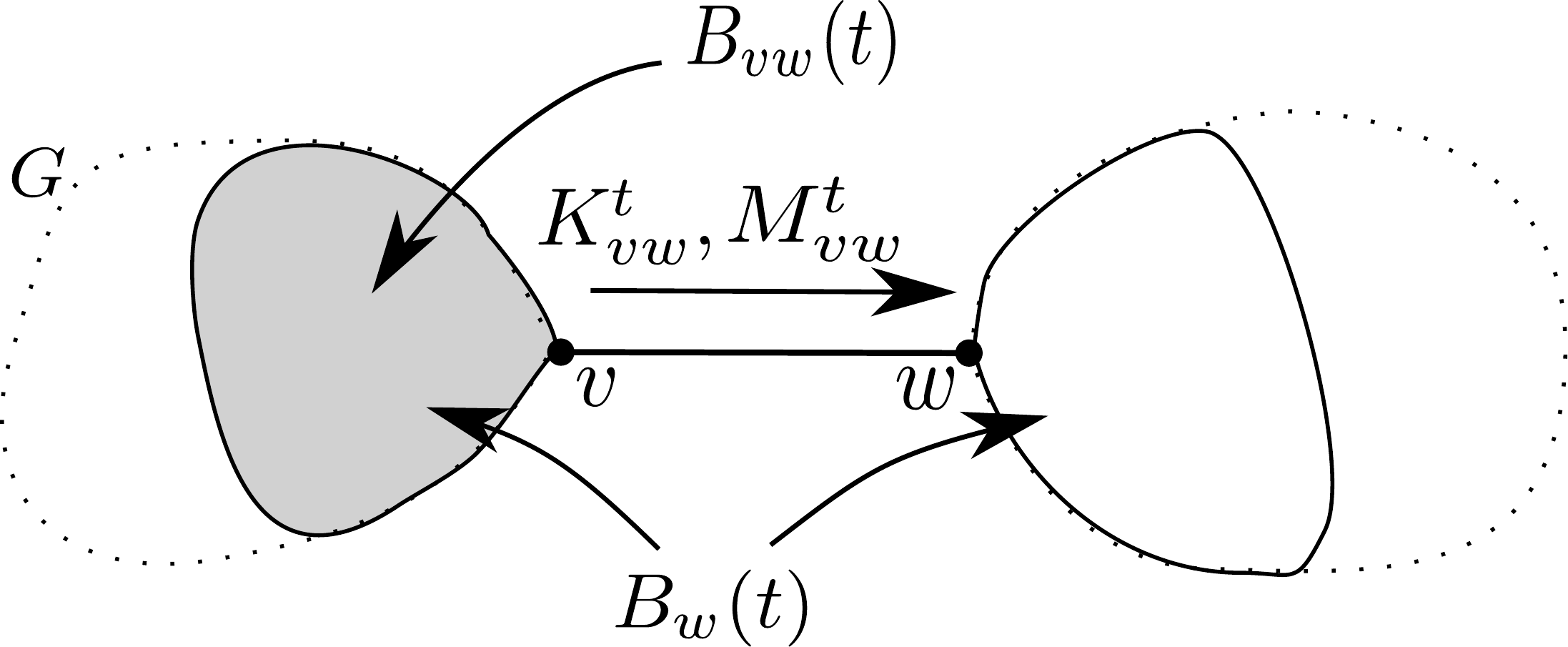}
\end{center}

\begin{lemma}
	\label{lem:MP_technical_lemma}
	For all $t \geq 0$, for all $v,w \in V$ linked by an edge in $G$, 
	\begin{equation*}
	K_{vw}^t = \vert  B_{vw}(t) \vert  \, , \qquad \textrm{and \quad if }t \geq 1,\quad M_{vw}^t = \frac{1}{\vert  B_{vw}(t) \vert } \sum_{u \in B_{vw}(t)} \xi_u \, . 
	\end{equation*}
\end{lemma}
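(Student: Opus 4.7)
The plan is to prove both identities simultaneously by induction on $t \geq 0$, with the geometric backbone being a disjoint decomposition of the sets $B_{vw}(t+1)$ into a point and a union of smaller such sets, on which the message-passing recursions \eqref{eq:MP_1} match termwise.

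The geometric step, which is where the tree assumption is essential, is the identity
\begin{equation*}
B_{vw}(t+1) = \{v\} \sqcup \bigsqcup_{u \in \cN(v) \setminus \{w\}} B_{uv}(t) \, , \qquad t \geq 0 \, .
\end{equation*}
First I would prove $\supseteq$: the vertex $v$ is at distance $1 \leq t+1$ from $w$ with path through $v$; and if $y \in B_{uv}(t)$ for some $u \neq w$, then $d(y,v) \leq t$ and the unique path from $y$ to $w$ is obtained by concatenating the unique path $y \to v$ (which ends with the edge $\{u,v\}$, hence does not use the edge $\{v,w\}$) with the edge $\{v,w\}$, giving $d(y,w) \leq t+1$ with path through $v$. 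Conversely, for $y \in B_{vw}(t+1)$ with $y \neq v$, the unique path from $y$ to $w$ passes through $v$, so the predecessor of $v$ on this path is some neighbor $u \neq w$; then $d(y,v) = d(y,w)-1 \leq t$ and $y \in B_{uv}(t)$. Disjointness follows from uniqueness of paths in a tree: a vertex $y \neq v$ cannot have its path to $v$ pass through two distinct neighbors of $v$, and $v$ itself is in none of the $B_{uv}(t)$ since the trivial path $v \to v$ does not traverse $u$.

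The base case $t=0$ is immediate: $K_{vw}^0 = 0$ by definition and $B_{vw}(0) = \emptyset$ since $B_w(0) = \{w\}$ and $w \notin B_{vw}(0)$. The statement about $M_{vw}^t$ is only asserted for $t \geq 1$, so nothing is required there. For the induction step, assuming both identities hold at time $t$, the decomposition above gives
\begin{align*}
|B_{vw}(t+1)| &= 1 + \sum_{u \in \cN(v), u \neq w} |B_{uv}(t)| \stackrel{\text{IH}}{=} 1 + \sum_{u \in \cN(v), u \neq w} K_{uv}^t = K_{vw}^{t+1} \, , \\
\sum_{y \in B_{vw}(t+1)} \xi_y &= \xi_v + \sum_{u \in \cN(v), u \neq w} \sum_{y \in B_{uv}(t)} \xi_y \stackrel{\text{IH}}{=} \xi_v + \sum_{u \in \cN(v), u \neq w} K_{uv}^t M_{uv}^t \, ,
\end{align*}
where the last sum equals $K_{vw}^{t+1} M_{vw}^{t+1}$ by \eqref{eq:MP_1}. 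Dividing the second identity by $K_{vw}^{t+1} = |B_{vw}(t+1)| \geq 1$ (nonzero since $v \in B_{vw}(t+1)$) completes the step.

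The only real obstacle is setting up the decomposition cleanly; once that is in place, the message-passing recursions become a direct translation of ``cardinality'' and ``sum'' over the decomposition, and the induction is mechanical. Note also that the lemma immediately yields Proposition \ref{prop:message-passing_trees} by applying the analogous decomposition $B_v(t) = \{v\} \sqcup \bigsqcup_{u \in \cN(v)} B_{uv}(t)$ to identify the numerator and denominator of \eqref{eq:MP_2} with $\sum_{y \in B_v(t)} \xi_y$ and $|B_v(t)|$.
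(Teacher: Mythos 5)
Your proof is correct and follows essentially the same route as the paper: induction on $t$ driven by the disjoint decomposition $B_{vw}(t+1) = \{v\} \sqcup \bigsqcup_{u \in \cN(v),\, u \neq w} B_{uv}(t)$, which makes the recursions \eqref{eq:MP_1} match termwise. The only difference is that you verify this decomposition in detail (both inclusions and disjointness via uniqueness of paths in a tree), whereas the paper asserts it with a picture and the remark that $G$ has no loops.
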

\begin{proof}
	The proof goes by induction. The statement is trivial for $t=0,1$. For the induction, assume the result at time $t$ and note that 
	\begin{equation}
	\label{eq:ball_division}
	B_{vw}(t+1) = \{v\} \cup \left(\bigcup_{u\in\cN(v), \, u \neq w} B_{uv}(t) \right) \, ,
	\end{equation}
	where all unions are disjoint. This essentially comes from the fact that $G$ has no loops. 
	\begin{center}
		\includegraphics[width=0.7\textwidth]{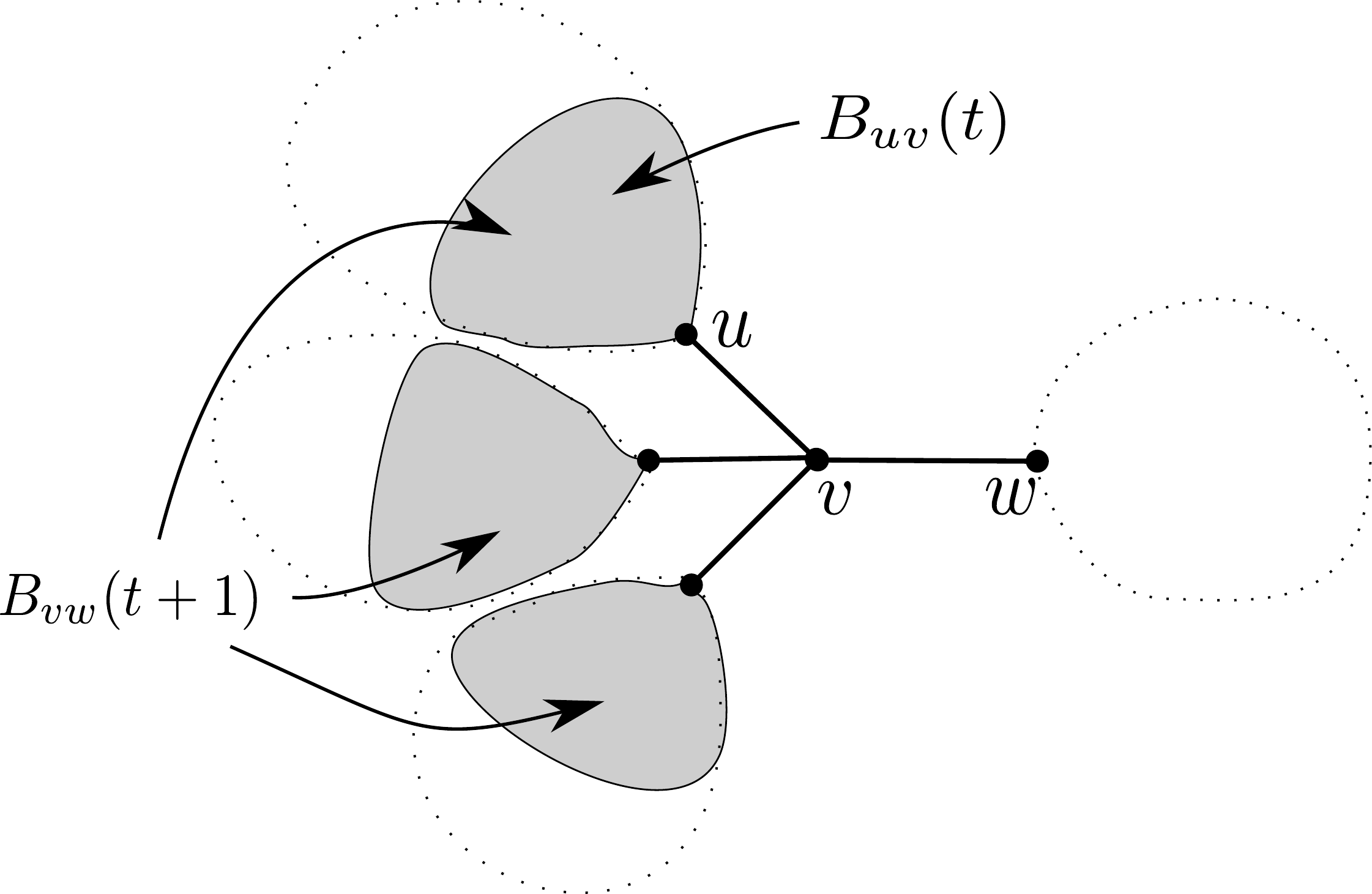}
	\end{center}
	Taking cardinal, we get that 
	\begin{equation*}
	\vert  B_{vw}(t+1) \vert  \stackrel{\eqref{eq:ball_division}}{=} 1 + \sum_{u\in\cN(v), \, u \neq w} \vert  B_{uv}(t) \vert \stackrel{\rm (induction)}{=} 1+ \sum_{u\in\cN(v), \, u \neq w} K_{uv}^t \stackrel{\eqref{eq:MP_1}}{=} K_{vw}^{t+1} \, .
	\end{equation*}
	This proves the induction for the first equality. The proof for the second equality is similar:
	\begin{align*}
	\frac{1}{\vert  B_{vw}(t+1) \vert } \sum_{u \in B_{vw}(t+1)} \xi_u &\stackrel{\eqref{eq:ball_division}}{=} \frac{1}{K_{vw}^{t+1}}\left(\xi_v +  \sum_{u\in\cN(v), \, u \neq w} \sum_{x \in B_{uv}(t)} \xi_x \right) \\
	&\stackrel{\rm (induction)}{=} \frac{\xi_v + \sum_{u\in\cN(v), \, u \neq w} \vert B_{uv}(t) \vert M_{uv}^t}{K_{vw}^{t+1}} \\
	&\stackrel{\rm (induction)}{=} \frac{\xi_v + \sum_{u\in\cN(v), \, u \neq w} K_{uv}^t M_{uv}^t}{K_{vw}^{t+1}} 
	\stackrel{\eqref{eq:MP_1}}{=} M_{vw}^{t+1} \, .
	\end{align*}
\end{proof}
We now end the proof of Proposition \ref{prop:message-passing_trees}. As $B_v(t) = \{v\}\cup\left(\bigcup_{u \in \cN(v)}  B_{uv}(t)\right)$ with disjoint unions, using Lemma \ref{lem:MP_technical_lemma}, we get 
\begin{equation*}
\frac{1}{\vert B_v(t) \vert}\sum_{w \in B_v(t)} \xi_w = \frac{\xi_v + \sum_{u \in \cN(v)} \sum_{w \in B_{uv}(t)} \xi_w}{1 + \sum_{u \in \cN(v)}\vert B_{uv}(t)\vert} 
= \frac{\xi_v + \sum_{u \in \cN(v)} K_{uv}^tM_{uv}^t}{1 + \sum_{u \in \cN(v)}K_{uv}^t} \stackrel{\eqref{eq:MP_2}}{=} x^t_v \, .
\end{equation*}

\section{Proof of Theorem \ref{thm:connection_mp_polynomial}}
\label{ap:proof-connection-mp-polynomial}

As noted by \cite{rebeschini2017accelerated}, the message passing iteration \eqref{eq:MP_1}-\eqref{eq:MP_2} indexed by the edges of the graph can be written as an iteration indexed by the vertices of the graph. We repeat here the elementary derivation of this statement in our particular case of $d$-regular graphs.

First, because $G$ is $d$-regular, it is an easy check from \eqref{eq:MP_1} that $K^t_{vw}$ does not depend on the edge $(v,w)$ (thus we denote it $K^t$) and it satisfies the recursion $K^0 = 0$, $K^{t+1} = 1+(d-1)K^t$.

Let us now denote $S^t_v = \xi_v + \sum_{u \in \cN(v)} K_{uv}^{t} M_{uv}^{t}$ and $L_t = 1 + dK^{t}$ so that $x^t_v = S^t_v / L_t$. We will now find recursions for $L_t$ and $S^t$:
\begin{equation*}
L_{t+1} = 1+dK^{t+1} \stackrel{\eqref{eq:MP_1}}{=} 1 + d(1+(d-1)K^t) = 2 + (d-1)(1+dK^t) = 2 + (d-1)L_t \, ,
\end{equation*}
and
\begin{align*}
S_v^{t+1} &= \xi_v + \sum_{u \in \cN(v)} K^{t+1}M_{uv}^{t+1} \stackrel{\eqref{eq:MP_1}}{=} \xi_v + \sum_{u \in \cN(v)}\left(\xi_u + \sum_{w \in \cN(u), w \neq v} K^tM^t_{wu}\right) \\
&= \xi_v + \sum_{u \in \cN(v)} \left(S^t_u - K^tM^t_{vu}\right) \, . 
\end{align*}
As 
\begin{align*}
\sum_{u \in \cN(v)}  K^tM^t_{vu} &\stackrel{\eqref{eq:MP_1}}{=} d \xi_v + \sum_{u \in \cN(v)} \sum_{w \in \cN(v),w \neq u}K^{t-1}M^{t-1}_{wv} \\
&= d\xi_v + (d-1)\sum_{w \in \cN(v)} K^{t-1}M_{wv}^{t-1} = \xi_v + (d-1)S_v^{t-1} \, ,
\end{align*}
we finally get 
\begin{equation*}
S^{t+1} = A(G)S^t - (d-1)S^{t-1} \, .
\end{equation*}
To sum up, we now have the simpler formulas for the message passing algorithm:
\begin{equation}
\label{eq:MP_vertex}
\begin{aligned}
&L_{t+1} = 2 + (d-1)L_t\, , && L_0 = 1\, , \\
&S^{t+1} = dW S^t - (d-1)S^{t-1}\, ,&&S^0 = \xi\, , && S^1 = \xi + dW\xi\, ,  \\ 
&x^t = S^t/L_t \, . && &&
\end{aligned}
\end{equation}
\smallskip
In Appendix \ref{ap:Kesten-McKay}, it is proved that $\pi_t(\lambda) = p_t(\lambda)/p_t(1)$ where $p_t$ satisfies the recursion formula 
\begin{align*}
&p_0(\lambda) = \sqrt{d-1} \, , 
&&p_1(\lambda) = d\lambda+1 \, , 
&&p_{t+1}(\lambda) = \frac{d}{\sqrt{d-1}}\lambda p_t(\lambda)-p_{t-1}(\lambda) \, , \quad t \geq 1 \, .
\end{align*}
Denote $q_t = (d-1)^{(t-1)/2}p_t$. It is an easy check that 
\begin{align*}
&q_0(\lambda) = 1 \, , 
&&q_1(\lambda) = d\lambda+1 \, , 
&&q_{t+1}(\lambda) = d\lambda q_t(\lambda)-(d-1)q_{t-1}(\lambda) \, , \quad t \geq 1 \, .
\end{align*}
Using \eqref{eq:MP_vertex}, one sees that for all $t$, $S^t = q_t(W)\xi$ and $L_t = q_t(1)$. Thus 
\begin{equation*}
x^t = \frac{S^t}{L_t} = \frac{q_t(W)\xi}{q_t(1)}= \frac{p_t(W)\xi}{p_t(1)} = \pi_t(W)\xi \, .
\end{equation*}

\section{Proof of Theorem \ref{thm:convergence_rate_message_passing}}

Theorem \ref{thm:connection_mp_polynomial} states that $x^t = \pi_t(W)\xi$ where the $\pi_t$ are the orthogonal polynomials w.r.t.~the modified Kesten-McKay measure $(1-\lambda)\sigma(\T_d)(\diff\lambda)$. Then 
\begin{equation}
\label{eq:aux-sup-0}
\Vert x^t-\bar{\xi}\bfone \Vert_2^2 = \sum_{i=2}^{n} \langle \xi, u^i \rangle^2 \pi_t(\lambda_i)^2 \leq \Vert \xi - \bar{\xi}\bfone\Vert_2^2 \left(\sup_{\lambda \in [-(1-\tilde{\gamma}), (1-\tilde{\gamma})]} \vert\pi_t(\lambda)\vert\right)^2 \, ,
\end{equation}
where $\lambda_2, \dots, \lambda_n$ are the eigenvalues of $W$ different from $1$, that lie in $[-(1-\tilde{\gamma}), (1-\tilde{\gamma})]$ by definition of the absolute spectral gap $\tilde{\gamma}$, and $u^2, \dots, u^n$ are the corresponding normalized eigenvectors. 

In Section \ref{ap:Kesten-McKay}, we show that 
\begin{align*}
&\pi_t(\lambda) = \frac{p_t(\lambda)}{p_t(1)} \, , \\
&p_t(\lambda) = \tilde{p}_t(\varphi^{-1}(\lambda)) \, , \qquad \varphi(\lambda) = \frac{2\sqrt{d-1}}{d}\lambda \, , \\
&\tilde{p}_t(\lambda) = \left(\sqrt{d-1}+\lambda\right)U_t(\lambda) - T_{t+1}(\lambda) \, , 
\end{align*}
where $T_t$ and $U_t$ are the Chebyshev polynomials of the first kind and of the second kind respectively. Denote $D = d/(2\sqrt{d-1})$. Then
\begin{equation}
\label{eq:aux-sup-1}
\sup_{\lambda \in [-(1-\tilde{\gamma}), (1-\tilde{\gamma})]} \vert\pi_t(\lambda)\vert = \frac{1}{\vert\tilde{p}_t(D)\vert} \sup_{\lambda \in [-(1-\tilde{\gamma})D, (1-\tilde{\gamma})D]} \vert\tilde{p}_t(\lambda)\vert \, .
\end{equation}
If $\lambda \in [-1,1]$, $|T_t(\lambda)|\leq 1$ and $|U_t(\lambda)| \leq t+1$. Thus 
\begin{equation}
\label{eq:aux-sup-2}
\sup_{\lambda \in [-1,1]} |\tilde{p}_t(\lambda)| \leq \left(\sqrt{d-1}+1\right)(t+1) +1 \, .
\end{equation}
We now discuss the different cases of the theorem.
\smallskip

\textbf{\emph{(1)}} We assume $\tilde{\gamma}<1-2\sqrt{d-1}/d$. As $\tilde{p}_t$ are orthogonal polynomials w.r.t.~some measure on $[-1,1]$, all zeros of $p_t$ are real, distinct and located in the interior of $[-1,1]$ (see Proposition \ref{prop:zeros}). It follows that 
\begin{align}
\label{eq:aux-sup-4}
&\sup_{\lambda \in (1,(1-\tilde{\gamma})D]} |\tilde{p}_t(\lambda)| = \left\vert\tilde{p}_t\left((1-\tilde{\gamma})D\right)\right\vert \, ,
&&\sup_{\lambda \in [-(1-\tilde{\gamma})D,-1)} |\tilde{p}_t(\lambda)| = \left\vert\tilde{p}_t\left(-(1-\tilde{\gamma})D\right)\right\vert \, .
\end{align}
Merging Eqs. \eqref{eq:aux-sup-1}-\eqref{eq:aux-sup-4}, we obtain
\begin{equation*}
\sup_{\lambda \in [-(1-\tilde{\gamma}), (1-\tilde{\gamma})]} \vert\pi_t(\lambda)\vert \leq \frac{1}{\vert\tilde{p}_t(D)\vert} \max\left(\vert\tilde{p}_t((1-\tilde{\gamma})D)\vert,\vert\tilde{p}_t(-(1-\tilde{\gamma})D)\vert,(\sqrt{d-1}+1)(t+1) +1\right) \, .
\end{equation*}
\begin{lemma}
	\label{lem:asymptotic_p}
	\begin{enumerate}
		\item 	If $x > 1$, then there exists a constant $C(d,x) \neq 0$ such that 
		\begin{equation}
		\tilde{p}_t(x) \underset{t\to\infty}{\sim} C(d,x) \left(x+\sqrt{x^2-1}\right)^t \, .
			\end{equation} 
		\item 	If $x < -1$, then there exists a constant $C(d,x) \neq 0$ such that 
		\begin{equation}
		\tilde{p}_t(x) \underset{t\to\infty}{\sim} C(d,x) \left(x-\sqrt{x^2-1}\right)^t \, .
		\end{equation} 
	\end{enumerate}
\end{lemma}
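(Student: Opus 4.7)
The standard parameterization for Chebyshev polynomials is $x = (z + z^{-1})/2$, under which one has the closed forms
\[
T_t(x) = \tfrac{1}{2}(z^t + z^{-t}), \qquad U_t(x) = \frac{z^{t+1} - z^{-(t+1)}}{z - z^{-1}},
\]
valid for any $z \neq 0$ by analytic continuation of the trigonometric identities (as in the proof of Lemma \ref{lem:bounds-chebyshev}). For $|x| > 1$, I will pick the root $z$ with $|z| > 1$, namely $z = x + \sqrt{x^2-1}$ when $x > 1$ and $z = x - \sqrt{x^2-1}$ when $x < -1$; in either case $z^{-(t+1)} \to 0$ exponentially fast.

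The main computation is then straightforward: substitute these formulas into $\tilde{p}_t(x) = (\sqrt{d-1}+x)U_t(x) - T_{t+1}(x)$ and retain the dominant power of $z$. In the case $x > 1$, a direct calculation using $z - z^{-1} = 2\sqrt{x^2-1}$ gives
\[
\tilde{p}_t(x) \sim \frac{z^{t+1}}{2} \cdot \frac{\sqrt{d-1}+x - \sqrt{x^2-1}}{\sqrt{x^2-1}},
\]
so that $C(d,x) = \tfrac{z}{2}\,(\sqrt{d-1}+x-\sqrt{x^2-1})/\sqrt{x^2-1}$ with $z = x+\sqrt{x^2-1}$. In the case $x<-1$, the analogous computation with $z = x-\sqrt{x^2-1}$ (for which $z - z^{-1} = -2\sqrt{x^2-1}$) yields
\[
C(d,x) = -\frac{z}{2} \cdot \frac{\sqrt{d-1}+x+\sqrt{x^2-1}}{\sqrt{x^2-1}}.
\]

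The only nontrivial step is verifying $C(d,x) \neq 0$. For $x > 1$, vanishing would require $\sqrt{d-1}+x = \sqrt{x^2-1}$; squaring (both sides are positive) gives $d + 2x\sqrt{d-1} = 0$, i.e.\ $x = -d/(2\sqrt{d-1}) < 0$, contradicting $x > 1$. For $x < -1$, vanishing would require $\sqrt{x^2-1} = -(\sqrt{d-1}+x)$, which forces $x \leq -\sqrt{d-1}$ for the right-hand side to be nonnegative; squaring again gives the unique candidate $x = -d/(2\sqrt{d-1})$, but the sign constraint $x \leq -\sqrt{d-1}$ then becomes $d/(2\sqrt{d-1}) \geq \sqrt{d-1}$, i.e.\ $d \leq 2$, which fails under the hypothesis $d \geq 3$ made in Theorem \ref{thm:convergence_rate_message_passing}. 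Hence the candidate is extraneous and $C(d,x)\neq 0$ throughout $x<-1$.

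The potential obstacle is precisely this non-vanishing check in the $x < -1$ regime: the quantity $-D = -d/(2\sqrt{d-1})$ is a natural "edge" point (the image of $\lambda = -1$ under the rescaling of Appendix \ref{ap:Kesten-McKay}), so one might worry $C$ vanishes there; the sign analysis above shows this is only an artifact of squaring when $d\geq 3$, which is exactly why the hypothesis $d \geq 3$ appears in Theorem \ref{thm:convergence_rate_message_passing}.
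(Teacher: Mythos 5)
Your proof follows the paper's argument exactly: the same $z$-parameterization of the Chebyshev polynomials, with the root $|z|>1$ chosen as $x+\sqrt{x^2-1}$ for $x>1$ and $x-\sqrt{x^2-1}$ for $x<-1$, followed by extraction of the dominant $z^{t+1}$ term. The only difference is that you actually verify $C(d,x)\neq 0$ --- the paper merely asserts it --- and your sign analysis ruling out the candidate zero $x=-d/(2\sqrt{d-1})$ is correct in the regime $d\geq 3$ where the lemma is applied.
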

\begin{proof}
	In the proof of Lemma \ref{lem:bounds-chebyshev}, we developed the following formulas for the Chebyshev polynomials: 
	\begin{align*}
	&T_t\left(\frac{z+z^{-1}}{2}\right) = \frac{z^t+z^{-t}}{2} \, , &&U_t\left(\frac{z+z^{-1}}{2}\right) = \frac{z^{t+1}-z^{-(t+1)}}{z-z^{-1}} \, .
	\end{align*}
We write $x = (z+z^{-1})/2$ with $|z|> 1$. If $x > 1$, then $z = x + \sqrt{x^2-1}$, and if $x < -1$, then $z = x - \sqrt{x^2-1}$. Then 
	\begin{align*}
	\tilde{p}_t(x) &= \left(\sqrt{d-1}+x\right)U_t(x) - T_{t+1}(x) \\
	&= \left(\sqrt{d-1}+\frac{z+z^{-1}}{2}\right)\frac{z^{t+1}-z^{-(t+1)}}{z-z^{-1}} - \frac{z^{t+1}+z^{-(t+1)}}{2} \\
	&\underset{t\to\infty}{\sim} \left[\left(\sqrt{d-1}+\frac{z+z^{-1}}{2}\right)\frac{1}{z-z^{-1}}-\frac{1}{2}\right]z^{t+1} \, .
	\end{align*}
	The constant that appears is non-zero, thus the result is proved. 
\end{proof}
Using Lemma \ref{lem:asymptotic_p}, we get that there exists a constant $C(d)$ such that 
\begin{equation*}
\sup_{\lambda \in [-(1-\tilde{\gamma}), (1-\tilde{\gamma})]} \vert\pi_t(\lambda)\vert \leq C(d) \left(\frac{(1-\tilde{\gamma})D+\sqrt{((1-\tilde{\gamma})D)^2-1}}{D+\sqrt{D^2-1}}\right)^{t} \, .
\end{equation*}
Finally, using \eqref{eq:aux-sup-0}, this gives 
\begin{equation*}
\Vert x^t-\bar{\xi}\bfone \Vert_2 \leq \Vert\xi-\bar{\xi}\bfone\Vert_2 C(d) \left(\frac{(1-\tilde{\gamma})D+\sqrt{((1-\tilde{\gamma})D)^2-1}}{D+\sqrt{D^2-1}}\right)^{t} \, .
\end{equation*}
Dividing the numerator and the denominator of the fraction by $D$, we get the desired result.

\smallskip
We now turn to the second part of the statement. Let $u$ be an eigenvector of $W$ corresponding to an eigenvalue $\lambda$ of magnitude $1-\tilde{\gamma}$ such that $\langle \xi, u \rangle \neq 0$. Then 
\begin{equation*}
\Vert x^t - \bar{\xi}\bfone\Vert_2 \geq |\langle \xi, u \rangle\vert \vert\pi_t(\lambda)\vert = |\langle \xi, u \rangle\vert \frac{\left\vert\tilde{p}_t(\lambda)\right\vert}{\left\vert\tilde{p}_t(D)\right\vert}\, ,
\end{equation*} 
Using as before Lemma \ref{lem:asymptotic_p}, we get the desired lower bound.

\smallskip
\textbf{\emph{(2)}} We now assume $\tilde{\gamma}\geq 1-2\sqrt{d-1}/d$. This means that $(1-\tilde{\gamma})D\leq1$, and thus 
\begin{equation*}
\sup_{\lambda \in [-(1-\tilde{\gamma})D, (1-\tilde{\gamma})D]} \vert\tilde{p}_t(\lambda)\vert \leq \sup_{\lambda \in [-1, 1]} \vert\tilde{p}_t(\lambda)\vert \stackrel{\eqref{eq:aux-sup-2}}{\leq} \left(\sqrt{d-1}+1\right)(t+1) +1 \, .
\end{equation*}
Combining with \eqref{eq:aux-sup-0} and \eqref{eq:aux-sup-1}, we get 
\begin{equation*}
\Vert x^t - \xi\bfone \Vert_2 \leq \Vert\xi-\bar{\xi}\bfone\Vert_2 \frac{1}{\vert\tilde{p}_t(D)\vert}\left(\left(\sqrt{d-1}+1\right)(t+1) +1\right)\, ,
\end{equation*}
which gives the desired result using Lemma \ref{lem:asymptotic_p}.
\end{document}